\newtheorem{thm}{Theorem}   
\newtheorem{prop}[thm]{Proposition}
\newtheorem{lemma}[thm]{Lemma}
\newtheorem{cor}[thm]{Corollary}
\newtheorem{defn}{Definition}
\definecolor{darkred}{RGB}{179,0,0}
\definecolor{darkblue}{RGB}{0,0,179}
\definecolor{darkgreen}{RGB}{0,179,0}
\DeclareMathOperator*{\argmax}{arg\,max}
\DeclareMathOperator*{\argmin}{arg\,min}
\DeclareMathOperator{\Prob}{\mathbb{P}}
\newcommand{\indep}{\perp\!\!\!\perp} 
\newcommand{\E}{\mathbb{E}}
\newcommand{\mQ}{\mathcal{Q}}
\newcommand{\cM}{\mathcal{M}}
\newcommand{\vol}{\mathrm{vol}}
\newcommand{\abs}[1]{| #1|}
\newcommand{\var}{\mathrm{Var}}
\newcommand{\cov}{\mathrm{Cov}}
\newcommand{\bone}{\mathbbm{1}}
\newcommand{\cp}{\mathcal{P}}
\newcommand{\pto}{\xrightarrow{\,\mathrm{p}\,} }
\newcommand{\dto}{\xrightarrow{\,\mathrm{d}\,} }
\newcommand{\iid}{\stackrel{\mathrm{iid}}\sim}
\newcommand{\bs}[1]{\boldsymbol{#1}}
\title{Coverage correlation: detecting singular dependencies between random variables}
\author{Xuzhi Yang, Mona Azadkia and Tengyao Wang\\[3pt]
London School of Economics and Political Science}
\date{(\today)}
\begin{document}

\maketitle

\begin{abstract}
We introduce the coverage correlation coefficient, a novel nonparametric measure of statistical association designed to quantify the extent to which two random variables have a joint distribution concentrated on a singular subset with respect to the product of the marginals. Our correlation statistic consistently estimates an $f$-divergence between the joint distribution and the product of the marginals, which is 0 if and only if the variables are independent and 1 if and only if the copula is singular. Using Monge--Kantorovich ranks, the coverage correlation naturally extends to measure association between random vectors. It is distribution-free, admits an analytically tractable asymptotic null distribution, and can be computed efficiently, making it well-suited for detecting complex, potentially nonlinear associations in large-scale pairwise testing.
\end{abstract}

\section{Introduction}
Correlation is a measure of statistical association that quantifies how two variables tend to vary together. Classically, Pearson's correlation, $r^{X,Y}$, captures linear relationships between real-valued variables $X$ and $Y$ \citep{pearsonhistory}. In contrast, Spearman's rank correlation, $\rho^{X,Y}$, and Kendall's rank correlation, $\tau^{X,Y}$, capture monotonic associations between $X$ and $Y$, using different definitions of rank concordance \citep{spearman1904proof,kendall1938new}. A key limitation of these classical measures is their poor performance in detecting non-monotonic associations, even in noise-free data.

To overcome this limitation, numerous approaches have been proposed, including the maximal correlation coefficient~\citep{hirschfeld1935connection, gebelein1941statistische, renyi1959measures, breiman1985estimating}, various methods based on joint cumulative distribution functions and
ranks 
\citep{hoeffding1948non, blum1961distribution, yanagimoto1970measures, puri1971nonparametric, rosenblatt1975quadratic, csorgo1985testing, romano1988bootstrap, bergsma2014consistent, nandy2016large, weihs2016efficient, weihs2018symmetric, han2017distribution, wang2017generalized, drton2018high, gamboa2018sensitivity,shi2022distribution, deb2023multivariate, shi2025distribution, zhou2025association},
kernel-based methods~\citep{gretton2005measuring, gretton2008kernel, sen2014testing, pfister2018kernel, zhang2018large}, information-theoretic coefficients~\citep{linfoot1957informational, kraskov2004estimating, reshef2011detecting, berrett2019nonparametric,berrett2021optimal}, copula-based coefficients 
\citep{sklar1959fonctions, schweizer1981nonparametric, dette2013copula, lopez2013randomized, zhang2019bet, griessenberger2022multivariate}, 
and coefficients based on pairwise
distances 
\citep{friedman1983graph, heller2013consistent, lyons2013distance, szekely2009brownian, szekely2007measuring, pan2020ball}. 

Although many of these coefficients are commonly applied, two significant drawbacks remain. Most are constructed with the primary goal of testing for independence, offering little direct information about the magnitude of the underlying dependence. Moreover, their null distributions are often analytically intractable, so p-values often must be obtained through computationally intensive permutation procedures.

Recently, there has been renewed interest in developing nonparametric measures of statistical
association, driven in part by the need to identify relevant features and interactions in large
datasets. Existing contributions may be broadly divided into several strands. For scalar
variables, \citet{dette2013copula} introduced a copula-based measure of regression dependence
quantifying the extent to which $Y$ is a measurable function of $X$, and
\citet{chatterjee2021new} proposed a novel, simple rank-based estimator for it. Later,
\citet{azadkia2021simple} extended it to multivariate predictors, which was subsequently
followed by \citet{azadkia2025new}, who introduced another measure of statistical association of
a similar nature. Other works extend the framework to broader spaces, including kernel- and
graph-based measures on metric spaces \citep{deb2020measuring, wiesel2022measuring, roudaki2026kernel}. Other notable extensions of \citet{chatterjee2021new} to the multivariate-response setting have
also been proposed by \citet{ansari2022simple} and \citet{huang2025multivariate}. While these
methods provide important extensions of the Chatterjee framework, they largely retain a directed
interpretation.


The correlation coefficient of~\citet{chatterjee2021new} has seen remarkably rapid adoption in practice, particularly in fields such as bioinformatics, where uncovering complex and potentially nonlinear associations in high-dimensional data is a central challenge \citep[e.g.][]{dong2023causal,suo2024dandelion,sansalone2024unexpectedly}. Its popularity arises from several appealing properties: the statistic is distribution-free under the null hypothesis of independence, allowing precise characterisation of its asymptotic null distribution, and it is computationally efficient, scaling well to large datasets. 

To illustrate, consider the task of detecting covariation in gene expression levels in a single-cell RNA sequencing experiment, where thousands of genes are measured across tens of thousands of cells. The scale of this problem creates a substantial multiple testing burden, often requiring raw p-values on the order of $10^{-8}$ or smaller to declare significance for any gene pair. In such settings, resampling-based tests such as permutation become computationally prohibitive, making access to an accurate asymptotic null distribution essential.

Chatterjee's correlation and the corresponding extensions are specifically designed to capture the extent to which $Y$ can be expressed as a measurable function of $X$ and is therefore inherently asymmetric. While this asymmetry can be advantageous in certain contexts, such as when a clear predictor-response relationship is present, it may be less suitable in others. In the genetic association example above, the direction of dependence between gene expression levels is often not known \emph{a priori}, and in fact, the relationship may not be directional at all. For instance, two genes, A and B, may exhibit strong statistical dependence simply because they are both downstream of a common regulator gene C, rather than one being a function of the other. Motivated by such considerations, we propose a new measure of statistical association, the \emph{coverage correlation coefficient}, which is symmetric and designed to capture more general implicit functional relationships of the form $f(X, Y) = 0$. More precisely, the coverage correlation coefficient quantifies the extent to which the joint distribution $P^{(X,Y)}$ is singular with respect to the product of marginals  $P^X \otimes P^Y$, thereby detecting dependencies that may lie on low-dimensional structures within the joint space.

\subsection{Coverage correlation coefficient}\label{sec:CoveCorrConstruction}
To provide intuition for the coverage correlation coefficient, consider a simplified setting in which the random variables $X$ and $Y$ follow the uniform distribution over $[0, 1]$. Given an i.i.d.\ sample $\{(X_i, Y_i)\}_{i=1}^n$ drawn from the joint distribution $P^{(X, Y)}$, we examine two extreme cases: 
\begin{enumerate}[noitemsep, label=(\roman*)]
    \item $X$ and $Y$ are independent;
    \item $P^{(X,Y)}$ is singular with respect to $\mathrm{Unif}([0, 1]^2)$. 
\end{enumerate}
In the first case, the sample points $(X_i, Y_i)$ are uniformly distributed over the unit square $[0,1]^2$. In contrast, under the second scenario, the points $(X_i, Y_i)$ are concentrated on a subset of $[0,1]^2$ with Lebesgue measure zero. These two cases yield qualitatively distinct scatter plots, reflecting the presence or absence of dependence (see the bottom row of Figure~\ref{Fig:ChatterjeePlot} for an illustration).

To quantify this difference, consider the area of the region in $[0, 1]^2$ not covered by the union of $\ell_\infty$-balls (squares) centred at each sample point $(X_i, Y_i)$, where the area of each ball is fixed to be $1/n$. The behaviour of this uncovered area has been extensively studied in the context of \emph{coverage processes} \citep[see, e.g.][]{hall1988introduction}. 

As $n \to \infty$, the limiting uncovered area exhibits fundamentally different behaviour in the two cases considered above: when $X$ and $Y$ are independent, it converges to $e^{-1}$, whereas when $P^{(X,Y)}$ is singular with respect to $\mathrm{Unif}([0,1]^2)$, it converges to $1$ in probability.

Building on this geometric intuition, we propose the \emph{coverage correlation coefficient} of random vectors $X \in \mathbb{R}^{d_X}$ and $Y \in \mathbb{R}^{d_Y}$ with $d_X, d_Y \in \mathbb{N}$, based on the (multivariate) ranks of $\bs X := (X_i)_{1\leq i\leq n}$ and $\bs Y := (Y_i)_{1\leq i\leq n}$. This statistic is powerful against the alternative where $X$ and $Y$ possess a singular dependence. Intuitively speaking, the proposed correlation coefficient measures the uncovered volume in $[0, 1]^{d_X + d_Y}$ when small cubes of volume $1/n$, centred at the multivariate ranks of $(X_i, Y_i)$, are used to cover the space. 

Given two sets of reference points $\bs{U} = (U_1, \ldots, U_n)$ in $[0,1]^{d_X}$ and $\bs{V} = (V_1, \ldots, V_n)$ in $[0,1]^{d_Y}$, let 
\[
\pi^{X} := \argmin_{\pi\in\mathcal{S}_n} \frac{1}{n}\sum_{i=1}^n\|U_{\pi(i)} - X_i\|_2^2 \quad \text{and} \quad \pi^{Y} := \argmin_{\pi\in\mathcal{S}_n} \frac{1}{n}\sum_{i=1}^n\|V_{\pi(i)} - Y_i\|_2^2 
\]
be the optimal assignments from $\bs{X}$ to $\bs{U}$ and from $\bs{Y}$ to $\bs{V}$ respectively, where $\mathcal{S}_n$ is the set of all permutations on $[n]:=\{1,\ldots,n\}$. If the minimizer is not unique, we choose one uniformly at random among all minimizers, independently for the two marginal assignments. For each $i\in[n]$, the empirical multivariate ranks for $X_i$ and $Y_i$ are 
\begin{equation}
R_i^X:=U_{\pi^{X}(i)} \quad \text{and} \quad R_i^Y:=V_{\pi^{Y}(i)}, \label{def:EmpRankGen}
\end{equation} 
respectively, and we write
\begin{equation}
    \label{eq:R_iGen}
R_i:= (R_i^X, R_i^Y)\in [0,1]^d,
\end{equation}
for their joint rank, where $d:= d_X+d_Y$. We remark that $R_i^X$ and $R_i^Y$ are known as the \emph{Monge--Kantorovich ranks} in the literature \citep{chernozhukov2017monge,hallin2021distribution}.

We also define a $d$-dimensional $\ell_\infty$ neighbourhood of radius $r$ centred at $w\in[0,1]^d$ as 
\[  
B(w, r) := \{z\in [0,1]^d: \inf_{\ell\in \{-1,0,1\}^d} \|z - w - \ell\|_\infty \leq r\}.
\]
Here, the infimum over $\ell$ defines the $\ell_\infty$ distance with a periodic boundary condition that identifies opposite sides of the unit cube $[0,1]^d$. This improves the empirical performance of the correlation coefficient statistic in small sample size (see Section~\ref{sec:no-wrapping}). Writing $\mathrm{vol}(\cdot)$ for the $d$-dimensional Lebesgue measure, for $\gamma\in(0,1)$, we define
\begin{equation}
\label{eq:VnGen}
\mathcal{V}(\bs{X},\bs{Y},\bs{U},\bs{V}; \gamma) := 1 - \vol\Bigl(\bigcup_{i=1}^n B(R_i, \gamma)\Bigr)
\end{equation}
to be the uncovered volume in the $d$-dimensional unit cube outside subcubes of radius $\gamma$ centred at the empirical ranks. 

The use of reference points in the unit cube is the most natural choice to construct a coverage volume: each point contributes equally to the total covered volume. It also mimics the traditional notion of rank on the real line (see Section~\ref{Sec:Discussion} for more discussion of the choice of reference points). In what follows, we will mostly be working with random reference points uniformly distributed over $[0, 1]^d$, i.e. $(U_1,V_1),\ldots,(U_n,V_n)\iid \mathrm{Unif}([0,1]^d)$, and $\gamma = \frac{1}{2n^{1/d}}$ so that each subcube has volume $1/n$. We write $\mathcal{V}_n = \mathcal{V}(\bs{X},\bs{Y},\bs{U},\bs{V}; \gamma)$ for this specific choice of reference points and $\gamma$. 
\begin{defn}
\label{def:covcorr}
The \emph{empirical coverage correlation coefficient} between $X_1,\ldots,X_n$ and $Y_1,\ldots,Y_n$ is defined as
\[
\kappa_n^{X, Y}:= \frac{\mathcal{V}_n - e^{-1}}{1 - e^{-1}}.
\]
\end{defn}
We note that $\kappa_n^{X,Y}$ is random due to the uniformly random reference points, even when conditioning on $\bs{X}$ and $\bs{Y}$. If a non-random correlation statistic is preferred, it is possible to replace the randomly generated $\bs{U}$ and $\bs{V}$ by a set of fixed reference points that are sufficiently `spread out' in $[0,1]^d$. For instance, when $d_X=d_Y=1$, a natural choice could be $\bs{U} = \bs{V} = (1/n,\ldots, (n-1)/n, 1)$. We will discuss this further in Section~\ref{sec: derandomising}. 

We summarise several key features of the coverage correlation coefficient $\kappa_n^{X, Y}$ as follows:
\begin{enumerate}
    \item When $d_X = d_Y = 1$, the coverage correlation statistic converges to $0$ in probability if and only if $X$ is independent of $Y$, and to $1$ in probability if and only if $P^{(X, Y)}$ is singular with respect to the product of the marginals. 
    \item More generally, $\kappa_n^{X,Y}$ converges to a population quantity that measures an $f$-divergence between the joint distribution and the product of the marginals with respect to the divergence generator function $f_{\mathrm{cov}}(x) = \frac{e^{-x}-e^{-1}}{1-e^{-1}}$. Notably, $\kappa_n^{X,Y}$ circumvents the need for density estimation, distinguishing it from numerous existing divergence estimators \citep[e.g.][]{rubenstein2019practical}.
    \item For any $d_X, d_Y$, under the null hypothesis of independence between $X$ and $Y$, $\kappa_n^{X,Y}$ is asymptotically normally distributed, which allows us to construct asymptotically valid p-values. 
    \item The coverage correlation statistic is distribution-free, thanks to Monge--Kantorovich ranks used in~\eqref{def:EmpRankGen}. This is in contrast to other possible multivariate ranks such as depth-based ranks \citep{tukey1975mathematics, liu1993quality, zuo2000general}, spatial ranks \citep{mottonen1995multivariate, chaudhuri1996geometric, koltchinskii1997m}, componentwise ranks \citep{hodges1955bivariate, bickel1965some}, and Mahalanobis ranks \citep{hallin2002optimal, hallin2002multivariate}. The distribution-free property of the coverage correlation statistic yields a pivotal null distribution, enabling easy computation of p-values. 
    \item For univariate marginal distributions, we develop an algorithm with $O(n \log n)$ time complexity (see Section~\ref{Sec:Algorithm}). The method is implemented as R and Python packages \texttt{covercorr}. Both the package and simulation code for reproducing figures and tables in the paper can be found at \url{https://github.com/wangtengyao/covercorr}.
\end{enumerate}

\subsection{Connection to Chatterjee's correlation}
For random variables $X$ and $Y$ and given a sample $(X_i,Y_i)_{i\in [n]}$,  let $X_{(1)}\leq \cdots\leq X_{(n)}$ denote the order statistics of the $X_i$'s (break ties uniformly at random) and let $(Y_{(i)})_{i\in[n]}$ denote the corresponding concomitants. Assuming for simplicity that there are no ties among $Y_i$'s, the empirical Chatterjee's correlation is defined as 
\begin{equation}
\label{Eq:ChatterjeeCorrelation}
\xi_n^{X,Y}:=1 - \frac{\sum_{i=1}^{n-1}|r_{i+1}-r_i|}{(n^2-1)/3},
\end{equation}
where $r_i := \#\{j: Y_{(j)}\leq Y_{(i)}\}$ is the rank of $Y_{(i)}$. \citet[Theorem~1.1]{chatterjee2021new} shows that $\xi_n^{X,Y}$ converges stochastically to 0 when $X$ and $Y$ are independent, and to 1 when $Y$ is a function of $X$. This convergence is interpreted through the population statistic
\[  
\xi^{X,Y}:= \frac{\int_{\mathbb{R}}\var(\mathbb{E}[\mathbbm{1}\{Y\geq t\}\mid X])\,dP^Y(t)}{\int_{\mathbb{R}}\var(\mathbb{E}[\mathbbm{1}\{Y\geq t\}])\,dP^Y(t)}.
\]
We remark that~\eqref{Eq:ChatterjeeCorrelation} can also be interpreted as a measure of `excess vacancy', similar to our coverage correlation coefficient, in the following sense. Writing $\tilde{\mathcal{V}} := 1 - \sum_{i=1}^{n-1}\bigl|\frac{r_{i+1}}{n}-\frac{r_i}{n}\bigr|\cdot\frac{1}{n}$ for the total area in $[0,1]^2$ not covered by the union of rectangles
\[
\bigcup_{i=1}^{n-1} \Bigl(\Bigl[\frac{i-1}{n},\frac{i}{n}\Bigr] \times \Bigl[\min\Bigl\{\frac{r_i}{n},\frac{r_{i+1}}n\Bigr\}, \max\Bigl\{\frac{r_i}{n},\frac{r_{i+1}}n\Bigr\}\Bigr]\Bigr),
\]
we have 
\[  
\xi_n^{X,Y} = \frac{\tilde{\mathcal{V}} - 2/3}{1/3} + O(n^{-2}).
\]
In other words, if we draw a line plot of the ordered normalised $X_i$ ranks (which are $1/n, 2/n,\ldots, 1$) against the corresponding normalised $Y_i$ ranks $(r_1/n, \ldots, r_n/n)$, with a line `thickness' of $1/n$, then $\tilde{\mathcal{V}}$ approximates the area in $[0,1]^2$ that remains uncovered by this thickened line plot. 

Figure~\ref{Fig:ChatterjeePlot} illustrates this for several $(X,Y)$ distributions. In the case of independence (first column), the normalised $Y_i$ ranks resemble independent $\mathrm{Unif}[0,1]$ values, and successive differences in rank are approximately $1/3$ on average, yielding $\tilde{\mathcal{V}} \approx 2/3$. On the other hand, when $Y$ is a function of $X$, the normalised $Y_i$ ranks become a deterministic function of the normalised $X_i$ ranks (modulo discretisation), which under mild conditions causes the uncovered area $\tilde{\mathcal{V}}$ to shrink towards 0. As seen in the second and third columns, when a fraction of the $Y_i$'s can be well-approximated by a function of the $X_i$'s, the area uncovered by the line plot decreases accordingly. Chatterjee's correlation captures this structure.

The last column of Figure~\ref{Fig:ChatterjeePlot} presents an interesting example in which both $X$ and $Y$ are generated as functions of a third latent variable $U$, with added noise. In this setting, even though a visually striking relationship exists between the $X_i$'s and $Y_i$'s, Chatterjee’s correlation remains close to zero. This highlights a key limitation of the statistic: it is specifically designed to detect directional functional dependence, but it may fail to capture more symmetric or indirect relationships.

\begin{figure}[htbp]
\begin{center}
    \begin{tabular}{cccc}
    \includegraphics[width=0.18\textwidth]{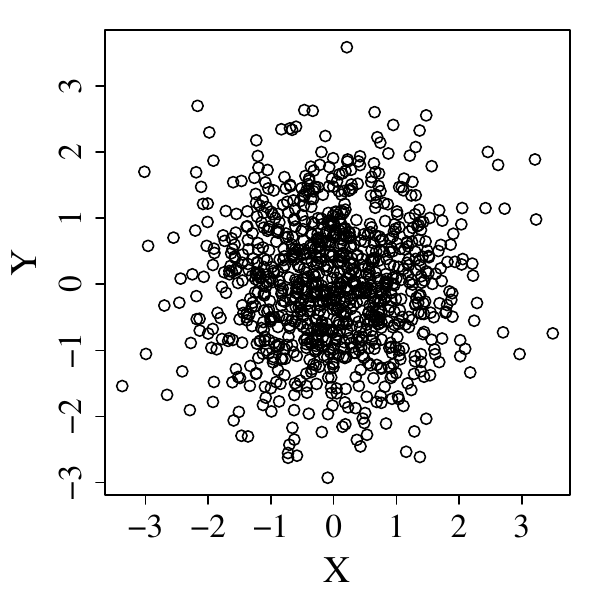}
    &
    \includegraphics[width=0.18\textwidth]{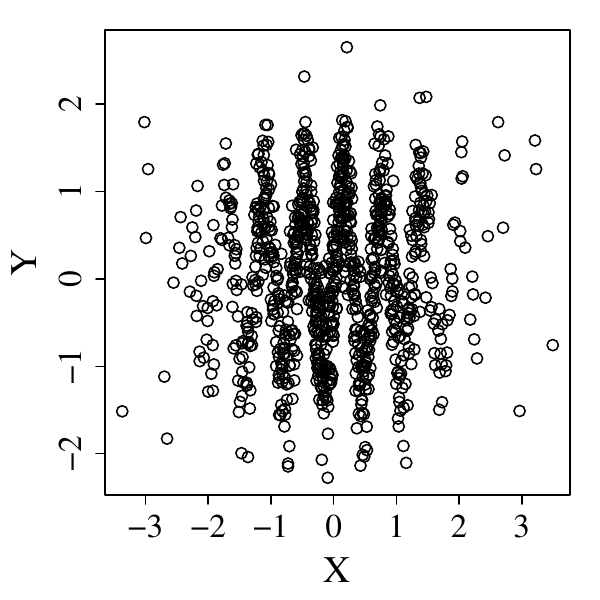}
    & 
    \includegraphics[width=0.18\textwidth]{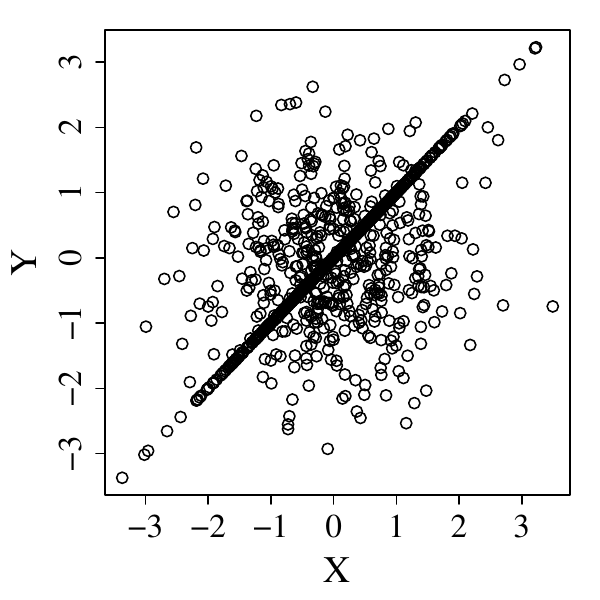}
    &
    \includegraphics[width=0.18\textwidth]{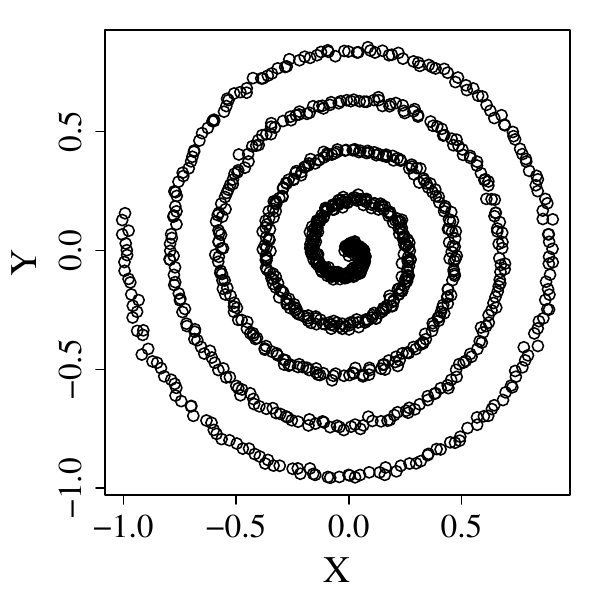}
    \\
    \includegraphics[width=0.18\textwidth]{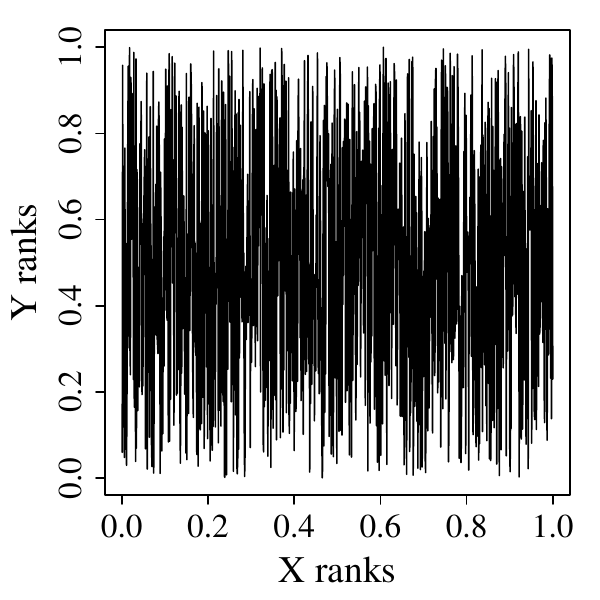}
    &
    \includegraphics[width=0.18\textwidth]{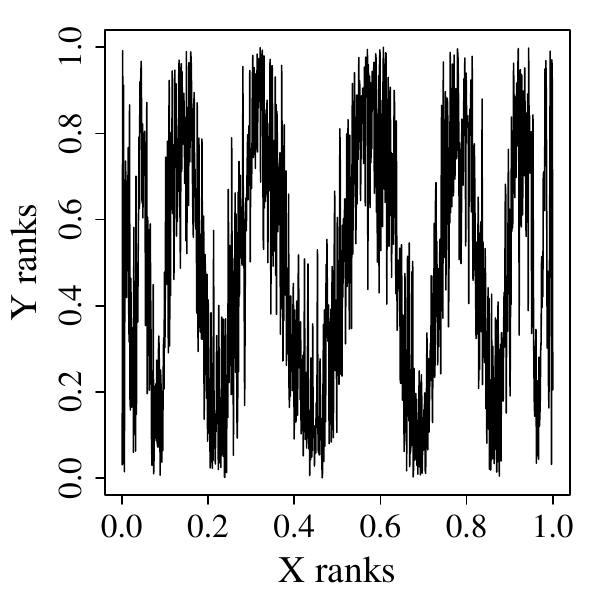}
    & 
    \includegraphics[width=0.18\textwidth]{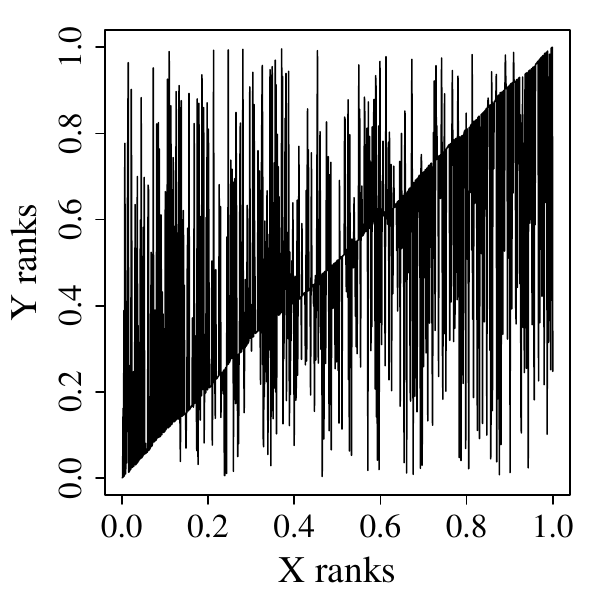}
    &
    \includegraphics[width=0.18\textwidth]{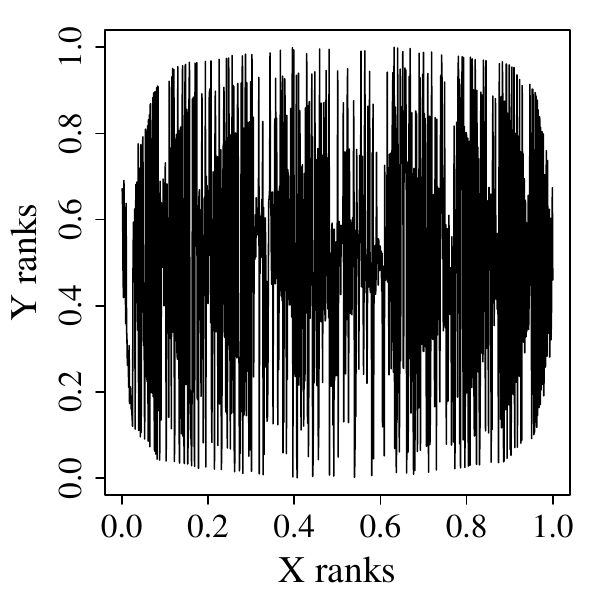}
    \\
    $\xi_n^{X,Y} = -0.019$ & $\xi_n^{X,Y} = 0.46$ & $\xi_n^{X,Y} = 0.23$ & $\xi_n^{X,Y} = 0.0094$\\
    $p_{\mathrm{\xi}} = 0.83$ & $p_{\mathrm{\xi}} < 10^{-16}$ & $p_{\mathrm{\xi}} < 10^{-16}$ & $p_{\mathrm{\xi}} =0.32$\\
    \includegraphics[width=0.18\textwidth]{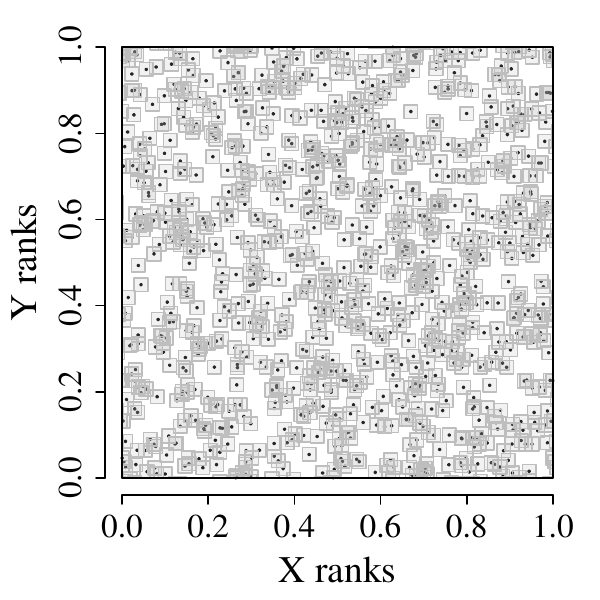}
    &
    \includegraphics[width=0.18\textwidth]{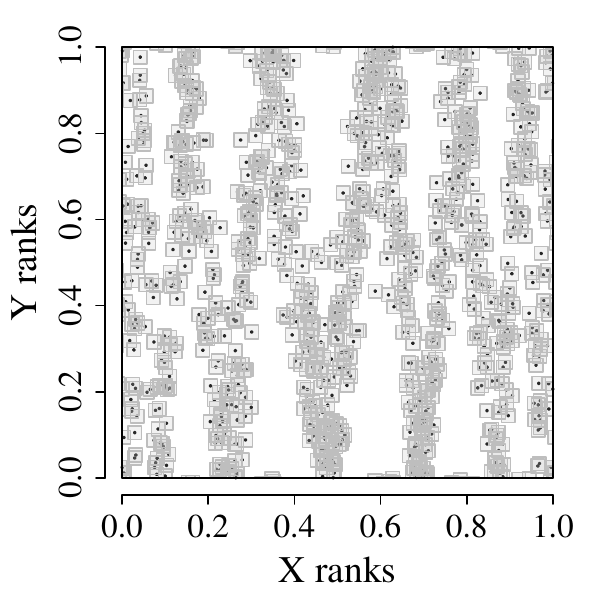}
    & 
    \includegraphics[width=0.18\textwidth]{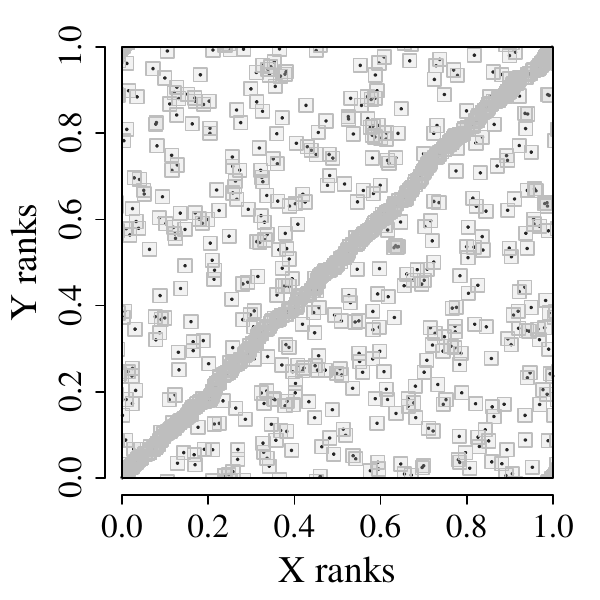}
    &
    \includegraphics[width=0.18\textwidth]{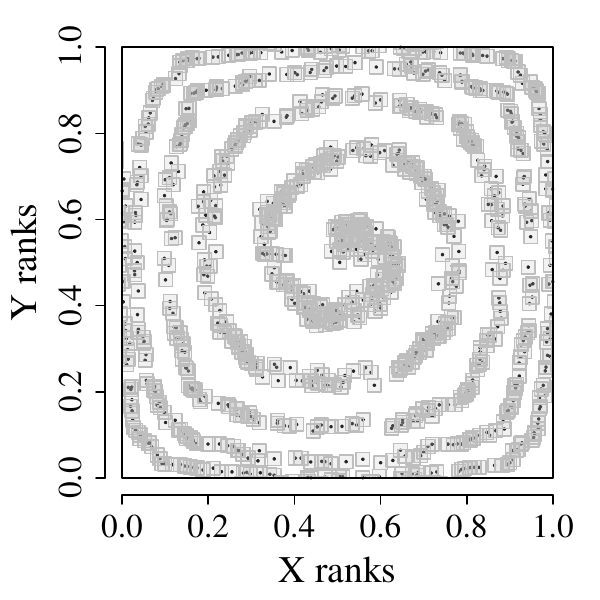}
    \\
    $\kappa_n^{X,Y} = -0.003$ & $\kappa_n^{X,Y} = 0.12$ & $\kappa_n^{X,Y} = 0.22$ & $\kappa_n^{X,Y} = 0.21$\\
    $p_{\kappa} = 0.69$ & $p_{\kappa} < 10^{-16}$ &$p_{\kappa} < 10^{-16}$ &$p_{\kappa} < 10^{-16}$ 
    \end{tabular}
\end{center}
\caption{\label{Fig:ChatterjeePlot}Chatterjee's correlation and coverage correlation for various joint distributions using a sample of $n=1000$ observation pairs.  Data generating mechanisms are as follows --- first column: $X,Y\iid \mathcal{N}(0,1)$; second column: $X\sim \mathcal{N}(0,1)$ and $Y=\sin(10X) + 0.5\epsilon$ where $\epsilon\sim \mathcal{N}(0,1)\indep X$; third column: $X\sim \mathcal{N}(0,1)$ and $Y = XB + \epsilon(1-B)$, where $(B,\epsilon)\sim \mathrm{Bernoulli}(1/2)\otimes \mathcal{N}(0,1) \indep (X,Y)$; fourth column: $X = U\sin(10\pi U) + 0.01\epsilon_X$ and $Y=U \cos(10\pi U) + 0.01\epsilon_Y$, where $(U,\epsilon_X,\epsilon_Y)\sim \mathrm{Unif}[0,1]\otimes \mathcal{N}(0,1)\otimes \mathcal{N}(0,1)\indep (X,Y)$. For each column, the top panel shows the scatter plot, the middle panel shows the line plot of ordered normalised $X$ ranks against the corresponding $Y$ ranks, and the bottom panel shows the union of small squares of area $1/n$, centred at joint normalised ranks of $X$ and $Y$. Test statistics and p-values for both correlation measures are also displayed.}
\end{figure}

\section{Theoretical guarantees}
\label{sec:theory}
We first show in Section~\ref{sec:consistency} that the proposed coverage correlation coefficient consistently estimates an $f$-divergence between the joint distribution and the product of its marginals. Section~\ref{sec:nullCLT} then derives its null asymptotic distribution. We discuss derandomising the coverage coefficient using grid reference points in Section~\ref{sec: derandomising}. Finally, Section~\ref{sec:DependencyAmongKRV} extends both the consistency and the null asymptotic theory for dependence among several random vectors.
\subsection{Consistency of the coverage correlation}\label{sec:consistency}
\subsubsection{Univariate consistency}
\label{Sec:UnivariateConsistency}

We first show that for univariate $X$ and $Y$, $\kappa_n^{X, Y}$, as defined in Definition~\ref{def:covcorr}, converges in probability to a population quantity that measures an $f$-divergence between $P^{(X, Y)}$ and $P^X \otimes P^Y$.  We recall the definition of $f$-divergence \citep[see, e.g.][Definition~8.2]{samworth2025statistics}.

\begin{defn}\label{def:f-divergence}
Let $f: \mathbb{R} \to \mathbb{R} \cup \{+\infty\}$ be a convex function with $f(1)=0$. For any two probability measures $\mu, \nu$ on a space $\mathcal{S}$, let $\mathrm{d} \mu = h \, \mathrm{d}\nu + \mathrm{d} \nu^{\perp}$ be the Lebesgue--Radon--Nikodym decomposition of $\mu$ with respect to $\nu$, where $h$ is $\nu$-integrable and $\nu^{\perp}$ is singular with respect to $\nu$. The \emph{$f$-divergence} between $\mu$ and $\nu$ is defined as 
\begin{align}
    D_f(\mu \,\|\, \nu) = \int_{\mathcal{S}} f\circ h \, d\nu + f'(\infty) \nu^{\perp}(\mathcal{S}),  \label{eq: fDivergenceDefinition}
\end{align}
where $f'(\infty) : = \lim_{t \to \infty} t^{-1}f(t)$ is the asymptotic slope of $f$ at infinity.
\end{defn}

The following theorem shows that when $d_X=d_Y=1$, the correlation coefficient $\kappa^{X,Y}_n$ converges stochastically to a population limit.
\begin{thm}
\label{Thm:populationlimit}
Let $P^{(X,Y)}$ be a Borel probability measure on $\mathbb{R}^{2}$ with marginals $P^X$ and $P^Y$. Define $f_{\mathrm{cov}}:\mathbb{R}\to\mathbb{R}$ by
\begin{equation}
\label{Eq:fcov}
f_{\mathrm{cov}}(x) := \frac{e^{-x}-e^{-1}}{1-e^{-1}}.
\end{equation}
Given $(X_1,Y_1),\ldots,(X_n,Y_n)\iid P^{(X,Y)}$, where $X_i, Y_i\in\mathbb{R}$, we have
\begin{equation}
\label{eq:f-divergence}
\kappa_n^{X,Y} \pto \kappa^{X,Y} := D_{f_{\mathrm{cov}}}(P^{(X,Y)}\,\|\,P^X\otimes P^Y), 
\qquad \text{as } n\to\infty .
\end{equation}
\end{thm}

We refer to the $f$-divergence with generator function $f_{\mathrm{cov}}$ as the \emph{coverage divergence} between distributions. Theorem~\ref{Thm:populationlimit} shows that for univariate data $X$ and $Y$, the empirical coverage correlation converges in probability to the coverage divergence between the joint distribution and the product of marginals between $X$ and $Y$.

One striking fact of Theorem~\ref{Thm:populationlimit} is that it does not require any assumptions on the joint distribution $P^{(X,Y)}$, thus the variables $X$ and $Y$ may be continuous, discrete, or a mixture of both. To understand intuitively why such a limit makes sense, consider the simple setting where $X$ and $Y$ are independent. In this case, it can be established relatively easily, without invoking Theorem~\ref{Thm:populationlimit}, that 
$\mathcal{V}_n \pto e^{-1}$ and hence $\kappa^{X,Y}_n\pto  0$,  by observing that for an independent point $W\sim \mathrm{Unif}([0,1]^2)$ we have 
\begin{align}
\mathbb{E}(\mathcal{V}_n) &= \mathbb{E}\Bigl[\mathbb{P}\Bigl\{W\notin\bigcup_{i=1}^n B\Bigl(R_i, \frac{1}{2\sqrt{n}}\Bigr)\Bigm | R_1,\ldots, R_n\Bigr\}\Bigr]\nonumber\\
&=\mathbb{E}\Bigl[\mathbb{P}\Bigl\{R_i\notin B\Bigl(W, \frac{1}{2\sqrt{n}}\Bigr)\;\forall\, i\in[n]\Bigm | W\Bigr\}\Bigr]=(1-1/n)^n\to e^{-1},\label{Eq:NullLimit}
\end{align}
and similarly (through a second moment calculation) $\var(\mathcal{V}_n) \to 0$.  However, this argument does not extend to the general case, because the joint ranks $R_i$ are no longer independent, so the final equality above does not hold. This subtle dependence structure among the $R_i$'s is the main technical obstacle in proving Theorem~\ref{Thm:populationlimit}. 

The following proposition shows that the coverage divergence between the joint distribution and the product of the marginals, $\kappa^{X,Y}$, as defined in~\eqref{eq:f-divergence}, can indeed serve as a correlation measure between distributions in any dimension, as it satisfies some of the key desired properties for correlation statistics discussed in the literature. 

\begin{prop}\label{Prop:KappaProperty}
Suppose ${P}^{(X, Y)}$ is a Borel probability measure on $\mathbb{R}^{d_X + d_Y}$ with marginals $P^X$ and $P^Y$ on $\mathbb{R}^{d_X}$ and $\mathbb{R}^{d_Y}$, respectively. Let $(X, Y) \sim P^{(X, Y)}$, and let $\kappa^{X, Y}$ be defined as in~\eqref{eq:f-divergence}. Then
    \begin{enumerate}[label = (\roman*), noitemsep]
        \item $0 \leq \kappa^{X, Y} \leq 1$; 
        \item $\kappa^{X, Y} = 0$ if and only if $X$ is independent of $Y$; 
        \item $\kappa^{X,Y} = 1$ if and only if $P^{(X,Y)}$ is singular with respect to $P^{X}\otimes P^{Y}$.
        \item For a random variable $Z$, if $X \indep Y \mid Z$, then we have $\kappa^{X, Z} \geq \kappa^{X, Y}$;
        \item For any sequence of random variables $X^{(n)}$ and $Y^{(n)}$ such that $P^{(X^{(n)}, Y^{(n)})} \dto P^{(X, Y)}$, we have $\liminf_{n \to \infty} \kappa^{X^{(n)}, Y^{(n)}} \geq \kappa^{X, Y}$;
        \item $\kappa^{X, Y} = \kappa^{Y, X}$. 
    \end{enumerate}
\end{prop}
We remark that parts (ii), (iv), (v) and (vi) demonstrate that $\kappa^{X, Y}$, as a measure of statistical association, satisfies the `zero-independence', `information-monotonicity', `lower semicontinuity' and `symmetry axioms' considered in \citet{borgonovo2025convexity} \citep[see also][]{mori2019four, renyi1959measures}. Part (iii) is related to the `max-functionality' axiom, though the coverage correlation measures a more general statistical association between $X$ and $Y$ than a purely directional functional relationship. Also, the following information gain inequality is an immediate consequence of part (iv).
\begin{enumerate}
    \item[(iv')] for any random variables $X$, $X'$ and $Y$, we have $\kappa^{(X, X'), Y} \geq \kappa^{X, Y}$.
\end{enumerate}
This inequality is not mentioned in \citet{borgonovo2025convexity}, but it appears as an axiom for dependency measurement in \citet{griessenberger2022multivariate}.

\subsubsection{Multivariate consistency}
We now extend the consistency result to the case where $X$ and $Y$ are multivariate random vectors. The proof strategy underlying Theorems \ref{Thm:populationlimit} does not extend to this setting, as it relies heavily on the notion of order statistics, which is absent for multivariate marginals. Instead, the multivariate result relies on the convergence of the empirical optimal transport map to the population version. Hence, we first present a few notions from the optimal transport theory; for a comprehensive introduction to the subject, see~\citet{villani2009optimal, villani2021topics}. 

For probability measures $\mu$ and $\nu$ on $\mathbb{R}^d$, Monge's problem \citep{monge1781memoire} seeks a measurable map $T : \mathbb{R}^d \to \mathbb{R}^d$ solving
\begin{align}
\inf_{\tilde T} \int\|u- {\tilde T}(u)\|_2^2 \, \mathrm{d} \mu \quad \text{subject to $\tilde T{\#} \mu=\nu$}, \label{eq:Monge}
\end{align}
where $(\tilde T\#\mu) (B) := \mu (\tilde T^{-1}(B))$ for any Borel set $B \subseteq \mathbb{R}^d$. We call $\mu$ the source distribution and $\nu$ the target distribution. If $\mu$ is absolutely continuous with respect to the Lebesgue measure and both $\mu$ and $\nu$ have finite second moments, Brenier's theorem~\citep{brenier1991polar} ensures the existence of a convex function $\phi : \mathbb{R}^d \to \mathbb{R}$, unique up to an additive constant, such that $T:=\nabla\phi$ is the $\mu$-almost surely unique solution of~\eqref{eq:Monge}. We refer to $\nabla\phi$ as the \emph{optimal transport map} from $\mu$ to $\nu$, and $\phi$ the \emph{Brenier potential}. In the literature, much effort has been dedicated to estimate the optimal transport map given samples from the source and the target distributions; see, among many others, \cite{hutter2021minimax, Deb2021RatesOE, divol2025optimal, manole2024plugin, balakrishnan2025stability}. The following regularity property lies at the core of obtaining the desired convergence rates for the empirical optimal transport map.



\begin{defn}\label{def:HolderConti}
Let $ \Omega_1, \Omega_2 \subseteq \mathbb R^d$, and let $T: \Omega_1\to \Omega_2$ be a map. For $\theta\in(0,1]$ and $L>0$, we say that $T$ is $(\theta,L)$-H\"older continuous on $\Omega_1$ if 
\[
\|T(x)-T(x')\|_2 \le L\|x-x'\|_2^\theta
\]
for all $x,x'\in \Omega_1$.
\end{defn} 

For $t>0$ and $\alpha \in (0, 2]$, let $\mathcal{P}_{t, \alpha}(\mathbb{R}^d)$ denote the class of
distributions on $\mathbb{R}^d$ such that for any $P \in \mathcal{P}_{t, \alpha}(\mathbb{R}^d)$ and random variable $X \sim P$, it holds that $\mathbb{E}\exp\{t\|X\|_2^\alpha\}<\infty$. For any integer $K \geq 2$ and $\theta_1, \ldots, \theta_K \in (0, 1]$, define
\begin{align*}
\mathcal{D}(\theta_1,\ldots,\theta_K) := \biggl\{(d_i)_{i = 1}^K\in\mathbb{N}^K:\; & \theta_k > \frac{d_k}{2\sum_{j\in[K]}d_j - d_k}\;\forall k\in[K],
\\
& \sum_{j\in[K]} d_j > \frac{2}{\min_{k\in[K]}\theta_k}+2\biggr\}.
\end{align*}

\begin{thm}\label{thm:ConsistencyGeneralDim}
Let $P^{(X,Y)}$ be a Borel probability measure on $\mathbb{R}^{d}$ with marginal distributions $P^X$ and $P^Y$ absolutely continuous with respect to the Lebesgue measure on $\mathbb{R}^{d_X}$ and $\mathbb{R}^{d_Y}$. Let $T_X$ and $T_Y$ be the almost surely unique optimal transport maps from $P^X$ to $\mathrm{Unif}([0,1]^{d_X})$ and from $P^Y$ to $\mathrm{Unif}([0,1]^{d_Y})$, respectively. Assume that 
\begin{enumerate}[label = (C\arabic*), ref=(C\arabic*)]
    \item\label{ass:subweibull} $P^X\in \mathcal{P}_{t_1, \beta_1}(\mathbb{R}^{d_X})$ and $P^Y\in \mathcal{P}_{t_2, \beta_2}(\mathbb{R}^{d_Y})$ for some $t_1,t_2>0$ and $\beta_1,\beta_2\in (0, 2]$; 
    \item\label{ass:holderotmap} $T_X$ is $(\theta_1, L_1)$-H\"older continuous and $T_Y$ is $(\theta_2, L_2)$-H\"older continuous, with some $\theta_1,\theta_2\in(0,1]$ and $L_1,L_2>0$.
    \item\label{ass:dimensionregime}  $(d_X, d_Y) \in \mathcal{D}(\theta_1, \theta_2)$.
\end{enumerate}
Given $(X_1,Y_1),\ldots,(X_n,Y_n)\stackrel{\mathrm{iid}}{\sim}P^{(X,Y)}$. Let $f_{\mathrm{cov}}$ be defined as in Theorem~\ref{Thm:populationlimit}. We have
\begin{equation*}
\kappa_n^{X,Y} \pto \kappa^{X,Y} := D_{f_\mathrm{cov}}(P^{(X,Y)}\,\|\,P^X\otimes P^Y),
\end{equation*}
as $n\to\infty$.
\end{thm}

We make a few remarks on assumptions of Theorem~\ref{thm:ConsistencyGeneralDim}. The main purpose of Conditions~\ref{ass:subweibull}, \ref{ass:holderotmap} and \ref{ass:dimensionregime} is to guarantee convergence of the empirical optimal transport map to the population counterpart on the data points, as stated in Proposition~\ref{prop:HolderOTConvergence}. Condition~\ref{ass:subweibull} assumes sub-Weibull tails for $X$ and $Y$, which is a common assumption for convergence of optimal transport maps in the literature \citep[see, e.g.][]{Deb2021RatesOE}, and is implied by other related conditions, such as the curvature condition on the population Brenier potentials \citep[Theorem~3]{balakrishnan2025stability}; see Proposition~\ref{prop:strongly-conv-equiv} and the preceding discussion.

If $X$ and $Y$ have $\alpha$-H\"older densities for $\alpha\in(0,1)$, that are bounded away from zero and infinity, and with compact and convex support, then Caffarelli's regularity theorem \citep[][Theorem~4.14]{caffarelli1996boundary, villani2009optimal} implies that \ref{ass:holderotmap} holds for $\theta_1 = \theta_2 = \alpha$. More recently, \citet[Theorem~5.1]{collins2025boundary} showed that under the same conditions on the densities of $X$ and $Y$, \ref{ass:holderotmap} holds for $\theta_1 = \theta_2 = 1-\epsilon$ for \emph{any} $\epsilon > 0$, with H\"older constants $L_1, L_2$ dependent on $\epsilon$. In this case, Condition~\ref{ass:dimensionregime} simplifies to $d_X+d_Y \geq 5$. 


We also remark that the proof techniques used in Theorems~\ref{Thm:populationlimit} and~\ref{thm:ConsistencyGeneralDim} are completely disjoint. They prove convergence of the empirical coverage correlation to the population limit for $d_X=d_Y=1$ and, under suitable regularity conditions, for $d_X + d_Y \geq 5$. It remains an interesting open question as to what happens when $d_X + d_Y \in\{3, 4\}$, though the resolution is likely beyond the scope of the current work. Nonetheless, we are able to characterise what happens at the extremes for the dependence statistic in the proposition below. 

\begin{prop}
\label{Prop:MultivariateLimit}
    Suppose $(X_1,Y_1),\ldots,(X_n,Y_n)\iid P^{(X,Y)}$ for a Borel probability measure $P^{(X,Y)}$ on $\mathbb{R}^{d_X+d_Y}$ with marginals $P^X$ and $P^Y$ on $\mathbb{R}^{d_X}$ and $\mathbb{R}^{d_Y}$ respectively. 
    \begin{enumerate}[label = (\roman*)]
        \item If $P^{(X,Y)} = P^X\otimes P^Y$, then $\kappa_n^{X,Y}\pto 0$.
        \item If $P^{(X,Y)}$ is singular with respect to $P^X\otimes P^Y$, then $\kappa_n^{X,Y}\pto 1$.
    \end{enumerate}
\end{prop}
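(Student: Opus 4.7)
The plan is to handle the two parts of the proposition by separate arguments, both leveraging the structure of the empirical Monge--Kantorovich ranks.

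For part (i), I would exploit the observation that under independence $\pi^X$ is a function of $(\bs X, \bs U)$ and $\pi^Y$ of $(\bs Y, \bs V)$, with the four inputs mutually independent. By exchangeability of the i.i.d.\ observations $X_i$'s (and the i.i.d.\ reference $\bs U$), $\pi^X$ is uniformly distributed on $\mathcal{S}_n$ conditionally on $\bs U$; similarly for $\pi^Y$. Hence $\sigma := \pi^Y \circ (\pi^X)^{-1}$ is uniform on $\mathcal{S}_n$ and independent of $(\bs U, \bs V)$. Reindexing by $(\pi^X)^{-1}$ preserves the multiset $\{R_i : i \in [n]\}$, so in the new labelling $R_i = (U_i, V_{\sigma(i)})$. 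I would then write
\begin{align*}
\E[\mathcal{V}_n] = \int_{[0,1]^d} \Prob\bigl\{R_i \notin B(w, \gamma) \text{ for all } i\bigr\}\,dw,
\end{align*}
and, for fixed $w = (w_X, w_Y)$, condition on the counts $N_X(w) := |\{i : U_i \in B_X(w_X, \gamma)\}|$ and $N_Y(w) := |\{j : V_j \in B_Y(w_Y, \gamma)\}|$, where $B_X, B_Y$ denote the corresponding periodic $\gamma$-boxes in the marginal cubes. Given $N_X$ and $N_Y$, the conditional probability (under the uniform $\sigma$) that no matched pair $(U_i, V_{\sigma(i)})$ falls in $B_X \times B_Y$ equals the hypergeometric tail $\binom{n - N_X}{N_Y}/\binom{n}{N_Y}$. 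With $N_X \sim \mathrm{Bin}(n, n^{-d_X/d})$ and $N_Y \sim \mathrm{Bin}(n, n^{-d_Y/d})$ independently, a short Poisson-style calculation yields $\E[\mathcal{V}_n] \to e^{-1}$. A parallel second-moment argument with two test points $w, w'$ gives $\var(\mathcal{V}_n) \to 0$, and hence $\mathcal{V}_n \pto e^{-1}$, i.e.\ $\kappa_n^{X,Y} \pto 0$.

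For part (ii), using the distribution-free property of the Monge--Kantorovich ranks I would first reduce to the case $P^X = \mathrm{Unif}([0,1]^{d_X})$ and $P^Y = \mathrm{Unif}([0,1]^{d_Y})$, so that $\mu := P^{(X,Y)}$ is a Borel measure on $[0,1]^d$ with uniform marginals and $\mu \perp \mathrm{Leb}$. Fix $\varepsilon > 0$ and pick a Borel set $A \subset [0,1]^d$ with $\mu(A) = 1$ and $\vol(A) = 0$. By inner regularity of $\mu$, take a compact $K \subset A$ with $\mu(K) > 1 - \varepsilon$; by compactness of $K$ and $\vol(K) = 0$, cover $K$ by a finite union $G$ of open axis-aligned boxes with $\vol(G) < \varepsilon$. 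Let $\delta := d_\infty(K, G^c) > 0$; then the thickening $G^+ := \{w : d_\infty(w, G) \leq \gamma\}$ satisfies $\vol(G^+) \to \vol(G) < \varepsilon$ as $n \to \infty$, since $G$ has finite $(d-1)$-dimensional boundary content. Invoking uniform consistency of empirical Monge--Kantorovich ranks for uniform marginals \citep[as in][]{hallin2021distribution} gives $\max_{i \in [n]} \|R_i - (X_i, Y_i)\|_\infty \pto 0$. Hence with probability tending to one, every index $i$ with $(X_i, Y_i) \in K$ satisfies $\|R_i - (X_i, Y_i)\|_\infty < \delta$ and thus $R_i \in G$. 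By the law of large numbers, at most $\varepsilon n$ indices have $(X_i, Y_i) \notin K$, each contributing at most $1/n$ to the covered volume. Combining,
\begin{align*}
\vol\Bigl(\bigcup_{i=1}^n B(R_i, \gamma)\Bigr) \leq \vol(G^+) + \varepsilon n \cdot \tfrac{1}{n} \leq 3\varepsilon
\end{align*}
with probability tending to one; so $\mathcal{V}_n \geq 1 - 3\varepsilon$ with probability tending to one, and since $\varepsilon$ was arbitrary, $\mathcal{V}_n \pto 1$ and $\kappa_n^{X,Y} \pto 1$.

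The main obstacle is the uniform consistency of multivariate Monge--Kantorovich ranks invoked in part (ii), particularly since our reference points $\bs U, \bs V$ are themselves i.i.d.\ random rather than a fixed grid, so existing results must be adapted; a subsidiary issue is that the distribution-free reduction to uniform marginals in multivariate dimensions assumes absolute continuity of $P^X, P^Y$, and the non-absolutely-continuous case may need to be handled by a separate perturbation argument. The second-moment calculation in part (i) is also somewhat delicate: the counts $(N_X(w), N_Y(w))$ and $(N_X(w'), N_Y(w'))$ at two distinct test points are coupled through the shared uniform permutation $\sigma$, so the covariance has to be computed explicitly rather than factored.
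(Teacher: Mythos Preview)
Your proposal is broadly correct but differs from the paper in both parts, and in part (i) you are doing more work than needed.

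For part (i), you correctly establish that $\sigma=\pi^Y\circ(\pi^X)^{-1}$ is uniform on $\mathcal{S}_n$ and independent of $(\bs U,\bs V)$, and that after reindexing $R_i=(U_i,V_{\sigma(i)})$. But you then miss the immediate consequence: since $\bs V$ is i.i.d.\ and $\sigma\indep\bs V$, the permuted sequence $(V_{\sigma(i)})_{i\in[n]}$ has the same law as $(V_i)_{i\in[n]}$, hence $(R_i)_{i\in[n]}\overset{d}{=}(U_i,V_i)_{i\in[n]}\iid\mathrm{Unif}([0,1]^d)$. This is exactly the hypothesis of Lemma~\ref{le: VacancyLimitRandomRef}, which gives $\E(\mathcal{V}_n)\to e^{-1}$ and $\var(\mathcal{V}_n)\to 0$ directly. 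The hypergeometric calculation you outline would work, but the ``delicate'' second-moment computation you flag is entirely avoidable; the paper's proof of (i) is a one-line citation of that lemma.

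For part (ii), your high-level structure matches the paper's (inner regularity, compact Lebesgue-null set $K$, Minkowski dilation), but the key analytic input differs. You rely on sup-norm consistency of the empirical ranks, $\max_i\|R_i-(X_i,Y_i)\|_\infty\pto 0$, which---as you correctly identify---is nontrivial with random reference points and does not immediately cover non-absolutely-continuous marginals. The paper avoids both issues: it transports $X$ and $Y$ to uniform via the Brenier map (through Markov kernels $T_X,T_Y$), shows that singularity is preserved under this transform, and then uses only the $1$-Wasserstein rate of \citet{fournier2015rate} for both empirical measures $(R_i)_i$ and $(T_X(X_i),T_Y(Y_i))_i$ against $\mathrm{Unif}([0,1]^d)$. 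A Markov argument on the resulting $W_1$ bound controls the fraction of ``bad'' indices $i$ for which $R_i$ is not close to \emph{some} $(T_X(X_j),T_Y(Y_j))$. This is weaker than sup-norm consistency and requires no absolute-continuity assumption on the marginals, so it handles all cases uniformly without the perturbation argument you anticipate needing.
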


\subsection{Statistical test for independence}\label{sec:nullCLT}
Now we turn to establish the asymptotic theory for the coverage correlation coefficient. The following result derives the asymptotic distribution of $\kappa_n^{X,Y}$ under the null. This allows us to use the coverage correlation to perform independence testing between $X$ and $Y$. 

\begin{thm}\label{thm: CLT}
    Suppose that $X$ and $Y$ are independent random vectors on $\mathbb{R}^{d_X}$ and $\mathbb{R}^{d_Y}$ respectively. Define
    \[
        \sigma_n^2: = \frac{1}{(1-e^{-1})^2}\sum_{k=2}^n \binom{n}{k} \Bigl(1 - \frac{2}{n}\Bigr)^{n-k}\biggl\{\Bigl(\frac{2}{k + 1}\Bigr)^d n^{-k -1} - n^{-2k}\biggr\}.
    \]
    Given independent and identically distributed copies $(X_1,Y_1),\ldots,(X_n,Y_n)$ of $(X,Y)$, we have 
    \[
        \frac{\sqrt{n} \kappa_n^{X,Y}}{\sigma_n} \dto \mathcal{N}(0, 1)
    \]
    as $n\to\infty$. 
\end{thm}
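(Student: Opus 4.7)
The plan has two main components: a distribution-free reduction under the null, followed by a central limit theorem for the resulting vacancy statistic on the torus. First, I would show that under the independence null the joint ranks $R_1, \ldots, R_n$ are i.i.d.\ uniform on $[0,1]^d$, so that $\mathcal{V}_n$ has the law of the vacancy of a binomial coverage process. Since the data are exchangeable, $\pi^X\mid\bs{U}$ and $\pi^Y\mid\bs{V}$ are each uniform on $\mathcal{S}_n$ and are jointly independent under $X\indep Y$; a short calculation then gives $\mathbb{P}(\pi^X(i)=a,\pi^Y(i)=b)=1/n^2$, and more generally $(\pi^X(i),\pi^Y(i))_{i\in[n]}$ has the law of $n$ ordered pairs drawn without replacement from $[n]\times[n]$. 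Combining this with the independence of $(\bs U,\bs V)$ from the permutations yields the i.i.d.\ uniform property of $(R_i)_{i=1}^n$.

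Given this reduction, the integral representation $\mathcal{V}_n=\int_{[0,1]^d}\prod_{i=1}^n(1-\bone\{R_i\in B(w,\gamma)\})\,dw$, combined with Fubini and a binomial expansion of $(1-2/n+I(\Delta))^n$, where $I(\Delta):=\vol(B(0,\gamma)\cap B(\Delta,\gamma))$, delivers $\E[\mathcal{V}_n]=(1-1/n)^n$ and the closed-form expression for $\var(\mathcal{V}_n)$ appearing in the theorem via the identity $\int_{[0,1]^d}I(\Delta)^k\,d\Delta=(2/(k+1))^dn^{-k-1}$ for $k\geq 1$. The bias $(1-1/n)^n-e^{-1}=O(1/n)$ is of smaller order than the standard deviation $\sqrt{\var(\mathcal{V}_n)}=\Theta(n^{-1/2})$, so it suffices to prove asymptotic normality of the centred and standardised vacancy.

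For that CLT, I would use the Hoeffding decomposition $\mathcal{V}_n-\E[\mathcal{V}_n]=\sum_{k\ge 1}\sum_{|S|=k}\phi_k(R_S)$. Because $\vol(B(w,\gamma))\equiv 1/n$ is constant in $w$, one finds $\E[\mathcal{V}_n\mid R_1]=\E[\mathcal{V}_n]$, hence $\phi_1\equiv 0$ and the decomposition starts at second order, with quadratic kernel $\phi_2(r_1,r_2)=(1-1/n)^{n-2}(I(r_1,r_2)-1/n^2)$. Closed forms for the $\phi_k$ obtained from M\"obius inversion on the lattice of subsets let me verify that $\binom{n}{2}\var(\phi_2)$ captures the leading $\Theta(1/n)$ term in $\var(\mathcal{V}_n)$ and that $\sum_{k\ge 3}\binom{n}{k}\var(\phi_k)=o(1/n)$, thereby reducing the problem to a CLT for the degenerate U-statistic $U_n:=\sum_{i<j}\phi_2(R_i,R_j)$.

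The main obstacle is this last CLT. Because the first-order projection of $\phi_2$ vanishes, $U_n$ is a degenerate U-statistic and the classical Hoeffding--H\'ajek theorem does not apply; worse, a degenerate U-statistic with a fixed kernel would typically admit a non-Gaussian Wiener-chaos limit. The rescue here is that $\phi_2$ is supported on the shrinking set $\{(r_1,r_2):\|r_1-r_2\|_\infty\le 2\gamma\}$ of measure $O(1/n)$, which gives the rare-event moment bounds $\E|\phi_2|^k=O(n^{-k-1})$, so that $U_n$ aggregates only $O_p(n)$ weakly dependent ``collision'' contributions and should obey a Gaussian limit. My approach would be Poissonisation: replace $n$ by $N\sim\mathrm{Poisson}(n)$, note that the Poissonised vacancy $\tilde{\mathcal{V}}_n$ has mean exactly $e^{-1}$ and derive its asymptotic normality via the Malliavin--Stein machinery for functionals of Poisson point processes, then de-Poissonise using $N-n=O_p(\sqrt{n})$ together with a coupling that controls $|\mathcal{V}_n-\tilde{\mathcal{V}}_n|$ in $L^2$. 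A direct Stein dependency-graph argument on $U_n$, leveraging the rare-event moment bounds to control the third-cumulant error, would provide an alternative route.
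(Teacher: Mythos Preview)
Your reduction to i.i.d.\ uniform ranks and the moment calculations are correct, and the observation that $\phi_1\equiv 0$ (a consequence of the periodic boundary making $\vol(B(w,\gamma))$ constant in $w$) is a genuine structural feature. The gap is in the truncation of the Hoeffding decomposition. Writing $W_j(w):=1/n-\bone\{R_j\in B(w,\gamma)\}$, one has $\mathcal{V}_n-\E\mathcal{V}_n=\sum_{k\ge 1}(1-1/n)^{n-k}\sum_{|S|=k}g_k(R_S)$ with $g_k(r_1,\dots,r_k)=\int\prod_{j}W_j(w)\,dw$; this \emph{is} the Hoeffding decomposition since each $g_k$ has vanishing conditional mean on proper subsets. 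A direct computation gives
\[
\var(g_k)=\int_{[0,1]^d}\bigl(I(0,\Delta)-n^{-2}\bigr)^k\,d\Delta\sim\Bigl(\frac{2}{k+1}\Bigr)^d n^{-k-1},
\qquad
\binom{n}{k}\var(\phi_k)\sim\frac{e^{-2}}{n}\cdot\frac{1}{k!}\Bigl(\frac{2}{k+1}\Bigr)^d,
\]
for \emph{every} $k\ge 2$. All orders contribute at the same $\Theta(1/n)$ scale, and summing over $k$ reproduces the full limiting variance term by term. Hence $\sum_{k\ge 3}\binom{n}{k}\var(\phi_k)$ is a fixed positive fraction of $\var(\mathcal{V}_n)$ (about $18\%$ when $d=2$), not $o(1/n)$; the reduction to the second-order $U_n$ fails. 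Your de-Poissonisation is also not innocuous: since $\E[\tilde{\mathcal{V}}_n\mid N]=(1-1/n)^N$ has variance of order $e^{-2}/n$, the Poissonised statistic carries extra randomness of the same order as $\var(\mathcal{V}_n)$, so $\tilde{\mathcal{V}}_n$ and $\mathcal{V}_n$ cannot be coupled to within $o_p(n^{-1/2})$.

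The paper bypasses both issues via spatial blocking in the style of \citet{hall1985three}. One partitions the torus into $L\asymp n/\lambda^d$ boxes $\mathcal{P}_\ell$ of side $(\lambda+2)\gamma$, each containing a concentric sub-box $\mathcal{Q}_\ell$ of side $\lambda\gamma$, and writes $\mathcal{V}_n=\mathcal{V}_n^{\mathrm{in}}+\mathcal{V}_n^{\mathrm{out}}$ with $\mathcal{V}_n^{\mathrm{in}}$ the vacancy inside $\bigcup_\ell\mathcal{Q}_\ell$. Because the $\mathcal{Q}_\ell$ are separated by corridors of width $2\gamma$, the block vacancies are conditionally independent given the occupancy counts $N_\ell$, so a Berry--Esseen bound gives a conditional CLT for $\mathcal{V}_n^{\mathrm{in}}$ with conditional variance $S_n$ satisfying $nS_n\pto\beta_\lambda^2$. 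The conditional mean $M_n=\E[\mathcal{V}_n^{\mathrm{in}}\mid N_1,\dots,N_L]$, a symmetric function of multinomial counts, is handled by a Holst-type CLT, yielding $\sqrt{n}(M_n-\E M_n)\dto\mathcal{N}(0,\alpha_\lambda^2)$; combining the two layers gives $\sqrt{n}(\mathcal{V}_n^{\mathrm{in}}-\E\mathcal{V}_n^{\mathrm{in}})\dto\mathcal{N}(0,\alpha_\lambda^2+\beta_\lambda^2)$. Finally $\var(\mathcal{V}_n^{\mathrm{out}})=O((n\lambda)^{-1})$, and sending $\lambda\to\infty$ after $n\to\infty$ (whereupon $\alpha_\lambda^2\to 0$ and $\beta_\lambda^2\to\beta^2$) completes the argument. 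This route keeps all Hoeffding orders implicitly bundled inside each block and never needs Poissonisation.
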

Based on the above theorem, we can construct a test for 
\[
    \mathcal{H}_0: P^{(X,Y)} = P^X\otimes P^Y \quad \text{versus} \quad \mathcal{H}_1: P^{(X,Y)}\neq P^X\otimes P^Y
\]
by rejecting the null hypothesis if 
\begin{equation}\label{Eq:IndTest}
    \frac{\sqrt{n}\kappa_n^{X,Y}}{\sigma_n} \geq z_\alpha,
\end{equation}
where $z_\alpha$ is the upper $\alpha$ quantile of the standard normal distribution. Since the limiting distribution is independent of the marginal distributions $P^X$ and $P^Y$, test~\eqref{Eq:IndTest} is distribution-free. 

The normalising factor $\sigma_n/\sqrt{n}$ in Theorem~\ref{thm: CLT} is the exact standard deviation of the empirical coverage correlation under the null. As can be seen in Lemma~\ref{le: VacancyLimitRandomRef}, we have 
\[
\lim_{n\to\infty} \sigma_n^2  = \frac{1}{(e-1)^2}\sum_{k=2}^\infty \frac{1}{k!}\biggl(\frac{2}{k+1}\biggr)^{d_X+d_Y} =:\sigma^2
\]
One can equivalently construct the asymptotic test by replacing $\sigma_n$ with $\sigma$ in~\eqref{Eq:IndTest}, though using the exact variance $\sigma_n^2$ improves the finite sample performance of the test when $n$ is relatively small. When $d_X=d_Y=1$, the expression of the asymptotic variance $\sigma^2$ has an explicit value of $(e-1)^{-2}(4\mathrm{Ei}(1)-4\gamma_0-5)\approx 0.091992$, where $\gamma_0$ is the Euler--Mascheroni constant and $\mathrm{Ei}(1)=\int_{t=0}^1e^{-1/t}/t\,\mathrm{d}t$ is the exponential integral evaluated at $1$.

In settings where the consistency of the empirical coverage correlation can be proved (see Theorems~\ref{Thm:populationlimit} and~\ref{thm:ConsistencyGeneralDim}), the proposed test in~\eqref{Eq:IndTest} is asymptotically consistent under the alternative. Indeed, we have
\begin{align*} 
\mathbb{P}(\kappa_n^{X,Y} < n^{-1/2} \sigma_n z_\alpha) \leq \mathbb{P}(|\kappa_n^{X,Y} - \kappa^{X,Y}| \leq n^{-1/2} \sigma_n z_\alpha - \kappa^{X,Y}) \to 0,
\end{align*}
since $\kappa^{X,Y} > 0$ under the alternative and $\kappa_n^{X,Y} - \kappa^{X,Y} = o_p(1)$. 

While the central limit theorem in Theorem~\ref{thm: CLT} allows us to derive asymptotically valid p-values, the worst-case relative error of such p-values can still be large in the tails of the normal distribution (e.g.\ a bound of order $n^{-1/2}$ from the Berry--Esseen theorem \citep{berry1941accuracy,esseen1942liapunov}). However, using the fact that the coverage correlation coefficient statistic exhibits weak dependence on individual data point $(X_i, Y_i)$, we are able to derive a finite-sample concentration inequality.

\begin{thm} \label{thm: concentration}
Let $P^{(X,Y)}$ be Borel probability measure on $\mathbb{R}^2$. There exists a universal constant $C>0$ such that for any $t > 0$, we have
\[
   \mathbb{P}(|\mathcal{V}_n - \mathbb{E}\mathcal{V}_n| \geq t) \leq 2(n+1)e^{-C\min\{nt^2, (nt^2)^{1/3}\}}.
\]
\end{thm}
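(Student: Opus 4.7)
The plan is to view $\mathcal{V}_n = \mathcal{V}_n(W_1,\ldots,W_n)$ as a function of the $n$ i.i.d.\ quadruples $W_i := (X_i,Y_i,U_i,V_i)$ and to apply a martingale concentration argument with carefully controlled increments. Writing $M_k := \mathbb{E}[\mathcal{V}_n\mid \mathcal{F}_k]$ for the Doob martingale with respect to $\mathcal{F}_k := \sigma(W_1,\ldots,W_k)$, we have $\mathcal{V}_n - \mathbb{E}\mathcal{V}_n = \sum_{k=1}^n D_k$ with $D_k := M_k - M_{k-1}$. The crude bound $|D_k|\leq 1$ yields only the useless tail $2e^{-nt^2/2}$ from McDiarmid, so the main task is to show that each $D_k$ is typically of size $n^{-1/2}$ with sub-Weibull fluctuations above this scale.

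The core step is a local Lipschitz estimate: if $W_i'$ is an independent copy of $W_i$ and $\mathcal{V}_n^{(i)}$ denotes the coverage statistic with $W_i$ replaced by $W_i'$, I would bound $|\mathcal{V}_n - \mathcal{V}_n^{(i)}|$. In the univariate setting, the Monge--Kantorovich rank of $X_j$ equals $U_{(r^X_j)}$, the $r^X_j$-th order statistic of $\bs U$, where $r^X_j$ is the ordinary rank of $X_j$ among $(X_1,\ldots,X_n)$. Replacing $W_i$ by $W_i'$ shifts $r^X_i$ from some $k$ to some $k'$, and cyclically shifts the ranks of the $|k-k'|$ intermediate $X_j$'s by one; correspondingly, the cube centred at each such $X_j$ slides along the $x$-axis by one consecutive spacing of $\bs U$, and a symmetric statement holds in the $y$-direction. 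Using the telescoping identity $\sum_{j=k+1}^{k'}(U_{(j)}-U_{(j-1)}) = U_{(k')}-U_{(k)}$ and the geometric fact that many of the shifted cubes overlap with stationary neighbours in the union, the total change in covered volume is bounded by
\[
    |\mathcal{V}_n - \mathcal{V}_n^{(i)}| \;\leq\; C\bigl(U_{(k\vee k')}-U_{(k\wedge k')}\bigr)n^{-1/2} + C\bigl(V_{(\ell\vee\ell')}-V_{(\ell\wedge\ell')}\bigr)n^{-1/2} + Cn^{-1},
\]
whose $p$-th moment is readily controlled via classical tail estimates for uniform spacings and order statistics, giving $\mathbb{E}|\mathcal{V}_n - \mathcal{V}_n^{(i)}|^p \leq (C p^{1/2}/\sqrt n)^p$ for moderate $p$ and $(Cp^{3/2}/\sqrt n)^p$ for larger $p$.

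Feeding these increment bounds into a sub-Gaussian/sub-exponential martingale inequality (Bernstein--Freedman type, or equivalently the moment method in the style of Boucheron--Lugosi--Massart) yields $\|\mathcal{V}_n - \mathbb{E}\mathcal{V}_n\|_p \lesssim (p/n)^{1/2}$ for $p$ below a transition threshold and $\lesssim p^{3/2}/\sqrt n$ above it. Markov's inequality, optimised over $p$, then produces the tail $\exp\bigl(-C\min\{nt^2,(nt^2)^{1/3}\}\bigr)$. The prefactor $2(n+1)$ comes from a preliminary truncation: one intersects with the event that no martingale increment exceeds a fixed worst-case threshold; a union bound over the $n$ increments controls the complement, and on the good event a clean sub-Gaussian bound applies.

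The main obstacle is the symmetric-difference estimate in the displayed equation above. A naive per-cube swept-area calculation bounds the change by $|k-k'|\cdot n^{-3/2}$, which is of order $n^{-1/2}$ in typical cases and $O(1)$ in the worst case, giving a useless summed variance. The gain comes from two structural features: the telescoping of consecutive $\bs U$-spacings into the single difference $U_{(k')}-U_{(k)}$, and the cancellation between the regions vacated by a shifted cube and those simultaneously re-covered by a neighbouring cube. Quantifying the latter via a geometric argument on the tiling structure of the $\ell_\infty$-ball union is the technical heart of the proof; once this is in place, the remainder is a fairly standard application of the martingale concentration machinery.
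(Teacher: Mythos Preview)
Your increment estimate is too weak by a factor of $\sqrt{n}$, and this gap cannot be closed by the cancellation you describe. When $W_i=(X_i,Y_i,U_i,V_i)$ is replaced by an independent copy, the old and new $X$-ranks $k,k'$ are essentially independent uniform draws from $[n]$, so $|k-k'|$ is typically of order $n$ and your displayed bound $C(U_{(k\vee k')}-U_{(k\wedge k')})n^{-1/2}$ is typically of order $n^{-1/2}$. Feeding increments of this size into any Efron--Stein or martingale moment inequality yields only $\sum_{i}\mathbb{E}|\mathcal{V}_n-\mathcal{V}_n^{(i)}|^2=O(1)$ and hence $\|\mathcal{V}_n-\mathbb{E}\mathcal{V}_n\|_p=O(\sqrt{p})$, not the claimed $O(\sqrt{p/n})$. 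The cancellation you invoke does not occur: the $|k-k'|$ cubes that slide in the $x$-direction sit at essentially independent $y$-positions (determined by the unrelated permutation $\tau\circ\sigma^{-1}$), so their swept strips are generically disjoint and the telescoped bound $(U_{(k')}-U_{(k)})n^{-1/2}$ is the right order for $|\mathcal{V}_n-\mathcal{V}_n^{(i)}|$ itself, not merely an overestimate you can tighten geometrically.

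The paper avoids this by \emph{not} treating $(X_i,Y_i,U_i,V_i)$ as a single coordinate. Instead it conditions on the reference points $\bs U,\bs V$ (together with auxiliary $\mathrm{Beta}(1/2,1/2)$ midpoints $G_i,H_i$) on the high-probability event that every spacing $G_{i+1}-G_i$, $H_{i+1}-H_i$ is at most $(1+t)/n$, and then runs McDiarmid over the data $(X_i,Y_i)$ alone. The crucial lemma is a carve-out-and-rescale coupling: deleting the single data point with $X$-rank $r_x$ is realised by excising the one interval $[G_{r_x},G_{r_x+1})$ from the $x$-axis, compressing, and rescaling to $[0,1]$ (and similarly in $y$). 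A slice-by-slice argument then shows the covered area changes by at most a constant times that \emph{single} spacing, so replacing one $(X_{i_0},Y_{i_0})$ alters $\mathcal{V}_n$ by $O((1+t)/n)$ on the good event. This $O(n^{-1})$ bounded difference---not the $O(n^{-1/2})$ you obtain---is what drives the concentration; optimising over $t$ then produces the $\min\{nt^2,(nt^2)^{1/3}\}$ exponent and the $(n+1)$ prefactor from the union bound on spacings.
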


\subsection{Derandomising the coverage correlation}
\label{sec: derandomising}
We have defined the coverage correlation using uniformly distributed reference points $\bs{U}$ and $\bs{V}$. In this subsection, we discuss the possibility of replacing them with deterministic reference points to derandomise the coverage correlation coefficient, which can be desirable in practice.  Given fixed reference points $\bs{u} = (u_1,\ldots,u_n)$ and $\bs{v} = (v_1,\ldots,v_n)$, we denote the corresponding vacancy as 
\begin{align}
\mathcal{V}_n^{\mathrm{grid}} \equiv \mathcal{V}_n^{\mathrm{grid}}(\bs{u},\bs{v}) := \mathcal{V}\Bigl((X_i)_{i\in[n]},(Y_i)_{i\in[n]},\bs{u},\bs{v};\frac{1}{2\sqrt{n}}\Bigr). \label{eq:RegularGridVacancy}
\end{align}

When $X$ and $Y$ are univariate, it is also natural to use a regular grid reference $\bs{u} = \bs{v} = (1/n, 2/n, \ldots, 1)$. In this case, the empirical coverage correlation coefficient converges to the same population limit as in Theorem~\ref{Thm:populationlimit} and enjoys the same limiting distribution as in Theorem~\ref{thm: CLT}.

\begin{thm}
\label{Thm:PopulationLimitGrid}
Let $P^{(X,Y)}$ be a Borel probability measure on $\mathbb{R}^2$ with marginals $P^X$ and $P^Y$. Given $(X_1,Y_1),\ldots,(X_n,Y_n)\iid P^{(X,Y)}$, and $\bs{u} = \bs{v} = (1/n,2/n,\ldots,1)$, define
\begin{align}
\kappa_n^{X,Y;\mathrm{grid}} \equiv \kappa_n^{X,Y;\mathrm{grid}}(\bs{u},\bs{v}) := \frac{\mathcal{V}_n^{\mathrm{grid}}(\bs{u},\bs{v})-e^{-1}}{1-e^{-1}}. \label{eq:CovCorrGrid}
\end{align}
We have as $n \to \infty$ that
\[
\kappa_n^{X,Y; \mathrm{grid}} \pto \kappa^{X,Y}.
\]
\end{thm}

When $X$ and $Y$ are independent, the coverage correlation coefficient with the regular grid reference also enjoys a central limit theorem with the same asymptotic variance as in Theorem~\ref{thm: CLT}.

\begin{thm}\label{thm:RegularGridCLT}
Let $P^X$ and $P^Y$ be Borel probability measures on $\mathbb{R}$ and define $P^{(X,Y)} = P^X\otimes P^Y$. For $(X_1,Y_1),\ldots,(X_n,Y_n)\iid P^{(X,Y)}$ and $\kappa^{X, Y; \mathrm{grid}}_n$ defined as in~\eqref{eq:CovCorrGrid} with $\bs{u} = \bs{v} = (1/n, 2/n,\ldots,1)$,  we have as $n\to\infty$ that
\begin{align}
    \sqrt{n}(\kappa^{X, Y; \mathrm{grid}}_n - \E \kappa^{X, Y; \mathrm{grid}}_n) \dto \mathcal{N}\Bigl(0,  \frac{4 \mathrm{Ei}(1) - 4\gamma_0 - 5}{(1-e)^2}\Bigr).\label{eq:GridNullCLT}
\end{align}
\end{thm}
Note that the convergence in~\eqref{eq:GridNullCLT} is around $\mathbb{E}\kappa_{n}^{X, Y; \mathrm{grid}}$ instead of its asymptotic limit. But when $X$ and $Y$ are independent, Lemma~\ref{lem:varVacancyReg} yields the explicit expansion
\begin{align*}
\E \kappa_{n}^{X, Y; \mathrm{grid}}
&= \frac{e^{-1}}{1-e^{-1}}\biggl(-\frac{1}{\sqrt{n}}-\frac{1}{6n}+O(n^{-3/2})\biggr).
\end{align*}
Thus one may safely replace the expectation by its leading term $-e^{-1}n^{-1/2}/(1-e^{-1})$ in practice when $n$ is large.

Another benefit of using the grid reference is a stronger concentration result. The concentration result in Theorem~\ref{thm: concentration} is not sub-Gaussian for large deviations due to the fact that the spacings between consecutive order statistics of the uniformly random reference points $\bs U$ and $\bs V$ have sub-Gamma tails and deviate from the expected spacing of $1/(n+1)$. However, if we use the uniform grid reference, which has constant spacing between consecutive grid points, a sub-Gaussian concentration is available. 

\begin{prop}\label{prop: concentrationGrid}
    Let $P^{(X,Y)}$ be a Borel probability measure on $\mathbb{R}^2$ and let $\kappa^{X, Y; \mathrm{grid}}_n$ be defined as in~\eqref{eq:CovCorrGrid} with $\bs{u} = \bs{v} = (1/n, 2/n,\ldots,1)$.     There exists a universal constant $C>0$ such that for any $t>0$, we have 
    \[
        \mathbb{P}\bigl(|\kappa_n^{X,Y;\mathrm{grid}} - \mathbb{E}(\kappa_n^{X,Y;\mathrm{grid}})| \geq t\bigr) \leq 2e^{-Cnt^2}.
    \]
\end{prop}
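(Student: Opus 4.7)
Since $\kappa_n^{X,Y;\mathrm{grid}} = (\mathcal{V}_n^{\mathrm{grid}} - e^{-1})/(1-e^{-1})$ is an affine function of the uncovered-volume statistic $\mathcal{V}_n^{\mathrm{grid}} := \mathcal{V}\bigl(\bs X, \bs Y, \bs U, \bs V; 1/(2\sqrt n)\bigr)$, it suffices to prove a sub-Gaussian concentration inequality for $\mathcal{V}_n^{\mathrm{grid}}$. My plan is to apply McDiarmid's bounded-differences inequality, viewing $\mathcal{V}_n^{\mathrm{grid}}$ as a function of the $n$ i.i.d.\ pairs $(X_i, Y_i)_{i\in[n]}$. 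The central technical step is a bounded-differences lemma: for some universal constant $C_0 > 0$, replacing any single observation $(X_k,Y_k)$ by an alternative value $(X_k',Y_k')$ alters $\mathcal{V}_n^{\mathrm{grid}}$ by at most $C_0/n$ almost surely. Granted such a lemma, McDiarmid yields $\mathbb{P}(|\mathcal{V}_n^{\mathrm{grid}} - \mathbb{E}\mathcal{V}_n^{\mathrm{grid}}| \geq t) \leq 2\exp(-2nt^2/C_0^2)$, from which the target bound for $\kappa_n^{X,Y;\mathrm{grid}}$ follows by rescaling by $1-e^{-1}$.

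To prove the lemma, I would first describe how the joint ranks $R_i$ change under the observation replacement: the $k$-th rank $R_k$ jumps to a new grid point, while for each $i\neq k$ the $X$-coordinate of $R_i$ shifts by $\pm 1/n$ exactly when $X_i$ lies strictly between $X_k$ and $X_k'$, and an analogous statement holds in the $Y$-coordinate. The resulting change in the covered volume then decomposes into three contributions: (a) the displacement of the $k$-th ball, which is bounded above by $2/n$ (twice the ball volume); (b) a collective translation by $(\pm 1/n, 0)$ of the set $S$ of balls that shift in the $X$-direction; and (c) an analogous collective translation in the $Y$-direction.

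The hard part is bounding the contributions from (b) and (c) by $O(1/n)$. The naive approach of summing individual ball symmetric differences $|B_i \triangle B_i'| \asymp 1/n^{3/2}$ yields only $O(1/\sqrt n)$ after summing over up to $n$ shifted balls. To close this gap, the key idea is to treat $S$ as a single set that translates rigidly by $(1/n,0)$ and exploit the identity $|U \triangle U'| = |(S \triangle (S+(1/n,0))) \setminus T|$, where $U, U'$ denote the old and new unions of balls and $T$ is the union of balls unshifted in the $X$-direction. The grid structure---ball centers on a $1/n$-spaced lattice but with side length $1/\sqrt n$---forces substantial overlap between $X$-consecutive balls within $S$ as well as significant intersection between $S$ and $T$; a careful geometric analysis quantifying these overlaps via a slicing decomposition $|S \triangle (S+(1/n,0))| \leq (2/n)\int_0^1 c(y)\,dy$, where $c(y)$ counts the connected components of the horizontal slice $S_y$, is what delivers the required $O(1/n)$ bound. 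Once the bounded-differences lemma is secured, McDiarmid's inequality and the affine relation finish the proof.
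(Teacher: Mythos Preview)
Your high-level plan---prove a $C_0/n$ bounded-differences estimate for $\mathcal{V}_n^{\mathrm{grid}}$ as a function of the i.i.d.\ pairs $(X_i,Y_i)$ and then invoke McDiarmid---is exactly what the paper does: its proof of Proposition~\ref{prop: concentrationGrid} is a one-line appeal to Lemma~\ref{lem:VacancyDifference} (the bound $|\mathcal{V}-\mathcal{V}^k|\le 10/n$) together with McDiarmid's inequality.

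Where your sketch departs from the paper is in the argument for the bounded-differences lemma itself, and here there is a genuine gap. You propose to control $\vol(U\triangle U')=\vol\bigl((S\triangle(S+v))\setminus T\bigr)$ through the slicewise estimate $\vol(S\triangle(S+v))\le (2/n)\int_0^1 c(y)\,dy$. The problem is that $c(y)$ is not $O(1)$: although $S$ consists of balls with \emph{consecutive} $x$-ranks $\{a{+}1,\ldots,a'\}$, at a given height $y$ only those balls whose $y$-rank lies within $\sqrt n/2$ of $ny$ are visible, and their $x$-ranks form an arbitrary subset of that consecutive block; hence $S_y$ can have as many as $\Theta(\sqrt n)$ components and the integral can be $\Theta(\sqrt n)$, yielding only $O(1/\sqrt n)$. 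Subtracting $T$ does not repair this, because every ball in $T$ has $x$-rank in the complementary range $\{1,\ldots,a{-}1\}\cup\{a'{+}1,\ldots,n\}$ and so cannot reach the interior $x$-region $[(a{-}1)/n+1/(2\sqrt n),\ (a'{+}1)/n-1/(2\sqrt n)]$, which is precisely where most slice endpoints lie when $a'-a\gg\sqrt n$. The paper's Lemma~\ref{lem:VacancyDifference} sidesteps the symmetric-difference route entirely: it partitions the non-$k$ balls into four groups $\mathcal{I}_0,\ldots,\mathcal{I}_3$ according to which coordinate ranks shift, observes that each group's union $\mathcal{U}_j^k$ is a rigid translate of $\mathcal{U}_j$ by a vector in $\{0,\pm 1/n\}^2$, and uses translation-invariance of \emph{volume} to obtain $|\vol(\mathcal{U}_j)-\vol(\mathcal{U}_j^k)|\le 4/n$ directly, with no component count needed.
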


We next consider the case where both $X$ and $Y$ are multivariate. Recall that the 2-Wasserstein distance between two probability measures $\mu$ and $\nu$ on $\mathbb{R}^d$ is defined as
\[
\mathcal{W}_2(\mu, \nu) :=
\biggl(\inf_{\gamma\in\Pi(\mu,\nu)}
\int \|x-y\|_2^2 \, \mathrm d\gamma(x,y)\biggr)^{1/2},
\]
where $\Pi(\mu,\nu)$ denotes the set of all couplings of $\mu$ and $\nu$. The following theorem shows that the conclusion of Theorem~\ref{thm:ConsistencyGeneralDim} remains valid for deterministic reference points approximating $\mathrm{Unif}([0, 1]^d)$ in 2-Wasserstein distance, under an additional regularity condition on the corresponding Brenier potential.  

\begin{thm}
    \label{Thm:ConsistencyGeneralDimGrid}
    Let $(X_1,Y_1),\ldots, (X_n,Y_n)\iid P^{(X,Y)}$ for a joint distribution $P^{(X,Y)}$ with $d_X$- and $d_Y$-dimensional marginals that are absolutely continuous satisfying conditions~\ref{ass:subweibull}, \ref{ass:holderotmap} and \ref{ass:dimensionregime} as in Theorem~\ref{thm:ConsistencyGeneralDim}. Let $\bs U=(u_1,\ldots,u_n)$ and $\bs V=(v_1,\ldots,v_n)$ be reference grid points (either deterministic or random) in $[0,1]^{d_X}$ and $[0,1]^{d_Y}$, respectively such that $\mu_n=n^{-1}\sum_{i=1}^n\delta_{u_i}$ and $\nu_n=n^{-1}\sum_{i=1}^n\delta_{v_i}$ satisfy
\begin{equation}\label{eq:UniformApprox}
    \begin{aligned}
    \E\mathcal{W}_2\bigl(\mu_n, \mathrm{Unif}([0,1]^{d_X})\bigr) = O(n^{-\frac{1}{d_X}}\log^2 n), \!\! \quad \E\mathcal{W}_2\bigl(\nu_n, \mathrm{Unif}([0,1]^{d_Y})\bigr) = O(n^{-\frac{1}{d_Y}}\log^2 n). 
\end{aligned}
\end{equation}

Define 
\[  
\kappa_n^{X,Y;\mathrm{grid}}(\bs{U},\bs{V}) := \frac{\mathcal{V}_n^{\mathrm{grid}}(\bs{U},\bs{V}) -e^{-1}}{1-e^{-1}}.
\]
If the Brenier potentials associated with the optimal transport maps from $\mathrm{Unif}([0,1]^{d_X})$ to $P^{X}$ and from $\mathrm{Unif}([0,1]^{d_Y})$ to $P^{Y}$ are both Lipschitz, then we have as $n\to\infty$ that 
\[  
\kappa_n^{X,Y;\mathrm{grid}}(\bs{U},\bs{V}) \pto \kappa^{X,Y}.
\]
\end{thm}

The equation~\eqref{eq:UniformApprox} is readily satisfied by the uniform random reference points $u_i = U_i \stackrel{\mathrm{iid}}{\sim}\mathrm{Unif}([0, 1]^{d_X})$ and $v_i = V_i \stackrel{\mathrm{iid}}{\sim}\mathrm{Unif}([0, 1]^{d_Y})$, $i\in[n]$ \citep{fournier2015rate}. The flexibility of the condition in~\eqref{eq:UniformApprox} also allows us to use deterministic reference points such as the following.
\begin{enumerate}[label = \roman*.]
    \item{(Regular grid)} Suppose $n = m^{d_X} = \ell^{d_Y}$ for $m,\ell\in\mathbb{N}$, then ${\bs U} = \{1/m, 2/m, \ldots, 1\}^{d_X}$ and ${\bs V} =  \{1/\ell, 2/\ell, \ldots, 1\}^{d_Y}$ satisfies~\eqref{eq:UniformApprox} \citep[see e.g.,][Example 4.17]{graf2000foundations}. 
    \item{(Low-discrepancy sequence)} The Halton sequence and the Sobol' sequence, or any other low-discrepancy quasi-random sequences with star discrepancy of order $O((\log n)^d / n)$  also satisfy~\eqref{eq:UniformApprox} (see Proposition~\ref{prop:WassBoundbyStar} and discussion immediately preceding it for definition of star discrepancy and its relationship to the $2$-Wasserstein distance, and \citet[Theorem~3.6 and 4.17]{niederreiter1992random} for references that Halton and Sobol' sequences satisfy the required star discrepancy).
\end{enumerate}

Similarly to Theorem~\ref{thm:ConsistencyGeneralDim}, Caffarelli's regularity theory provides sufficient conditions for Conditions~\ref{ass:subweibull}--\ref{ass:dimensionregime}, as well as for the Lipschitz continuity of the Brenier potentials. Under these conditions, equation~\eqref{eq:UniformApprox} and the dimensional requirement $d_X + d_Y \geq 5$ are the only assumptions needed.

\subsection{Dependency among $K$ random vectors} \label{sec:DependencyAmongKRV}
Beyond the bivariate setting, the proposed correlation coefficient naturally extends to a measure of dependence among $K \geq 2$ random vectors. Let ${Z}=(X^{(1)},\ldots,X^{(K)})$ be a $K$-tuple of random vectors with joint distribution $P$ and marginal distributions $P_1,\ldots,P_K$, where $P_k$ denotes the law of $X^{(k)}\in \mathbb{R}^{d_k}$, $k = 1, \ldots, K$. Given $Z_1, \ldots, Z_n \stackrel{\mathrm{iid}}{\sim} P$ where ${Z}_i = (X_i^{(1)}, \ldots, X_i^{(K)})$, $i = 1, \ldots, n$, we are interested in measuring the departure of $P$ from the product measure $P_1\otimes\cdots\otimes P_K$.

Following a similar construction as in Section~\ref{sec:CoveCorrConstruction}, for each $k \in [K]$, let ${U}^{(k)} = (U^{(k)}_1, \ldots U^{(k)}_n)$ be a tuple of reference points of dimension $d_k$, and write ${Z}^{(k)} := (X_1^{(k)}, \ldots X_n^{(k)})$. Then for each $i \in [n]$ the empirical multivariate rank of $X_i^{(k)}$ is 
\begin{align}
R^{(k)}_i := U^{(k)}_{\pi^{(k)}(i)}, \quad \text{where} \quad \pi^{(k)} := \argmin_{\pi \in \mathcal{S}^n} \frac{1}{n} \sum_{i = 1}^n \|X_i^{(k)} - U^{(k)}_{\pi(i)}\|_2^2. \label{eq:MultiRankKsample}
\end{align}
Define the joint rank of $Z_i$ as 
\begin{align}
R_i  := (R_i^{(1)}, \ldots, R_i^{(K)}) \in [0, 1]^{d}, \label{eq:JointRankKcomponent}
\end{align}
where $d :=\sum_{k = 1}^K d_k$. Analogously to~\eqref{eq:VnGen}, letting $\gamma = \frac{1}{2n^{1/d}}$, we can define the vacancy as
\[
 \mathcal{V}_{n}:= \mathcal{V} \bigl(({Z}^{(k)})_{k = 1}^K, ({U}^{(k)})_{k = 1}^K; \gamma\bigr) = 1 - \mathrm{vol}\Bigl(\bigcup_{i = 1}^n B(R_i, \gamma)\Bigr),
\]
where we draw reference points $U^{(k)}_i \sim \mathrm{Unif}([0,1]^{d_k})$ independently. The empirical coverage correlation coefficient of the $K$-tuple $(Z^{(1)}, \ldots, Z^{(K)})$ is 
\begin{align}
\kappa_n^{Z} := \frac{ \mathcal{V}_{n} - e^{-1}}{1 - e^{-1}}. \label{eq:CovCorrKComponent}
\end{align}

The next theorem gives the $K$-tuple analogue of Theorem~\ref{Thm:populationlimit}. It establishes the consistency of the coverage correlation coefficient for testing mutual independence among the coordinates of a random vector.
\begin{thm}
\label{thm:KVariateCoverageConsistency}
    Let $P$ be a Borel probability measure on $\mathbb{R}^K$ with univariate marginals $P_1, \ldots, P_K$. Given $Z_1, \ldots, Z_n \stackrel{\mathrm{iid}}{\sim} P$, where $Z_i \in \mathbb{R}^K$, we have as $n\to\infty$ that 
    \[
    \kappa_n^Z  \pto \kappa^{Z}:= D_{f_{\mathrm{cov}}}(P \| \otimes_{i = 1}^K P_i).
    \]
\end{thm}

The preceding result deals with the case, where each component of the $K$-tuple is univariate. By a similar argument as Theorem~\ref{thm:ConsistencyGeneralDim}, the following theorem generalises it to obtain the corresponding consistency result for $K$-tuple with possibly multivariate components. 
\begin{thm}\label{thm:ConsistencyKComponent}
    Let $P$ be a Borel probability measure on $\mathbb{R}^{d}$ with absolutely continuous marginals $P_1,\ldots, P_K$. For any $k \in [K]$, let $T_k$ be the a.e.-unique optimal transport map from $P_k$ to $\mathrm{Unif}([0,1]^{d_k})$, and assume that
    \begin{enumerate}[label = (C\arabic*'), ref=(C\arabic*')]
    \item $P_k\in \mathcal{P}_{t_k, \alpha_k}(\mathbb{R}^{d_k})$ for some $t_k>0$ and $\alpha_k\in (0, 2]$; 
    \item $T_k$ is $(\theta_k, L_k)$-H\"older continuous with some $\theta_k\in(0,1]$ and $L_k>0$; 
    \item\label{ass:Kdimensionregime}  $(d_1, \ldots, d_K) \in \mathcal{D}(\theta_1, \ldots, \theta_K)$. 
\end{enumerate}

Given $Z_1,\ldots,Z_n \stackrel{\mathrm{iid}}{\sim}P$, we have as $n\to\infty$ that
\begin{equation*}
\kappa_n^{Z} \pto \kappa^{Z}.
\end{equation*}
\end{thm}

The following central limit theorem allows us to construct asymptotically valid p-values to test mutual independence of the components in the $K$-tuple given a sample of size $n$.
\begin{thm}
\label{Thm:KVariateCoverageCLT}
Let $Z=(X^{(1)},\ldots,X^{(K)})\in
    \mathbb{R}^{d_1}\times\cdots\times\mathbb{R}^{d_K}$ be a random vector such that $X^{(1)},\ldots,X^{(K)}$ are mutually independent. Let $d = \sum_{k = 1}^K d_k$ and
\[
    \sigma_{n}^2 := \frac{1}{(1-e^{-1})^2}\sum_{k=2}^n \binom{n}{k}
    \Bigl(1 - \frac{2}{n}\Bigr)^{n-k}
    \biggl\{\Bigl(\frac{2}{k + 1}\Bigr)^{d} n^{-k -1} - n^{-2k}\biggr\}.
\]
Given independent copies $Z_1,\ldots, Z_n$ of $Z$, let $\kappa_n^Z$ be the coverage correlation coefficient defined as in~\eqref{eq:CovCorrKComponent}. Then we have
\[
    \frac{\sqrt{n} \kappa_n^{Z}}{\sigma_{n}} \dto \mathcal{N}(0, 1)
\]
as $n\to\infty$.
\end{thm}

\section{Empirical studies}
\subsection{Numerical simulations}
\label{Sec:Numerics}
Table~\ref{Tab:Size} summarises the finite-sample sizes of the independence test based on the coverage correlation coefficient $\kappa_n^{X,Y}$ at various nominal levels, for $n \in\{10,100,1000\}$ and $d_X=d_Y\in\{1,2\}$. We see that the test is a bit conservative at relatively low sample sizes and well-calibrated for large sample sizes. 

\begin{table}
\centering
    \begin{tabular}{cccccc}
         $n$ & $d_X$ & $\alpha=1\%$ & $\alpha=2.5\%$ & $\alpha=5\%$ & $\alpha=10\%$\\
         \hline
$10$ & $1$ & $0.69_{(0.03)}$ & $1.54_{(0.04)}$ & $3.03_{(0.05)}$ & $6.02_{(0.08)}$\\
$100$ & $1$ & $0.93_{(0.03)}$ & $2.27_{(0.05)}$ & $4.34_{(0.06)}$ & $8.78_{(0.09)}$\\
$1000$ & $1$ & $0.96_{(0.03)}$ & $2.34_{(0.05)}$ & $4.76_{(0.07)}$ & $9.50_{(0.09)}$\\
$10$ & $2$ & $0.55_{(0.02)}$ & $1.18_{(0.03)}$ & $2.10_{(0.05)}$ & $4.08_{(0.06)}$\\
$100$ & $2$ & $0.94_{(0.03)}$ & $2.11_{(0.05)}$ & $4.12_{(0.06)}$ & $8.03_{(0.09)}$\\
$1000$ & $2$ & $0.97_{(0.03)}$ & $2.36_{(0.05)}$ & $4.66_{(0.07)}$ & $9.30_{(0.09)}$
    \end{tabular}
    \caption{\label{Tab:Size}Empirical sizes (in percentage) of independence test based on coverage correlation coefficient at various nominal levels $\alpha$, estimated over 100000 Monte Carlo repetitions (with standard errors in brackets). }
\end{table}

In Figure~\ref{Fig:Power}, we compare the power of the test based on $\kappa_n^{X,Y}$ to those of Chatterjee's correlation \citep[][implemented in the \texttt{XICOR} R package]{chatterjee2021new}, distance correlation (dCor) \citep[][implemented in the \texttt{Rfast} R package]{szekely2007measuring}, Hilbert--Schmidt Independence Criterion  (HSIC) \citep[][implemented in the \texttt{dHSIC} R package]{gretton2008kernel}, the kernel measure of association (KMAc) \citep[][implemented in the \texttt{KPC} R package]{deb2020measuring} and the U-statistics permutation test (USP) \citep[][implemented in the \texttt{USP} R package]{berrett2021optimal}. For dCor, HSIC, KMAc and USP, we run 100 permutations to obtain p-values. Also, for USP, we set $M=3$ for the maximum frequency to use in the Fourier basis. We generate $n\in\{1000,2000\}$ independent copies of $(X,Y)$ pair in $\mathbb{R}^{d_X}\times \mathbb{R}^{d_Y}$ for $d_X=d_Y\in\{1,2\}$ from one of the six data generating mechanisms described below at different noise levels $\gamma\in\{0,0.2,\ldots,1.8,2\}$ (all functions below are applied componentwise for vector inputs and $(\epsilon_X, \epsilon_Y) \sim \mathcal{N}(0,I_{d_X})\otimes \mathcal{N}(0,I_{d_Y})$ is independent of all other randomness):
\begin{enumerate}[label=(\roman*)]
    \item sinusoidal: $X\sim \mathrm{Unif}([-1,1]^{d_X})$, $Y = \cos(8\pi X) + \gamma \epsilon_Y$
    \item zigzag: $X\sim \mathrm{Unif}([-1,1]^{d_X})$, $Y = |X - 0.5\mathrm{sgn}(X)| + \gamma \epsilon_Y$
    \item circle: $U \sim \mathrm{Unif}([0,2\pi]^{d_X})$, $X = \cos(U) + 0.5\gamma \epsilon_X$, $Y = \sin(U) + 0.5\gamma\epsilon_Y$
    \item spiral: $U \sim \mathrm{Unif}([0,1]^{d_X})$, $X = U \sin(10\pi U) + 0.15\gamma\epsilon_X$, $Y = U \cos(10\pi U) + 0.15\gamma\epsilon_Y$
    \item Lissajous: $U \sim \mathrm{Unif}([0,1]^{d_X})$, $X = \sin(3U + \pi / 2) + 0.1\gamma\epsilon_X$, $Y = \sin(4U) + 0.1\gamma\epsilon_Y$
    \item local: $Z \sim \mathcal{N}(0, I_{d_X})$, $W \sim \mathcal{N}(0, I_{d_Y})$, $X = Z+ 0.8\epsilon_X$, $Y = \mathbbm{1}_{\{Z>0, W>0\}} Z + (1- \mathbbm{1}_{\{Z>0, W>0\}}) W + \epsilon_Y$.
\end{enumerate}

\begin{figure}[htbp]
\begin{center}
    \begin{tabular}{cc}
    \includegraphics[width=0.45\textwidth]{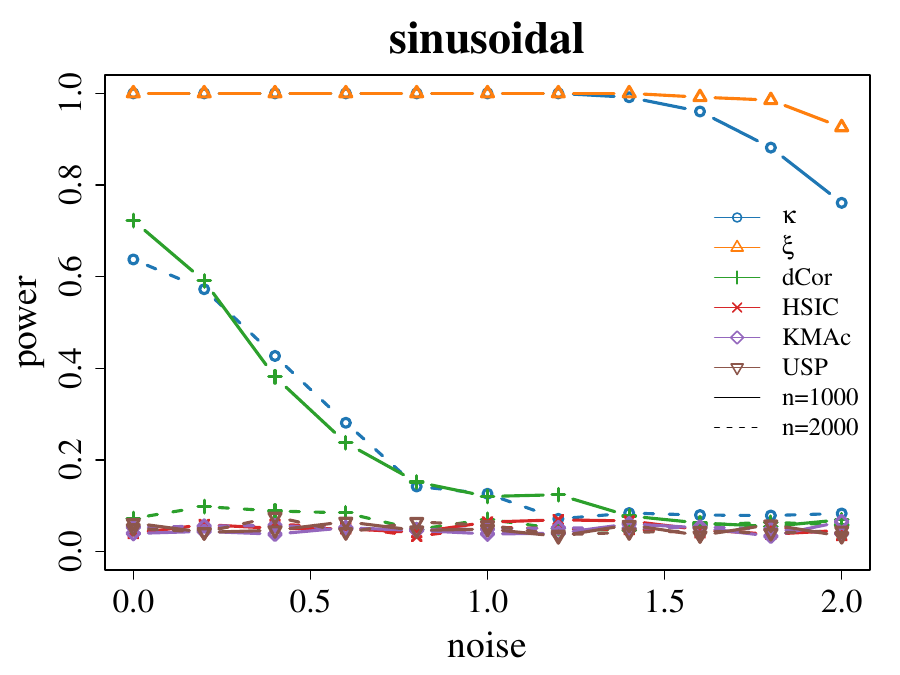}
    &
    \includegraphics[width=0.45\textwidth]{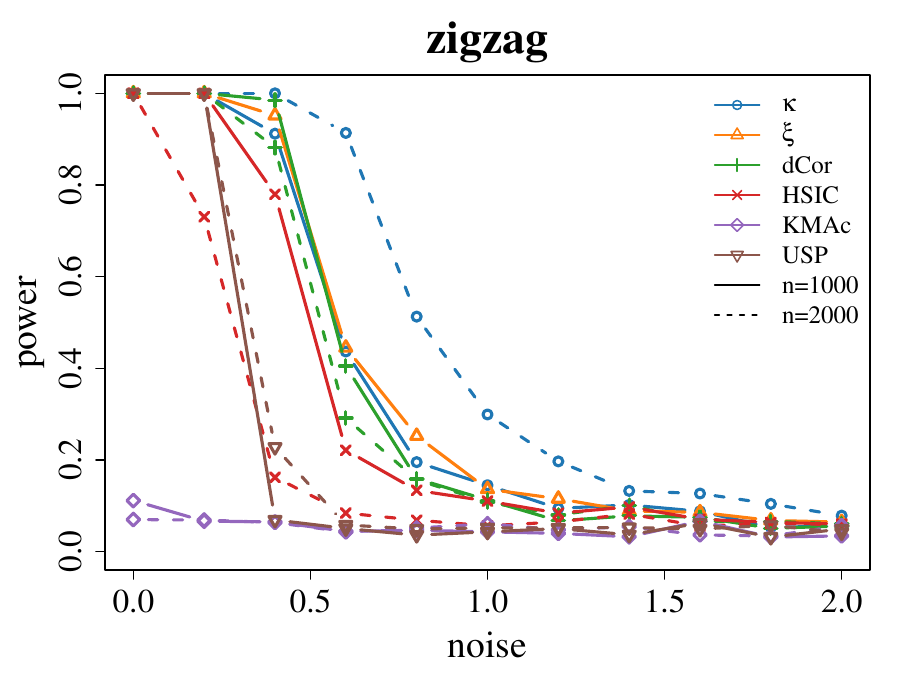}
    \\
    \includegraphics[width=0.45\textwidth]{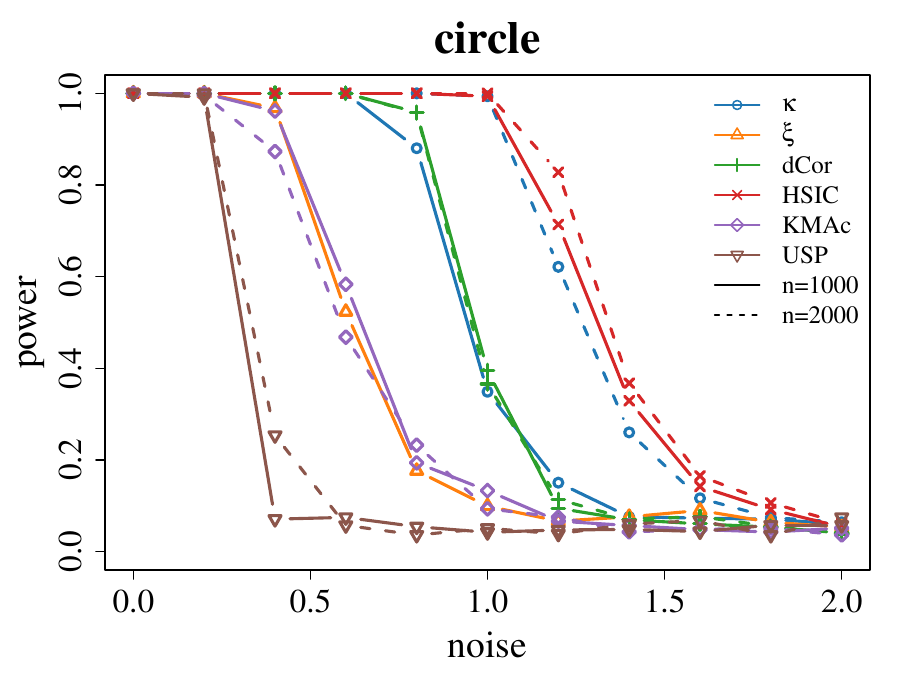}
    &
    \includegraphics[width=0.45\textwidth]{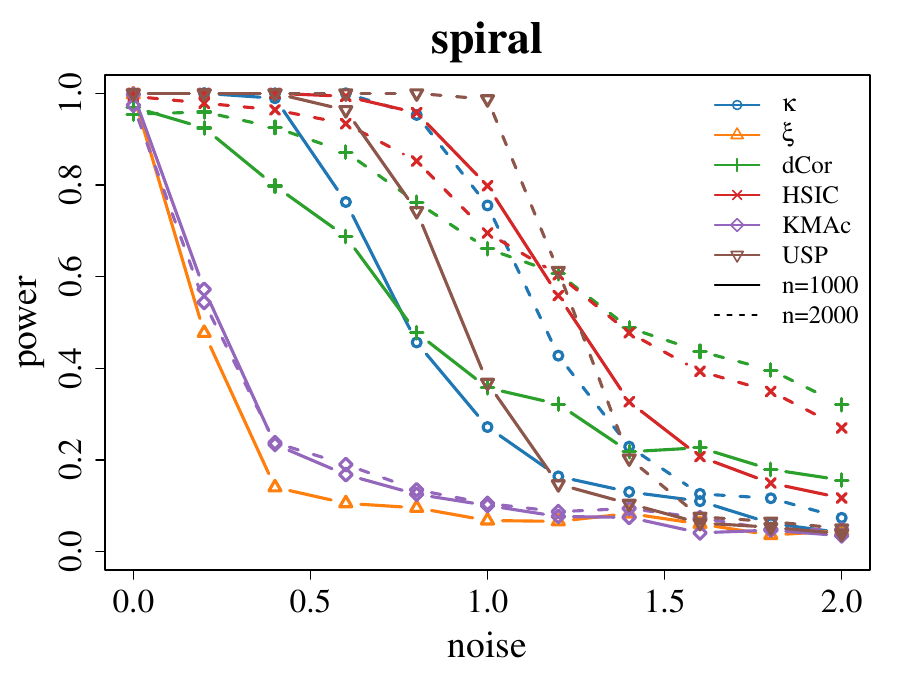}
    \\
    \includegraphics[width=0.45\textwidth]{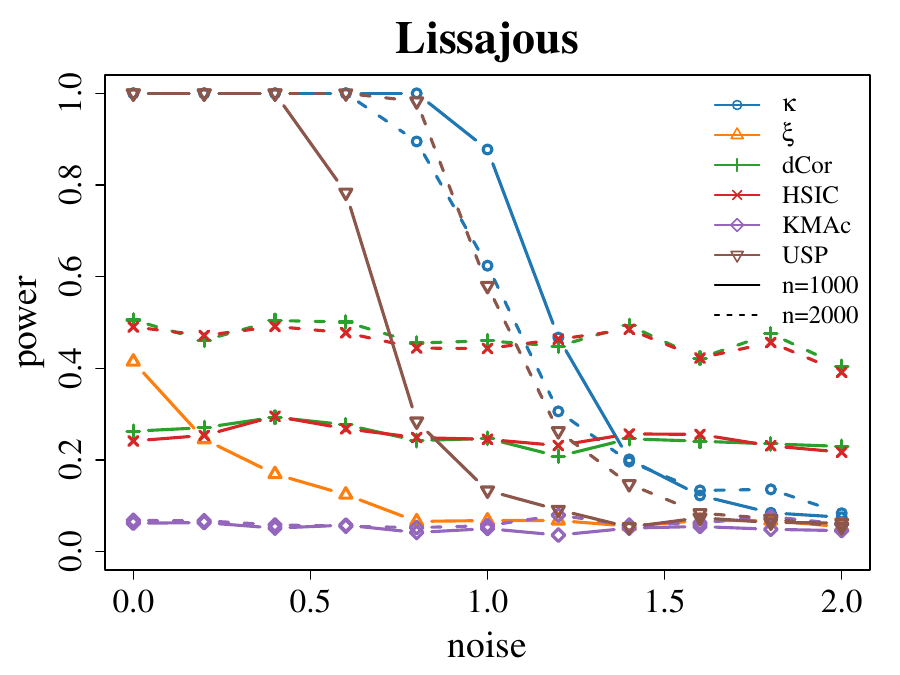}
    &
    \includegraphics[width=0.45\textwidth]{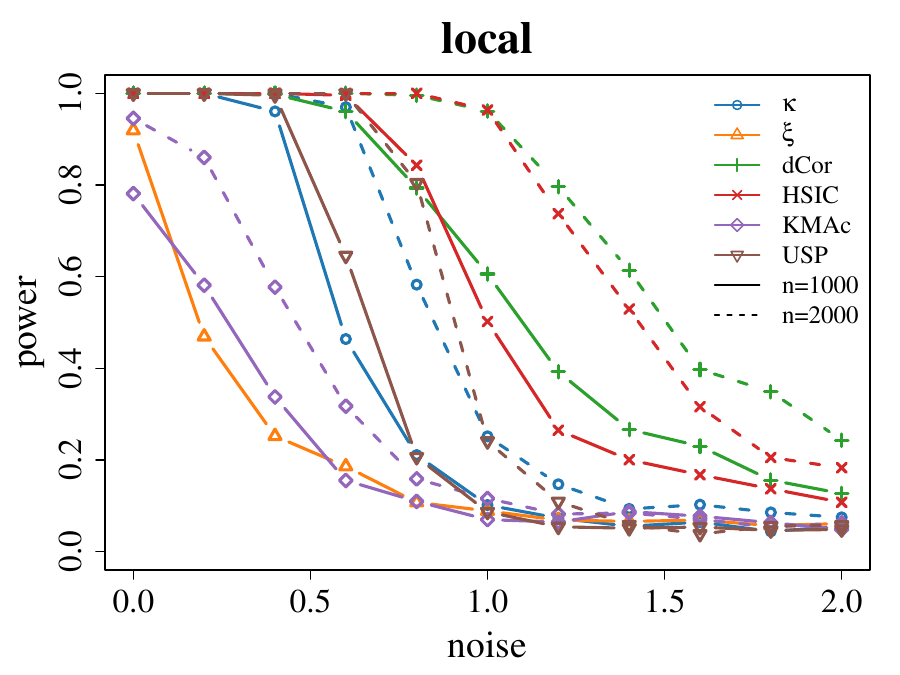}
    \end{tabular}
\end{center}
\caption{\label{Fig:Power}Power curves, estimated over 500 Monte Carlo repetitions, are presented for the coverage correlation, Chatterjee's correlation, distance correlation, HSIC, KMAc and USP in six data-generating scenarios described in Section~\ref{Sec:Numerics} for $(n,d_X,d_Y) \in\{ (1000,1,1), (2000,2,2)\}$ and noise level $\gamma\in\{0,0.2,0.4,\ldots,2\}$. Power is evaluated at the nominal level $0.05$. The solid lines correspond to the setting $(n,d_X,d_Y) = (1000,1,1)$ while the dashed lines represent the setting $(n,d_X,d_Y) = (2000,2,2)$.}
\end{figure}

Appendix~\ref{Subsec:SimulationSetting} includes representative scatter plots across various noise levels and reports the average runtime of the algorithms. We remark that in the `sinusoidal' and `zigzag' settings, $Y$ can be viewed as a function of $X$ with added noise. Here, Chatterjee's correlation performs best (though it is only applicable when $d_X=d_Y=1$), and the coverage correlation, while testing against a wider range of alternatives, has similar power performance. In `circle', `spiral', `Lissajous' settings, $X$ and $Y$ are, up to additive noise, functions of a latent variable $U$, and in the `local' setting, the joint distribution of $(X,Y)$ has a lower-dimensional support in the first quadrant only. In these cases, the coverage correlation shows better power than Chatterjee's correlation and KMAc, which are designed to test functional relationships. HSIC, dCor and USP perform well in a subset of these scenarios, though they all show poor performance in at least one setting. Moreover, since these methods rely on permutation-based p-value computation, they may not scale well, particularly when the multiple testing burden is high.

\subsection{Real data}
We use the coverage correlation coefficient to study two biological datasets. 
We first consider the menstrual-cycle hormone dataset from \citet{stricker2006establishment}, which records the evolution of four key reproductive hormones, estradiol, progesterone, luteinising hormone (LH), and follicle-stimulating hormone (FSH), over the course of a menstrual cycle (see Figure~\ref{Fig:hormone_ts}). We study this dataset because the underlying biological relationships are well established. These hormones form a tightly regulated feedback system governing the menstrual cycle and are therefore known \emph{a priori} to exhibit substantial dependence, making it possible to evaluate the performance of different methods against the biological ground truth. We compute correlation statistics and associated p-values using Pearson, Spearman, Chatterjee, and coverage correlations. Among these methods, only coverage correlation detects statistically significant dependence at the 5\% level for every pair of hormones (see Figure~\ref{Fig:hormone}).
\begin{figure}[htbp]
\centering
 \includegraphics[width=0.5\textwidth]{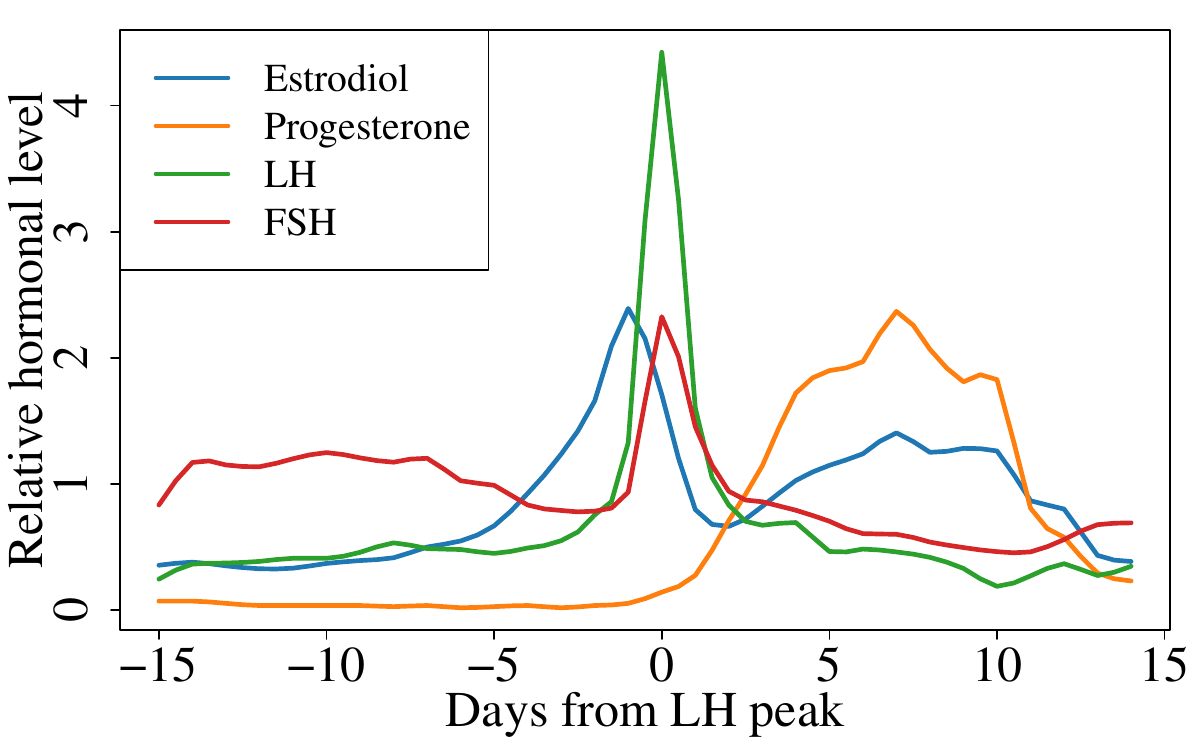}
     \caption{\label{Fig:hormone_ts}Temporal evolution of four reproductive hormones across the menstrual cycle relative to the LH peak. Curves show trajectories of estradiol, progesterone, LH and FSH against days from the LH peak. Hormone levels were normalized to place the four trajectories on a comparable scale.}
\end{figure}
\begin{figure}[htbp]
    \centering
     \includegraphics[width=0.8\textwidth]{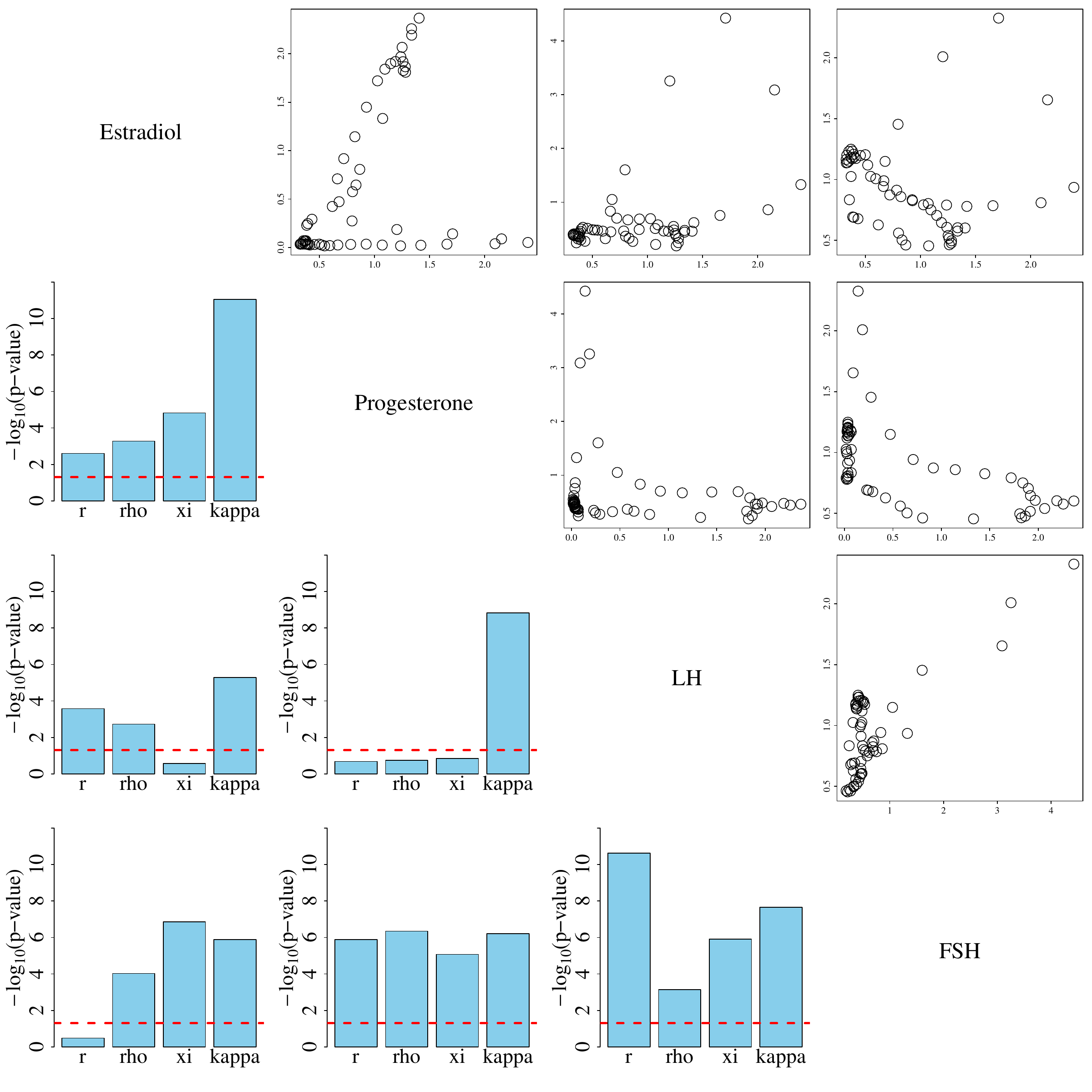}
    \caption{\label{Fig:hormone}Pairwise associations among estradiol, progesterone, LH, and FSH across the menstrual cycle. Upper-triangular panels show scatterplots of hormone pairs. Lower-triangular panels display bar plots of $-\log_{10}(\text{p-value})$ of four dependence measures, namely Pearson correlation ($r$), Spearman correlation ($\rho$), Chatterjee’s rank correlation ($\xi$), and coverage correlation ($\kappa$), with the dashed red line indicating $5\%$ significance level. Diagonal panels label the corresponding hormones.}
\end{figure}

We then look at a single-cell RNA sequencing dataset from \citet{suo2022mapping}. We use a subset of the data used in the paper, consisting of the gene expression levels of top $p=1000$ highly variable genes measured in $n=9369$ CD8\textsuperscript{+} T cells (the processed data are available in the \texttt{covercorr} R package). We compute all $\binom{p}{2}$ pairwise correlations using Pearson's correlation, Spearman's correlation, Chatterjee's correlation and the coverage correlation and adjust the corresponding p-values via Bonferroni correction. We identified 54 gene pairs as significant by coverage correlation but not by any of the other methods. The two most significant pairs are plotted in Figure~\ref{Fig:CD8}. As can be seen, for both pairs, the scatter plots of pairwise gene expression levels exhibit a clear L-shaped relationship suggestive of an implicit functional dependence.

\begin{figure}[htbp]
    \centering
    \includegraphics[width=0.6\textwidth]{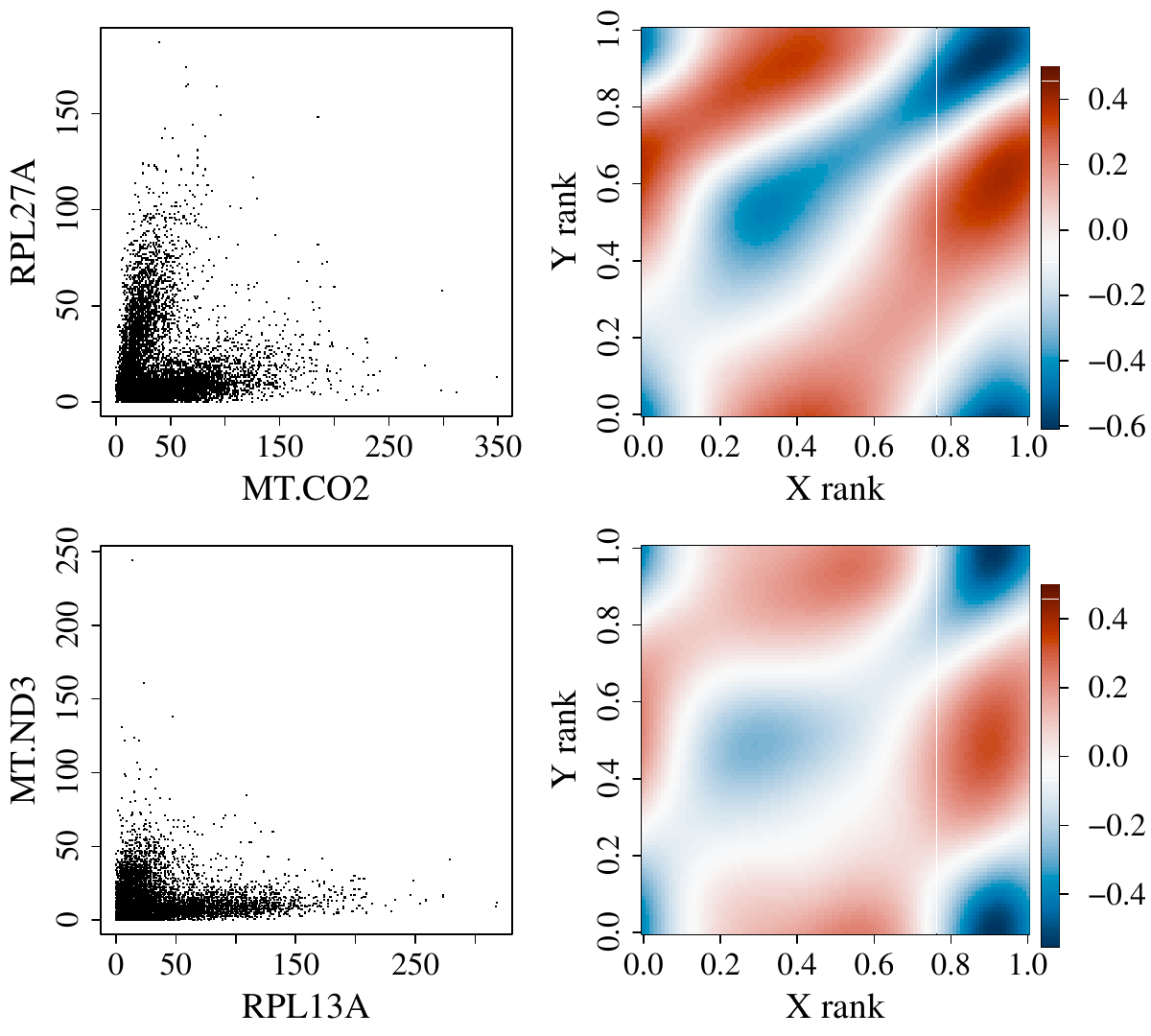}
    \caption{\label{Fig:CD8}The top two gene pairs with significant coverage correlation after Bonferroni correction, but are not significant under Pearson, Spearman, or Chatterjee's correlation tests. The right column shows heatmaps of excess density of covered area in the coverage correlation calculation of the corresponding gene pairs on the left column.}
\end{figure}

\section{Discussion}
\label{Sec:Discussion}

\begin{paragraph}{Choice of reference points}
As mentioned in Section~\ref{sec:CoveCorrConstruction}, choosing the reference distribution of $U$ and $V$ to be uniform over the unit cube is the most natural choice, since the coverage volume computed this way has equal contribution from each of the reference points.  Other reference distributions could in principle also be used. For example, one could draw $U$ and $V$ from $\mathcal{N}_{d_X}(0,I_{d_X})$ and $\mathcal{N}_{d_Y}(0,I_{d_Y})$, respectively. However, before computing the covered volume, it would still be necessary to transform these reference points back to $[0,1]^{d_X}$ and $[0,1]^{d_Y}$ via the corresponding multivariate cumulative distributional maps. The coverage correlation computed after this transformation has the same distribution as the statistic obtained by choosing the reference points directly from uniform distributions. In this sense, the coverage correlation is distribution-free with respect to the choice of reference distributions for $U$ and $V$, provided that the final coverage calculation is carried out in the unit cube after the relevant distributional transformations. 
\end{paragraph}

\begin{paragraph}{Relationship to other $f$-divergences}
The coverage divergence $\kappa^{X,Y}$ is an $f$-divergence between
$P^{(X,Y)}$ and $P^X \otimes P^Y$ with generator $f_{\mathrm{cov}}$ defined in~\eqref{Eq:fcov}. Unlike classical choices such as Kullback--Leibler, total variation,
Hellinger, and Jensen--Shannon divergences,\footnote{For background on $f$-divergences and
related statistical divergences, see, e.g.~\citet{Morimoto1963}, \citet{AliSilvey1966},
\citet{Csiszar1967}, \citet{Sason2022}, and \citet{PolyanskiyWu2025}.} whose $f$-generators
are unbounded, $f_{\mathrm{cov}}$ is uniformly bounded on $[0,\infty)$.
As a consequence, arbitrarily large density ratios cannot produce
arbitrarily large pointwise contributions to the divergence. This gives
the coverage divergence a form of robustness that is absent from
divergences such as mutual information, whose generator
$f_{\mathrm{KL}}(x)=x\log x$ places increasing weight on regions where
the density ratio is large and may therefore take infinite values. The
bounded generator also reflects a different geometric perspective on
dependence: rather than accumulating contributions from extreme
density ratios, $\kappa^{X,Y}$ measures the extent to which the joint
law fails to cover the product space. 

A further advantage of the coverage divergence is its ease of estimation.
Many divergence-based dependence measures, including mutual information,
require nonparametric estimation of joint and marginal densities or their
ratio. The substantial literature on mutual information estimation includes
kernel and histogram plug-in estimators, nearest-neighbour methods such as
the KSG estimator \citep{kraskov2004estimating}, entropy-based estimators based on
$k$-nearest-neighbour distances \citep{KozachenkoLeonenko1987,BerrettSY2019},
and variational neural estimators \citep{BelghaziEtAl2018}. These methods
typically involve smoothing parameters, nearest-neighbour tuning, entropy
estimation, or optimization of a density-ratio variational objective, and
can be particularly delicate near singular alternatives where density
ratios may be large or ill-defined. By contrast, the empirical coverage
divergence is computed directly from marginal ranks and an uncovered-volume
functional in rank space, requiring neither density estimation nor explicit
likelihood-ratio estimation. The resulting rank-based statistic is
distribution-free under independence and admits an analytically tractable
asymptotic null distribution, making it especially attractive for
large-scale testing settings where permutation calibration may be
computationally expensive.
\end{paragraph}

\begin{paragraph}{Gap in the consistency theory for $d_X + d_Y\in\{3,4\}$}
Theorems~\ref{Thm:populationlimit} and~\ref{thm:ConsistencyGeneralDim}
establish consistency of the empirical coverage correlation when
$d_X=d_Y=1$ and when $d_X+d_Y\geq 5$, respectively, with the latter result
requiring suitable regularity conditions. The proof of
Theorem~\ref{Thm:populationlimit} exploits the well-ordering of the real
line, while the proof of Theorem~\ref{thm:ConsistencyGeneralDim} relies on
convergence rates for optimal transport maps. Neither argument extends
readily to the intermediate cases $d_X+d_Y\in\{3,4\}$. A key difficulty is
that, when $d_X\leq 3$, the available convergence rate for the optimal
transport map is slower than $n^{-1/d_X}$, which is insufficient for the
current proof strategy. Closing this gap appears to require new theoretical
ideas and is left as a direction for future research.
\end{paragraph}

\section*{Acknowledgement}
This work has benefited from discussions with many people and the authors would
like to thank Sourav Chatterjee, Marc Hallin, Davy Paindaveine, Qi-Man Shao, Zolt\'an Szab\'o, and the anonymous reviewers for their helpful comments and suggestions.

\appendix
\section{Proof of main results}\label{sec:proof_main}

\subsection{Proof of Theorem~\ref{Thm:populationlimit}}
The proof proceeds by first reducing to the copula case, so that the marginals are both $\mathrm{Unif}[0,1]$. We then approximate the absolutely continuous part of $P^{(X,Y)}$ by a density that is piecewise constant on rectangles, while localising the singular part on a compact Lebesgue null set. The auxiliary perturbation and block-coverage estimates used in this argument are collected in the supplement; the proof below shows how these ingredients combine to identify the limiting coverage contribution, namely $e^{-q}$ on a block with constant density $q$ and zero contribution from the singular component.

\begin{proof}[Proof of Theorem~\ref{Thm:populationlimit}]   
Let $F_X$ and $F_Y$ be the distribution functions of random variables $X$ and $Y$, respectively. By Lemma~\ref{Lemma:DataProcessing} and the fact that the coverage correlation is preserved under monotonic transformation of $X$ and $Y$, we may replace $(X_i, Y_i)_{i\in[n]}$ by $(F_X(X_i), F_Y(Y_i))_{i\in[n]}$ and assume without loss of generality that $P^X = P^Y = \mathrm{Unif}[0,1]$, so that $P^{(X,Y)}$ is simply a copula.

By Lebesgue--Radon--Nikodym decomposition, we can write $\mathrm{d}P^{(X,Y)} = h\, \mathrm{d}\mathrm{vol} + \mathrm{d}\nu$ for some measurable function $h$ and a measure $\nu$ singular with respect to the Lebesgue measure. Fix $\epsilon > 0$. For all sufficiently large integer $N$, we can find a function $\tilde h$ piecewise constant on each $Q_{j,k}:=[(j-1)/N, j/N)\times [(k-1)/N, k/N)$ such that $\int |h-\tilde h|\, d\vol\leq \epsilon$ (for instance, $\tilde h$ can be defined to be equal to the mean value of $h$ in each $Q_{j,k}$ and the claim follows from the Lebesgue Differentiation Theorem and Dominated Convergence Theorem). Since $P^{(X,Y)}$, and hence also $\nu$, is Radon, and in particular inner regular, we can find a compact set $K$ with Lebesgue measure 0 in $[0,1]^2$ such that $\nu(K^{\mathrm{c}}) \leq \epsilon$. By possibly increasing $N$, we may also assume that 
\begin{equation}
\label{Eq:DilatedSupport}
\sum_{(j,k): Q_{j,k}\cap K\neq \varnothing} \vol(Q_{j,k}) \leq \vol\biggl(K + B\Bigl((0,0), \frac{1}{N}\Bigr)\biggr) \leq \epsilon
\end{equation}

We write $\tilde \nu$ for the restriction of $\nu$ on $K$, i.e.\ $\tilde\nu(A) = \nu(A\cap K)$ for all Borel subsets $A$ of $[0,1]^2$. Let $\tilde P^{(X,Y)}$ be defined such that  
\[  
d\tilde{P}^{(X,Y)} = \frac{\tilde h\,d\vol + d\tilde \nu}{\int_{[0,1]^2} \tilde h \, d\vol + \tilde \nu([0,1]^2)}.
\]
By construction, we have $d_{\mathrm{TV}}(P^{(X,Y)},\tilde P^{(X,Y)}) \leq 2\epsilon$. In particular, the marginals $\tilde P^X$ and $\tilde P^Y$ of $\tilde P^{(X,Y)}$ also satisfy $\max\{d_{\mathrm{TV}}(P^{X},\tilde P^{X}), d_{\mathrm{TV}}(P^{Y},\tilde P^{Y})\} \leq 2\epsilon$, so by the triangle inequality
\[
d_{\mathrm{TV}}(P^X\otimes P^Y, \tilde P^X\otimes \tilde P^Y) \leq d_{\mathrm{TV}}(P^X\otimes P^Y,  P^X\otimes \tilde P^Y)+d_{\mathrm{TV}}(P^X\otimes \tilde P^Y, \tilde P^X\otimes \tilde P^Y)\leq 4\epsilon.
\]
By Lemma~\ref{Lemma:DivergenceStability}, we conclude that for $M=1$ and $L=1/(1-e^{-1})$
\begin{equation}
    \label{Eq:fDivApprox}
\bigl|D_{f_\mathrm{cov}}(P^{(X,Y)}\,\|\,P^X\otimes P^Y) - D_{f_\mathrm{cov}}(\tilde P^{(X,Y)}\,\|\,\tilde P^X\otimes \tilde P^Y)\bigr| \leq (8M+6L)\epsilon^{1/2}.
\end{equation}

Let $\kappa_n^{\tilde X, \tilde Y}$ be the empirical coverage correlation of samples $(\tilde X_i, \tilde Y_i)_{i\in[n]}\iid \tilde P^{(X,Y)}$ with respect to reference points $(\tilde U_i,\tilde V_i)\iid \mathrm{Unif}[0,1]^2$. By Proposition~\ref{Prop:TVPerturbation}, there is a coupling between $(X_i,Y_i,U_i,V_i)_{i\in[n]}$ and $(\tilde X_i, \tilde Y_i,\tilde U_i,\tilde V_i)_{i\in[n]}$ such that
\begin{equation}
\label{Eq:EmpiricalCoverageApprox}
|\kappa_n^{X,Y} - \kappa_n^{\tilde X,\tilde Y}|\leq \frac{2\epsilon}{1-e^{-1}} + O_p(n^{-1/2}).
\end{equation}

Now, since $\tilde P^{(X,Y)}$ has piecewise constant density in its absolutely continuous part, we may explicitly control $\kappa_n^{\tilde X, \tilde Y}$. Specifically, writing $\tilde R_i$ for the bivariate ranks of $(\tilde X_i,\tilde Y_i)$ with respect to $(\tilde U_i)_{i\in[n]}$ and $(\tilde V_i)_{i\in[n]}$ defined as in~\eqref{eq:R_iGen}, we have that
\begin{align*}
(1-e^{-1})\kappa_n^{\tilde X, \tilde Y} &= 1-e^{-1} -\vol\biggl(\bigcup_{i=1}^n B\Bigl(\tilde R_i, \frac{1}{2\sqrt{n}}\Bigr)\biggr)\nonumber \\
&\leq 1-e^{-1} - \sum_{(j,k):Q_{j,k}\cap K=\varnothing} \vol\biggl(Q_{j,k}\,\cap \bigcup_{i:(\tilde X_i,\tilde Y_i)\in Q_{j,k}} B\Bigl(\tilde R_i, \frac{1}{2\sqrt{n}}\Bigr)\biggr)\nonumber\\
&\leq 1-e^{-1} - (1+o_p(1))\sum_{j,k=1}^N \int_{(x,y)\in Q_{j,k}} (1-e^{-\tilde h(x,y)}) + \epsilon\nonumber\\
& = (1+o_p(1))\int_{(x,y)\in[0,1]^2} (e^{-\tilde h(x,y)}-e^{-1}) + \epsilon\nonumber\\
&= (1-e^{-1}+o_p(1))D_{f_\mathrm{cov}}(\tilde P^{(X,Y)}\,\| P^X\otimes P^Y)  + \epsilon 
\end{align*}
where the second inequality follows from Proposition~\ref{Prop:PiecewiseConstant} and~\eqref{Eq:DilatedSupport}. By a similar argument, we can bound $\kappa_n^{\tilde X, \tilde Y}$ below as
\begin{align}
    (1&-e^{-1})\kappa_n^{\tilde X, \tilde Y} =  1-e^{-1} - \sum_{(j,k)} \vol\biggl(Q_{j,k}\,\cap \bigcup_{i:(\tilde X_i,\tilde Y_i)\in Q_{j,k}} B\Bigl(\tilde R_i, \frac{1}{2\sqrt{n}}\Bigr)\biggr) \nonumber \\
    &\geq (1+o_p(1))\int_{(x,y)\in[0,1]^2} (e^{-\tilde h(x,y)}-e^{-1}) - \vol\biggl(\bigcup_{\substack{(j,k):Q_{j,k}\cap K\neq\varnothing\\ i:(\tilde X_i,\tilde Y_i)\in Q_{j,k}}} B\Bigl(\tilde R_i,\frac{1}{2\sqrt{n}}\Bigr)\biggr)\nonumber\\
    &\geq (1-e^{-1}+o_p(1))D_{f_\mathrm{cov}}(\tilde P^{(X,Y)}\,\| P^X\otimes P^Y) - \vol\biggl(K + B\Bigl(0,  \frac{1}{N} + \epsilon + O_p(n^{-1/2})\Bigr)\biggr),\label{Eq:PiecewiseConstantLower}
\end{align}
where in the final step, we applied Lemma~\ref{Lemma:DKW} to both $(\tilde X_i)_{i\in[n]}$, $(\tilde Y_i)_{i\in[n]}$ and $(\tilde U_i)_{i\in[n]}$, $(\tilde V_i)_{i\in[n]}$. 

By another application of Lemma~\ref{Lemma:DivergenceStability}, $D_{f_\mathrm{cov}}(\tilde P^{(X,Y)}\,\| P^X\otimes P^Y)$ is at most $(8M+2L)\sqrt{\epsilon}$ away from $D_{f_\mathrm{cov}}(\tilde P^{(X,Y)}\,\| \tilde P^X\otimes \tilde P^Y)$. Also, using the fact that $\vol(K)=0$ and the upper continuity of Lebesgue measure, by choosing $\epsilon$ sufficiently small (and consequently $N$ sufficiently large), the volume of the Minkowski dilation of $K$ of width $1/N+\epsilon+O_p(n^{-1/2})$ on the right-hand side of~\eqref{Eq:PiecewiseConstantLower} can be made smaller than any positive number in probability. Therefore, we conclude that
\begin{equation}
\label{Eq:ConvergenceOfApproximation}
\kappa_n^{\tilde X,\tilde Y} \pto D_{f_\mathrm{cov}}(\tilde P^{(X,Y)}\,\| \tilde P^X\otimes \tilde P^Y).
\end{equation}
The desired result then follows by combining~\eqref{Eq:fDivApprox}, \eqref{Eq:EmpiricalCoverageApprox} and~\eqref{Eq:ConvergenceOfApproximation}, since we can set $\epsilon$ arbitrarily small.
\end{proof}

\subsection{Proof of Proposition~\ref{Prop:KappaProperty}}
\begin{proof}[Proof of Proposition~\ref{Prop:KappaProperty}]
For part (i), we show a general result that for any probability measure $\mu$ and $\nu$ on $\mathbb{R}$, $0 \leq D_{f_\mathrm{cov}}(\mu \| \nu) \leq 1$. First note that $\lim_{t\to\infty} t^{-1}f_{\mathrm{cov}}(t) = 0$ hence the singular part in the definition of $D_{f_{\mathrm{cov}}}$ vanishes. Suppose $\mathrm{d}\mu = g \mathrm{d}\nu + \mathrm{d}\nu^{\perp}$ for some measurable function $g$, then by Definition~\ref{def:f-divergence} and Jensen's inequality we have 
\[
D_{f_{\mathrm{cov}}}(\mu \| \nu) = \int \frac{e^{-g} - e^{-1}}{1 - e^{-1}} \, \mathrm{d} \nu \geq \frac{e^{-\int g \, \mathrm{d}\nu} - e^{-1}}{1-e^{-1}} \geq 0, 
\]
where the final inequality is due to $\int g \, \mathrm{d}\nu \leq \mu(\mathbb{R}) = 1$. Moreover, by $ g\geq 0$, we have $D_{f_{\mathrm{cov}}}(\mu \| \nu) \leq 1$, as desired. 

Part (ii) is true due to the strict convexity of $f_{\mathrm{cov}}$. For part (iii), write $ \mathrm{d}P^{(X,Y)} = h\, \mathrm{d}(P^X\otimes P^Y) + \mathrm{d}\nu$ for $\nu$ singular with respect to $P^X\otimes P^Y$. Since $\lim_{t\to\infty} t^{-1}f_{\mathrm{cov}}(t) = 0$, we have 
\[  
D_{f_\mathrm{cov}}(P^{(X,Y)}\,\|\,\,P^X\otimes P^Y) - 1 = \int_{[0,1]^2} \frac{e^{-h(x)}-1}{1-e^{-1}}\,\mathrm{d}(P^X\otimes P^Y)(x) = 0
\]
if and only if $h(x) = 0$, $P^X\otimes P^Y$-almost everywhere, i.e.\ $P^{(X,Y)} = \nu$ is singular with respect to $P^X\otimes P^Y$. 

For part (iv), observe that conditional independence of $X$ and $Y$ given $Z$ means that we can generate both $P^X\otimes P^Y$ and $P^{(X,Y)}$ from $P^X\otimes P^Z$ and $P^{(X,Z)}$ respectively using the same Markov kernel (channel) $P^{Y\mid Z}$. Hence, we can apply the data processing inequality \citep[Theorem~7.4]{polyanskiy2024information} of the $f$-divergence to obtain that $\kappa(Z,Y)\geq \kappa(X,Y)$ as desired. Part (v) follows from the lower semicontinuity of $f$-divergence \citep[Theorem~4.9]{polyanskiy2024information}. Finally, part (vi) is true since the definition of $\kappa(X,Y)$ is symmetric with respect to the two arguments by Fubini's theorem.
\end{proof}

\subsection{Proof of Theorem~\ref{thm:ConsistencyGeneralDim}}

The key ingredient of the proof is Proposition~\ref{prop: VonRawData}, which reduces the rank-based statistic to the copula case, and then further shows, using a dilation argument, that we may compute the vacancy from $\ell_\infty$ balls centred at the copula points instead of the optimal transport ranks. We are then able to apply Lebesgue--Radon--Nikodym decomposition to give the first-moment limit. Finally, a second-moment calculation shows that the variance vanishes.

\begin{proof}[Proof of Theorem~\ref{thm:ConsistencyGeneralDim}]  
By Proposition~\ref{prop: VonRawData}, we may, without loss of generality, assume that $P^X = \mathrm{Unif}([0,1]^{d_X})$ and $P^Y = \mathrm{Unif}([0,1]^{d_Y})$. Moreover, it also ensures that the vacancy statistic computed from the original data differs negligibly from that computed from the rank-transformed data. Thus we focus on proving the consistency of $\mathcal{V}({\bf X}, {\bf Y}) := \mathcal{V}({\bf X}, {\bf Y}, {\bf X}, {\bf Y}; 1/2 n^{-1/d})$. In the rest of the proof, we write $B := B(0, 1/2 n^{-1/d})$ 

By the Lebesgue-Radon-Nikodym decomposition, we can write $\mathrm{d}P^{(X, Y)} = h \, \mathrm{d}\mathrm{vol} + \mathrm{d}\nu$ for some measurable function $h$ and a measure $\nu$ singular with respect to the Lebesgue measure. Write $Z_i = (X_i, Y_i)$ for $i \in [n]$ and $p_n(u) := n P^{(X, Y)}\bigl(u + B\bigr)$ for $u \in [0, 1]^d$, we have
\begin{align}
    \E \mathcal{V}({\bf X},{\bf Y}) = \int_{[0, 1]^d} \Prob\Bigl(u \not \in \bigcup_{i=1}^n Z_i +B \Bigr) \, \mathrm{d}u &= \int_{[0, 1]^d} \Prob\Bigl(\bigcap_{i = 1}^n \{Z_i \not\in u+B \}\Bigr) \, \mathrm{d}u \notag \\ 
    & = \int_{[0, 1]^d} \Bigl(1 - \frac{p_n(u)}{n}\Bigr)^n \, \mathrm{d}u \longrightarrow \int_{[0, 1]^d}  e^{-h(u)} \, \mathrm{d}u \label{eq: limitExp},
\end{align}
where~\eqref{eq: limitExp} follows by \citet[][Theorem 3.22]{folland1999real}. 

It remains to control the variance of $\mathcal{V}({\bf X}, {\bf Y})$. Let $W_1, W_2 \stackrel{\mathrm{iid}}{\sim}\mathrm{Unif}([0, 1]^d)$, we have 
\begin{align}
    \E \mathcal{V}({\bf X},& {\bf Y})^2 = \Prob\Bigl(W_j \not\in \bigcup_{i = 1}^n Z_i +B, \ j = 1, 2\Bigr) \notag \\
    & =\E\biggl[\Prob\Bigl(\bigcap_{i = 1}^n\bigl\{Z_i \not \in \{W_1 + B\} \cup \{W_2 + B\}\bigr\} \mid W_1, W_2\Bigr)\biggr] \notag \\ 
    & = \E \biggl[\biggl(1 - \Prob\Bigl(Z_1 \in \{W_1 + B\} \cup \{W_2 + B\} \mid W_1, W_2\Bigr) \biggr)^n\biggr] \notag \\ 
    & = \E \biggl[\biggl(1 - \frac{p_n(W_1) + p_n(W_2)}{n} + \Prob\Bigl(Z_1 \in \{W_1 + B\} \cap \{W_2 + B\} \Bigm| W_1, W_2\Bigr) \biggr)^n\biggr]. \label{eq:VacancySecondMoment}
\end{align}
Since as $n \to \infty$, $p_n(W_j) \to h(W_j)$, $j = 1, 2$, and $\{W_1 + B\} \cap \{W_2 + B\}$ eventually becomes an empty set, by the Dominated convergence theorem we have 
\begin{align*}
    \lim_{n \to \infty}\E \mathcal{V}({\bf X}, {\bf Y})^2 = \lim_{n \to \infty} \E \Bigl(1 - \frac{h(W_1)+h(W_2)}{n}\Bigr)^n = (\E e^{-h(W_1)})^2.
\end{align*}
Combining with~\eqref{eq: limitExp}, we have $\mathrm{Var}(\mathcal{V}({\bf X}, {\bf Y})) \to 0$ as $n \to \infty$. Thus $\mathcal{V}({\bf X},{\bf Y})\pto \int e^{-h(u)}\,du$, and hence
\[
\kappa_n^{X,Y}\pto \frac{\int e^{-h(u)}\,du-e^{-1}}{1-e^{-1}}
=D_{f_{\mathrm{cov}}}(P^{(X,Y)}\,\|\,P^X\otimes P^Y),
\]
because the singular part in the Lebesgue--Radon--Nikodym decomposition has zero contribution to $D_{f_{\mathrm{cov}}}$. This concludes the proof. 
\end{proof}

\subsection{Proof of Proposition~\ref{Prop:MultivariateLimit}}
\begin{proof}[Proof of Proposition~\ref{Prop:MultivariateLimit}]
The first part of the proposition follows directly from Lemma~\ref{le: VacancyLimitRandomRef}, which implies that $\mathbb{E}(\mathcal{V}_n)\to e^{-1}$ and $\var(\mathcal{V}_n)\to 0$.

Now we consider the case that $P^{(X,Y)}$ is singular with respect to $P^X\otimes P^Y$. By the same argument as in the proof of Lemma~\ref{Lemma:DataProcessing}, there exist convex functions $\phi:\mathbb{R}^{d_X}\to\mathbb{R}$ and $\psi:\mathbb{R}^{d_Y}\to\mathbb{R}$ and random vectors $U\sim \mathrm{Unif}([0,1]^{d_X})$ and  $V\sim \mathrm{Unif}([0,1]^{d_Y})$ such that $X=\nabla \phi(U)$ and $Y=\nabla\psi(V)$ defines respectively the optimal transport maps from $U$ to $X$ and from $V$ to $Y$. Let $T_X$ and $T_Y$ be the Markov transition kernel from $X$ to $U$ and from $Y$ to $V$, respectively, corresponding to the conditional distribution $U\mid X$ and $V\mid Y$. Then we have $T_X(X_1),\ldots,T_X(X_n)\iid\mathrm{Unif}([0,1]^{d_X})$ and $T_Y(Y_1),\ldots,T_Y(Y_n)\iid\mathrm{Unif}([0,1]^{d_Y})$. Since $(\nabla \phi(T_X(X_1)), \nabla\psi(T_Y(Y_1))) = (X_1, Y_1)$, we have that the joint distribution of $(T_X(X_1),T_Y(Y_1))$ is singular with respect to the product of the marginals (the preimage of the joint distribution of $(X_1,Y_1)$ under the mapping $(a,b)\mapsto (\nabla\phi(a),\nabla\psi(b))$ has measure 1 under the joint distribution of $(T_X(X_1),T_Y(Y_1))$ and measure 0 under the product of the marginals). Fix $\epsilon > 0$. Since the joint distribution of $(T_X(X_1),T_Y(Y_1))$ is a Radon measure, and hence inner regular, we can find a compact subset $K$ of its support (so $K$ has Lebesgue measure 0) such that 
\[  
\mathbb{P}((T_X(X_1), T_Y(Y_1)) \in K) \geq 1 - \epsilon.
\]
Denote $\mathcal{I}:=\{i: (T_X(X_i), T_Y(Y_i)) \notin K\}$. By the multiplicative Chernoff bound \citep[Exercise~10.6.11]{samworth2025statistics}, we have 
\[  
\mathbb{P}(|\mathcal{I}| > 2\epsilon n) \leq e^{-3n\epsilon/8}.
\]
By \citet[Theorem~1]{fournier2015rate}, the empirical distribution of both $(R_i)_{i\in[n]}$ and  $(T_X(X_i), T_Y(Y_i))_{i\in[n]}$ are at most $C_d\rho_n$ away from $\mathrm{Unif}([0,1]^{d})$ in 1-Wasserstein distance where
\[  
\rho_n =\begin{cases}
    n^{-1/2}\log(en) & \text{if $d=2$}\\
    n^{-1/d} & \text{if $d\geq 3$}
\end{cases}
\]
and $C_d$ depends only on $d$. Hence by the triangle inequality and Markov's inequality, the 1-Wasserstein distance between $(R_i)_{i\in[n]}$ and  $(T_X(X_i), T_Y(Y_i))_{i\in[n]}$  is bounded by $2C_d\rho_n^{1/2}$ with probability at least $1-\rho_n^{1/2}$. Define 
\[
\mathcal{J}:=\{i: \min_{j\in[n]} \|R_i - (T_X(X_j), T_Y(Y_j))\|_2 > 2\epsilon^{-1}C_d\rho_n^{1/2}\}.
\]
Then $\mathbb{P}(|\mathcal{J}| > \epsilon n) \leq \rho_n^{1/2}$. Therefore, on an event with probability at least $1-\rho_n^{1/2}-e^{-3n\epsilon/8}$, we have for some $C_d'$ depending only on $d$ that 
\begin{align*}
\vol\biggl(\bigcup_{i\in[n]} B\Bigl(R_i, \frac{1}{2n^{1/d}}\Bigr)\biggr) &\leq \vol\biggl(\bigcup_{i\in\mathcal{I}\cup\mathcal{J}} B\Bigl(R_i, \frac{1}{2n^{1/d}}\Bigr) \biggr) + \vol\Bigl(K + B(0, C_d'\epsilon^{-1}\rho_n^{1/2})\Bigr)\\
&\leq 3\epsilon + \vol\Bigl(K + B(0, C_d'\epsilon^{-1}\rho_n^{1/2})\Bigr)
\end{align*}
For each fixed $\epsilon$, the Minkowski dilation $K + B(0, C_d'\epsilon^{-1}\rho_n^{1/2})$ has Lebesgue measure converging to $0$ (since $K$ is compact and Lebesgue null). Since $\epsilon$ is arbitrary, we must have the left-hand side of the above converging to $0$ in probability, and consequently $\kappa_n^{X,Y}\pto 1$ as desired.
\end{proof}

\subsection{Proof of Theorem~\ref{thm: CLT}}

Our proof strategy is inspired by \citet[Theorem~1]{hall1985three}: we partition the space into local blocks, prove a conditional CLT for the interior contribution, and show that the boundary contribution is negligible.

\begin{proof}[Proof of Theorem~\ref{thm: CLT}]
Write $\gamma:=\frac{1}{2n^{1/d}}$ for notational simplicity and fix $\lambda\in\mathbb{N}$ for now. Define $L_1:=\lfloor\frac{1}{(\lambda+2)\gamma}\rfloor$. We can partition $[0,1)^d$ into $L:=L_1^d$ small cubes $\prod_{j\in[d]} \bigl[\frac{k_j-1}{L_1}, \frac{k_j}{L_1}\bigr)$ for $k_1,\ldots,k_d\in[L_1]$. We call these small cubes $\mathcal{P}_1,\ldots,\mathcal{P}_L$. For each $\ell\in[L]$, we define $\mathcal{Q}_\ell$ to be the concentric cube in $\mathcal{P}_\ell$ with side length $\lambda\gamma$. Define $\mathcal{I}_\ell = \{i: R_i\in \mathcal{P}_\ell\}$ and $N_\ell:=|\mathcal{I}_\ell|$. Writing
\begin{equation}
\begin{aligned}
\mathcal{V}_{n,\ell}^{\mathrm{in}}&:= \vol \Bigl(\mathcal{Q}_\ell \setminus \bigcup_{i\in \mathcal{I}_\ell} B(R_i,\gamma)\Bigr) \; \forall\, \ell\in[L]\\
 \mathcal{V}_n^{\mathrm{out}}&:= \vol\Bigl([0,1]^d \setminus \Bigl\{\bigcup_{\ell\in[L]}\mathcal{Q}_\ell \cup\bigcup_{i\in[n]} B(R_i,\gamma) \Bigr\}\Bigr),
\end{aligned}
\label{eq:VInOut}
\end{equation}
we have 
\[  
\mathcal{V}_n = \mathcal{V}_{n}^{\mathrm{in}}+ \mathcal{V}_n^{\mathrm{out}}, \quad \text{where}\; \mathcal{V}_{n}^{\mathrm{in}} := \sum_{\ell=1}^L \mathcal{V}_{n,\ell}^{\mathrm{in}}.
\]
The key observation here is that $(\mathcal{V}_{n,\ell}^{\mathrm{in}}:\ell\in[L])$ are conditionally independent given $(N_\ell:\ell\in[L])$. Define 
\begin{align}\label{eq:MS}
    M_n:= \mathbb{E}(\mathcal{V}_n^{\mathrm{in}}\mid N_1,\ldots,N_L), \qquad S_n:= \var(\mathcal{V}_n^{\mathrm{in}}\mid N_1,\ldots,N_L).
\end{align}

By Proposition~\ref{Prop:ConditionalBerryEsseen} we have 
\begin{align}\label{eq:CLT1}
    \sup_{t\in\mathbb{R}} \biggl|\mathbb{P}\biggl(\frac{\sqrt{n}(\mathcal{V}_n^{\mathrm{in}} - M_n)}{\sqrt{S_n}} \leq t\biggm| N_1,\ldots,N_L\biggr) - \Phi(t)\biggr| = O_p(n^{-1/2}),
\end{align}
where $\Phi$ is the standard Gaussian distribution function. Additionally Proposition~\ref{Prop:AsymptoticMeanVariance} gives us
\begin{equation}\label{eq:CLT2}
\sqrt{n} (M_n - \mathbb{E}(M_n)) \dto \mathcal{N}(0,\alpha_\lambda^2),\qquad nS_n \pto \beta_\lambda^2.
\end{equation}
Using Lemma~\ref{lmm:ConditionalDistributionCLT},~\eqref{eq:CLT1} and~\eqref{eq:CLT2} gives us 
\begin{align*}
    \sqrt{n}(\mathcal{V}_n^{\mathrm{in}} - \E(\mathcal{V}_n^{\mathrm{in}}))\dto \mathcal{N}(0, \alpha_\lambda^2 + \beta_\lambda^2).
\end{align*}
Finally, using Proposition~\ref{Prop:VoutContribution} and Chebyshev's inequality by $\lambda\rightarrow\infty$ we have 
\begin{align*}
    \sqrt{n}(\mathcal{V}_n^{\mathrm{out}} - \E(\mathcal{V}_n^{\mathrm{out}})) \pto 0.
\end{align*}
By Lemma~\ref{lmm:AlphaBetaLambda}, as $\lambda \to \infty$, we have $\alpha_\lambda^2 \to 0$ and $\beta_\lambda^2 \to \beta^2$ for $\beta^2 > 0$. Consequently, we conclude that
\[
\sqrt{n}(\mathcal{V}_n - \E(\mathcal{V}_n))\dto\mathcal{N}(0, \beta^2).
\]
The proof is complete by combining the above distributional convergence with the variance calculation in Lemma~\ref{le: VacancyLimitRandomRef}.
\end{proof}

\subsection{Proof of Theorem~\ref{thm: concentration}}

The proof combines a bounded-differences argument with the randomised interval construction of Lemma~\ref{Lemma:DeletePoints}, which controls the effect of changing one observation through the random spacings of the reference points.

\begin{proof}[Proof of Theorem~\ref{thm: concentration}]
As in the proof of Lemma~\ref{Lemma:DeletePoints}, we generate $P_1,\ldots,P_{n+1},$ $Q_1,\ldots,Q_{n+1}\iid \mathrm{Beta}(1/2,1/2)$ independent of all other randomness in the problem and define $G_i := U_{(i)}-(U_{(i)}-U_{(i-1)})P_i$ and $H_i := V_{(i)}-(V_{(i)}-V_{(i-1)})Q_i$ for $i\in[n+1]$ (with the convention that $U_{(0)}=V_{(0)}=0$ and $U_{(n+1)}=V_{(n+1)}=1$). 

Given any deterministic $\bs x =(x_i)_{i\in[n]}$ and $\bs y = (y_i)_{i\in[n]}$ and $i_0\in[n]$, define 
\begin{align*}
    \bs x^{-i_0} = (x_1,\ldots,x_{i_0-1},x_{i_0+1},\ldots,x_n), \qquad \bs y^{-i_0} = (y_1,\ldots,y_{i_0-1},y_{i_0+1},\ldots,y_n).
\end{align*}
By the proof of Lemma~\ref{Lemma:DeletePoints}, there exists $\tilde{\bs U}^{[r_x]} = (\tilde U_i^{[r_x]})_{i\in[n-1]}$ defined in terms of $\bs U = (U_i)_{i\in[n]}$, $P_1,\ldots,P_{n+1}$ and $r_x:=\sum_{i\in[n]}\mathbbm{1}\{x_{i_0}\geq x_i\}$, and  $\tilde{\bs V}^{[r_y]} = (\tilde V_i^{[r_y]})_{i\in[n-1]}$ defined in terms of $\bs V = (V_i)_{i\in[n]}$, $Q_1,\ldots,Q_{n+1}$ and $r_y:=\sum_{i\in[n]}\mathbbm{1}\{y_{i_0}\geq y_i\}$, such that $\tilde{\bs U}^{[r_x]}$ and $\tilde{\bs V}^{[r_y]}$ each have iid $\mathrm{Unif}[0,1]$ entries and 
\begin{align*}
\biggl|\mathcal{V}\Bigl(\bs X, \bs Y, \bs U, \bs V;\frac{1}{2\sqrt{n}}\Bigr) &- \mathcal{V}\Bigl(\bs X^{-i_0}, \bs Y^{-i_0}, \tilde{\bs U}^{[r_x]}, \tilde{\bs V}^{[r_y]};\frac{1}{2\sqrt{n-1}}\Bigr)\biggr|\\
&\hspace{2cm}\leq 5 \max\{G_{(r_x+1)}-G_{(r_x)}, H_{(r_y+1)}-H_{(r_y)}, n^{-1}\}.
\end{align*}
Thus, if $\bs X'$ and $\bs Y'$ differ from $\bs X$ and $\bs Y$ only in the $i_0$th entry respectively, we must have 
\begin{align*}
\biggl|\mathcal{V}\Bigl(\bs X, \bs Y, \bs U, \bs V;\frac{1}{2\sqrt{n}}\Bigr) &- \mathcal{V}\Bigl(\bs X', \bs Y', \bs U, \bs V;\frac{1}{2\sqrt{n}}\Bigr)\biggr|\\
&\hspace{2cm}\leq 10 \max\{G_{(r_x+1)}-G_{(r_x)}, H_{(r_y+1)}-H_{(r_y)}, n^{-1}\}.
\end{align*}
Since $(G_{(r+1)}-G_{(r)})_{r\in[n]}$ and $(H_{(r+1)}-H_{(r)})_{r\in[n]}$ have $\mathrm{Beta}(1,n)$ entries, we have by Lemma~\ref{Lemma:Beta} that on an event $\Omega$ with probability at least $1-2ne^{-t^2/(2+4t/3)}$ that 
\[  
\max_{r\in[n]} \max\{G_{(r+1)}-G_{(r)}, H_{(r+1)}-H_{(r)}, n^{-1}\} \leq \frac{1+t}{n}.
\]
Thus, we can apply McDiarmid's inequality conditional on the event $\Omega$ to obtain that 
\begin{align*}
\mathbb{P}(|\mathcal{V}_n - \mathbb{E}(\mathcal{V}_n)| \geq s) &\leq \mathbb{P}\bigl(|\mathcal{V}_n - \mathbb{E}(\mathcal{V}_n)| \geq s \bigm| \Omega\bigr)\mathbb{P}(\Omega) + \mathbb{P}(\Omega^{\mathrm{c}})\\
&\leq 2\exp\biggl\{-\frac{ns^2}{50(1+t)^2}\biggr\} + 2n\exp\biggl\{-\frac{t^2}{2+4t/3}\biggr\}
\end{align*}
Setting $t = \frac{1}{5}\min\{(ns^2)^{1/3}, (ns^2)^{1/2}\}$, we then have 
\[  
\mathbb{P}(|\mathcal{V}_n - \mathbb{E}(\mathcal{V}_n)| \geq s) \leq 2(n+1)\exp\biggl\{-\frac{1}{72}\min\{ns^2, (ns^2)^{1/3}\}\biggr\},
\]
as desired.
\end{proof}

\subsection{Proof of Theorem~\ref{Thm:PopulationLimitGrid}}
The proof strategy for Theorem~\ref{Thm:PopulationLimitGrid} is similar to that of Theorem~\ref{Thm:populationlimit} and is as follows. We first reduce the problem to the case where the marginal distributions $P^X$ and $P^Y$ are both $\mathrm{Unif}[0,1]$, so that the joint distribution $P^{(X,Y)}$ is simply the copula. Next, we approximate the absolutely continuous part of $P^{(X,Y)}$ by a piecewise constant density on rectangular pieces and the singular part of $P^{(X,Y)}$ by a singular measure supported on a Lebesgue null compact set. Finally, we show that the contribution to the coverage from each rectangular piece with constant density $q$ is proportional to $e^{-q}$ and the contribution from the singular measure with Lebesgue null support is 0 to complete the argument. The perturbation and block-coverage estimates invoked in this proof are collected in Section~\ref{sec:additional-results}.
\begin{proof}[Proof of Theorem~\ref{Thm:PopulationLimitGrid}]
    Consider the same set-up as in the proof of Theorem~\ref{Thm:populationlimit}. Let $\kappa_n^{\tilde{X}, \tilde{Y};\mathrm{grid}}$ be the empirical coverage correlation of sample $(\tilde{X}_i, \tilde{Y}_i)_{i\in[n]}\iid \tilde{P}^{(X, Y)}$ with $\tilde{P}^{(X, Y)}$ constructed as in proof of Theorem~\ref{Thm:populationlimit} and using the coupling argument in Proposition~\ref{Prop:TVPerturbationFixedGrid} for $n\epsilon\to \infty$ such that     
    \begin{equation}
        \label{Eq:EmpiricalCoverageApproxFixed}
        |\kappa_n^{X,Y; \mathrm{grid}} - \kappa_n^{\tilde X,\tilde Y; \mathrm{grid}}|\leq O_p(\epsilon\sqrt{n}).
    \end{equation}
    Now, since $\tilde P^{(X,Y)}$ has piecewise constant density in its absolutely continuous part, we may explicitly control $\kappa_n^{\tilde X, \tilde Y; \mathrm{grid}}$. Specifically, writing $\tilde R_i$ for the bivariate ranks of $(\tilde X_i,\tilde Y_i)$ with respect to reference points $\boldsymbol{u} = \boldsymbol{v} = (1/n, \ldots, (n-1)/n, 1)$ defined as in~\eqref{eq:R_iGen}, we have 
    \begin{align*}
(1-e^{-1})\kappa_n^{\tilde X, \tilde Y;\mathrm{grid}} &= 1-e^{-1} -\vol\biggl(\bigcup_{i=1}^n B\Bigl(\tilde R_i, \frac{1}{2\sqrt{n}}\Bigr)\biggr)\nonumber \\
&\leq 1-e^{-1} - \sum_{(j,k):Q_{j,k}\cap K=\varnothing} \vol\biggl(Q_{j,k}\,\cap \bigcup_{i:(\tilde X_i,\tilde Y_i)\in Q_{j,k}} B\Bigl(\tilde R_i, \frac{1}{2\sqrt{n}}\Bigr)\biggr)\nonumber\\
&\leq 1-e^{-1} - (1+o_p(1))\sum_{j,k=1}^N \int_{(x,y)\in Q_{j,k}} (1-e^{-\tilde h(x,y)}) + \epsilon\nonumber\\
& = (1+o_p(1))\int_{(x,y)\in[0,1]^2} (e^{-\tilde h(x,y)}-e^{-1}) + \epsilon\nonumber\\
&= (1-e^{-1}+o_p(1))D_{f_{\mathrm{cov}}}(\tilde P^{(X,Y)}\,\| P^X\otimes P^Y)  + \epsilon  
\end{align*}
where the second inequality follows from Proposition~\ref{Prop:PiecewiseConstantFixed} and~\eqref{Eq:DilatedSupport}. By a similar argument, we can bound $\kappa_n^{\tilde X, \tilde Y; \mathrm{grid}}$ below as
\begin{align}
    &(1-e^{-1})\kappa_n^{\tilde X, \tilde Y; \mathrm{grid}} =  1-e^{-1} - \sum_{(j,k)} \vol\biggl(Q_{j,k}\,\cap \bigcup_{i:(\tilde X_i,\tilde Y_i)\in Q_{j,k}} B\Bigl(\tilde R_i, \frac{1}{2\sqrt{n}}\Bigr)\biggr) \nonumber \\
    &\geq (1+o_p(1))\int_{(x,y)\in[0,1]^2} (e^{-\tilde h(x,y)}-e^{-1}) - \vol\biggl(\bigcup_{\substack{(j,k):Q_{j,k}\cap K\neq\varnothing\\ i:(\tilde X_i,\tilde Y_i)\in Q_{j,k}}} B\Bigl(\tilde R_i,\frac{1}{2\sqrt{n}}\Bigr)\biggr)\nonumber\\
    &\geq (1-e^{-1}+o_p(1))D_{f_{\mathrm{cov}}}(\tilde P^{(X,Y)}\,\| P^X\otimes P^Y) - \vol\biggl(K + B\Bigl(0,  \frac{1}{N} + \epsilon + O_p(n^{-1/2})\Bigr)\biggr),\label{Eq:PiecewiseConstantLowerFixed}
\end{align}
where in the final step, we applied Lemma~\ref{Lemma:DKW} to $(\tilde X_i)_{i\in[n]}$, $(\tilde Y_i)_{i\in[n]}$. 

By Lemma~\ref{Lemma:DivergenceStability}, $D_{f_{\mathrm{cov}}}(\tilde P^{(X,Y)}\,\| P^X\otimes P^Y)$ is at most $(8M+2L)\sqrt{\epsilon}$ away from $D_{f_{\mathrm{cov}}}(\tilde P^{(X,Y)}\,\| \tilde P^X\otimes \tilde P^Y)$. Also, using the fact that $\vol(K)=0$ and the upper continuity of Lebesgue measure, by choosing $\epsilon$ sufficiently small (and consequently $N$ sufficiently large), the volume of the Minkowski dilation of $K$ of width $1/N+\epsilon+O_p(n^{-1/2})$ on the right-hand side of~\eqref{Eq:PiecewiseConstantLowerFixed} can be made smaller than any positive number in probability. Therefore, we conclude that
\begin{equation}
\label{Eq:ConvergenceOfApproximationFixed}
\kappa_n^{\tilde X,\tilde Y; \mathrm{grid}} \pto D_{f_{\mathrm{cov}}}(\tilde P^{(X,Y)}\,\| \tilde P^X\otimes \tilde P^Y).
\end{equation}
The desired result then follows by combining~\eqref{Eq:fDivApprox}, \eqref{Eq:EmpiricalCoverageApproxFixed} and~\eqref{Eq:ConvergenceOfApproximationFixed}, by choosing small $\epsilon$ such that $n\epsilon\to\infty$ and $\sqrt{n}\epsilon\to 0$.
\end{proof}

\subsection{Proof of Theorem~\ref{thm:RegularGridCLT}}

The proof starts by using exchangeability to represent the joint ranks as a uniform random permutation on the grid. The main step is then a cumulant argument: the local coverage indicator is expanded by inclusion--exclusion, truncated to a finite-degree polynomial in permutation-matrix entries, and controlled through F\'eray's weighted dependency graph framework for permutation matrices~\citep{Feray2018WeightedDG}. 

\begin{proof}[Proof of Theorem~\ref{thm:RegularGridCLT}]
Given regular grid reference points $\bs{u}$ and $\bs v$, independence and exchangeability allow us to fix the empirical ranks of one marginal and represent the ranks of the other marginal by an independent uniform random permutation $\pi$ of $[n]=\{1,\dots,n\}$. Thus
\[
\cp_n:=\biggl\{\Bigl(\frac{i}{n},\frac{\pi(i)}{n}\Bigr):1\le i\le n\biggr\} = \bigl\{R_i: i\in[n]\bigr\},
\]
where $R_i$ is the joint rank of $(X_i,Y_i)$ with respect to $\bs u$ and $\bs v$. Let $\gamma=1/(2\sqrt n)$ and define the covered area
\begin{align}
\label{Eq:DefnCn}
\mathscr{C}_n := \mathrm{vol}\biggl(\bigcup_{p \in \cp_n} B(p, \gamma)\biggr).
\end{align}
Then $\mathcal{V}_n^{\mathrm{grid}}(\bs{u},\bs{v})=1-\mathscr{C}_n$, so it suffices to prove a central limit theorem for $\mathscr{C}_n$.

For $z=(x,y)\in[0,1]^2$, let
\[
U_n(z):=\bone\{\cp_n\cap B(z,\gamma)\neq\varnothing\},
\]
so that
\begin{equation}\label{eq:Anintegral}
\mathscr{C}_n=\int_{[0,1]^2}U_n(z)\,dz.
\end{equation}
Define $d_{\mathbb{T}}(u_1,u_2):=\inf_{\ell\in\{-1,0,1\}}|u_1-u_2-\ell|$ on $\mathbb{T}=\mathbb{R}/\mathbb{Z}$, and set
\[
I_n(z) := \Bigl\{1\le i\le n: d_{\mathbb{T}}\Bigl(\frac{i}{n},x\Bigr) \leq \frac{1}{2\sqrt{n}} \Bigr\}, \quad
J_n(z) := \Bigl\{1\le j\le n: d_{\mathbb{T}}\Bigl(\frac{j}{n},y\Bigr) \leq \frac{1}{2\sqrt{n}} \Bigr\}.
\]
There is an absolute constant $C_0>0$ such that
\begin{equation}\label{eq:IJsize}
|I_n(z)|\le C_0\sqrt n,\qquad |J_n(z)|\le C_0\sqrt n.
\end{equation}

Let $X_{ij}:=\bone\{\pi(i)=j\}$ and, for finite $M\subset\{1,\dots,n\}^2$, define
\[
X_M:=\prod_{(i,j)\in M}X_{ij}.
\]
For $m\ge0$, let $\cM_{n,m}(z)$ be the family of subsets $M\subset I_n(z)\times J_n(z)$ of cardinality $m$, and write
\[
N_n(z):=\sum_{(i,j)\in I_n(z)\times J_n(z)}X_{ij},\qquad
V_{n,m}(z):=\binom{N_n(z)}{m}=\sum_{M\in\cM_{n,m}(z)}X_M.
\]
The identity $\bone{\{q\ge1\}}=\sum_{m\ge1}(-1)^{m+1}\binom{q}{m}$ yields
\begin{equation}\label{eq:Uexpand}
U_n(z)=\bone\{N_n(z)\ge1\}=\sum_{m\ge1}(-1)^{m+1}V_{n,m}(z),
\end{equation}
where the sum is finite because $\cM_{n,m}(z)=\varnothing$ for $m>\min(|I_n(z)|,|J_n(z)|)$.

For sets $M,M'\subset\{1,\dots,n\}^2$, say that $M$ shares a row or a column with $M'$ if there exist $(i,j)\in M$ and $(i',j')\in M'$ such that $i=i'$ or $j=j'$. Define
\[
W_n(M;M')=
\begin{cases}
1,&\text{if $M$ shares a row or a column with $M'$},\\
n^{-1},&\text{otherwise},
\end{cases}
\]
and, for finite families $M_1,\dots,M_\ell$,
\[
W_n\bigl(M;\,M_1,\dots,M_\ell\bigr):=
\begin{cases}
1,&\text{if $M$ shares a row or a column with some $M_j$, $j\in[\ell]$},\\
n^{-1},&\text{otherwise}.
\end{cases}
\]
For $L\ge1$, define
\[
U_n^{(L)}(z):=\sum_{m=1}^{L}(-1)^{m+1}V_{n,m}(z),\qquad
\mathscr{C}_n^{(L)}:=\int_{[0,1]^2}U_n^{(L)}(z)\,dz,
\]
and set $\Delta_n^{(L)}:=\mathscr{C}_n-\mathscr{C}_n^{(L)}$.

By Lemma~\ref{lem:varVacancyReg}, we have that $n\var(\mathscr{C}_n) \to \frac{4\mathrm{Ei}(1)-4\gamma_0-5}{e^2} =: \sigma^2$. Hence, it suffices to prove that 
\begin{equation}
\label{Eq:CLT_Cn}
\sqrt n\,(\mathscr{C}_n-\mathbb{E} \mathscr{C}_n) \dto \ \mathcal{N}(0,\sigma^2).
\end{equation}

To this end, set
\[
Z_n := \sqrt n \,(\mathscr{C}_n-\mathbb{E} \mathscr{C}_n),
\qquad
Z_n^{(L)} := \sqrt n \,(\mathscr{C}_n^{(L)}-\mathbb{E} \mathscr{C}_n^{(L)}).
\]
Then
\[
Z_n-Z_n^{(L)}=\sqrt n\,(\Delta_n^{(L)}-\mathbb E\Delta_n^{(L)}).
\]
Let $d_{BL}$ denote the bounded-Lipschitz metric on probability measures on $\mathbb R$. That is, for two probability measures $\nu_1$ and $\nu_2$
\[
d_{BL}(\nu_1,\nu_2)
:=
\sup_{\substack{\|f\|_\infty\le1\\ \mathrm{Lip}(f)\le1}}
\Bigl|\int f\,d\nu_1-\int f\,d\nu_2\Bigr|.
\] 
It is known that this metric metrises weak convergence on $\mathbb R$. 

Since for every Lipschitz and bounded test function $f$,
\begin{align*}
|\mathbb{E}f(Z_n) - \mathbb{E}f(Z_n^{(L)})| \leq \mathbb{E}|Z_n - Z_n^{(L)}|, 
\end{align*}
it follows that
\[
d_{BL}(\mathrm{Law}(Z_n), \mathrm{Law}(Z_n^{(L)})) \leq \mathbb{E}|Z_n - Z_n^{(L)}| \leq \bigl(\mathbb{E}|Z_n - Z_n^{(L)}|^2\bigr)^{1/2} = n^{1/2} \mathrm{Var}^{1/2} \bigl(\Delta_n^{(L)} \bigr).
\]
Therefore, by Proposition~\ref{prop:tail-covariance}, we have 
\begin{align}
\lim_{L \to \infty}\sup_{n \geq 1} d_{BL}(\mathrm{Law}(Z_n), \mathrm{Law}(Z_n^{(L)}))  = 0. \label{eq:BL1}
\end{align}

Next, since
\[
\mathscr{C}_n=\mathscr{C}_n^{(L)}+\Delta_n^{(L)},
\]
we have
\[
\bigl|n\var(\mathscr{C}_n)-n\var(\mathscr{C}_n^{(L)})\bigr|
\le
2 n\var(\mathscr{C}_n)^{1/2}\var(\Delta_n^{(L)})^{1/2}
+
n\var(\Delta_n^{(L)}).
\]
By Proposition~\ref{prop:tail-covariance} and that $n \var(\mathscr{C}_n) \to \sigma^2$, the right-hand side tends to zero as $L \to \infty$ uniformly in $n$. Hence 
\begin{align}
\delta_L := \sup_{n \geq 1}\bigl|n\var(\mathscr{C}_n)-n\var(\mathscr{C}_n^{(L)})\bigr| \to 0 \qquad (L \to \infty). \label{eq:TruncatedVarianceApprox}
\end{align}

Moreover, \eqref{eq:TruncatedVarianceApprox} implies that, for each fixed $L$, $\sup_{n \geq1} n\var(\mathscr{C}_n^{(L)})$ is bounded. Hence, by Chebyshev's inequality, the sequence of laws of $Z_n^{(L)}$ is tight. Let $(n_k)_{k \geq 1}$ be any subsequence such that 
\[
\lim_{k \to \infty} d_{BL}\bigl(\mathrm{Law}(Z_{n_k}^{(L)}), \, \nu\bigr) = 0,
\]
for some probability measure $\nu$ on $\mathbb R$. Passing to a further subsequence if necessary, we may also assume that, as $k \to \infty$, $n_k\var(\mathscr{C}_{n_k}^{(L)}) \to v$ for some $v \geq 0$. Proposition~\ref{prop:truncated-area-cumulants} then yields that $\nu = \mathcal{N}(0,v)$. Moreover, by the definition of $\delta_L$,
\[
|v-\sigma^2| = \lim_{k \to \infty}|n_k \mathrm{Var}(\mathscr{C}_{n_k}^{(L)}) - n_k \mathrm{Var}(\mathscr{C}_{n_k})|\leq \delta_L.
\]
Thus every subsequential limit of the laws of $Z_n^{(L)}$ is Gaussian with variance within $\delta_L$ of $\sigma^2$. Consequently,
\begin{align}
\limsup_{n \to \infty} d_{BL}\bigl(\mathrm{Law}(Z_n^{(L)}), \mathcal{N}(0, \sigma^2)\bigr) 
\leq \sup_{v\geq 0: |v-\sigma^2|\leq \delta_L} d_{BL}(\mathcal{N}(0,v), \mathcal{N}(0, \sigma^2)) 
\xrightarrow{L \to \infty} 0, \label{eq:BL2}
\end{align}
where the convergence follows from $\delta_L\to 0$ and the continuity of the Gaussian law in its variance under $d_{BL}$.

Finally, combining~\eqref{eq:BL2} with \eqref{eq:BL1} and the triangle inequality for $d_{BL}$, we conclude that $Z_n$ converges in distribution to $\mathcal{N}(0, \sigma^2)$ as $n \to \infty$, thus establishing~\eqref{Eq:CLT_Cn}, and consequently the desired result in the theorem as well. 
\end{proof}

\subsection{Proof of Proposition~\ref{prop: concentrationGrid}}
\begin{proof}[Proof of Proposition~\ref{prop: concentrationGrid}] 
The proof of this proposition is an immediate application of Lemma~\ref{lem:VacancyDifference} and McDiarmid's inequality~\citep{mcdiarmid1989method}.
\end{proof}

\subsection{Proof of Theorem~\ref{Thm:ConsistencyGeneralDimGrid}}

\begin{proof}[Proof of Theorem~\ref{Thm:ConsistencyGeneralDimGrid}]  
By Proposition~\ref{prop: VonRawData_deterministic}, we may reduce to the copula case, namely the case where $P^X = \mathrm{Unif}([0,1]^{d_X})$ and $P^Y = \mathrm{Unif}([0,1]^{d_Y})$. The same lemma further shows that, in order to establish the consistency of $\mathcal{V}({\bf X}, {\bf Y}, {\bf U}, {\bf V}; 1/2 n^{-1/d})$, it is enough to prove the corresponding consistency statement for $\mathcal{V}({\bf X}, {\bf Y}, {\bf X}, {\bf Y}; 1/2 n^{-1/d})$. The remaining argument is identical to the proof of Theorem~\ref{thm:ConsistencyGeneralDim}, and is therefore omitted.
\end{proof}

\subsection{Proof of Theorem~\ref{thm:KVariateCoverageConsistency}}

\begin{proof}[Proof of Theorem~\ref{thm:KVariateCoverageConsistency}]   
The proof structure largely mimics the proof of Theorem~\ref{Thm:populationlimit}. We therefore give the argument at a high level, indicating the $K$-tuple analogues of the auxiliary results used in the bivariate proof. By a $K$-tuple variant of Lemma~\ref{Lemma:DataProcessing} (the data-processing inequality there naturally extends to divergences between joint distribution of the $K$-tuple and the product of their marginal), we can again reduce the problem to the case where $P_1=\cdots=P_K = \mathrm{Unif}[0,1]$, so that $P$ is simply a copula. 

By the Lebesgue--Radon--Nikodym decomposition, we can decompose $dP = h\,d\mathrm{vol} + d\nu$ for some measurable function $h$ and a measure $\nu$ singular with respect to the Lebesgue measure. By an almost verbatim argument as in the proof of Theorem~\ref{Thm:populationlimit}, we can use the inner regularity of $\nu$ to show that the singular component does not contribute to the coverage coefficient in the limit. 

We then approximate $h$ by a function $\tilde h$ that is piecewise constant on $Q_{j_1,\ldots,j_K} = \prod_{k\in[K]} [(j_k-1)/N, j_k/N)$, with $N$ chosen large enough so that $\int |h - \tilde h| d\mathrm{vol} \leq \epsilon$. By $K$-tuple versions of Lemma~\ref{Lemma:DivergenceStability} and Proposition~\ref{Prop:TVPerturbation}, both the coverage coefficient and the $f$-divergence limit vary continuously with respect to total-variation perturbations of $P$. Hence, it suffices to only consider $P = \tilde h \,d\mathrm{vol}$. Now, since $\tilde h$ has piecewise constant densities, we can explicitly control the coverage coefficient using a $K$-tuple variant of Proposition~\ref{Prop:PiecewiseConstant} to conclude the proof. 
\end{proof}

\subsection{Proof of Theorem~\ref{thm:ConsistencyKComponent}}
\begin{proof}
For each $i \in [n]$ and $k \in [K]$, define $\widetilde{X}_i^{(k)} := T_k(X_i^{(k)}) \in \mathbb{R}^{d_k}$. For notation simplicity, suppose $Z_i = (X_i^{(1)}, \ldots, X_i^{(K)}) \in \mathbb{R}^{d_1} \times \ldots \times \mathbb{R}^{d_K}$, $Z^{(k)} = (X_1^{(k)}, \ldots, X_n^{(k)}) \in \mathbb{R}^{nd_k}$ and $\widetilde{Z}_i = (\widetilde{X}_i^{(1)},\ldots, \widetilde{X}_i^{(K)})$, $\widetilde{Z}^{(k)}=(\widetilde{X}_1^{(k)},\ldots, \widetilde{X}_n^{(k)})$. Set $d := \sum_{k = 1}^K d_k$. The proof follows the same reduction strategy as Theorem~\ref{thm:ConsistencyGeneralDim}; we first establish the corresponding raw-data-to-copula approximation, now in the $K$-component notation:
\begin{align}
\biggl|\mathcal{V}_{n} - \mathcal{V}\biggl((\widetilde{Z}^{(k)})_{k = 1}^K, (\widetilde{Z}^{(k)})_{k = 1}^K; \frac{1}{2n^{1/d}} \biggr)\biggr| \pto 0. \label{eq:SelfReference} 
\end{align}

For each $k \in [K]$, given the reference points $U^{(k)}$, let $R_i^{(k)}$ and $\widetilde{R}_i^{(k)}$ be the multivariate ranks of $X_i^{(k)}$ and $\widetilde{X}_i^{(k)}$, defined as~\eqref{eq:MultiRankKsample}, respectively, and the induced joint ranks are
\[
R_i = (R_i^{(1)}, \ldots, R_i^{(K)}) \quad \text{and} \quad \widetilde{R}_i = (\widetilde{R}_i^{(1)}, \ldots, \widetilde{R}_i^{(K)}),
\]
for $i = 1, \ldots, n$. Define
\[
\widetilde{A}_1 = \bigcup_{i=1}^n B\biggl(R_i, \frac{1}{2n^{1/d}}\biggr), \quad \widetilde{A}_2 = \bigcup_{i=1}^n B\biggl(\widetilde{R}_i, \frac{1}{2n^{1/d}}\biggr) , \quad \widetilde{A}_3 = \bigcup_{i=1}^n B\biggl(\widetilde{Z}_i, \frac{1}{2n^{1/d}}\biggr). 
\]
Then we get
\begin{align}
  &\biggl|\mathcal{V}_{n} - \mathcal{V}\biggl((\widetilde{Z}^{(k)})_{k = 1}^K, (\widetilde{Z}^{(k)})_{k = 1}^K; \frac{1}{2n^{1/d}} \biggr)\biggr| \notag \\ 
  \leq & \biggl|\mathcal{V}_{n} - \mathcal{V} \biggl((\widetilde{Z}^{(k)})_{k = 1}^K, ({U}^{(k)})_{k = 1}^K; \frac{1}{2n^{1/d}}\biggr)\biggr| \notag \\ 
  &+\biggl|\mathcal{V} \biggl((\widetilde{Z}^{(k)})_{k = 1}^K, ({U}^{(k)})_{k = 1}^K; \frac{1}{2n^{1/d}}\biggr) - \mathcal{V} \biggl((\widetilde{Z}^{(k)})_{k = 1}^K, (\widetilde{Z}^{(k)})_{k = 1}^K; \frac{1}{2n^{1/d}} \biggr)\biggr| \notag \\ 
  \leq & \mathrm{vol}(\widetilde{A}_1 \Delta \widetilde{A}_2) + \mathrm{vol}(\widetilde{A}_2 \Delta \widetilde{A}_3). \label{ineq:SelfReferenceControl}
\end{align}
By Lemma \ref{lem: symdiff}, the final two terms above can be bounded as the following:
\begin{align}
\mathrm{vol}(\widetilde{A}_1 \Delta \widetilde{A}_2) \leq \sum_{i = 1}^n \mathrm{vol}\biggl(B\Bigl(R_i, \frac{1}{2n^{1/d}}\Bigr) \Delta B\Bigl(\widetilde{R}_i, \frac{1}{2n^{1/d}}\Bigr)\biggr) 
\leq2 d n^{-\frac{d-1}{d}} \sum_{i = 1}^n  \|R_i - \widetilde{R}_i\|_2,   \label{ineq:volA1A2}
\end{align}
and
\begin{align}
  \mathrm{vol}(\widetilde{A}_2\Delta \widetilde{A}_3) \leq \sum_{i = 1}^n \mathrm{vol}\biggl(B\Bigl(\widetilde{R}_i, \frac{1}{2n^{1/d}}\Bigr) \Delta B\Bigl(\widetilde{Z}_i, \frac{1}{2n^{1/d}}\Bigr)\biggr) 
\leq2 d n^{-\frac{d-1}{d}} \sum_{i = 1}^n  \| \widetilde{R}_i - \widetilde{Z}_i\|_2. \label{ineq:volA2A3}
\end{align}

Write ${P}_{n, k}^{\widetilde{X}} := \frac{1}{n}\sum_{i = 1}^n \delta_{\widetilde{X}_i^{(k)}}$, $\mu_n^{(k)} := n^{-1}\sum_{i = 1}^n \delta_{\tilde{X}_i^{(k)}}$ and $d_* := \max_{k \in [K]} d_k$. Since $\widetilde{X}_1^{(k)}, \ldots,\widetilde{X}_n^{(k)} \stackrel{\mathrm{iid}}{\sim} \mathrm{Unif}([0, 1]^{d_k})$, by the convergence rate of empirical Wasserstein  distance~\citep{fournier2015rate} we have
\begin{align}
    \frac{1}{n}\sum_{i = 1}^n  \|\widetilde{R}_i - \widetilde{Z}_i\|_2
    &= \frac{1}{n}\sum_{k = 1}^K\sum_{i = 1}^n
    \|\widetilde{R}_i^{(k)} - \widetilde{X}_i^{(k)}\|_2 = \sum_{k = 1}^K\mathcal{W}_1(P_{n, k}^{\widetilde{X}}, \mu_n^{(k)})\notag\\
    &\leq \sum_{k = 1}^K
    \Bigl[\mathcal{W}_1(P_{n, k}^{\widetilde{X}},  \mathrm{Unif}([0, 1]^{d_k}))
    + \mathcal{W}_1( \mathrm{Unif}([0, 1]^{d_k}), \mu_n^{(k)})\Bigr] \notag \\
    &= O(n^{-\frac{1}{d_*}}\log^2 n).
    \label{ineq:A1A2Rate}
\end{align}
Taking this back to~\eqref{ineq:volA2A3}, we have $\mathrm{vol}(\widetilde{A}_2\Delta \widetilde{A}_3) = o_p(1)$. 

To bound~\eqref{ineq:volA1A2}, note that $\|R_i - \widetilde{R}_i\|_2 \leq \|R_i - \widetilde{Z}_i\|_2 + \|\widetilde{Z}_i - \widetilde{R}_i\|_2$, it suffices to bound $n^{-1}\sum_{i = 1}^n \|R_i - \widetilde{Z}_i\|_2$. Note that 
\begin{align}
    \frac{1}{n}\sum_{i = 1}^n \|R_i - \widetilde{Z}_i\|_2 \leq  \sum_{k = 1}^K\Bigl(\frac{1}{n}\sum_{i = 1}^n\|R_i^{(k)} - \widetilde{X}_i^{(k)}\|_2^2\Bigr)^{1/2}. \label{ineq:tildeA2A3Intermed}
\end{align}
By Proposition~\ref{prop:HolderOTConvergence}, for any $k \in [K]$, there exists some $\gamma_k>0$ depending only on $\theta_k, \alpha_k, L_k$ such that  
\begin{align}
\Bigl(\frac{1}{n}\sum_{i = 1}^n\|R_i^{(k)} - \widetilde{X}_i^{(k)}\|_2^2\Bigr)^{1/2} = O\Bigl(r_{n, d_k}^{\frac{\theta_k}{1+\theta_k}} \log^{\gamma_k}n\Bigr) = o_p(n^{-\frac{1}{d}}), \label{eq:TermRate}
\end{align}
where the final equality follows from~\ref{ass:Kdimensionregime}. Consequently, combining~\eqref{ineq:tildeA2A3Intermed} and~\eqref{eq:TermRate} with~\eqref{ineq:A1A2Rate} and~\eqref{ineq:volA1A2}, we get $\mathrm{vol}(\widetilde{A}_1 \Delta \widetilde{A}_2) =o_p(1)$. Hence, the right-hand side of~\eqref{ineq:SelfReferenceControl} converges to 0 in probability, and we obtain \eqref{eq:SelfReference}. 

Therefore, it suffices to consider the copula case. That is, we may assume that
each marginal distribution satisfies $P_k=\mathrm{Unif}([0, 1]^{d_k})$, $k\in[K]$, so that
\[
X_1^{(k)},\ldots,X_n^{(k)}\stackrel{\mathrm{iid}}{\sim}\mathrm{Unif}([0, 1]^{d_k}),
\qquad k\in[K].
\]
We then only need to study the consistency of
\[
\widetilde{\mathcal{V}}_{n}:=\mathcal{V}\biggl((Z^{(k)})_{k = 1}^K,(Z^{(k)})_{k = 1}^K;\frac{1}{2n^{1/d}}\biggr)
\]
under this copula representation. Let $\mathrm{d}P = g\, \mathrm{d}\mathrm{vol} + \mathrm{d}\eta$ be the Lebesgue-Radon-Nikodym decomposition of $P$, where $g$ is a measurable function and measure $\eta$ is singular to the Lebesgue measure. By an identical calculation as~\eqref{eq: limitExp} and~\eqref{eq:VacancySecondMoment}, we can establish that 
\begin{align*}
    \mathbb{E}\widetilde{\mathcal{V}}_{n} \rightarrow \int e^{-g(x)} \, \mathrm{d}x , \qquad \mathrm{Var}(\widetilde{\mathcal{V}}_{n})\rightarrow 0,
\end{align*}
as $n \to \infty$. Hence the result follows as desired. 
\end{proof}

\subsection{Proof of Theorem~\ref{Thm:KVariateCoverageCLT}}
\begin{proof}[Proof of Theorem~\ref{Thm:KVariateCoverageCLT}]
Since $X^{(1)},\ldots,X^{(K)}$ are mutually independent, the random permutations $\pi^{(1)}, \ldots, \pi^{(K)}$ defined in~\eqref{eq:MultiRankKsample} are mutually independent. Consequently, the joint ranks $R_1,\ldots,R_n$ defined in~\eqref{eq:JointRankKcomponent} are independent and identically distributed according to the uniform distribution on $[0,1]^{d}$. The result then follows by repeating the argument used in Theorem~\ref{thm: CLT}.
\end{proof}

\section{Additional results}
\label{sec:additional-results}
We collect here additional results that are used in the proof of the main theorems, as well as to substantiate claims made in the main text. These results supply the technical details deferred from the high-level proofs in Section~\ref{sec:proof_main}.

\subsection{Additional results used in the proof of Theorem~\ref{Thm:populationlimit}}
The first preliminary result shows that the coverage correlation coefficients of samples generated from two probability measures close in total variation distance are (stochastically) close to each other.

\begin{prop}
\label{Prop:TVPerturbation}
Let $\mu$ and $\nu$ be two probability measures on $\mathbb{R}^2$ with $d_{\mathrm{TV}}(\mu, \nu)\leq \epsilon$ and suppose $(X_i,Y_i)_{i\in[n]}\iid \mu$, $(\tilde X_i,\tilde Y_i)\iid \nu$. Also suppose $(U_i,V_i)_{i\in[n]}\iid\mathrm{Unif}[0,1]^2$ and $(\tilde U_i,\tilde V_i)_{i\in[n]}\iid\mathrm{Unif}[0,1]^2$ are independent of $(X_i,Y_i)_{i\in[n]}$ and $(\tilde X_i,\tilde Y_i)_{i\in[n]}$ respectively.  There exists a coupling between $(X_i,Y_i,U_i,V_i)_{i\in[n]}$ and $(\tilde X_i,\tilde Y_i,\tilde U_i,\tilde V_i)_{i\in[n]}$ such that 
\begin{align*}
    \biggl|\mathcal{V}\Bigl((X_i)_{i\in[n]}, &(Y_i)_{i\in[n]}, (U_i)_{i\in[n]},(V_i)_{i\in[n]}; \frac{1}{2\sqrt{n}}\Bigr) \\
    &- \mathcal{V}\Bigl((\tilde X_i)_{i\in[n]}, (\tilde Y_i)_{i\in[n]}, (\tilde U_i)_{i\in[n]},(\tilde V_i)_{i\in[n]}; \frac{1}{2\sqrt{n}}\Bigr)\biggr| \leq \epsilon + O_p(n^{-1/2}).
\end{align*}
\end{prop}
\begin{proof}
For notational simplicity, write $\bs{X}:=(X_i)_{i\in[n]}$, $\bs{Y}:=(Y_i)_{i\in[n]}$, $\tilde {\bs{X}}:=(\tilde X_i)_{i\in[n]}$, $\tilde{\bs Y} :=(\tilde Y_i)_{i\in[n]}$, $\bs{U}:=(U_i)_{i\in[n]}$, $\bs{V}:=(V_i)_{i\in[n]}$, $\tilde {\bs{U}}:=(\tilde U_i)_{i\in[n]}$, $\tilde{\bs V} :=(\tilde V_i)_{i\in[n]}$. Also, let $(\bs X^{\mathrm{Pois}}, \bs Y^{\mathrm{Pois}}) := (X^{\mathrm{Pois}}_i, Y^{\mathrm{Pois}}_i)_{i\in[N]} \sim \mathrm{PP}(n\mu)$ be a Poisson point process with intensity $n\mu$ with $N\sim \mathrm{Poi}(n)$ points. Similarly, define mutually independent Poisson point processes $(\tilde{\bs X}^{\mathrm{Pois}}, \tilde{\bs Y}^{\mathrm{Pois}}) \sim \mathrm{PP}(n\nu)$, $(\bs U^{\mathrm{Pois}}, \bs V^{\mathrm{Pois}}) \sim \mathrm{PP}(n\cdot\mathrm{vol})$ and $(\tilde{\bs U}^{\mathrm{Pois}}, \tilde{\bs V}^{\mathrm{Pois}}) \sim \mathrm{PP}(n\cdot\mathrm{vol})$. 

We define a new measure $\lambda$ that is the maximum of $\mu$ and $\nu$ as follows. Let $f = d\mu / d(\mu+\nu)$ and $g = d\nu/d(\mu+\nu)$ be densities with respect to the common dominating measure $\mu+\nu$, we set $d\lambda:= \max(f,g) \, d(\mu+\nu)$. It is clear that $\lambda\geq \mu$ and $\lambda\geq \nu$, and from the total variation bound between $\mu$ and $\nu$ we have $\lambda(\mathbb{R}^2) - \max\{\mu(\mathbb{R}^2),\nu(\mathbb{R}^2)\}\leq \epsilon$. We define another two independent Poisson point process $(\bs X^{\mathrm{max}}, \bs Y^{\mathrm{max}}) \sim \mathrm{PP}(n\lambda)$ and $(\bs U^{\mathrm{max}}, \bs V^{\mathrm{max}}) \sim \mathrm{PP}(n \cdot\mathrm{vol})$.

We now define a chain of couplings by Lemmas~\ref{Lemma:Poissonisation} and~\ref{Lemma:Thinning} as follows:
\[
    \begin{pmatrix} \bs{X}\\\bs{Y}\\ \bs{U}\\ \bs{V}\end{pmatrix}
    \xleftrightarrow{\ref{Lemma:Poissonisation}}
    \begin{pmatrix} \bs{X}^{\mathrm{Pois}}\\\bs{Y}^{\mathrm{Pois}}\\ \bs{U}^{\mathrm{Pois}}\\ \bs{V}^{\mathrm{Pois}}\end{pmatrix}
    \xleftrightarrow{\ref{Lemma:Thinning}}
    \begin{pmatrix} \bs{X}^{\mathrm{max}}\\\bs{Y}^{\mathrm{max}}\\ \bs{U}^{\mathrm{max}}\\ \bs{V}^{\mathrm{max}}\end{pmatrix} \xleftrightarrow{\ref{Lemma:Thinning}}
    \begin{pmatrix} \tilde{\bs{X}}^{\mathrm{Pois}}\\\tilde{\bs{Y}}^{\mathrm{Pois}}\\ \tilde{\bs{U}}^{\mathrm{Pois}}\\ \tilde{\bs{V}}^{\mathrm{Pois}}\end{pmatrix}
    \xleftrightarrow{\ref{Lemma:Poissonisation}}
    \begin{pmatrix} \tilde{\bs{X}}\\\tilde{\bs{Y}}\\ \tilde{\bs{U}}\\ \tilde{\bs{V}}\end{pmatrix}
\]
In particular, under this coupling, the cardinality of $\bs{X}^{\mathrm{Pois}}$, $\bs{Y}^{\mathrm{Pois}}$, $\bs{U}^{\mathrm{Pois}}$, $\bs{V}^{\mathrm{Pois}}$ and their tilde-ed counterpart are all equal to some $N\sim \mathrm{Poi}(n)$ and the cardinality of $\bs{X}^{\mathrm{max}}$, $\bs{Y}^{\mathrm{max}}$, $\bs{U}^{\mathrm{max}}$, $\bs{V}^{\mathrm{max}}$ are equal to $M\sim\mathrm{Poi}(n\lambda(\mathbb{R}^2))$. By Lemma~\ref{Lemma:Poissonisation}, we have
\begin{align*}
    \biggl|\mathcal{V}\Bigl(\bs{X},\bs{Y},\bs{U}, \bs{V};\frac{1}{2\sqrt{n}}\Bigr) - \mathcal{V}\Bigl(\bs{X}^{\mathrm{Pois}},\bs{Y}^{\mathrm{Pois}},\bs{U}^{\mathrm{Pois}} \bs{V}^{\mathrm{Pois}};\frac{1}{2\sqrt{N}}\Bigr)\biggr| &= O_p(n^{-1/2}) \\
    \biggl|\mathcal{V}\Bigl(\tilde{\bs{X}},\tilde{\bs{Y}},\tilde{\bs{U}}, \tilde{\bs{V}};\frac{1}{2\sqrt{n}}\Bigr) - \mathcal{V}\Bigl(\tilde{\bs{X}}^{\mathrm{Pois}},\tilde{\bs{Y}}^{\mathrm{Pois}},\tilde{\bs{U}}^{\mathrm{Pois}} \tilde{\bs{V}}^{\mathrm{Pois}};\frac{1}{2\sqrt{N}}\Bigr)\biggr| &= O_p(n^{-1/2})
\end{align*}
By Lemma~\ref{Lemma:Thinning}, we have 
\begin{align*}
    \biggl|\mathcal{V}\Bigl(\bs{X}^{\mathrm{max}},\bs{Y}^{\mathrm{max}},\bs{U}^{\mathrm{max}}, \bs{V}^{\mathrm{max}};\frac{1}{2\sqrt{M}}\Bigr) &- \mathcal{V}\Bigl(\bs{X}^{\mathrm{Pois}},\bs{Y}^{\mathrm{Pois}},\bs{U}^{\mathrm{Pois}} \bs{V}^{\mathrm{Pois}};\frac{1}{2\sqrt{N}}\Bigr)\biggr| \\
    &\hspace{4cm}\leq \epsilon+ O_p(n^{-1/2}) \\
    \biggl|\mathcal{V}\Bigl(\bs{X}^{\mathrm{max}},\bs{Y}^{\mathrm{max}},\bs{U}^{\mathrm{max}}, \bs{V}^{\mathrm{max}};\frac{1}{2\sqrt{M}}\Bigr) &- \mathcal{V}\Bigl(\tilde{\bs{X}}^{\mathrm{Pois}},\tilde{\bs{Y}}^{\mathrm{Pois}},\tilde{\bs{U}}^{\mathrm{Pois}} \tilde{\bs{V}}^{\mathrm{Pois}};\frac{1}{2\sqrt{N}}\Bigr)\biggr| \\
    &\hspace{4cm}\leq \epsilon +  O_p(n^{-1/2})
\end{align*}
The desired result follows by combining the above four vacancy difference bounds under the coupling.
\end{proof}

The next result controls the contribution to the coverage from a rectangular region in the domain where the density is constant.

\begin{prop}
\label{Prop:PiecewiseConstant}
Suppose $P^{(X,Y)}$ is a probability measure on $[0,1]^2$ with $\mathrm{Unif}[0,1]$ marginals. Suppose that for some $0\leq a_1 < a_2\leq 1$ and $0\leq b_1 < b_2\leq 1$, $P^{(X,Y)}$ is equal to $q \cdot \mathrm{vol}$ when restricted to $[a_1,a_2]\times [b_1,b_2]$, where $\mathrm{vol}$ denotes the Lebesgue measure. Let $(X_1,Y_1),\ldots,(X_n,Y_n)\iid P^{(X,Y)}$ and let $R_i$ be defined as in~\eqref{eq:R_iGen} with respect to reference points $(U_i,V_i)_{i\in[n]}\iid \mathrm{Unif}[0,1]^2$. Then
\[
   \vol\biggl( \bigcup_{i: (X_i,Y_i)\in(a_1,a_2]\times (b_1,b_2]} B\Bigl(R_i, \frac{1}{2\sqrt{n}}\Bigr)\biggr) = (1-e^{-q})(a_2-a_1)(b_2-b_1)(1+o_p(1)).
\]
\end{prop}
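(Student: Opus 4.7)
The plan is a standard Poisson-style coverage calculation: for each interior point $z$ of the rectangle, the probability that $z$ is uncovered by the union of cubes $B(R_i, 1/(2\sqrt n))$ (for $i$ indexing samples inside the rectangle) converges to $e^{-q}$, so integration over the rectangle yields the claimed volume. I would organise the argument into three steps.

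\emph{Step 1 (setup and rank approximation).} Let $\mathcal{I}:=\{i\in[n]:(X_i,Y_i)\in(a_1,a_2]\times(b_1,b_2]\}$, $N:=|\mathcal{I}|$, and $A:=(a_2-a_1)(b_2-b_1)$. Since $P^{(X,Y)}$ restricts to $q\cdot\vol$ on $R:=[a_1,a_2]\times[b_1,b_2]$, the conditional distribution of $(X_i,Y_i)$ given $i\in\mathcal{I}$ is $\mathrm{Unif}(R)$, and $N/n\pto qA$ by the law of large numbers. Because $P^X=P^Y=\mathrm{Unif}[0,1]$ and in one dimension the Monge--Kantorovich rank is the ordinary rank ($R_i^X = U_{(\mathrm{rank}(X_i))}$, similarly for $Y$), combining the DKW inequality with the uniform spacing estimate $\max_k|U_{(k)}-k/n| = O_p(\sqrt{\log n/n})$ yields $\max_i\|R_i-(X_i,Y_i)\|_\infty = O_p(\sqrt{\log n/n})$.

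\emph{Step 2 (first moment).} Let $W := \vol\bigl(\bigcup_{i\in\mathcal{I}} B(R_i, 1/(2\sqrt n))\bigr)$. The portion of $W$ lying outside $R$ is contained in an $O(1/\sqrt n)$-strip around $\partial R$ and is therefore $O_p(1/\sqrt n)$, so it suffices to control $W_{\mathrm{in}} := \vol\bigl(R \cap \bigcup_{i\in\mathcal{I}} B(R_i, 1/(2\sqrt n))\bigr)$. By Fubini, $A-\mathbb{E}[W_{\mathrm{in}}] = \int_R \mathbb{P}\bigl(z\notin\bigcup_{i\in\mathcal{I}} B(R_i, 1/(2\sqrt n))\bigr)\,dz$. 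For $z$ in the $n^{-1/3}$-interior of $R$, conditioning on the $\bs U$- and $\bs V$-order statistics rephrases the event $\{R_i\in B(z, 1/(2\sqrt n))\}$ as the joint-rank condition $\{\mathrm{rank}(X_i)\in K^X_z,\ \mathrm{rank}(Y_i)\in K^Y_z\}$, which in turn is equivalent to $(X_i,Y_i)\in S_z$ for a random axis-aligned rectangle $S_z\subseteq R$ of area $1/n + o_p(1/n)$. A binomial/Poisson approximation then yields
\[
\mathbb{P}\Bigl(\forall i\in\mathcal{I}: R_i\notin B\bigl(z, \tfrac{1}{2\sqrt n}\bigr)\Bigr) = \Bigl(1-\tfrac{1/n}{A}\Bigr)^{N}(1+o_p(1)) \pto e^{-q},
\]
using $N/n\pto qA$. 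Integrating over the interior (the $n^{-1/3}$-boundary strip contributes $O(n^{-1/3})$) gives $\mathbb{E}[W_{\mathrm{in}}] = (1-e^{-q})A(1+o(1))$ and hence $\mathbb{E}[W]=(1-e^{-q})A(1+o(1))$.

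\emph{Step 3 (variance and conclusion).} A second-moment computation via Fubini reduces $\var(W)$ to a double integral of $\cov(\mathbbm 1\{z\text{ covered}\},\mathbbm 1\{w\text{ covered}\})$ over $R\times R$. This covariance is $O(1)$ only on the diagonal strip $\|z-w\|_\infty \lesssim 1/\sqrt n$ of volume $O(1/\sqrt n)$, and $o(1)$ off it because the sample points capable of covering $z$ and $w$ are then asymptotically disjoint; hence $\var(W)\to 0$ and Chebyshev upgrades first-moment convergence to $W\pto(1-e^{-q})A$. The main obstacle is the rigorous justification of the local Poisson approximation in Step 2, because the rank perturbation $O_p(\sqrt{\log n/n})$ is of the same order as the cube half-width $1/(2\sqrt n)$, preventing a naive substitution of $(X_i,Y_i)$ for $R_i$. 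The conditional-on-order-statistics reformulation above circumvents this by recasting coverage as a rectangle-hitting probability for the unperturbed, exchangeable samples $(X_i,Y_i)_{i\in\mathcal{I}}$, for which binomial/Poisson arguments apply directly.
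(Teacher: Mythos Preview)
Your overall moment strategy is sound, but Step~2 contains a genuine gap that your proposed fix does not close. After conditioning on the $\bs U$- and $\bs V$-order statistics, the sets $K^X_z, K^Y_z$ are indeed fixed, but the equivalence ``$\{\mathrm{rank}(X_i)\in K^X_z,\ \mathrm{rank}(Y_i)\in K^Y_z\}\Leftrightarrow (X_i,Y_i)\in S_z$'' requires $S_z$ to be built from the \emph{sample} order statistics of $X$ and $Y$, so $S_z$ depends on the entire dataset. Consequently the events $\{(X_i,Y_i)\in S_z\}$ are not independent across $i\in\mathcal{I}$, and the binomial expression $(1-\tfrac{1/n}{A})^N$ is not justified. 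Put differently: given $\bs U,\bs V$, the number of indices with $\mathrm{rank}(X_i)\in K^X_z$ is \emph{deterministically} $|K^X_z|$, so these are not Bernoulli trials; what remains is to understand the $Y$-ranks of those specific indices, and this depends on the copula in a way your conditioning does not control.

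The paper circumvents this by conditioning on a different $\sigma$-algebra: not pointwise on $\bs U,\bs V$, but on the boundary order statistics $U_{(S_0)}, U_{(S_1+1)}, V_{(T_0)}, V_{(T_1+1)}$ together with $M=|\mathcal{I}\cap\mathcal{J}|$, where $S_0,S_1,T_0,T_1$ are the rank-counts at $a_1,a_2,b_1,b_2$. The point is that, because the density is constant on the rectangle, the samples $(X_i,Y_i)_{i\in\mathcal{I}\cap\mathcal{J}}$ are conditionally i.i.d.\ $\mathrm{Unif}((a_1,a_2]\times(b_1,b_2])$ with $X_i\perp Y_i$; combined with the classical fact that intermediate uniform order statistics are i.i.d.\ uniform conditional on two bracketing ones, this yields that the ranks $(R_i)_{i\in\mathcal{I}\cap\mathcal{J}}$ themselves are conditionally i.i.d.\ uniform on $[U_{(S_0)},U_{(S_1+1)}]\times[V_{(T_0)},V_{(T_1+1)}]$. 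One then applies a vacancy lemma for i.i.d.\ uniform points directly, obtaining both the mean and the variance in one stroke without any pointwise Fubini calculation. This is exactly the structural fact your Step~2 needs but does not supply; once you have it, your Step~3 becomes unnecessary as well.
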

\begin{proof}
Define $\mathcal{I} := \{i\in[n]: a_1 < X_i \leq a_2\}$, $\mathcal{I}_{-}:=\{i\in[n]:X_i \leq a_1\}$, $\mathcal{J} := \{j\in[n]: b_1 < Y_j\leq b_2\}$ and $\mathcal{J}_-:=\{j\in[n]:Y_j\leq b_1\}$. We write $S_0:= |\mathcal{I}_-|$, $S_1:=|\mathcal{I}_-\cup \mathcal{I}|$, $T_0:=|\mathcal{J}_-|$ and $T_1:=|\mathcal{J}_-\cup\mathcal{J}|$. Also, let $M := |\mathcal{I}\cap \mathcal{J}|$.

Given $(U_i,V_i)_{i\in[n]}\iid \mathrm{Unif}[0,1]^2$, observe that for $R_i^X$ and $R_i^Y$ be defined in~\eqref{def:EmpRankGen}, we have 
\begin{align*}
\{R_i^X:i\in\mathcal{I}\} &= \{U_{(S_0 + 1)}, \ldots, U_{(S_1)}\} \quad \text{and} \quad \{R_j^Y:j\in\mathcal{J}\} = \{V_{(T_0 + 1)}, \ldots, V_{(T_1)}\}.
\end{align*}
Let $\mathcal{F}$ be the $\sigma$-algebra generated by $U_{(S_0)}$, $U_{(S_1+1)}$, and $V_{(T_0)}$, $V_{(T_1+1)}$ and $M$. By \citet[Theorem~2.5]{david2004order}, we have 
    \begin{align*}
         (R_i^X: i\in\mathcal{I}) \bigm| \mathcal{F} & \iid \mathrm{Unif}[U_{(S_{0})}, U_{(S_{1} + 1)}] \quad \text{and}\quad 
          (R_j^Y: j\in\mathcal{J}) \bigm| \mathcal{F} \iid \mathrm{Unif}[V_{(T_{0})}, V_{(T_{1} + 1)}].
    \end{align*}
Furthermore, since 
\begin{align}
\bigl((X_i,Y_i):i\in \mathcal{I}\cap\mathcal{J}\bigr)\bigm| M \iid \mathrm{Unif}(a_1,a_2]\otimes \mathrm{Unif}(b_1,b_2], \label{eq:BlockUniformXY}
\end{align}
we have 
\[
    (R_i: i\in\mathcal{I}\cap\mathcal{J})\mid \mathcal{F} \iid \mathrm{Unif}[U_{(S_{0})}, U_{(S_{1} + 1)}]\otimes \mathrm{Unif}[V_{(T_{0})}, V_{(T_{1} + 1)}].
\]

By law of large numbers, there is an event $\Omega$ with probability 1 on which we have $M / n \to q(a_2-a_1)(b_2-b_1)$, $S_0 / n\to a_1$, $S_1/n\to a_2$, $T_0 / n\to b_1$, $T_1/n\to b_2$, $U_{(S_0)}\to a_1$, $U_{(S_1+1)}\to a_2$, $V_{(T_0)}\to b_1$ and $V_{(T_1+1)}\to b_2$.  We will work on this event henceforth. 

As $n\to\infty$, the contribution of the covered area by points near the boundary of any rectangle is negligible (so we may ignore the periodic boundary condition), hence Lemma~\ref{le: VacancyLimitRandomRef} and a linear rescaling, conditional on $\mathcal{F}$, we have 
\begin{align*}
\mathbb{E}\biggl\{\vol\biggl(\bigcup_{i\in\mathcal{I}\cap\mathcal{J}}B\Bigl(R_i, \frac{1}{2\sqrt{n}}\Bigr)\biggr) \biggm | \mathcal{F}\biggr\} &\to (1-e^{-q})(a_2-a_1)(b_2-b_1),\\
\var\biggl\{\vol\biggl(\bigcup_{i\in\mathcal{I}\cap\mathcal{J}}B\Bigl(R_i, \frac{1}{2\sqrt{n}}\Bigr)\biggr) \biggm | \mathcal{F}\biggr\} &\to 0.
\end{align*}
By the Dominated Convergence Theorem, the same result holds unconditionally, which implies the desired result by an application of Chebyshev's inequality.
\end{proof}

\subsection{Additional results used in the proof of Theorem~\ref{thm:ConsistencyGeneralDim}}
We first introduce some notation from the optimal transport theory. 

For any measure $\eta$ and $\zeta$, we have the Kantorovich dual formula for the $2$-Wasserstein distance:
\begin{align*}
 \frac{1}{2}\mathcal{W}_2^2(\eta, \zeta) = \frac{1}{2}\int \|y\|_2^2 \, \mathrm{d}\eta +  \frac{1}{2}\int \|y\|_2^2 \, \mathrm{d}\zeta - \inf_{f\in \mathcal{F}_\eta} \mathcal{S}_{\eta, \zeta}(f)
\end{align*}
where $\mathcal{F}_\eta$ denotes spaces of convex functions on $\mathbb{R}^d$ that are also elements of $L^1(\eta)$ and $\mathcal{S}_{\eta, \zeta}$ denote the Kantorovich functional:
\begin{align}
    S_{\eta, \zeta}(f) := \int f \, \mathrm{d}\eta + \int f^* \, \mathrm{d} \zeta, \label{eq:KantorovichPotentialFunctional}
\end{align}
where $f^*(y):= \sup_x\bigl\{x^Ty - f(x)\bigr\}$ is the Legendre conjugate of $f$.

\begin{prop}\label{prop:HolderOTConvergence}
Let $\mu, \nu$ be absolutely continuous probability measures on $\mathbb{R}^d$ with finite second moment. Suppose $T_0=\nabla \phi_0$ is the (almost surely unique) optimal transport map from $\mu$ to $\nu$, and is $(\theta, L)$-Hölder continuous for some $\theta \in (0, 1]$ and $L>0$ (see Definition~\ref{def:HolderConti}). Set $p:=1+\theta^{-1}$.

Let $\hat\mu_n$ and $\hat\nu_n$ be any estimators of $\mu, \nu$ with finite second moment. Write $\nu_n^\sharp := T_0\# \hat\mu_n$, there exists a constant $C>0$ depending only on $\theta, L$ such that for any optimal transport map $\hat T_n$ from $\hat{\mu}_n$ to $\hat{\nu}_n$, we have
\begin{align}
\int \|T_0(z) -  \hat{T}_n(z)\|_2^{p}  \, \mathrm{d}\hat{\mu}_n(z) \leq C \biggl\{\int \phi_0^*\, \mathrm{d}(\hat \nu_n - \nu_n^\sharp) + \int \hat{\Phi}^* \, \mathrm{d}(\nu_n^\sharp -  \hat\nu_n)\biggr\}, \label{eq:OTEstimationErrors}
\end{align}
where $\hat{\Phi} \in \argmin_{f} \mathcal{S}_{\hat\mu_n, \hat\nu_n}(f)$. 

In particular, given data $X_1, \ldots, X_n \stackrel{i i d}{\sim} \mu$ and $Y_1, \ldots, Y_n \stackrel{i i d}{\sim} \nu$, and let $\hat{\mu}_n= n^{-1} \sum_{i=1}^n \delta_{X_i}$ and $\hat{\nu}_n=n^{-1} \sum_{j=1}^n \delta_{Y_j}$ be the empirical measures. If we further assume that $\nu$ is compactly supported and $\mu \in \mathcal{P}_{t, \alpha}(\mathbb{R}^d)$ for some $t>0$ and $\alpha\in (0, 2]$, then there exists a finite exponent $c_{d, \alpha}>0$ depending on $d, \alpha > 0$ such that 
\begin{align*}
\frac{1}{n} \sum_{i=1}^n\left\|\hat{T}_n\left(X_i\right)-T_0\left(X_i\right)\right\|_2^p=O_p\left(r_{n, d} \log ^{c_{d, \alpha}} n\right)
\end{align*}
where
\begin{align*}
    r_{n, d}:= \begin{cases}
    n^{-1 / 2}, & d = 2, 3 \\ n^{-1 / 2} \log (1+n), & d=4 \\ n^{-2 / d}, & d \geq 5\end{cases}
\end{align*}
Consequently,
\begin{align*}
\frac{1}{n} \sum_{i=1}^n\Bigl\|\hat{T}_n\left(X_i\right)-T_0\left(X_i\right)\Bigr\|_2^2=O_p\bigl(r_{n, d}^{\frac{2 \theta}{1+\theta}} \log ^{\frac{2 c_{d, \alpha}\theta}{1+\theta}} n\bigr)
\end{align*}
\end{prop}
\begin{proof}
Suppose $\mu$ and $\nu$ are supported on open sets $\Omega_1, \Omega_2 \subseteq \mathbb{R}^d$, respectively. Let $\phi_0^*(y):=\sup_{z \in \Omega_1}\{\langle y,z\rangle-\phi_0(z)\}$ denote the Legendre conjugate of $\phi_0$. By the characterisation of optimal transport plans \citep[see, e.g.,][Theorem 2.29]{villani2021topics}, $T_0(x)\in\partial\phi_0(x)$ for $\mu$-a.e. $x$. The conjugacy relation for subdifferentials therefore yields $x\in\partial\phi_0^*(T_0(x))$ for $\mu$-a.e. $x$. Since $\phi_0^*$ is differentiable $\nu$-a.e., $\phi_0^*$ is differentiable at $T_0(x)$ for $\mu$-a.e. $x$. Thus we have $\nabla\phi_0^*(T_0(x))=x$ for $\mu$-a.e. $x$. Since $T_0$ is $(\theta, L)$-H\"older continuous, by \citet[Lemma 1]{yan2020stochastic}, it follows that $\phi_0^*$ is $(p, \rho)$-strongly convex (see Definition~\ref{def:stronglyconvex} and Proposition~\ref{prop:stronglyconvex}). That is to say, for any $x, y \in \Omega_2$
\begin{align*}
    \phi_0^*(y) \geq \phi_0^*(x) + \langle \nabla\phi_0^* (x) ,y - x \rangle + \frac{\rho}{2}\|x-y\|_2^{p},
\end{align*}
where $\rho = \frac{2\theta}{1+\theta}(1/L)^{\frac{1}{\theta}}$. Then applying $x = T_0(z)$ and $y = \hat T_n(z)$ for some $z \in \Omega_1$, we have 
\[
   \phi_0^*(\hat T_n(z)) - \phi_0^*(T_0(z))  \geq \langle z ,\hat T_n(z) - T_0(z) \rangle + \frac{\rho}{2}\|T_0(z) - \hat T_n(z)\|_2^{p}. 
\]
Integrating against $\hat \mu_n$ we obtain that 
\begin{align}
   \int \phi_0^*(\hat T_n(z)) &- \phi_0^*(T_0(z))\, \mathrm{d}\hat \mu_n(z)\notag\\
   & \geq \int \langle z ,\hat T_n(z) - T_0(z) \rangle  \, \mathrm{d}\hat \mu_n(z)+  \frac{\rho}{2} \int \|T_0(z) - \hat T_n(z)\|_2^{p}  \, \mathrm{d}\hat \mu_n(z) \label{ineq:IntStronglyConv}
\end{align}

By the optimality of $\hat T_n$ and $T_0$, we have 
\begin{align*} 
\mathcal{W}_2^2\bigl(\hat{\mu}_n, \hat{\nu}_n\bigr) & =\int\|z-\hat{T}_n(z)\|_2^2 d \hat{\mu}_n(z)=\int\|z\|_2^2 d \hat{\mu}_n+\int\|y\|_2^2 d \hat{\nu}_n-2 \int\langle z, \hat{T}_n(z)\rangle d \hat{\mu}_n, \\ 
\mathcal{W}_2^2\bigl(\hat{\mu}_n, \nu_n^{\sharp}\bigr) & \leq \int\left\|z-T_0(z)\right\|_2^2 d \hat{\mu}_n(z)=\int\|z\|_2^2 d \hat{\mu}_n+\int\|y\|_2^2 d \nu_n^{\sharp}-2 \int\langle z, T_0(z)\rangle d \hat{\mu}_n.
\end{align*}
Subtracting them gives
\begin{align}
\int \langle z ,\hat T_n(z) - T_0(z) \rangle  \, \mathrm{d}\hat \mu_n(z) \geq \frac{1}{2}\Bigl(\mathcal{W}_2^2\bigl(\hat{\mu}_n, \nu_n^{\sharp}\bigr)  - \mathcal{W}_2^2\bigl(\hat{\mu}_n, \hat{\nu}_n\bigr)\Bigr) + \frac{1}{2}\int \|y\|_2^2 \, \mathrm{d}(\hat\nu_n -\nu_n^{\sharp} ).  \label{ineq:InnerProdLowerBound}
\end{align}
Combining~\eqref{ineq:IntStronglyConv} with~\eqref{ineq:InnerProdLowerBound}, we get 
\begin{align}
\frac{\rho}{2} \int \|T_0(z) &- \hat T_n(z)\|_2^{p}  \, \mathrm{d}\hat \mu_n(z) \notag \\
\leq& \frac{1}{2}\Bigl(\mathcal{W}_2^2\bigl(\hat{\mu}_n, \hat{\nu}_n\bigr) - \mathcal{W}_2^2\bigl(\hat{\mu}_n, \nu_n^{\sharp}\bigr)\Bigr) + \int \bigl(\phi_0^*(y)-\frac{1}{2}\|y\|_2^2 \bigr) \, \mathrm{d}(\hat\nu_n - \nu_n^{\sharp}). \label{ineq:EmpOTUpperBound}
\end{align}

Hence for $\hat \Phi \in \argmin_f \mathcal{S}_{\hat \mu_n, \hat \nu_n}(f)$ and $\Phi^\sharp \in \argmin_f\mathcal{S}_{\hat \mu_n, \nu_n^\sharp}(f)$ we have 
\begin{align}
 \frac{1}{2}\Bigl(\mathcal{W}_2^2\bigl(\hat{\mu}_n, \hat{\nu}_n\bigr) - \mathcal{W}_2^2\bigl(\hat{\mu}_n, \nu_n^{\sharp}\bigr)\Bigr) &= \frac{1}{2}\int \|y\|_2^2 \, \mathrm{d}(\hat \nu_n - \nu_n^\sharp) + \mathcal{S}_{\hat \mu_n, \nu_n^\sharp}(\Phi^\sharp) - \mathcal{S}_{\hat \mu_n, \hat \nu_n}(\hat \Phi) \notag \\ 
 & \leq \frac{1}{2}\int \|y\|_2^2 \, \mathrm{d}(\hat \nu_n - \nu_n^\sharp) + \int \hat \Phi^* \, \mathrm{d}(\nu_n^\sharp - \hat \nu_n),\label{ineq:WassDiff}
\end{align}
where the final inequality is due to $\mathcal{S}_{\hat \mu_n, \nu_n^\sharp}(\Phi^\sharp) \leq \mathcal{S}_{\hat \mu_n, \nu_n^\sharp}(\hat\Phi)$. Taking~\eqref{ineq:WassDiff} back to~\eqref{ineq:EmpOTUpperBound} we obtain that 
\begin{align}
\int \|T_0(z) - \hat T_n(z)\|_2^{p}  \, \mathrm{d}\hat \mu_n(z) \leq \frac{2}{\rho} \biggl\{\int \phi_0^*\, \mathrm{d}(\hat \nu_n - \nu_n^\sharp) + \int \hat\Phi^* \, \mathrm{d}(\nu_n^\sharp - \hat \nu_n)\biggr\}. \label{ineq:EmpOTMapUpperBound}
\end{align}

Now, set $\hat{\mu}_n= n^{-1} \sum_{i=1}^n \delta_{X_i}$ and $\hat{\nu}_n=n^{-1} \sum_{j=1}^n \delta_{Y_j}$. We first show that the first term on the right-hand side of~\eqref{ineq:EmpOTMapUpperBound} is negligible. Because $\nu$ is compactly supported and $\phi_0^*$ is finite and convex on $\mathrm{Supp}(\nu)$, $\phi_0^*$ is bounded on $\mathrm{Supp}(\nu)$. Therefore, by the Law of Large Numbers we have
\[
\int \phi_0^*\, \mathrm{d}(\hat \nu_n - \nu_n^\sharp) = \int \phi_0^*\, \mathrm{d}(\hat \nu_n - \nu) + \int \phi_0^*(T_0(z)) \, \mathrm{d}(\mu- \mu_n) = O_p(n^{-1/2}). 
\]
As for the second term of~\eqref{ineq:EmpOTMapUpperBound}, by the triangle inequality
\[
\Bigl|\int \hat\Phi^* \, \mathrm{d}(\nu_n^\sharp - \hat \nu_n)\Bigr| \leq \Bigl|\int \hat\Phi^* \, \mathrm{d}(\nu_n^\sharp - \nu)\Bigr|  + \Bigl|\int \hat\Phi^* \, \mathrm{d}(\hat \nu_n - \nu)\Bigr| .
\]
By an identical argument that establishes (C.14) in \citet{Deb2021RatesOEsupp}, the two summands on the right-hand side above can be controlled to be $O_p(r_{n, d}\log^{c_{d, \alpha}} n)$. Thus establishing the desired inequality. 
\end{proof}

The following proposition addresses the remaining case when $d = 1$ via the classical Koml\'{o}s--Major--Tusn\'{a}dy Theorem \cite{komlos1975approximation}. 

\begin{prop}\label{prop:1DKMTMatching}
    Suppose $\mu$ is an absolutely continuous probability measure on $\mathbb{R}$ with distribution function $F$. Given $X_1, \ldots, X_n \stackrel{\mathrm{iid}}{\sim} \mu$ and $u_1, \ldots, u_n \in [0, 1]$. Define  
    \[
    \pi_n := \argmin_{\pi \in \mathcal{S}_n} \sum_{i = 1}^n |X_{i} - u_{\pi(i)}|^2.
    \]
    Then, if
    \begin{enumerate}[label = (\roman*), ref=\theprop(\roman*)]
        \item\label{prop:1DKMTMatching_random}  $u_i = U_i$, $i = 1, \ldots, n$, where $U_1, \ldots, U_n \stackrel{\mathrm{iid}}{\sim} \mathrm{Unif}([0, 1])$, or
        \item\label{prop:1DKMTMatching_deterministic}  $\bs{u}:=(u_1, \ldots, u_n)$ is a sequence of deterministic points such that the star discrepancy (see Definition~\ref{def:StarDiscrepancy}) $\mathcal{D}^*(\bs{u}) = O(\log n / n)$,
    \end{enumerate}
     we have 
    \[
    \frac{1}{n}\sum_{i = 1}^n (u_{\pi_n(i)} - F(X_i))^2 = O_p(n^{-1});
    \]

\end{prop}
\begin{proof}
(i) Suppose $X_{(1)} \leq \ldots \leq X_{(n)}$ and $U_{(1)} \leq \ldots \leq U_{(n)}$ are order statistics of $(X_i)_{i = 1}^n$ and $(U_i)_{i = 1}^n$, respectively. Note that
\begin{align}
\Bigl(\frac{1}{n}\sum_{i = 1}^n |U_{\pi_n(i)} - F(X_i)|^2\Bigr)^{1/2} &= \Bigl(\frac{1}{n} \sum_{i = 1}^n |U_{(i)} - F(X_{(i)})|^2 \Bigr)^{1/2} \notag \\
&\leq \biggl(\frac{1}{n} \sum_{i = 1}^n \Bigl|U_{(i)} - \frac{i}{n}\Bigr|^2 \biggr)^{1/2} + \biggl(\frac{1}{n} \sum_{i = 1}^n \Bigl|\frac{i}{n} - F(X_{(i)})\Bigr|^2\biggr)^{1/2}.  \label{eq:1DEmpiricalOTError}
\end{align}
It remains to bound each term in~\eqref{eq:1DEmpiricalOTError} separately. 

Let $F_n(t):= n^{-1}\sum_{i = 1}^n \bone_{[X_i, \infty)]}(t)$ be the empirical distribution function of $(X_i)_{i = 1}^n$. By the Koml\'os--Major--Tusn\'ady Theorem (\citet[Theorem~3]{komlos1975approximation}, see also \citet[Theorem~4.4.1]{csorgo1981strong}), there exists a sequence of Brownian bridges $\{B_n(y): 0 \leq y \leq 1\}_{n \geq 1}$ such that
\[
\sup_{t\in \mathbb{R}} |\sqrt{n}(F_n(t) - F(t)) - B_n(F(t))| \stackrel{\mathrm{a.s.}}{=} O(n^{-1/2} \log n).
\]
Since $\|B_n\|_{\infty} = O_p(1)$, we obtain that
\begin{align}
\biggl(\frac{1}{n} \sum_{i = 1}^n \Bigl|\frac{i}{n} - F(X_{(i)}) \Bigr|^2\biggr)^{1/2}
\leq \max_{i \in [n]} \Bigl|\frac{i}{n} - F(X_{(i)})\Bigr| = \max_{i \in [n]} \bigl|F_n(X_{(i)}) - F(X_{(i)})\bigr| = O_p(n^{-1/2}). \label{eq:KMTapplicationX}
\end{align}

Let $G_n(u):= n^{-1}\sum_{i = 1}^n \bone_{[U_i, \infty)} (u)$ be the empirical distribution of $(U_i)_{i = 1}^n$. Then an argument similar to~\eqref{eq:KMTapplicationX} can be applied to $G_n$ to obtain that 
\begin{align}
\Bigl(\frac{1}{n} \sum_{i = 1}^n \Bigl|U_{(i)} - \frac{i}{n}\Bigr|^2 \Bigr)^{1/2} = O_p(n^{-1/2}). \label{eq:KMTapplicationU}
\end{align}
Then the result follows by taking~\eqref{eq:KMTapplicationX} and~\eqref{eq:KMTapplicationU} back to~\eqref{eq:1DEmpiricalOTError}. 

(ii) Suppose $u_{(1)} \leq \ldots \leq u_{(n)}$, and by leveraging the similar decomposition as~\eqref{eq:1DEmpiricalOTError}, we have 
\begin{align*}
    \Bigl(\frac{1}{n}\sum_{i = 1}^n |u_{\pi_n(i)} - F(X_i)|^2\Bigr)^{1/2}  &\leq \biggl(\frac{1}{n} \sum_{i = 1}^n \Bigl|u_{(i)} - \frac{i}{n}\Bigr|^2 \biggr)^{1/2} + \biggl(\frac{1}{n} \sum_{i = 1}^n \Bigl|\frac{i}{n} - F(X_{(i)})\Bigr|^2\biggr)^{1/2}  \\ 
    & \leq \mathcal{D}^*(\bs{u}) + \biggl(\frac{1}{n} \sum_{i = 1}^n \Bigl|\frac{i}{n} - F(X_{(i)})\Bigr|^2\biggr)^{1/2}.
\end{align*}
Since $\mathcal{D}^*(\bs{u}) = O(\log n /n)$ and the second term can be bounded by using~\eqref{eq:KMTapplicationX}, we immediately get 
\[
\Bigl(\frac{1}{n}\sum_{i = 1}^n |u_{\pi_n(i)} - F(X_i)|^2\Bigr)^{1/2}  = O_p(n^{-1/2}),
\]
as desired. 
\end{proof}

Define 
\begin{align*}
\tilde{r}_{n, d, \theta} := \begin{cases}
    n^{-1}, & d = 1, \\
    r_{n, d}^{\frac{2\theta}{1+\theta}}  & d \geq 2,
\end{cases}
\end{align*}
where $c_{d, \alpha}$ is the same as in Proposition~\ref{prop:HolderOTConvergence}. We note that  Proposition~\ref{prop:HolderOTConvergence} and Proposition~\ref{prop:1DKMTMatching_random} can be combined to show that if $\mu \in \mathcal{P}_{t, \alpha}(\mathbb{R}^d)$ for some $t > 0$ and $\alpha \in (0, 2]$ is absolutely continuous with respect to the Lebesgue measure, and $T_0$, the optimal transport map from $\mu$ to $\mathrm{Unif}([0, 1]^d)$, is $(\theta, L)$-H\"older continuous for some $\theta\in (0, 1]$ and $L >0$, then 
\begin{equation}
    \label{Eq:OT_convergence}
\frac{1}{n}\sum_{i = 1}^n\|\hat T_n(X_i) - T_0(X_i)\|_2^2 = O_p(\tilde{r}_{n,d,\theta}\log^{\frac{2c_{d, \alpha} \theta}{1 + \theta}} n)
\end{equation}
where $c_{d, \alpha}$ is a constant depending only on $d$ and $\alpha$.  We will use this to establish the following key result, which shows that the vacancy computed from the empirical rank-transformed data can be approximated by the vacancy computed after applying the population optimal transport maps on the raw data.

\begin{prop}\label{prop: VonRawData}
Suppose $d_X, d_Y \geq 1$ are integers, write $d = d_X + d_Y$. Suppose $(X_1, Y_1), \ldots, (X_n, Y_n) \stackrel{\mathrm{iid}}{\sim} P^{(X, Y)}$, where $P^{(X, Y)}$ be a Borel probability measure on $\mathbb{R}^{d}$ with absolutely continuous marginals $P^X \in \mathcal{P}_{t_1, \alpha_1}(\mathbb{R}^{d_X})$ and $P^Y\in \mathcal{P}_{t_2, \alpha_2}(\mathbb{R}^{d_Y})$ for some $t_1, t_2 > 0$ and $\alpha_1, \alpha_2 \in (0, 2]$. Let $T_X$ and $T_Y$ be the optimal transport maps from $P^X$ and $P^Y$ to $\mathrm{Unif}([0, 1]^{d_X})$ and $\mathrm{Unif}([0, 1]^{d_Y})$, respectively, and we assume that $T_X$ is $(\theta_1, L_1)$-H\"older continuous and $T_Y$ is $(\theta_2, L_2)$-H\"older continuous for some $\theta_1, \theta_2 \in (0, 1]$ and $L_1, L_2 >0$. 

Given $U_1, \ldots, U_n \stackrel{\mathrm{iid}}{\sim} \mathrm{Unif}([0, 1]^{d_X})$ and $V_1, \ldots, V_n \stackrel{\mathrm{iid}}{\sim} \mathrm{Unif}([0, 1]^{d_Y})$, write $X_i^\natural  := T_X(X_i)$ and $Y_i^\natural  := T_Y(Y_i)$ for $i = 1, \ldots, n$, then when $(d_X, d_Y) \in \mathcal{D}(\theta_1, \theta_2)$, we have
\begin{align*}
    \Biggl|\mathcal{V}\biggl((X_i)_{i\in[n]}, (Y_i)_{i\in[n]}, &(U_i)_{i\in[n]},(V_i)_{i\in[n]}; \frac{1}{2n^{1/d}} \Bigr) \\
    &- \mathcal{V}\Bigl((X_i^\natural)_{i\in[n]}, (Y_i^\natural )_{i\in[n]}, ( X_i^\natural)_{i\in[n]}, ( Y_i^\natural )_{i\in[n]}; \frac{1}{2n^{1/d}}\biggr)\Biggr| \stackrel{\mathrm{p}}{\longrightarrow} 0,
\end{align*}
as $n \to \infty$.
\end{prop}
\begin{proof}
For notational simplicity, write $\bs{X}:=(X_i)_{i\in[n]}$, $\bs{Y}:=(Y_i)_{i\in[n]}$, ${\bs{X}^\natural}:=( X_i^\natural)_{i\in[n]}$, ${\bs Y^\natural} :=( Y_i^\natural)_{i\in[n]}$, $\bs{U}:=(U_i)_{i\in[n]}$, $\bs{V}:=(V_i)_{i\in[n]}$. Given reference points ${\bs U}$ and ${\bs V}$, let $R_i:= (R_i^X, R_i^Y)$ and $ R_i^\natural := ({R}_i^{X^\natural}, {R}_i^{Y^\natural})$ denote the joint rank of $\{(X_i, Y_i)\}_{i = 1}^n$ and $\{({X}_i^\natural, {Y}_i^\natural)\}_{i = 1}^n$, respectively, as defined in~\eqref{eq:R_iGen}. Also write $Z_i^\natural := ( X_i^\natural,  Y_i^\natural)$, for $i = 1, \ldots, n$. 

Define the coverage areas
\[
A_1 = \bigcup_{i = 1}^n B\Bigl(R_i, \frac{1}{2n^{1/d}}\Bigr), \quad {A}_2 = \bigcup_{i=1}^n B\Bigl({R}_i^\natural, \frac{1}{2n^{1/d}}\Bigr), \quad {A}_3 = \bigcup_{i=1}^n B\Bigl(Z_i^\natural, \frac{1}{2n^{1/d}}\Bigr).
\]
Note that  
\begin{align}
     &\Biggl|\mathcal{V}\biggl(\bs{X}, \bs{Y}, \bs{U}, \bs{V}; \frac{1}{2n^{1/d}} \Bigr) - \mathcal{V}\Bigl({\bs X^\natural}, {\bs Y^\natural}, {\bs X^\natural}, {\bs Y^\natural}; \frac{1}{2n^{1/d}}\biggr)\Biggr| \notag \\
     \leq & \biggl|\mathcal{V}\biggl(\bs{X}, \bs{Y}, \bs{U}, \bs{V}; \frac{1}{2n^{1/d}} \Bigr) - \mathcal{V}\Bigl({\bs X^\natural}, {\bs Y^\natural}, \bs{U}, \bs{V}; \frac{1}{2n^{1/d}}\biggr)\biggr| \notag \\ 
     &+ \biggl|\mathcal{V}\Bigl({\bs X^\natural}, {\bs Y^\natural}, \bs{U}, \bs{V}; \frac{1}{2n^{1/d}}\biggr) - \mathcal{V}\Bigl({\bs X^\natural}, {\bs Y^\natural}, {\bs X^\natural}, {\bs Y^\natural}; \frac{1}{2n^{1/d}}\biggr)\biggr|, \notag \\
     \leq & \mathrm{vol}(A_1\Delta {A}_2) + \mathrm{vol}(A_2\Delta {A}_3) \label{ineq: vacancydiff}.
\end{align}
Moreover, by Lemma \ref{lem: symdiff} we have
\begin{align}
\mathrm{vol}(A_1\Delta {A}_2) \leq \sum_{i = 1}^n \mathrm{vol}\biggl(B\Bigl(R_i, \frac{1}{2n^{1/d}}\Bigr) \Delta B\Bigl({R}_i^\natural, \frac{1}{2n^{1/d}}\Bigr)\biggr) 
\leq2 d n^{-\frac{d-1}{d}} \sum_{i = 1}^n  \|R_i - {R}_i^\natural\|_2 \label{ineq: coveragesymdiff_A12},  
\end{align}
and
\begin{align}
  \mathrm{vol}(A_2\Delta {A}_3) \leq \sum_{i = 1}^n \mathrm{vol}\biggl(B\Bigl( R_i^\natural, \frac{1}{2n^{1/d}}\Bigr) \Delta B\Bigl(Z_i^\natural, \frac{1}{2n^{1/d}}\Bigr)\biggr) 
\leq2 d n^{-\frac{d-1}{d}} \sum_{i = 1}^n  \| R_i^\natural - {Z}_i^\natural\|_2 \label{ineq: coveragesymdiff_A23}.
\end{align}
Now it remains to bound~\eqref{ineq: coveragesymdiff_A12} and~\eqref{ineq: coveragesymdiff_A23} separately. 

Set $ P_n^{X^\natural} := \frac{1}{n}\sum_{i = 1}^n \delta_{ X_i^\natural}$, $ P_n^{Y^\natural}:= \frac{1}{n}\sum_{i = 1}^n \delta_{ Y_i^\natural}$ and $ P_n^U := \frac{1}{n}\sum_{i = 1}^n \delta_{U_i}$, $P_n^V := \frac{1}{n}\sum_{i = 1}^n \delta_{V_i}$. 
Since the empirical rank matching is the optimal matching for the squared Euclidean cost, by the Cauchy--Schwarz inequality, we have
\begin{align*}
\frac{1}{n}\sum_{i = 1}^n \| X_i^\natural -  R_i^{X^\natural}\|_2
&\leq
\biggl(\frac{1}{n}\sum_{i = 1}^n \| X_i^\natural -  R_i^{X^\natural}\|_2^2\biggr)^{1/2}
= \mathcal{W}_2\bigl(P_n^{X^\natural}, P_n^U\bigr), \\
\frac{1}{n}\sum_{i = 1}^n \| Y_i^\natural -  R_i^{Y^\natural}\|_2
&\leq
\biggl(\frac{1}{n}\sum_{i = 1}^n \| Y_i^\natural -  R_i^{Y^\natural}\|_2^2\biggr)^{1/2}
= \mathcal{W}_2\bigl(P_n^{Y^\natural}, P_n^V\bigr).
\end{align*}
Since $X_i^\natural \stackrel{\mathrm{iid}}{\sim} \mathrm{Unif}([0,1]^{d_X})$ and $Y_i^\natural \stackrel{\mathrm{iid}}{\sim} \mathrm{Unif}([0,1]^{d_Y})$, the empirical $2$-Wasserstein convergence rate of \citet{fournier2015rate} yields
\begin{align}
\frac{1}{n}\sum_{i = 1}^n\|{R}_i^\natural - Z_i^\natural\|_2
&\leq \mathcal{W}_2\bigl(P_n^{X^\natural}, P_n^U\bigr) + \mathcal{W}_2\bigl(P_n^{Y^\natural}, P_n^V\bigr) \notag \\ 
&\leq \mathcal{W}_2\bigl(P_n^{X^\natural}, \mathrm{Unif}([0, 1]^{d_X})\bigr) + \mathcal{W}_2\bigl(\mathrm{Unif}([0, 1]^{d_X}), P_n^U\bigr) \notag \\ 
&\hspace{3cm}+\mathcal{W}_2\bigl(P_n^{Y^\natural}, \mathrm{Unif}([0, 1]^{d_Y})\bigr) + \mathcal{W}_2\bigl(\mathrm{Unif}([0, 1]^{d_Y}), P_n^V\bigr) \notag \\
&= O_p\bigl(n^{-\frac{1}{d_X \vee d_Y \vee 4}}\log n\bigr).\label{eq:EmpMap1}
\end{align}
Since $(d_X, d_Y)\in\mathcal{D}(\theta_1,\theta_2)$, we have $d=d_X+d_Y>4$. Hence $1/(d_X\vee d_Y\vee 4)>1/d$, and the right-hand side of~\eqref{eq:EmpMap1} is $o_p(n^{-1/d})$.

Write the empirical measures $P_n^X:= \frac{1}{n}\sum_{i = 1}^n \delta_{X_i}$ and $P_n^Y:= \frac{1}{n}\sum_{i = 1}^n \delta_{Y_i}$. To control~\eqref{ineq: coveragesymdiff_A12}, by Propositions~\ref{prop:HolderOTConvergence} and~\ref{prop:1DKMTMatching}, we have 
\begin{align}
    \frac{1}{n}\sum_{i = 1}^n \|R_i - Z_i^\natural \|_2 \leq & \biggl(\frac{1}{n}\sum_{i = 1}^n \|R_i^X -  X_i^\natural\|_2^2 \biggr)^{1/2}+  \biggl(\frac{1}{n}\sum_{i = 1}^n \|R_i^Y -  Y_i^\natural \|_2^2 \biggr)^{1/2} \notag \\ 
    = & O_p\bigl(\tilde r_{n, d_X, \theta_1}^{1/2}\log^{\gamma_1} n + \tilde r_{n, d_Y, \theta_2}^{1/2}\log^{\gamma_2} n\bigr) \label{eq:EmpMap2},
\end{align}
where $\gamma_1, \gamma_2>0$ are constant depending only on $\theta_1, \theta_2, \alpha_1, \alpha_2, d_X, d_Y$. The stochastic order in~\eqref{eq:EmpMap2} is $o_p(n^{-1/d})$ by a direct comparison of rates. 
Indeed, note that  
\begin{align}
    \tilde r_{n, d_X, \theta_1}^{1/2} = \begin{cases}
        n^{-\frac{1}{2}}, & d_X = 1 \\ 
        (n^{-\frac{1}{2}} \log n)^{\frac{\theta_1}{1+\theta_1}}, &d_X =2, 3, 4 \\ 
        n^{-\frac{2\theta_1}{d_X(1+\theta_1)}}, & d_X \geq 5 
    \end{cases}, \;
    \tilde r_{n, d_Y, \theta_2}^{1/2} = \begin{cases}
    n^{-\frac{1}{2}}, & d_Y = 1 \\ 
    (n^{-\frac{1}{2}} \log n)^{\frac{\theta_2}{1+\theta_2}}, & d_Y =2, 3, 4 \\ 
    n^{-\frac{2\theta_2}{d_Y(1+\theta_2)}}, & d_Y \geq 5 
\end{cases} \label{eq:RateCal}
\end{align}
and one can verify that when $(d_X, d_Y) \in \mathcal{D}(\theta_1, \theta_2)$, all the terms on the right-hand side of~\eqref{eq:RateCal} are $o(n^{-1/d})$.

Therefore, applying the triangle inequality and~\eqref{eq:EmpMap1}--\eqref{eq:RateCal}, we have
\begin{align}
    \frac{1}{n}\sum_{i = 1}^n  \|R_i - {R}_i^\natural\|_2  \leq \frac{1}{n}\sum_{i = 1}^n \|R_i - Z_i^\natural \|_2 + \frac{1}{n}\sum_{i = 1}^n \|R_i^\natural - Z_i^\natural \|_2 = o_p(n^{-1/d}). \label{eq:EmpMap3}
\end{align}

Consequently, taking~\eqref{eq:EmpMap1} and~\eqref{eq:EmpMap3} back to~\eqref{ineq: coveragesymdiff_A23} and~\eqref{ineq: coveragesymdiff_A12}, respectively, we obtain that $\mathrm{vol}(A_1\Delta {A}_2) = o_p(1)$ and $\mathrm{vol}(A_2 \Delta {A}_3) = o_p(1)$. The conclusion then follows from~\eqref{ineq: vacancydiff}.
\end{proof}

\subsection{Additional results used in the proof of Theorem~\ref{thm: CLT}}

The first proposition records the limiting behaviour of the conditional variance and conditional mean of the interior contribution. These two quantities provide the variance scale and the leading Gaussian fluctuation in the proof of Theorem~\ref{thm: CLT}.

\begin{prop}
\label{Prop:AsymptoticMeanVariance}
Let $M_n$ and $S_n$ be defined as in \eqref{eq:MS}, we have 
\begin{align*}
nS_n &\pto \beta_\lambda^2,\\
\sqrt{n} (M_n - \mathbb{E}(M_n)) &\dto \mathcal{N}(0,\alpha_\lambda^2),
\end{align*}
such that 
\begin{align}
    \alpha^2_\lambda &:= \frac{\lambda^{2d}}{e^2 2^d(\lambda+2)^d}\Bigl(e^{(\lambda/2+1)^{-d}} - 1\Bigr) -\frac{\lambda^{2d}}{e^2(\lambda+2)^{2d}}, \label{eq:alpha}\\
	    \beta^2_\lambda&:= \Big\{\frac{\lambda^d}{ (\lambda + 2)^de^{2}}C_d + \frac{\lambda^{2d}}{2^{d}(\lambda + 2)^de^2}(1 - e^{(\lambda/2 + 1)^{-d}})\Big\} + R_\lambda,\label{eq:beta}
	\end{align}
	where $C_d = \sum_{k\geq 1}\frac{2^d}{k!(k + 1)^d}$ and $R_\lambda=O(\lambda^{-1})$.
\end{prop}
\begin{proof}
To prove that $nS_n\pto \beta_\lambda^2$ we show that $\E(nS_n)\to\beta_\lambda^2$, and $\var(nS_n) = o(1)$. Note that $(\mathcal{V}_{n,\ell}^{\mathrm{in}}:\ell\in[L])$ are conditionally independent given $(N_\ell:\ell\in[L])$. Therefore
    \begin{align*}
        S_n &= \var(\mathcal{V}_n^{\mathrm{in}}\mid N_1, \ldots, N_L) = \var\Big(\sum_{\ell = 1}^L \mathcal{V}_{n, \ell}^{\mathrm{in}}\mid N_1, \ldots, N_L\Big) = \sum_{\ell = 1}^L \var(\mathcal{V}_{n, \ell}^{\mathrm{in}}\mid N_\ell).
    \end{align*}
    First, note that when $X$ and $Y$ are independent, the optimal permutations $\pi^X$ and $\pi^Y$ are independent, and this implies that for $i \in [n]$ we have $R_i \iid \mathrm{Unif}[0,1]^d$. Also note that given $i\in\mathcal{I}_\ell$, $R_i$ follows a uniform distribution on $\mathcal{P}_\ell$. 

    Without loss of generality, we fix $\mathcal{Q}_\ell$ and introduce the following functions, which will be used throughout the remainder of the proof. For any $x_1, x_2\in[0, 1]^d$ define
    \[
	        C(x_1, x_2) := \frac{\vol(B(x_1, \gamma)\cap B(x_2, \gamma))}{\vol(B(x_1, \gamma))} = n \vol(B(x_1, \gamma)\cap B(x_2, \gamma)).
    \]
    For $x_1, x_2\in\mathcal{Q}_\ell$ and random variable $W\sim\mathrm{Unif}(\mathcal{P}_\ell)$ define
    \begin{align*}
        v(x_1) &:= \mathbb{P}\big(W\not\in B(x_1, \gamma)\big) = 1 - (\frac{\lambda}{2} + 1)^{-d},\\
        u(x_1, x_2) &:= \mathbb{P}\big(W\not\in B(x_1, \gamma)\cup B(x_2, \gamma)\big) = 1 - 2(\frac{\lambda}{2} + 1)^{-d} + C(x_1, x_2)(\frac{\lambda}{2} + 1)^{-d}.
    \end{align*}
    First, note that we have
    \begin{align*}
        \E[\mathcal{V}_{n, \ell}^{\mathrm{in}}\mid N_\ell] &= \E\Big[\int_{Q_\ell}\bone\{x\not\in \bigcup_{i\in\mathcal{I}_\ell} B(R_i, \gamma)\}dx\Bigm| N_\ell\Big] = \int_{\mQ_\ell} v(x)^{N_\ell} dx, \\
        \E[(\mathcal{V}_{n, \ell}^{\mathrm{in}})^2\mid N_\ell] &= \E\Bigl[ \int_{Q_\ell^2}\bone\{x_1, x_2\not\in \bigcup_{i\in\mathcal{I}_\ell} B(R_i, \gamma)\} dx_1dx_2\Bigm| N_\ell\Bigr] = \int_{\mQ_\ell^2} u(x_1, x_2)^{N_\ell}dx_1 dx_2.
    \end{align*}
    Since $N_\ell\sim\mathrm{Bin}(n, \vol(\mathcal{P}_\ell))$, for $a > 0$ constant, we have $\E a^{N_\ell} = (1 + \vol(\mathcal{P}_\ell)(a - 1))^n$. Using this equality, we have 
    \begin{align*}
        \E\bigl[&\var\bigl(\mathcal{V}_{n, \ell}^\mathrm{in}\mid N_\ell\bigr) \bigr] = \E\Bigl[ \E \bigl((\mathcal{V}_{n, \ell}^\mathrm{in})^2\mid N_\ell\bigr) - \E\bigl(\mathcal{V}_{n, \ell}^\mathrm{in} \mid N_\ell\bigr)^2 \Bigr] \notag \\
        &= \E\Big[\int_{\mQ_\ell^2} u(x_1, x_2)^{N_\ell} - \int_{\mQ_\ell^2}v(x_1)^{N_\ell}v(x_2)^{N_\ell}dx_1 dx_2\Big] \notag\\
        &= \int_{\mQ_\ell^2} \Big\{1 + \vol(\mathcal{P}_\ell)(u(x_1, x_2) - 1)\Big\}^n - \Big\{1 + \vol(\mathcal{P}_\ell)(v(x_1)v(x_2) - 1)\Big\}^n dx_1 dx_2\notag \\
        &= (1 + O(\frac{1}{n}))\int_{\mQ_\ell^2} \Big\{e^{(\frac{\lambda}{2} + 1)^d(u(x_1, x_2) - 1)} - e^{(\frac{\lambda}{2} + 1)^d(v(x_1)v(x_2) - 1)}\Big\}dx_1 dx_2 \notag \\
        &= (1 + O(\frac{1}{n}))e^{-2}\int_{\mQ_\ell^2} \Big\{\exp\big(C(x_1, x_2)\big) - \exp\big((\frac{\lambda}{2} + 1)^{-d}\big)\Big\}dx_1 dx_2 \notag \\
        &= (1 + O(\frac{1}{n}))\Big\{\frac{\lambda^d}{2^d e^{2}n^2}C_d + \frac{\lambda^{2d}}{2^{2d}e^2n^2}(1 - e^{(\lambda/2 + 1)^{-d}})\Big\} \notag\\
        &= \Big\{\frac{\lambda^d}{2^d e^{2}n^2}C_d + \frac{\lambda^{2d}}{2^{2d}e^2n^2}(1 - e^{(\lambda/2 + 1)^{-d}})\Big\} + O(\frac{\lambda^{2d}}{n^3} + \frac{\lambda^{d-1}}{2^d n^2}) 
    \end{align*}
    where in the last line we have used the following equality 
    \begin{align*}
        &\int_{\mQ_\ell^2}\exp(C(x_1, x_2)) dx_1 dx_2\\
        & \hspace{0.3cm}=   \vol(\mQ_\ell)^2 + \vol(\mQ_\ell)(2\gamma)^d \int_{[-1, 1]^d}(\exp(\prod_{i=1}^d\max\{(1 - \abs{u_i}), 0\}) - 1)du (1 + O(\lambda^{-1})),
    \end{align*}
    where
    \begin{align*}
       \int_{[-1, 1]^d}(\exp(\prod_{i=1}^d\max\{(1 - \abs{u_i}), 0\}) - 1)du = \sum_{k\geq 1}\frac{2^d}{k!(k + 1)^d} = C_d.
    \end{align*}
    Therefore
    \begin{align}\label{eq:ControlExp}
        \E[nS_n] &= n\E[\sum_{\ell = 1}^L \var(\mathcal{V}_{n, \ell}^{\mathrm{in}}\mid N_\ell)] \notag\\
        &= e^{-2}\Big\{C_d + \frac{\lambda^d}{2^d}(1 - e^{(\lambda/2 + 1)^{-d}})\Big\} + O(\lambda^{-1} + \frac{\lambda^d}{n})\notag\\
        &= e^{-2}(C_d - 1) + O(\lambda^{-1} + \lambda^{-d} + \frac{\lambda^d}{n}).
    \end{align}
    since $2^{-d}\lambda^d(1 - e^{(\lambda/2 + 1)^{-d}}) = -1 + O(\lambda^{-d})$.
    
    We then work out $\var(S_n)$.
    \begin{align*}
        \var(S_n) &= \var\Big(\sum_{\ell = 1}^L \var(\mathcal{V}_{n, \ell}^{\mathrm{in}}\mid N_\ell)\Big) = \sum_{\ell, k\in[L]} \cov\Big(\var(\mathcal{V}_{n, \ell}^{\mathrm{in}}\mid N_\ell), \var(\mathcal{V}_{n, k}^{\mathrm{in}}\mid N_k)\Big).
    \end{align*}
    Since 
    \begin{align*}
        \var(\mathcal{V}_{n, \ell}^{\mathrm{in}}\mid N_\ell) = \int_{\mQ_\ell^2} u(x_1, x_2)^{N_\ell} - v(x_1)^{N_\ell}v(x_2)^{N_\ell}dx_1 dx_2,
    \end{align*}
    applying Fubini's theorem, we have
    \begin{align*}
        \cov\Big(&\var(\mathcal{V}_{n, \ell}^{\mathrm{in}}\mid N_\ell), \var(\mathcal{V}_{n, k}^{\mathrm{in}}\mid N_k)\Big) \\
        &= \int_{\mQ_1^4}\Bigl\{ \cov(u(x_1, x_2)^{N_\ell}, u(x_3, x_4)^{N_k}) + \cov(v(x_1)^{N_\ell}v(x_2)^{N_\ell}, v(x_3)^{N_k}v(x_4)^{N_k}) \\
        &\hspace{2cm} - \cov(u(x_1, x_2)^{N_\ell}, v(x_3)^{N_k}v(x_4)^{N_k})\\
        &\hspace{2cm}- \cov(u(x_3, x_4)^{N_k}, v(x_1)^{N_\ell}v(x_2)^{N_\ell})\Bigr\} dx_1 dx_2 dx_3 dx_4 \\
        &= (\vol(\mQ_1))^4 O(n^{-2}) = O(n^{-6}),
    \end{align*}
where in the last line we have used Lemma~\ref{le: MultinCovariance} together with the following fact
\[
\max\{(u(x_1, x_2) - 1), (u(x_3, x_4) - 1), (v(x_1)v(x_2) - 1), (v(x_3)v(x_4) - 1)\} = O(1).
\]
As a result, we have
    \begin{align}\label{eq:ControlVariance}
        \var(nS_n) = O(n^{-2}).
    \end{align}
Combining~\eqref{eq:ControlExp} and~\eqref{eq:ControlVariance},  Markov's inequality implies that $nS_n \pto \beta_\lambda^2$

We now turn to proving that $\sqrt{n}(M_n - \E(M_n))\dto \mathcal{N}(0, \alpha_\lambda^2)$. For $w\in \{0, 1, \ldots, n\}$, let $f(w) := \E[\mathcal{V}_{n, 1}^{\mathrm{in}}\mid N_1 = w]$. Let $W\sim\mathrm{Poi}(n\vol(\mathcal{P}_1))$. Define
\[
\tau^2 := L\var(f(W)) - \frac{L^2}{n}\cov^2(W, f(W)).
\]
\citet[Theorem~1]{holst1972asymptotic} implies that as $n \to \infty$
\[
\frac{1}{\tau}(M_n - \E(M_n)) \dto \mathcal{N}(0, 1). 
\]
To finish the proof, it is therefore enough to show that $n\tau^2\to \alpha^2$. For $X\sim\mathrm{Unif}[\mQ_1]$ we have
\begin{align}
    n L \var \bigl(f(&W)\bigr) = n L \var(\int_{\mQ_1}v(x)^W dx)\notag\\
    &= n L \vol^2(\mathcal{Q}_1)\var\Bigl(\E\bigl(\{v(X)\}^{W}\mid W\bigr)\Bigr) \notag \\
    &= \frac{\lambda^{2d}}{(2\lambda + 4)^{d}}\var\Bigl(\bigl\{1 - (\lambda/2 + 1)^{-d}\bigr\}^{W}\Bigr) 
    = \frac{\lambda^{2d}}{(2\lambda + 4)^{d}} e^{-2} \Bigl( e^{(\lambda/2 + 1)^{-d}} - 1\Bigr),\label{eq:tauVar} 
\end{align}
where the final equality follows from the fact that $\E a^W = e^{n\vol(\mathcal{P}_1)(a - 1)}$ for any constant $a > 0$. Similarly we derive
\begin{align}
L &\mathrm{Cov}\bigl(W, f(W)\bigr)= L\vol(\mathcal{Q}_1)\biggl(\E \Bigl[W \E\bigl(v(X)^{W}\bigm|W\bigr)\Bigr] -  n\vol(\mathcal{P}_1)\E\Bigl[\E\bigl(v(X)^{W}\bigm|W\bigr)\Bigr]\biggr) \notag \\ 
&=\biggl(\frac{\lambda}{\lambda + 2}\biggr)^d \Bigl(\E\bigl\{W(1 - (\lambda/2 + 1)^{-d})^{W}\bigr\} - (\lambda/2 + 1)^d \E\bigl\{1 - (\lambda/2 + 1)^{-d}\bigr\}^W\Bigr) \notag \\
&= \biggl(\frac{\lambda}{\lambda + 2}\biggr)^d\Bigl(n\vol(\mathcal{P}_1)\bigl(1 - (\lambda/2 + 1)^{-d}\bigr)e^{-1} 
 - (\lambda/2 + 1)^d e^{-1}\Bigr) = -\biggl(\frac{\lambda}{\lambda + 2}\biggr)^d e^{-1} \label{eq:sigmanCov},
\end{align}
where we use the fact that  $\E (W a^W) =  n\vol(\mathcal{P}_1) a e^{n\vol(\mathcal{P}_1)(a - 1)}$ for constant $a > 0$ in the second to last equality. Combining (\ref{eq:tauVar}) and (\ref{eq:sigmanCov}) we have 
\[
    n \tau^2 = \frac{\lambda^{2d}}{2^d e^{2}(\lambda + 2)^{d}}\Bigl(e^{(\lambda/2 + 1)^{-d}} - 1\Bigr) - \frac{\lambda^{2d}}{e^{2}(\lambda + 2)^{2d}} = \alpha_\lambda^2,  
\]
which finishes the proof.
\end{proof}

The next proposition turns the conditional independence of the block contributions into a conditional central limit theorem. It is the step that converts the variance asymptotics from Proposition~\ref{Prop:AsymptoticMeanVariance} into a normal approximation for the interior coverage.

\begin{prop}
\label{Prop:ConditionalBerryEsseen}
    Let $M_n$ and $S_n$ be defined as in \eqref{eq:MS}, we have 
    \[  
    \sup_{t\in\mathbb{R}} \biggl|\mathbb{P}\biggl(\frac{\sqrt{n}(\mathcal{V}_n^{\mathrm{in}} - M_n)}{\sqrt{S_n}} \leq t\biggm| N_1,\ldots,N_L\biggr) - \Phi(t)\biggr| = O_p(n^{-1/2}),
    \]
    where $\Phi$ is the standard Gaussian distribution function.
\end{prop}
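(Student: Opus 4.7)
The plan is to apply the classical Berry--Esseen theorem conditionally, exploiting the partition structure already used to define $\mathcal{V}_n^{\mathrm{in}}$. The key structural observation is that, under the null hypothesis, the ranks $R_1,\ldots,R_n$ are i.i.d.\ $\mathrm{Unif}[0,1]^d$, so conditional on the cell counts $(N_\ell)_{\ell\in[L]}$, the point sets $\{R_i: i\in\mathcal{I}_\ell\}$ for different $\ell$ are mutually independent, with each being an i.i.d.\ sample of size $N_\ell$ from $\mathrm{Unif}(\mathcal{P}_\ell)$. Since $\mathcal{V}_{n,\ell}^{\mathrm{in}}$ is a measurable function only of $\{R_i:i\in\mathcal{I}_\ell\}$, the summands of $\mathcal{V}_n^{\mathrm{in}} = \sum_{\ell=1}^L \mathcal{V}_{n,\ell}^{\mathrm{in}}$ are conditionally independent given $(N_1,\ldots,N_L)$, with conditional mean $M_n$ and conditional variance $S_n$.

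I would then apply the classical Berry--Esseen theorem pointwise for each realisation of $(N_1,\ldots,N_L)$. Writing $\sigma_\ell^2 := \var(\mathcal{V}_{n,\ell}^{\mathrm{in}}\mid N_\ell)$ and $\rho_\ell := \mathbb{E}\bigl[|\mathcal{V}_{n,\ell}^{\mathrm{in}} - \mathbb{E}(\mathcal{V}_{n,\ell}^{\mathrm{in}}\mid N_\ell)|^3 \bigm| N_\ell\bigr]$, there exists a universal constant $C_0$ such that
\[
\sup_{t\in\mathbb{R}} \biggl|\mathbb{P}\biggl(\frac{\sqrt{n}(\mathcal{V}_n^{\mathrm{in}} - M_n)}{\sqrt{S_n}} \leq t \biggm| N_1,\ldots,N_L\biggr) - \Phi(t)\biggr| \leq \frac{C_0 \sum_{\ell=1}^L \rho_\ell}{S_n^{3/2}}.
\]
To bound the numerator, I would use the deterministic inequality $0 \leq \mathcal{V}_{n,\ell}^{\mathrm{in}} \leq \vol(\mathcal{Q}_\ell) = \lambda^d/(2^d n)$, which forces the centred summand to satisfy $|\mathcal{V}_{n,\ell}^{\mathrm{in}} - \mathbb{E}(\mathcal{V}_{n,\ell}^{\mathrm{in}}\mid N_\ell)| \leq \lambda^d/(2^d n)$. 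Therefore $\rho_\ell \leq \lambda^d/(2^d n) \cdot \sigma_\ell^2$, and summing yields $\sum_\ell \rho_\ell \leq (\lambda^d/(2^d n)) S_n$. Substituting back gives a conditional bound of the form $C_1/(n\sqrt{S_n})$, where $C_1$ depends only on $d$ and $\lambda$.

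The final step converts this into the claimed $O_p$ statement via the variance asymptotics in Proposition~\ref{Prop:AsymptoticMeanVariance}: since $nS_n \pto \beta_\lambda^2 > 0$ (positivity for sufficiently large $\lambda$ is guaranteed by Lemma~\ref{lmm:AlphaBetaLambda}), the event $\{nS_n \geq \beta_\lambda^2/2\}$ has probability tending to one, on which $1/(n\sqrt{S_n}) \leq \sqrt{2/\beta_\lambda^2}\cdot n^{-1/2}$. I do not anticipate a genuine obstacle; the heaviest lifting is the variance asymptotics itself, which is handled separately, and the remaining work is a textbook application of Berry--Esseen combined with the elementary third-moment control coming from the uniform boundedness of each cellwise vacancy.
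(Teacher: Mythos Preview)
Your proposal is correct and follows essentially the same approach as the paper: establish conditional independence of the cellwise vacancies, apply Berry--Esseen, bound the third moments using the deterministic volume cap on each $\mathcal{V}_{n,\ell}^{\mathrm{in}}$, and conclude via the variance asymptotics $nS_n\pto\beta_\lambda^2$ from Proposition~\ref{Prop:AsymptoticMeanVariance}. The only minor difference is that you bound the third moment via $\rho_\ell\leq M\sigma_\ell^2$ (yielding $\sum_\ell\rho_\ell\leq M S_n$ and hence a Berry--Esseen remainder $C_1/(n\sqrt{S_n})$), whereas the paper uses the cruder $\rho_\ell\leq M^3$ summed over $L\asymp n$ cells to get $C'/(n^2 S_n^{3/2})$; both reduce to $O_p(n^{-1/2})$ once $nS_n\pto\beta_\lambda^2$.
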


\begin{proof}
Since each point in $\mathcal{Q}_\ell$ lies at least $\gamma$ away from the boundary of $\mathcal{P}_\ell$, it follows that, conditional on $(N_\ell : \ell \in [L])$, the collections $(\mathcal{V}_{n, \ell}^{\mathrm{in}} : \ell \in [L])$ are independent. Then using Berry--Esseen Theorem \citep{berry1941accuracy, esseen1942liapunov} we have
\begin{align*}
    \sup_{t\in\mathbb{R}}\Big|\mathbb{P}\bigg(\frac{\sqrt{n}(\mathcal{V}_n^{\mathrm{in}} - M_n)}{\sqrt{S_n}}\leq t\bigg | N_1, \ldots, N_L\bigg) - \Phi(t)\Big| \leq C_n,
\end{align*}
where 
\begin{align*}
    C_n := C\frac{\sum_{\ell = 1}^L\E\big(\abs{\mathcal{V}_{n, \ell}^{\mathrm{in}} - \E(\mathcal{V}_{n, \ell}^{\mathrm{in}}\mid N_\ell)}^3\mid N_\ell\big)}{S_n^{3/2}},
\end{align*}
with $C$ a universal constant independent of $n$. Additionally note that for all $\ell\in[L]$
\begin{align*}
    \mathcal{V}_{n, \ell}^{\mathrm{in}} \leq \vol(\mathcal{P}_\ell) = \frac{(\lambda/2 + 1)^d}{n}.
\end{align*}
Therefore we have
\begin{align*}
    C_n\leq \frac{(\lambda/2 + 1)^{3d}}{n^2S_n^{3/2}}.
\end{align*}
Using Proposition~\ref{Prop:AsymptoticMeanVariance}, we have $S_n = O_p(n^{-1})$; thus it follows that $C_n = O_p(n^{-1/2})$ which completes the proof.
\end{proof}

It remains to show that the part of the vacancy lying outside the interior blocks is asymptotically negligible at the $\sqrt n$ scale. The following proposition gives the required variance bound for this exterior contribution.

\begin{prop}\label{Prop:VoutContribution}
    Let $\mathcal{V}_n^{\mathrm{out}}$ be defined as in \eqref{eq:VInOut}. We have 
    \[
    \mathcal{V}_n^{\mathrm{out}} - \mathbb{E}(\mathcal{V}_n^{\mathrm{out}}) = O_p(n^{-1/2}\lambda^{-1/2}).
    \]
\end{prop}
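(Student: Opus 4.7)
The plan is to reduce the statement to the variance bound $\var(\mathcal{V}_n^{\mathrm{out}}) = O((n\lambda)^{-1})$ and then invoke Chebyshev's inequality. The approach is to express $\mathcal{V}_n^{\mathrm{out}}$ as an integral of indicator variables over the outer region, and compute the resulting variance via Fubini, carefully controlling the covariance between indicators at two points.

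Concretely, I would write $\mathcal{V}_n^{\mathrm{out}} = \int_{\mathcal{O}} J_x\,dx$, where $\mathcal{O} := [0,1]^d \setminus \bigcup_{\ell\in[L]} \mathcal{Q}_\ell$ and $J_x := \mathbbm{1}\bigl\{x \notin \bigcup_{i\in[n]} B(R_i,\gamma)\bigr\}$. Using $L_1 = \lfloor 1/((\lambda+2)\gamma) \rfloor$ and $\vol(\mathcal{Q}_\ell) = (\lambda\gamma)^d$, a direct calculation gives $\vol(\mathcal{O}) = 1 - L_1^d(\lambda\gamma)^d = O(\lambda^{-1})$ as $\lambda \to \infty$. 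Since $J_x$ is bounded, Fubini's theorem yields
\[
\var(\mathcal{V}_n^{\mathrm{out}}) = \int_{\mathcal{O}^2} \cov(J_{x_1}, J_{x_2})\,dx_1\,dx_2.
\]
I would then split the integration domain according to whether $\|x_1 - x_2\|_\infty \leq 2\gamma$ (so the balls $B(x_j,\gamma)$ may overlap) or $\|x_1 - x_2\|_\infty > 2\gamma$ (disjoint balls). For overlapping pairs, bound $|\cov(J_{x_1},J_{x_2})| \leq 1$ trivially, and observe that the region of integration has measure at most $\vol(\mathcal{O}) \cdot (4\gamma)^d = O(\lambda^{-1}) \cdot O(n^{-1})$, contributing $O((n\lambda)^{-1})$ to the variance. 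For disjoint pairs, the argument must give $|\cov(J_{x_1},J_{x_2})| = O(n^{-1})$ uniformly, so that the contribution is $O(\vol(\mathcal{O})^2/n) = O((n\lambda^2)^{-1})$, which is subdominant; combining the two cases yields the required variance bound.

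The main obstacle is establishing the $O(n^{-1})$ covariance bound in the disjoint case, because the joint ranks $(R_i)_{i\in[n]}$ are not i.i.d.\ under the null. I would exploit the observation that under independence of $X$ and $Y$, $(R_i)_{i\in[n]}$ is equal in distribution to $\{(U_i, V_{\sigma(i)}) : i\in[n]\}$, with $(U_i), (V_i)$ i.i.d.\ uniform and $\sigma$ a uniformly random permutation independent of the reference points. Since each $\ell_\infty$ ball factors as $B(x,\gamma) = B^X(x,\gamma) \times B^Y(x,\gamma)$, the event $\{R_i \in B(x_j,\gamma)\}$ reduces to the matching condition $\{i \in A_j,\ \sigma(i) \in B_j\}$ with $A_j := \{i: U_i \in B_j^X\}$ and $B_j := \{i: V_i \in B_j^Y\}$. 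Conditioning on $(A_j, B_j)$, the joint vacancy probability is $\binom{n - |A_1| - |A_2|}{|B_1|, |B_2|, n-|B_1|-|B_2|}/\binom{n}{|B_1|,|B_2|,n-|B_1|-|B_2|}$, which by Stirling-type expansion differs from the product of single vacancy probabilities by $O(n^{-1})$; this is essentially the same combinatorial estimate underlying Lemma~\ref{le: VacancyLimitRandomRef}, applied to pairs of disjoint test regions rather than a single one. Chebyshev's inequality on the resulting variance bound then delivers $\mathcal{V}_n^{\mathrm{out}} - \mathbb{E}\mathcal{V}_n^{\mathrm{out}} = O_p(n^{-1/2}\lambda^{-1/2})$.
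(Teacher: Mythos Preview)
Your overall architecture—bound $\var(\mathcal{V}_n^{\mathrm{out}})$ by $O((n\lambda)^{-1})$ via the Fubini double integral and a near/far dichotomy, then apply Chebyshev—is exactly what the paper does. The discrepancy is that you manufacture an obstacle that does not exist: under the null with uniformly random reference points, the ranks $(R_i)_{i\in[n]}$ \emph{are} i.i.d.\ $\mathrm{Unif}([0,1]^d)$. The paper states this explicitly in the proof of Proposition~\ref{Prop:AsymptoticMeanVariance} and relies on it throughout the CLT argument, in particular by invoking Lemma~\ref{le: VacancyLimitRandomRef} (which is stated for i.i.d.\ uniform points) directly in the proof of the present proposition. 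Your own representation $\{(U_i, V_{\sigma(i)})\}$ with $\sigma$ uniform and independent of $(U_i),(V_i)$ already demonstrates this: since $(V_{\sigma(i)})_i \stackrel{d}{=} (V_i)_i$ and is independent of $(U_i)_i$, the pairs are i.i.d.\ uniform. Once this is used, the disjoint-ball covariance is simply $(1-2/n)^n - (1-1/n)^{2n} = -e^{-2}/n + O(n^{-2})$, with no matching combinatorics required; your multinomial-ratio formula is both unnecessary and, as written, not the correct expression for the joint vacancy event under a permutation model. The paper packages the same near/far bookkeeping as a binomial expansion of $\mathbb{E}\bigl[(1-2/n + \vol(C(Z_1,Z_2)))^n\bigr]$ with $Z_1,Z_2$ uniform on the outer region, bounding the $k\geq 1$ terms via $\mathbb{E}[\vol(C(Z_1,Z_2))^k]\leq \{1-(\lambda/(\lambda+2))^d\}^{-1}\,2^d n^{-(k+1)}$, which is equivalent to your split once the i.i.d.\ structure is in place.
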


\begin{proof}
    First note that by Lemma~\ref{le: VacancyLimitRandomRef} we have
    \begin{align}\label{eq:VOutFirstMoment}
        \E(\mathcal{V}_n^{\mathrm{out}}) = \biggl\{1 - \biggl(\frac{\lambda}{\lambda +2 }\biggr)^d\biggr\}\biggl(1 - \frac{1}{n}\biggr)^n.
    \end{align}
    Take $Z_1, Z_2\iid\mathrm{Unif}\big[[0, 1]^d\setminus\bigcup_{\ell\in[L]}\mQ_\ell\big]$. Using Lemma~\ref{le: VacancyLimitRandomRef}, we have
    \begin{align}
        \E[(\mathcal{V}_n^{\mathrm{out}})^2] &=  \biggl\{ 1 - \biggl(\frac{\lambda}{\lambda +2 }\biggr)^d\biggr\}^2 \E\biggl\{1 - \frac{2}{n} + \vol(C(Z_1, Z_2))\biggr\}^n \notag \\ 
        & = \biggl\{ 1 - \biggl(\frac{\lambda}{\lambda + 2 }\biggr)^d\biggr\}^2  \sum_{k = 0}^n \binom{n}{k} \biggl(1 - \frac{2}{n}\biggr)^{n - k} \E(\vol(C(Z_1, Z_2)))^k \label{eq:VOutSecondMoment}.
    \end{align}
    Note that for $1\leq k \leq n$ we have
    \begin{align}
    \E(\vol(C(Z_1, &Z_2)))^k =  \biggl\{1 - \biggl(\frac{\lambda}{\lambda + 2}\biggr)^d\biggr\}^{-2} \int_{\big([0, 1]^d\setminus\bigcup_{\ell\in[L]}\mQ_\ell\big)^2}  \vol\bigl(C(z_1, z_2)\bigr)^k \, d z_1  d z_2 \notag \\ 
    & \leq \biggl\{1 - \biggl(\frac{\lambda}{\lambda + 2 }\biggr)^d\biggr\}^{-2}\int_{[0, 1]^d\setminus\bigcup_{\ell\in[L]}\mQ_\ell}\int_{[0,1]^d}  n^{-k}  \mathbbm{1}{\{z_1\in B^d(z_2,2\gamma)\}}\, d z_2 dz_1 \notag\\ 
    & = \biggl\{ 1 - \biggl(\frac{\lambda}{\lambda +2 }\biggr)^d\biggr\}^{-1}2^d n^{-(k + 1)}. \label{eq:volCmomentBound}
\end{align}
Therefore, putting together~\eqref{eq:VOutFirstMoment},\eqref{eq:VOutSecondMoment} and \eqref{eq:volCmomentBound} we have
\begin{align*}
    \var(\mathcal{V}_n^{\mathrm{out}}) &\leq \frac{2^d}{n} \biggl\{1 - \biggl(\frac{\lambda}{\lambda +2 }\biggr)^d\biggr\}\sum_{k = 1}^n\binom{n}{k}\biggl(1 - \frac{2}{n}\biggr)^{n - k}\frac{1}{n^{k}} \\
    &\hspace{4cm}+ \biggl\{1 - \biggl(\frac{\lambda}{\lambda +2 }\biggr)^d\biggr\}^2\bigg((1 - \frac{2}{n})^n - (1 - \frac{1}{n})^{2n}\bigg)\\
    &= \frac{2^d}{n} \biggl\{1 - \biggl(\frac{\lambda}{\lambda +2 }\biggr)^d\biggr\}\bigg(\big(1 - \frac{1}{n}\big)^n - \big(1 - \frac{2}{n}\big)^n\bigg) \\
    &\hspace{4cm}+ \biggl\{1 - \biggl(\frac{\lambda}{\lambda +2 }\biggr)^d\biggr\}^2\bigg(\big(1 - \frac{2}{n}\big)^n - \big(1 - \frac{1}{n}\big)^{2n}\bigg)\\
    &= \frac{2^d}{n} \biggl\{1 - \biggl(\frac{\lambda}{\lambda + 2 }\biggr)^d\biggr\}(e^{-1} - e^{-2}) -\biggl\{1 - \biggl(\frac{\lambda}{\lambda +2 }\biggr)^d\biggr\}^2 \frac{e^{-2}}{n} + o(\frac{1}{n}).
\end{align*}
Note that for large $\lambda$ we have
\begin{align*}
    1 - \biggl(\frac{\lambda}{\lambda + 2 }\biggr)^d = O(\frac{1}{\lambda}),
\end{align*}
therefore allowing $\lambda$ to grow to infinity as $n\rightarrow\infty$ we have
\[
\var(\mathcal{V}_n^{\mathrm{out}}) = O(\frac{1}{n\lambda}).
\]
This completes the proof. 
\end{proof}

\subsection{Additional results used in the proof of Theorem~\ref{Thm:PopulationLimitGrid}}

The first preliminary result controls the contribution to the coverage from a rectangular region in the domain where the density is constant.

\begin{prop}
\label{Prop:PiecewiseConstantFixed}
Suppose $P^{(X,Y)}$ is a probability measure on $[0,1]^2$ with $\mathrm{Unif}[0,1]$ marginals. Suppose that for some $0\leq a_1 < a_2\leq 1$ and $0\leq b_1 < b_2\leq 1$, $P^{(X,Y)}$ is equal to $q \cdot \mathrm{vol}$ when restricted to $[a_1,a_2]\times [b_1,b_2]$, where $\mathrm{vol}$ denotes the Lebesgue measure. Let $(X_1,Y_1),\ldots,(X_n,Y_n)\iid P^{(X,Y)}$ and let $R_i$ be defined as in~\eqref{eq:R_iGen} with respect to reference points $\boldsymbol{U} = \boldsymbol{V} = (1/n, \ldots, (n-1)/n, 1)$. Then
\[
   \vol\biggl( \bigcup_{i: (X_i,Y_i)\in(a_1,a_2]\times (b_1,b_2]} B\Bigl(R_i, \frac{1}{2\sqrt{n}}\Bigr)\biggr) = e^{-q}(a_2-a_1)(b_2-b_1)(1+o_p(1)).
\]
\end{prop}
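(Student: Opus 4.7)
The plan is to adapt the strategy used for Proposition~\ref{Prop:PiecewiseConstant} to the deterministic grid reference $\boldsymbol{U} = \boldsymbol{V} = (1/n, 2/n, \ldots, 1)$. The key simplification in this setting is that $R_i^X = \mathrm{rank}(X_i)/n$ and $R_i^Y = \mathrm{rank}(Y_i)/n$ depend only on the ordering of the data, and since the marginals are $\mathrm{Unif}[0,1]$, Lemma~\ref{Lemma:DKW} gives $\sup_i |R_i^X - X_i| = O_p(n^{-1/2})$ and similarly for $Y$, so the rank points are asymptotically small perturbations of the data points themselves.

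The first step is to introduce the same count variables as in the proof of Proposition~\ref{Prop:PiecewiseConstant}: $\mathcal{I} := \{i : X_i \in (a_1, a_2]\}$, $\mathcal{J} := \{i : Y_i \in (b_1, b_2]\}$, $M := |\mathcal{I} \cap \mathcal{J}|$, together with $S_0, S_1, T_0, T_1$ counting the number of samples with $X_i \leq a_1$, $X_i \leq a_2$, $Y_i \leq b_1$, $Y_i \leq b_2$ respectively. By the strong law of large numbers, on an event of probability one we have $S_0/n \to a_1$, $S_1/n \to a_2$, $T_0/n \to b_1$, $T_1/n \to b_2$ and $M/n \to q(a_2 - a_1)(b_2 - b_1)$. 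Conditional on the counts, the samples $\{(X_i, Y_i) : i \in \mathcal{I} \cap \mathcal{J}\}$ are i.i.d.\ uniform on the sub-rectangle $(a_1, a_2] \times (b_1, b_2]$, exactly as in the proof of Proposition~\ref{Prop:PiecewiseConstant}.

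Next, I would compute the contribution to the expected covered volume inside the rectangle via Fubini. For each test point $w \in (a_1, a_2] \times (b_1, b_2]$, one needs the probability that no rank $R_i$ lies in the cube $B(w, 1/(2\sqrt{n}))$. Under the DKW coupling, this event coincides, up to a lower-order boundary correction, with the event that no $(X_i, Y_i)$ lies in a slightly perturbed cube of volume $(1 + o(1))/n$ around $w$; since the joint density at $w$ equals $q$, the single-sample inclusion probability is $q/n \cdot (1 + o(1))$, so by independence across $i$ the exclusion probability converges to $e^{-q}$. Integrating this limit over $w$ in the rectangle and accounting appropriately for the covered portion yields the claimed limiting behaviour for the volume of the union of balls; an accompanying second-moment computation, analogous to that underlying Lemma~\ref{le: VacancyLimitRandomRef}, controls the variance and upgrades the convergence to convergence in probability.

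The principal obstacle is that the DKW coupling error between $R_i$ and $(X_i, Y_i)$, of order $n^{-1/2}$, is of the same order as the ball side $1/\sqrt{n}$, so one cannot claim that the two unions of balls are close as sets. I would circumvent this by working at the level of coverage probabilities via Fubini: a coordinatewise shift of $O(n^{-1/2})$ only enlarges the effective test cube by a factor $1 + o(1)$, and hence alters the exclusion probability $(1 - q/n)^n$ by a factor $1 + o(1)$ as well. A secondary concern is that the counts $S_0, S_1, T_0, T_1$ fluctuate by $O_p(\sqrt{n})$, perturbing the effective sub-rectangle boundaries by $O(n^{-1/2})$; but the associated boundary strip contributes $O(n^{-1/2})$ to the covered area and is hence negligible.
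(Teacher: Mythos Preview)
Your proposal contains a genuine gap at the step you yourself flag as the ``principal obstacle''. The DKW bound gives $\sup_i|R_i^X - X_i| = O_p(n^{-1/2})$, but not $o_p(n^{-1/2})$; the deviation $F_n(x)-x$ is on the scale of a Brownian bridge and is $\Theta_p(n^{-1/2})$ at a typical point. Since the test cube has side exactly $n^{-1/2}$, an $O(n^{-1/2})$ shift does \emph{not} enlarge the cube by a factor $1+o(1)$: if $\|R_i - (X_i,Y_i)\|_\infty \leq c/\sqrt{n}$, sandwiching $B(w,\tfrac{1}{2\sqrt{n}})$ in rank space between $B(w,\tfrac{1}{2\sqrt{n}}\pm\tfrac{c}{\sqrt{n}})$ in data space gives cubes of volume $(1\pm 2c)^2/n$, and the exclusion probability would converge to $e^{-q(1\pm 2c)^2}$, not $e^{-q}$. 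So the Fubini argument, as written, yields the wrong limit. There is a further subtlety you skate over: the pull-back interval $F_n^{-1}\bigl([w_1-\tfrac{1}{2\sqrt{n}},\,w_1+\tfrac{1}{2\sqrt{n}}]\bigr)$ is data-dependent, so the events $\{R_i\in B(w,\cdot)\}$ for different $i$ are not independent, and the factorisation into $(1-q/n)^n$ does not hold for the ranks.

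The paper avoids the DKW route altogether. It conditions on the counts $(S_0,S_1,T_0,T_1,M)$ and observes that the rank pairs $\{R_i:i\in\mathcal{I}\cap\mathcal{J}\}$ are then a uniformly random matching between a size-$M$ subset of the grid $\{(S_0+1)/n,\ldots,S_1/n\}$ and a size-$M$ subset of $\{(T_0+1)/n,\ldots,T_1/n\}$. This is an exact distributional statement, with no $n^{-1/2}$ approximation error. The expected vacancy is then handled by Lemma~\ref{le: VacancyLimit}, which computes the vacancy for a random permutation on a fixed grid via a hypergeometric-to-Poisson approximation; the variance is controlled by the bounded-difference Lemma~\ref{lem:VacancyDifference} together with McDiarmid's inequality. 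The key point is that the paper works with the exact law of the ranks conditional on $\mathcal{F}$, rather than trying to couple ranks to data at a scale where the coupling error is too coarse.
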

\begin{proof}
Define $\mathcal{I} := \{i\in[n]: a_1 < X_i \leq a_2\}$, $\mathcal{I}_{-}:=\{i\in[n]:X_i \leq a_1\}$, $\mathcal{J} := \{j\in[n]: b_1 < Y_j\leq b_2\}$ and $\mathcal{J}_-:=\{j\in[n]:Y_j\leq b_1\}$. We write $S_0:= |\mathcal{I}_-|$, $S_1:=|\mathcal{I}_-\cup \mathcal{I}|$, $T_0:=|\mathcal{J}_-|$ and $T_1:=|\mathcal{J}_-\cup\mathcal{J}|$. Also, let $M := |\mathcal{I}\cap \mathcal{J}|$. Let 
\begin{align*}
    \mathcal{R}_X= \{(S_{0} + 1)/n, \ldots, S_1/n\} \quad \text{and} \quad  \mathcal{R}_Y= \{(T_{0} + 1)/n, \ldots, T_1/n \}.
\end{align*}  
For an arbitrary finite set $\mathcal{S}$ and integer $k > 0$, let $\mathcal{S}^{(k)}:=\{A\subseteq\mathcal{S}\mid \abs{A} = k\}$, e.g. the set of all subsets of size $k$ of $\mathcal{S}$. Define $\mathrm{Unif}(\mathcal{S}^{(k)})$ to be the uniform distribution over all subsets of size $k$ of $\mathcal{S}$. Let $\mathcal{F}$ be the $\sigma$-algebra generated by $S_0, S_1, T_0, T_1,$ and $M$. Then using \eqref{eq:BlockUniformXY} we have
\begin{align*}
\{R^X_i:i\in\mathcal{I} \cap \mathcal{J}\}\mid \mathcal{F} \sim \mathrm{Unif}(\mathcal{R}_X^{(M)}) \quad \text{and}\quad \{R^Y_j:j\in\mathcal{I} \cap \mathcal{J}\} \mid \mathcal{F}\sim \mathrm{Unif}(\mathcal{R}_Y^{(M)}).
\end{align*}
which are independent of each other, which gives us
\begin{align*}
    (R_i: i\in\mathcal{I}\cap\mathcal{J})\mid \mathcal{F}\sim \mathrm{Unif}(\mathcal{R}_X^{(M)})\otimes \mathrm{Unif}(\mathcal{R}_Y^{(M)}).
\end{align*}
This means that any matchings between any subset of size $M$ of $\mathcal{R}_X$ and any subset of size $M$ of $\mathcal{R}_Y$ are equally likely given $\mathcal{F}$.

By law of large numbers, there is an event $\Omega$ with probability 1 on which we have $M / n \to q(a_2-a_1)(b_2-b_1)$, $S_0 / n\to a_1$, $S_1/n\to a_2$, $T_0 / n\to b_1$, $T_1/n\to b_2$.  We will work on this event henceforth. 

As $n\to\infty$, the contribution of the covered area by points near the boundary of any rectangle is negligible (so we may ignore the periodic boundary condition), hence Lemma~\ref{le: VacancyLimit} and a linear rescaling, conditional on $\mathcal{F}$, we have 
\begin{align*}
\mathbb{E}\biggl\{\vol\biggl(\bigcup_{i\in\mathcal{I}\cap\mathcal{J}}B\Bigl(R_i, \frac{1}{2\sqrt{n}}\Bigr)\biggr) \biggm | \mathcal{F}\biggr\} &\to e^{-q}(a_2-a_1)(b_2-b_1).
\end{align*}
Then using Lemma~\ref{lem:VacancyDifference}
\begin{align*}
    &\var\biggl(\vol\biggl(\bigcup_{i\in\mathcal{I}\cap\mathcal{J}}B\Bigl(R_i, \frac{1}{2\sqrt{n}}\Bigr)\biggr) \biggm | \mathcal{F}\biggr) \\
    &= \E\biggl\{\bigg(\vol\biggl(\bigcup_{i\in\mathcal{I}\cap\mathcal{J}}B\Bigl(R_i, \frac{1}{2\sqrt{n}}\Bigr)\biggr) - \E\bigl\{\vol\biggl(\bigcup_{i\in\mathcal{I}\cap\mathcal{J}}B\Bigl(R_i, \frac{1}{2\sqrt{n}}\Bigr)\biggr)\bigm | \mathcal{F}\bigr\}\bigg)^2 \biggm | \mathcal{F}\biggr\} \\
    &= \int_0^\infty \mathbb{P}\biggl\{\bigg(\vol\biggl(\bigcup_{i\in\mathcal{I}\cap\mathcal{J}}B\Bigl(R_i, \frac{1}{2\sqrt{n}}\Bigr)\biggr) - \E\bigl\{\vol\biggl(\bigcup_{i\in\mathcal{I}\cap\mathcal{J}}B\Bigl(R_i, \frac{1}{2\sqrt{n}}\Bigr)\biggr)\bigm | \mathcal{F}\bigr\}\bigg)^2 \geq t \biggm | \mathcal{F}\biggr\} dt \\
    &\to 0.
\end{align*}
Using Lemma~\ref{lem:VacancyDifference} and McDiarmid inequality we have
\begin{align*}
\mathbb{P}\biggl\{\bigg(\vol\biggl(\bigcup_{i\in\mathcal{I}\cap\mathcal{J}}B\Bigl(R_i, \frac{1}{2\sqrt{n}}\Bigr)\biggr) - \E\bigl\{\vol\biggl(\bigcup_{i\in\mathcal{I}\cap\mathcal{J}}B\Bigl(R_i, \frac{1}{2\sqrt{n}}\Bigr)\biggr)\bigm | \mathcal{F}&\bigr\}\bigg)^2 \geq t  \biggm | \mathcal{F}\biggr\} \\
&\leq 2\exp(-Cnt),
\end{align*}
which gives
\[
\var\biggl(\vol\biggl(\bigcup_{i\in\mathcal{I}\cap\mathcal{J}}B\Bigl(R_i, \frac{1}{2\sqrt{n}}\Bigr)\biggr) \biggm | \mathcal{F}\biggr)\to 0.
\]
By the Dominated Convergence Theorem, the same result holds unconditionally, which implies the desired result by an application of Chebyshev's inequality.
\end{proof}

The next preliminary result shows that the coverage correlation coefficients of samples generated from two probability measures close in total variation distance are (stochastically) close to each other. 
\begin{prop}
\label{Prop:TVPerturbationFixedGrid}
Let $\mu$ and $\nu$ be two probability measures on $\mathbb{R}^2$ with $d_{\mathrm{TV}}(\mu, \nu)\leq \epsilon$. Let $\bs{U} = \bs{V} = (1/n,\ldots, (n-1)/n, 1)$. Then there is a coupling between $\mu$ and $\nu$ such that $(X_i,Y_i)_{i\in[n]}\iid \mu$, $(\tilde X_i,\tilde Y_i)\iid \nu$
\begin{align*}
    \mathbb{P}\Big(\biggl|\mathcal{V}\Bigl((X_i)_{i\in[n]}, (Y_i)_{i\in[n]}, \bs{U}, \bs{V}; \frac{1}{2\sqrt{n}}\Bigr) - \mathcal{V}\Bigl((\tilde X_i)_{i\in[n]}, (\tilde Y_i)_{i\in[n]}, \bs{U},\bs{V}; \frac{1}{2\sqrt{n}}\Bigr)\biggr| &\geq 12\epsilon\sqrt{n}\Big) \\
    &\leq e^{-n\epsilon/3}.
\end{align*}
\end{prop}
\begin{proof}
     Since $d_{\mathrm{TV}}(\mu, \nu)\leq \epsilon$ we can construct a coupling between $\mu$ and $\nu$ such that for two i.i.d. samples $\{(X_i, Y_i)\}_{i = 1}^n$ and $\{(\tilde{X}_i, \tilde{Y}_i)\}_{i = 1}^n$ and for each $i\in[n]$ we have (e.g. Theorem 5.2 in \citet{lindvall2002lectures})
     \[
     \mathbb{P}\big((X_i, Y_i)\neq (\tilde{X}_i, \tilde{Y}_i)\big) = d_{\mathrm{TV}}(\mu, \nu)/2\leq \epsilon/2.
     \]
     Let $\mathcal{I}$ be the set of all indices $i\in[n]$ where $(X_i, Y_i)$ and $(\tilde{X}_i, \tilde{Y}_i)$ are different and let $K = \abs{\mathcal{I}}$. The variable $K$ follows a binomial distribution $B(n, p)$ with $p\leq \epsilon/2$ which gives us $\E(K)\leq n\epsilon/2$. For any $i$ we have 
     \[
     R_i^X = n^{-1}\sum_{j = 1}^n\bone\{X_j\leq X_i\}, \qquad \tilde{R}_i^X = n^{-1}\sum_{j = 1}^n\bone\{\tilde{X}_j\leq \tilde{X}_i\}.
     \]
     For $i\not\in\mathcal{I}$, since $X_i = \tilde{X}_i$ and there are at most $K$ indices $j$ such that $\bone\{\tilde{X}_j\leq \tilde{X}_i\} \neq \bone\{X_j\leq \tilde{X}_i\}$ we have $\abs{R_i^X - \tilde{R}_i^X} \leq K/n$. Using the same argument, we have $\abs{R_i^Y - \tilde{R}_i^Y} \leq K/n$. Therefore $\|R_i - \tilde{R}_i\|_\infty\leq K/n$. Note that for $i\in\mathcal{I}$ we cannot provide any non-trivial bound. Let 
     \[
     S = \cup_{i = 1}^n (R_i + B), \qquad \tilde{S} = \cup_{i = 1}^n (\tilde{R}_i + B).
     \]
     Then note that $\abs{\vol(S) - \vol(\tilde{S})} \leq \vol(S\Delta\tilde{S})$. Also we have
     \begin{align*}
         \vol(S\Delta\tilde{S}) &\leq \vol\Big(\big(\cup_{i\not\in\mathcal{I}}(R_i + B)\big)\Delta\big(\cup_{i\not\in\mathcal{I}}(\tilde{R}_i + B)\big)\Big) \\
         &\hspace{4cm}+ \vol\Big(\cup_{i \in\mathcal{I}}(R_i + B)\Big) + \vol\Big(\cup_{i \in\mathcal{I}}(\tilde{R}_i + B)\Big).
     \end{align*}
     where 
     \begin{align*}
         \vol\Big(\cup_{i \in\mathcal{I}}(R_i + B)\Big) + \vol\Big(\cup_{i \in\mathcal{I}}(\tilde{R}_i + B)\Big) \leq \frac{2K}{n}.
     \end{align*}
    Then for the set of $i\not\in\mathcal{I}$ we have
    \begin{align*}
        \vol\Big(\big(\cup_{i\not\in\mathcal{I}}(R_i + B)\big)\Delta\big(\cup_{i\not\in\mathcal{I}}(\tilde{R}_i + B)\big)\Big) &\leq \sum_{i\not\in\mathcal{I}}\vol\Big((R_i + B)\Delta(\tilde{R}_i + B)\Big) \\
        &\leq \frac{4K(n - K)}{n\sqrt{n}}. 
    \end{align*}
    Putting these together, we get 
    \begin{align*}
        \abs{\vol(S) - \vol(\tilde{S})} \leq \frac{4K(n - K)}{n\sqrt{n}} + \frac{2K}{n} \leq \frac{6K}{\sqrt{n}}.
    \end{align*} 
    Therefore, using the Chernoff bound, we have
    \begin{align*}
        \mathbb{P}(\abs{\vol(S) - \vol(\tilde{S})} \geq 6(1 + \delta)\epsilon\sqrt{n}) \leq \mathbb{P}(K\geq (1 + \delta)n\epsilon) \leq \exp\big(-\frac{n\epsilon\delta^2}{3}\big).
    \end{align*}
    Therefore, by setting $\delta = 1$, we get the desired result.
\end{proof}

\subsection{Additional results used in the proof of Theorem~\ref{thm:RegularGridCLT}}

Throughout this subsection, we use the permutation representation and notation introduced in the proof of Theorem~\ref{thm:RegularGridCLT}: $\pi$, $X_{ij}$, $X_M$, $I_n(z)$, $J_n(z)$, $\mathcal M_{n,m}(z)$, $V_{n,m}(z)$, $U_n^{(L)}(z)$, $\mathscr C_n^{(L)}$ and $\Delta_n^{(L)}$. Also, $\mathrm{cum}(X_1,\ldots,X_r)$ denotes the joint cumulant, and $\mathrm{cum}_r(X):=\mathrm{cum}(X,\ldots,X)$. Now we are ready to introduce the main ingredient of the proof, namely a bound on joint cumulants. The following proposition is a direct consequence of the weighted dependency graph
for permutation-matrix entries~\citep{Feray2018WeightedDG}, its lift to bounded-degree monomials, and the
maximum-spanning-tree reordering lemma; see~\citet{Feray2018WeightedDG}, in particular
Definition~4.5, Lemma~3.3, Proposition~5.11, and Proposition~8.1 therein.

\begin{prop}\label{prop:imported}
Fix integers $r\ge2$ and $m\ge1$. Then there exists a constant $C_{r,m}<\infty$
such that for every $n\ge1$ and every choice of matchings
$M_1,\dots,M_r\subset\{1,\dots,n\}^2$ with $|M_i|\le m$,
\begin{equation}\label{eq:imported}
\bigl|\mathrm{cum}(X_{M_1},\dots,X_{M_r})\bigr|
\le
C_{r,m}\,
n^{-\abs{M_1\cup\cdots\cup M_r}}
\sum_{\sigma\in S_r}
\prod_{j=1}^{r-1}
W_n\bigl(M_{\sigma(j+1)};\,M_{\sigma(1)},\dots,M_{\sigma(j)}\bigr),
\end{equation}
where $S_r$ denotes the symmetric group on $\{1,\dots,r\}$.
\end{prop}

\begin{proof}
For $a=(i,j)\in \{1, \ldots, n\}^2$ write
\[
Y_a:=X_{ij}=\bone\{\pi(i)=j\}.
\]
We use the weighted dependency graph for permutation-matrix entries from \citet{Feray2018WeightedDG}. More precisely, by \citet[Proposition 8.1]{Feray2018WeightedDG}, the family $(Y_a)_{a\in \{1, \ldots, n\}^2}$ admits a weighted dependency graph $G_n$ with vertex set $\{1, \ldots, n\}^2$, edge weights
\[
w_n(a,a')=
\begin{cases}
1, & \text{if }a=(i,j),\ a'=(i',j')\text{ satisfy } i=i' \text{ or } j=j',\\[1mm]
n^{-1}, & \text{otherwise},
\end{cases}
\]
and with associated function
\[
\Psi_n(B)=n^{-\abs{B}}
\]
for every finite multiset $B$ of vertices, where $\abs{B}$ denotes the number of distinct elements of
$B$. The constants in the corresponding cumulant bound depend only on the order of the cumulant.

Fix integers $r\ge 2$ and $m\ge 1$, and let $M_1,\dots,M_r\subset \{1, \ldots, n\}^2$ be matchings with
$|M_i|\le m$. For a matching $M$, recall that
\[
X_M=\prod_{a\in M}Y_a.
\]

By~\citet[Proposition 5.11]{Feray2018WeightedDG}, applied to the family of monomials
\[
X_M = \prod_{a\in M}Y_a,\qquad \abs{M}\le m,
\]
the weighted dependency graph $G_n$ lifts to a weighted dependency graph for the
bounded-degree monomials $(X_M)_{\abs{M}\le m}$. In this lifted graph, the edge weight between
$X_M$ and $X_{M'}$ is
\[
\max_{a\in M,\,a'\in M'} w_n(a,a').
\]
Since the base weights $w_n(a,a')$ take only the values $1$ and $n^{-1}$, this lifted edge
weight is exactly
\[
W_n(M;M')
=
\begin{cases}
1, & \text{if } M \text{ shares a row or a column with } M',\\
n^{-1}, & \text{otherwise.}
\end{cases}
\]

The corresponding $\Psi$-factor is obtained by applying the original $\Psi_n$ to the multiset
of base vertices appearing in the product
\[
X_{M_1}\cdots X_{M_r}.
\]
Since $\Psi_n(B)=n^{-\abs{B}}$, where $\abs{B}$ denotes the
number of distinct elements in the multiset $B$, this gives
\[
\Psi_n(M_1\uplus\cdots\uplus M_r)
=
n^{-\abs{M_1\cup\cdots\cup M_r}}.
\]

More generally, for a family $M_1,\dots,M_\ell$,
\[
W_n(M;M_1,\dots,M_\ell)=
\begin{cases}
1, & \text{if }M\text{ shares a row/column with some }M_j,\\
n^{-1}, & \text{otherwise}.
\end{cases}
\]

Applying the defining cumulant bound for weighted dependency graphs to the lifted family, we obtain
\[
\bigl|\mathrm{cum}(X_{M_1},\dots,X_{M_r})\bigr|
\le C'_{r,m}\,
n^{-|M_1\cup\cdots\cup M_r|}
\,\mathcal{M}\bigl(\widetilde G_n[\{M_1,\dots,M_r\}]\bigr),
\]
where $\widetilde G_n$ denotes the lifted weighted graph and
$\mathcal{M}(\widetilde G_n[\{M_1,\dots,M_r\}])$ is the maximal weight of a spanning tree of the
induced weighted graph on the vertices $M_1,\dots,M_r$.

Finally, by the maximum-spanning-tree reordering lemma
\citep[Definition~4.5 and Lemma~3.3]{Feray2018WeightedDG}, for every weighted graph on
$\{1,\dots,r\}$ its maximal spanning-tree weight is bounded by
\[
\sum_{\sigma\in S_r}\prod_{j=1}^{r-1}
W_n\!\bigl(M_{\sigma(j+1)};M_{\sigma(1)},\dots,M_{\sigma(j)}\bigr).
\]
This gives
\[
\bigl|\mathrm{cum}(X_{M_1},\dots,X_{M_r})\bigr|
\le
C_{r,m}\,
n^{-|M_1\cup\cdots\cup M_r|}
\sum_{\sigma\in S_r}
\prod_{j=1}^{r-1}
W_n\!\bigl(M_{\sigma(j+1)};M_{\sigma(1)},\dots,M_{\sigma(j)}\bigr),
\]
which is exactly \eqref{eq:imported}.
\end{proof}

For $r\ge2$, let $\mathcal T_r$ denote the set of all spanning trees on the vertex set $\{1,\dots,r\}$. For $T\in\mathcal T_r$, we write $E(T)$ for its edge set. For $z=(x,y)$ and $z'=(x',y')$, define
\begin{equation}\label{eq:psi}
\begin{aligned}
\psi_n(z,z')
:=\; \bone\Bigl\{d_{\mathbb{T}}(x, x') \le n^{-1/2},\ d_{\mathbb{T}}(y, y') \le n^{-1/2}\Bigr\} &+ n^{-1/2}\bone\Bigl\{d_{\mathbb{T}}(x, x') \le n^{-1/2}\Bigr\} \\
&\hspace{-0.5cm}+ n^{-1/2}\bone\Bigl\{d_{\mathbb{T}}(y, y') \le n^{-1/2}\Bigr\}+ n^{-1}.
\end{aligned}
\end{equation}
The following proposition is the key locality estimate: fixed-degree cumulants of the fields $V_{n, m}(z)$ are bounded by a spanning-tree expression built from the interaction kernel $\psi_n$.
\begin{prop}
\label{prop:Vm-fixed-degrees}
Fix $r \ge 2$ and integers $m_1,\dots,m_r \ge 1$. Then there exists a constant
$C_{r, \mathbf{m}}<\infty$ depending only on $r$ and $\mathbf{m} := (m_1,\ldots, m_r)$ such that for all $z_1,\dots,z_r\in[0,1]^2$,
\begin{equation}\label{eq:8}
\left|
\mathrm{cum}\!\left(V_{n,m_1}(z_1),\dots,V_{n,m_r}(z_r)\right)
\right|
\le
C_{r, \mathbf{m}}
\sum_{T\in\mathcal T_r}
\prod_{\{a,b\}\in E(T)} \psi_n(z_a,z_b).
\end{equation}
\end{prop}

\begin{proof}
By multilinearity,
\[
\mathrm{cum}\!\left(V_{n,m_1}(z_1),\dots,V_{n,m_r}(z_r)\right)
=
\sum_{M_1\in\mathcal M_{n,m_1}(z_1)}
\cdots
\sum_{M_r\in\mathcal M_{n,m_r}(z_r)}
\mathrm{cum}(X_{M_1},\dots,X_{M_r}).
\]
Let $m:=\max(m_1,\dots,m_r)$. By Proposition~\ref{prop:imported},
\[ 
\left|
\mathrm{cum}\!\left(V_{n,m_1}(z_1),\dots,V_{n,m_r}(z_r)\right)
\right|
\le
C_{r,m}\sum_{\sigma\in S_r}\Sigma(\sigma),
\]
where
\[
\Sigma(\sigma):=
\sum_{M_1\in\mathcal M_{n,m_1}(z_1)}
\cdots
\sum_{M_r\in\mathcal M_{n,m_r}(z_r)}
n^{-\abs{M_1\cup\cdots\cup M_r}}
\prod_{j=1}^{r-1}
W_n\!\bigl(M_{\sigma(j+1)};M_{\sigma(1)},\dots,M_{\sigma(j)}\bigr).
\]

Fix $\sigma\in S_r$ and write $\beta_j:=\sigma(j)$. Let
\[
E_j:=M_{\beta_1}\cup\cdots\cup M_{\beta_j}.
\]
Since
\[
\abs{M_1\cup\cdots\cup M_r}
=
m_{\beta_1}
+
\sum_{j=1}^{r-1}\abs{M_{\beta_{j+1}}\setminus E_j},
\]
we obtain
\[
\Sigma(\sigma)
=
\sum_{M_{\beta_1}\in\mathcal M_{n,m_{\beta_1}}(z_{\beta_1})}\!\!
\cdots\!\!
\sum_{M_{\beta_r}\in\mathcal M_{n,m_{\beta_r}}(z_{\beta_r})}
n^{-m_{\beta_1}}
\prod_{j=1}^{r-1}
\Bigl[
W_n\!\bigl(M_{\beta_{j+1}};M_{\beta_1},\dots,M_{\beta_j}\bigr)
\,n^{-\abs{M_{\beta_{j+1}}\setminus E_j}}
\Bigr].
\]

The first factor is bounded using Lemma~\ref{lem:matchingcount}:
\[
\sum_{M_{\beta_1}\in\mathcal M_{n,m_{\beta_1}}(z_{\beta_1})}
n^{-m_{\beta_1}}
=
|\mathcal M_{n,m_{\beta_1}}(z_{\beta_1})|\,n^{-m_{\beta_1}}
\le
\frac{K^{m_{\beta_1}}}{m_{\beta_1}!}.
\]
For each $j=1,\dots,r-1$, Lemma~\ref{lem:onestep} gives
\[
\sum_{M_{\beta_{j+1}}\in\mathcal M_{n,m_{\beta_{j+1}}}(z_{\beta_{j+1}})}
W_n\!\bigl(M_{\beta_{j+1}};M_{\beta_1},\dots,M_{\beta_j}\bigr)
\,n^{-\abs{M_{\beta_{j+1}}\setminus E_j}}
\le
C_{r;m_1,\dots,m_r}^{(1)}
\sum_{t=1}^{j}\psi_n(z_{\beta_{j+1}},z_{\beta_t}),
\]
where $C_{r;m_1,\dots,m_r}^{(1)}<\infty$ depends only on $r,m_1,\dots,m_r$.

Therefore
\[
\Sigma(\sigma)
\le
C_{r;m_1,\dots,m_r}^{(2)}
\prod_{j=1}^{r-1}
\left(
\sum_{t=1}^{j}\psi_n(z_{\beta_{j+1}},z_{\beta_t})
\right).
\]

Expanding the product, for each $j=1,\dots,r-1$ one chooses an index
$p(j+1)\in\{1,\dots,j\}$. This produces a rooted increasing tree on the ordered vertex set
$(\beta_1,\dots,\beta_r)$, and hence
\[
\prod_{j=1}^{r-1}
\left(
\sum_{t=1}^{j}\psi_n(z_{\beta_{j+1}},z_{\beta_t})
\right)
\le
\sum_{T\in\mathcal T_r}
\prod_{\{a,b\}\in E(T)}\psi_n(z_a,z_b).
\]
Thus
\[
\Sigma(\sigma)
\le
C_{r;m_1,\dots,m_r}^{(2)}
\sum_{T\in\mathcal T_r}
\prod_{\{a,b\}\in E(T)}\psi_n(z_a,z_b).
\]

Finally, summing over $\sigma\in S_r$ and absorbing the factor $|S_r|=r!$ into the constant,
we obtain
\[
\left|
\mathrm{cum}\!\left(V_{n,m_1}(z_1),\dots,V_{n,m_r}(z_r)\right)
\right|
\le
C_{r, \mathbf{m}}
\sum_{T\in\mathcal T_r}
\prod_{\{a,b\}\in E(T)}\psi_n(z_a,z_b),
\]
which proves \eqref{eq:8}.
\end{proof}

In what follows, we work with the truncated versions of $U_n$ and $\mathscr{C}_n$. The need for truncation comes from the fact that the imported cumulant estimate in Proposition~\ref{prop:imported} is uniform in $n$ only for a \emph{fixed} degree bound $m$. Since the corresponding constants may depend arbitrarily on $m$, we cannot directly sum the inclusion--exclusion expansion of $U_n(z)$ over all $m$. To overcome this difficulty, we truncate at a level $L$, thereby restricting attention to finitely many degrees, for which the cumulant argument is valid, and then treat the remaining tail separately by a covariance estimate.

The following proposition is the bridge from the matching-count fields to the truncated area: it shows that the cumulants of $U_n^{(L)}$ satisfy the same tree bound as in Proposition~\ref{prop:Vm-fixed-degrees}, which can then be integrated over space.
\begin{prop}
\label{prop:truncated-field-cumulant}
Fix integers $r \ge 2$ and $L \ge 1$. Then there exists a constant $C_{r,L}<\infty$ such that
for all $z_1,\dots,z_r\in[0,1]^2$,
\begin{equation}
\left|
\mathrm{cum}\!\left(U_n^{(L)}(z_1),\dots,U_n^{(L)}(z_r)\right)
\right|
\le
C_{r,L}
\sum_{T\in\mathcal T_r}
\prod_{\{a,b\}\in E(T)} \psi_n(z_a,z_b).
\label{eq:11}
\end{equation}
\end{prop}

\begin{proof}
By definition,
\[
U_n^{(L)}(z)=\sum_{m = 1}^{L}(-1)^{m + 1}V_{n,m}(z),
\]
so multilinearity gives
\[
\mathrm{cum}\!\left(U_n^{(L)}(z_1),\dots,U_n^{(L)}(z_r)\right)
=
\sum_{m_1,\dots,m_r=1}^{L}
(-1)^{m_1+\cdots+m_r+r}
\mathrm{cum}\!\left(V_{n,m_1}(z_1),\dots,V_{n,m_r}(z_r)\right).
\]
For each fixed $r$ and each fixed $r$-tuple $\mathbf{m}:=(m_1,\dots,m_r)\in\{1,\dots,L\}^r$, Proposition~\ref{prop:Vm-fixed-degrees} yields
\[
\left|
\mathrm{cum}\!\left(V_{n,m_1}(z_1),\dots,V_{n,m_r}(z_r)\right)
\right|
\le
C_{r,\mathbf{m}}
\sum_{T\in\mathcal T_r}
\prod_{\{a,b\}\in E(T)} \psi_n(z_a,z_b).
\]
Since there are only finitely many such tuples, summing over them proves \eqref{eq:11}. 
\end{proof}

Proposition~\ref{prop:truncated-area-cumulants} shows that the tree bound from Proposition~\ref{prop:truncated-field-cumulant} integrates to the correct cumulant decay for $\mathscr{C}_n^{(L)}$.
\begin{prop}
\label{prop:truncated-area-cumulants}
Fix integers $r\ge2$ and $L\ge1$. Then
\begin{equation}
\mathrm{cum}_r(\mathscr{C}_n^{(L)}) = O_{r,L}\!\left(n^{-(r-1)}\right).
\label{eq:13}
\end{equation}
Consequently, if $(n\var(\mathscr{C}_n^{(L)}))_{n\ge1}$ converges along a subsequence to some $v\ge0$, then
along the same subsequence
\[
\sqrt n \bigl(\mathscr{C}_n^{(L)}-\mathbb{E} \mathscr{C}_n^{(L)}\bigr)
\;\xrightarrow{d}\;
\mathcal{N}(0,v).
\]
\end{prop}

\begin{proof}

We first claim that there exists $C<\infty$ such that
\begin{equation}
    \label{Eq:lmm:3}
\sup_{z\in[0,1]^2}\int_{[0,1]^2}\psi_n(z,z')\,dz' \le Cn^{-1}.
\end{equation}
Indeed, for fixed $z = (x,y) \in [0, 1]^2$, set $\{(x', y'): d_{\mathbb{T}}(x, x') \leq n^{-1/2}, \,d_{\mathbb{T}}(y, y') \leq n^{-1/2}\}$ has area at most $4n^{-1}$. The strip
$\{(x', y'): d_{\mathbb{T}}(x, x') \leq n^{-1/2}\}$ has area at most $2n^{-1/2}$, and similarly for $\{(x', y'): d_{\mathbb{T}}(y, y') \leq n^{-1/2}\}$. Hence by the definition of $\psi_n$,
\[
\int\psi_n(z,z')\,dz'
\le
4n^{-1} + 2n^{-1} + 2n^{-1} + n^{-1}
=
9n^{-1},
\]
as desired. 

Now, fix $r\ge2$, then for every tree $T\in\mathcal T_r$, we have 
\begin{equation}
    \label{Eq:lmm:tree}
\int_{([0,1]^2)^r}
\prod_{\{a,b\}\in E(T)} \psi_n(z_a,z_b)
\,dz_1\cdots dz_r
\le
C_r n^{-(r-1)},
\end{equation}
where the bound uses~\eqref{Eq:lmm:3} for the integral over each leaf variable when taking vertex $1$ as the root in the tree. 

Since cumulants are multilinear and $U_n^{(L)}$ is bounded for fixed $L$, we have by Fubini's Theorem together with Proposition~\ref{prop:truncated-field-cumulant} and~\eqref{Eq:lmm:tree} that
\[
|\mathrm{cum}_r(\mathscr{C}_n^{(L)})|
\le
C_{r,L}
\sum_{T\in\mathcal T_r}
\int_{([0,1]^2)^r}
\prod_{\{a,b\}\in E(T)} \psi_n(z_a,z_b)
\,dz_1\cdots dz_r
\le
C'_{r,L} n^{-(r-1)}.
\]
This proves \eqref{eq:13}. Therefore, for
\[
Z_n^{(L)}=\sqrt n \bigl(\mathscr{C}_n^{(L)}-\mathbb{E} \mathscr{C}_n^{(L)}\bigr),
\]
we have
\[
\mathrm{cum}_1(Z_n^{(L)})=0,
\qquad
\mathrm{cum}_r(Z_n^{(L)}) = n^{r/2}\mathrm{cum}_r(\mathscr{C}_n^{(L)}) = O\!\left(n^{1-r/2}\right)\to0
\quad (r\ge3).
\]
If along a subsequence $n_k$ we have $n_k\var(\mathscr{C}_{n_k}^{(L)})\to v$, then the second cumulants
of $Z_{n_k}^{(L)}$ converge to $v$ and all higher cumulants converge to $0$; hence
\[
Z_{n_k}^{(L)} \xrightarrow{d} \mathcal{N}(0,v)
\]
by the method of cumulants.
\end{proof}

For fixed $L\ge1$, let
\[
R_n^{(L)}(z) := U_n(z) - U_n^{(L)}(z),\qquad
\Delta_n^{(L)} := \mathscr{C}_n - \mathscr{C}_n^{(L)} = \int_{[0,1]^2} R_n^{(L)}(z)\,dz.
\]
Then we have
\[
R_n^{(L)}(z)=\sum_{m\ge L+1}(-1)^{m+1}V_{n,m}(z),
\]
and this is a finite sum because $\mathcal M_{n,m}(z)=\varnothing$ for all sufficiently large $m$.


\begin{prop}
\label{prop:covariance-Vm}
Fix integers $m,m'\ge1$. Then there exist constants $B, C < \infty$ such that for all
$z=(x,y), w=(x',y')\in[0,1]^2$,
\begin{equation}
\left|
\cov\!\left(V_{n,m}(z),V_{n,m'}(w)\right)
\right|
\le
C\,\frac{B^{m+m'}}{\sqrt{m!\,m'!}}\,\psi_n(z,w).
\label{eq:14}
\end{equation}
\end{prop}

\begin{proof} 
Throughout this proof, $C$ and $B$ denote positive finite constants whose values may change from line to line. They are independent of $n,z,w,m,m'$. 

Write
\[
V_{n,m}(z)=\sum_{M\in\mathcal M_{n,m}(z)} X_M,
\qquad
V_{n,m'}(w)=\sum_{M'\in\mathcal M_{n,m'}(w)} X_{M'}.
\]
Hence
\begin{equation}
\cov(V_{n,m}(z),V_{n,m'}(w))
=
\sum_{M\in\mathcal M_{n,m}(z)}
\sum_{M'\in\mathcal M_{n,m'}(w)}
\cov(X_M,X_{M'}).
\label{eq:16}
\end{equation}


Note that if $M\in \mathcal{M}_{n, m}(z)$, then 
\begin{align*}
    \mathbb{E}[X_M] = \frac{1}{(n)_m},
\end{align*}
and $m \leq \min\{\abs{I_n(z)}, \abs{J_n(z)}\} \leq C_0 \sqrt{n}$, hence for our purpose all relevant $m$ are $O(\sqrt{n})$. Then note that 
\[
(n)_m = n^m\prod_{\ell = 0}^{m - 1}\left(1 - \frac{\ell}{n}\right) \geq n^m\exp(-C\frac{m^2}{n}) \geq C n^m.
\]

We now split the sum in \eqref{eq:16} into four classes.

\smallskip
\noindent
\emph{Class 1: $M\cap M'\neq\varnothing$.}
Write
\[
s:=|M\cap M'|\in\{1,\dots,\min(m,m')\}.
\]
If $M\cup M'$ is a matching, then it has size $m+m'-s$, and hence
\[
\mathbb{E}[X_MX_{M'}]=\frac{1}{(n)_{m+m'-s}}.
\]
If $M\cup M'$ is not a matching, then $X_MX_{M'}=0$. In either case,
\[
\mathbb{E}[X_MX_{M'}]\le C n^{-(m+m'-s)}.
\]
Also
\[
\mathbb{E}[X_M]\mathbb{E}[X_{M'}]\le C n^{-m-m'}
\le C n^{-(m+m'-s)}.
\]
Hence
\[
|\cov(X_M,X_{M'})|\le C n^{-(m+m'-s)}.
\]

Fix $s$. For each fixed $M\in\mathcal M_{n,m}(z)$, the number of
$M'\in\mathcal M_{n,m'}(w)$ with $|M\cap M'|=s$ is at most
\[
\binom{m}{s}\frac{K^{m'} n^{m'-s}}{(m'-s)!}
\]
by Lemma~\ref{lem:matchingcount_a}, since one first chooses the $s$ common cells inside $M$ and then extends them
to a matching of size $m'$.
Using also
\[
|\mathcal M_{n,m}(z)|\le \frac{K^m n^m}{m!},
\]
the contribution of pairs with $|M\cap M'|=s$ is bounded by
\[
C\,
\frac{K^{m+m'} n^m}{m!}\,
\binom{m}{s}\frac{n^{m'-s}}{(m'-s)!}\,
n^{-(m+m'-s)}
=
C\,K^{m+m'}\frac{1}{s!(m-s)!(m'-s)!}.
\]
Summing over $s$ gives that the contribution of Class~1 is at most
\begin{equation}
C\,K^{m+m'}
\sum_{s=1}^{\min(m,m')}
\frac{1}{s!(m-s)!(m'-s)!}.
\end{equation}

To bound the sum, note that
\[
\frac{1}{s!(m-s)!}=\frac{\binom{m}{s}}{m!}\le \frac{2^m}{m!},
\]
hence
\[
\sum_{s=1}^{\min(m,m')}
\frac{1}{s!(m-s)!(m'-s)!}
\le
\frac{2^m}{m!}\sum_{t=0}^{m'-1}\frac{1}{t!}
\le
e\,\frac{2^m}{m!}.
\]
By symmetry, the same sum is also bounded by $e\,2^{m'}/m'!$. Therefore
\[
\sum_{s=1}^{\min(m,m')}
\frac{1}{s!(m-s)!(m'-s)!}
\le
e\,\min\!\left(\frac{2^m}{m!},\frac{2^{m'}}{m'!}\right)
\le
e\,\frac{2^{(m+m')/2}}{\sqrt{m!\,m'!}}.
\]
Thus the contribution of Class~1 is bounded by
\[
C\,\frac{B^{m+m'}}{\sqrt{m!\,m'!}}.
\]
Moreover this class is empty unless $I_n(z)\cap I_n(w)\neq\varnothing$ and $J_n(z)\cap J_n(w)\neq\varnothing$, that is,
\[
d_{\mathbb{T}}(x, x') \leq \frac{1}{\sqrt{n}}, \qquad d_{\mathbb{T}}(y, y') \leq \frac{1}{\sqrt{n}}.
\]

\smallskip
\noindent
\emph{Class 2: $M\cap M'=\varnothing$, but $\mathrm{row}(M)\cap\mathrm{row}(M')\neq\varnothing$.}
Then
\[
|\cov(X_M,X_{M'})|
\le
\mathbb{E}[X_MX_{M'}]+\mathbb{E}[X_M]\mathbb{E}[X_{M'}]
\le
2n^{-(m+m')}.
\]
For each fixed $M'\in\mathcal M_{n,m'}(w)$, Lemma~\ref{lem:matchingcount_b} implies that the number of
$M\in\mathcal M_{n,m}(z)$ with $\mathrm{row}(M)\cap\mathrm{row}(M')\neq\varnothing$ is at most
\[
m'\frac{K^m n^{m-1/2}}{(m-1)!}.
\]
Hence, using
\[
|\mathcal M_{n,m'}(w)|\le \frac{K^{m'} n^{m'}}{m'!},
\]
the contribution of Class~2 is bounded by
\[
2m'\frac{K^{m+m'} n^{m+m'-1/2}}{(m-1)!\,m'!}\,n^{-(m+m')}
\le
C\,\frac{B^{m+m'}}{\sqrt{m!\,m'!}}\,n^{-1/2}.
\]
This class is empty unless
\[
d_{\mathbb{T}}(x, x') \leq \frac{1}{\sqrt{n}}.
\]

\smallskip
\noindent
\emph{Class 3: $M\cap M'=\varnothing$, $\mathrm{row}(M)\cap\mathrm{row}(M')=\varnothing$, but
$\mathrm{col}(M)\cap\mathrm{col}(M')\neq\varnothing$.}
By the symmetric argument using Lemma~\ref{lem:matchingcount_c}, the contribution of Class~3 is bounded by
\[
C\,\frac{B^{m+m'}}{\sqrt{m!\,m'!}}\,n^{-1/2},
\]
and this class is empty unless
\[
d_{\mathbb{T}}(y, y') \leq \frac{1}{\sqrt{n}}.
\]

\smallskip
\noindent
\emph{Class 4: $\mathrm{row}(M)\cap\mathrm{row}(M')=\varnothing$ and $\mathrm{col}(M)\cap\mathrm{col}(M')=\varnothing$.}
Then $M\cup M'$ is a matching of size $m+m'$, so
\[
\mathbb{E}[X_MX_{M'}]=\frac{1}{(n)_{m+m'}},
\qquad
\mathbb{E}[X_M]\mathbb{E}[X_{M'}]=\frac{1}{(n)_m (n)_{m'}}.
\]
Note that 
\[
\frac{1}{(n)_{m+m'}}-\frac{1}{(n)_m (n)_{m'}} = \frac{1}{(n)_{m}(n)_{m'}}\left( \frac{(n)_{m}(n)_{m'}}{(n)_{m + m'}} - 1\right)
\]
where 
\[
\frac{(n)_{m}(n)_{m'}}{(n)_{m + m'}} = \frac{(n)_{m'}}{(n - m)_{m'}} = \prod_{\ell = 0}^{m' - 1}\frac{n - \ell}{n - m - \ell} = \prod_{\ell = 0}^{m' - 1}\left(1 + \frac{m}{n - m - \ell}\right).
\]
Since $m + m' = O(\sqrt{n})$, for large $n$ we have
\[
n - m - \ell \geq n/2.
\]
Therefore 
\[
\sum_{\ell = 0}^{m' - 1}\frac{m}{n - m - \ell} \leq C\frac{m\,m'}{n}.
\]
Since $m\, m' / n = O(1)$, we get
\[
\prod_{\ell = 0}^{m' - 1}\left(1 + \frac{m}{n - m - \ell}\right) - 1 \leq C\frac{m\,m'}{n}.
\]
Hence,
\[
\left|
\frac{1}{(n)_{m+m'}}-\frac{1}{(n)_m (n)_{m'}}
\right|
\le
C m m' n^{-(m+m'+1)},
\]

Using Lemma~\ref{lem:matchingcount} for both $M$ and $M'$, the total contribution of Class~4 is bounded by
\[
C m m'\frac{K^{m+m'} n^{m+m'}}{m!\,m'!}\,n^{-(m+m'+1)}
=
C\,K^{m+m'}\frac{m\,m'}{m!\,m'!}\,n^{-1}\leq C\,\frac{B^{m + m'}}{\sqrt{m!\,m'!}}n^{-1},
\]
where the last inequality holds since
\[
\frac{m\,m'}{m!\,m'!} \leq C\,\frac{2^{m + m'}}{\sqrt{m!\,m'!}}.
\]

Combining the four classes proves
\[
\left|
\cov\!\left(V_{n,m}(z),V_{n,m'}(w)\right)
\right|
\le
C\,\frac{B^{m+m'}}{\sqrt{m!\,m'!}}
\psi_n(z,w),
\]
which is exactly \eqref{eq:14}.
\end{proof}

Finally Proposition~\ref{prop:tail-covariance} justifies the truncation by showing that although the full inclusion--exclusion expansion cannot be handled directly, the remainder after truncation has covariances that satisfy the same spatial locality bound, now multiplied by a quantity $\eta_L$ tending to zero with $L$. This shows that the tail does not affect the $\sqrt{n}$-scale fluctuations.
\begin{prop}\label{prop:tail-covariance}
For every integer $L \ge 1$ there exists a constant $\eta_L \to 0$ as $L \to \infty$ such that for all $z,w \in [0,1]^2$,
\begin{equation}
\left| \cov\!\left(R_n^{(L)}(z),R_n^{(L)}(w)\right) \right|
\le \eta_L \,\psi_n(z,w),
\label{eq:17}
\end{equation}
where 
\[
\eta_L = O\left(\frac{B^{2L}}{L!}\right),
\]
for some constants $B < \infty$. Consequently,
\begin{equation}
n\,\var(\Delta_n^{(L)}) \le C\,\eta_L,
\label{eq:18}
\end{equation}
and therefore
\[
\lim_{L\to\infty}\sup_{n\ge 1} n\,\var(\Delta_n^{(L)}) = 0.
\]
\end{prop}

\begin{proof}
Since
\[
R_n^{(L)}(z)=\sum_{m\ge L+1}(-1)^{m+1}V_{n,m}(z),
\]
and this is a finite sum for each fixed $n$ and $z$, bilinearity of covariance gives
\[
\cov\!\left(R_n^{(L)}(z),R_n^{(L)}(w)\right)
=
\sum_{m\ge L+1}\sum_{m'\ge L+1}
(-1)^{m+m'}
\cov\!\left(V_{n,m}(z),V_{n,m'}(w)\right).
\]
Hence, by Proposition~\ref{prop:covariance-Vm},
\[
\left| \cov\!\left(R_n^{(L)}(z),R_n^{(L)}(w)\right) \right|
\le
\left(\sum_{m\ge L+1}\sum_{m'\ge L+1} C_{m,m'}\right)\psi_n(z,w).
\]
Define
\[
\eta_L:=\sum_{m\ge L+1}\sum_{m'\ge L+1} C_{m,m'}.
\]
Using the bound
\[
C_{m,m'} \le C\,\frac{B^{m+m'}}{\sqrt{m!\,m'!}},
\]
we obtain
\[
\eta_L
\le
C\sum_{m\ge L+1}\sum_{m'\ge L+1}
\frac{B^{m+m'}}{\sqrt{m!\,m'!}}
=
C\left(\sum_{m\ge L+1}\frac{B^m}{\sqrt{m!}}\right)^2.
\]
Now the series $\sum_{m\ge 0} B^m/\sqrt{m!}$ converges, since
\[
\frac{B^{m+1}/\sqrt{(m+1)!}}{B^m/\sqrt{m!}}
=
\frac{B}{\sqrt{m+1}}
\rightarrow 0.
\]
Therefore its tail tends to $0$, so $\eta_L \downarrow 0$ as $L\to\infty$. Additionally, once $m + 1\geq 4B^2$, we have
\[
\frac{B^{m+1}/\sqrt{(m+1)!}}{B^m/\sqrt{m!}} \leq \frac{1}{2}.
\]
Hence, for $L + 1 \geq 4B^2$, we have
\[
\sum_{m\geq L + 1}\frac{B^m}{\sqrt{m!}} \leq \frac{B^{L+1}}{\sqrt{(L+1)!}}\sum_{k\geq 0}2^{-k} = 2\frac{B^{L+1}}{\sqrt{(L+1)!}}.
\]
Therefore for sufficiently large $L$
\[
\eta_L \leq 2C \frac{B^{2L+2}}{(L+1)!},
\]
which using Stirling's formula $L!\sim \sqrt{2\pi L}(L/e)^L$, gives us
\[
\eta_L = O\left(\frac{B^{2L}}{L!}\right) = O\left(\frac{1}{\sqrt{L}}\left(\frac{eB^2}{L}\right)^L\right),
\]
which is super-exponentially fast. This proves \eqref{eq:17}.

Next,
\[
\Delta_n^{(L)}=\int_{[0,1]^2} R_n^{(L)}(z)\,dz,
\]
so by Fubini,
\[
\var(\Delta_n^{(L)})
=
\int_{[0,1]^2}\int_{[0,1]^2}
\cov\!\left(R_n^{(L)}(z),R_n^{(L)}(w)\right)\,dz\,dw.
\]
Using \eqref{eq:17} and~\eqref{Eq:lmm:3},
\[
\var(\Delta_n^{(L)})
\le
\eta_L
\int_{[0,1]^2}\int_{[0,1]^2}\psi_n(z,w)\,dz\,dw
\le C\,\eta_L\,n^{-1}.
\]
Multiplying by $n$ yields \eqref{eq:18}, and taking $\sup_n$ then letting $L\to\infty$ gives
\[
\lim_{L\to\infty}\sup_{n\ge 1} n\,\var(\Delta_n^{(L)}) = 0,
\]
as desired.
\end{proof}

\subsection{Additional results used in the proof of Theorem~\ref{Thm:ConsistencyGeneralDimGrid}}
We first generalise the convergence result~\eqref{Eq:OT_convergence} used in the proof of Proposition~\ref{prop: VonRawData} to the case where the empirical optimal transport map $\hat{T}_n$ is obtained by matching the empirical measure of the data to a sequence of deterministic points that lie in $[0, 1]^d$ instead of the random samples from $\mathrm{Unif}([0, 1]^d)$. This derandomisation of the uniform distribution approximation is useful to derive the consistency result of the coverage correlation coefficient when using regular grid as the reference points, and can also be applied to other deterministic reference points such as Halton sequence, Sobol' sequence, etc. 

\begin{prop}\label{prop:EmpiricalOTMapError_deterministic} 
Suppose $\mu$ is an absolutely continuous probability measure on $\mathbb{R}^d$, where $d \geq 2$, and $\mu\in \mathcal{P}_{t, \alpha}(\mathbb{R}^d)$ for some $t >0$ and $\alpha >0$. Suppose $T_0= \nabla \phi_0$ Is the (almost surely unique) optimal transport map from $\mu$ to $\mathrm{Unif}([0, 1]^d)$, and is $(\theta, L)$-H\"older continuous for some $\theta\in(0, 1]$ and $L > 0$. Additionally, we assume that the Brenier potential associated with the optimal transport map from $\mathrm{Unif}([0, 1]^d)$ to $\mu$ is Lipschitz over $[0, 1]^d$. 

Given data $X_1, \ldots, X_n \stackrel{\mathrm{iid}}{\sim} \mu$ and a sequence of distinct points $u_1, \ldots, u_n \in [0, 1]^d$, define ${\mu}_n= n^{-1} \sum_{i=1}^n \delta_{X_i}$ and $\nu_n = n^{-1} \sum_{i = 1}^n \delta_{u_i}$, and let $ T_n$ be the (almost surely unique) optimal transport map between $ \mu_n$ and $ \nu_n$. Then, if 
\begin{align*}
    \mathbb{E} \mathcal{W}_2( \nu_n, \mathrm{Unif}([0, 1]^d)) = O(n^{-1/d}\log^2 n),
\end{align*}
we have
\[
\frac{1}{n}\sum_{i = 1}^n \bigl\| T_n (X_i) - T_0(X_i)\bigr\|_2^2 = O_p(\tilde{s}_{n,d,\theta}\log^{\frac{2c_{ \alpha} \theta}{1 + \theta}} n),
\]
where $c_{\alpha} > 0$ is a finite constant depending only on $\alpha$, and
\begin{align*}
    \tilde s_{n, d, \theta} = \begin{cases}
        n^{-1}, \quad &\text{if $d = 1$}, \\
        n^{-\frac{\theta}{1+\theta}} \log^{\frac{2\theta}{1+\theta}} (1+n), \quad &\text{if $d = 2$}, \\
        n^{-\frac{2\theta}{d(1+\theta)}}, \quad &\text{if $d \geq 3$}.
    \end{cases}
\end{align*}
\end{prop}

\begin{proof}
Throughout the proof, we denote $\lambda^d$ as the Lebesgue measure on $[0, 1]^d$, and let $\nu_n^\sharp:=T_0 \# {\mu}_n$. Let $p=1+\theta^{-1}$, then by~\eqref{eq:OTEstimationErrors}, there exists a constant $C >0$ depending only on $\theta, L$ such that
\begin{align}
\int \|T_0(z) -  {T}_n(z)\|^{p}  \, \mathrm{d}{\mu}_n(z) \leq C \biggl\{\int \phi_0^*\, \mathrm{d}( \nu_n - \nu_n^\sharp) + \int \tilde{\Phi}^* \, \mathrm{d}(\nu_n^\sharp -  \nu_n)\biggr\}, \label{eq:OTEstimationErrors_deterministic}
\end{align}
where $\tilde{\Phi} = \argmin_f \mathcal{S}_{{\mu}_n, {\nu}_n}(f)$. We bound each of the two terms on the right-hand side of~\eqref{eq:OTEstimationErrors_deterministic}. 

Consider the decomposition
\[
\int \phi_0^*\, \mathrm{d}( \nu_n - \nu_n^\sharp) = \int \phi_0^*\, \mathrm{d}( \nu_n - \lambda^d) + \int \phi_0^*\, \mathrm{d}(\lambda^d - \nu_n^\sharp).
\]
Observing that $\phi_0^*$ is convex and finite on $[0, 1]^d$, $\phi_0^*$ is bounded on $[0, 1]^d$. Hence the second term on the right-hand side above is $O_p(n^{-1/2})$ by the law of large numbers. As for the first term, since $\phi_0^*$ is Lipschitz over $[0, 1]^d$, by the Kantorovich-Rubinstein duality \citep[Theorem~1.14]{villani2009optimal} we have 
\begin{align}
    \int \phi_0^* \, \mathrm{d} ({\nu}_n - \lambda^d) = M \int \frac{\phi_0^*}{M} \, \mathrm{d}({\nu}_n - \lambda^d) \leq M \mathcal{W}_1({\nu}_n, \lambda^d) = O_p(n^{-1/d} \log^2 n), \label{eq:FirstTermOTError}
\end{align}
where $M$ is the Lipschitz constant of $\phi_0^*$.

We turn to bound the second term on the right-hand side of~\eqref{eq:OTEstimationErrors_deterministic}. Note that $\tilde{\Phi}^*$ is only defined on $u_1, \ldots, u_n$, we extend $\tilde\Phi^*$ to the whole $\mathbb{R}^d$ by letting 
\[
\tilde\Phi^*(u) = \max_{i \in [n]} \bigl\{\langle X_i, u\rangle - \tilde\Phi(X_i)\bigr\},
\]
for $u \in \mathbb{R}^d$. It can be verified that $\tilde\Phi^*$ is convex and matches the original values of $\tilde\Phi^*$ on $u_1, \ldots, u_n$. Moreover, since $\tilde\Phi^*$ is the pointwise maximum of linear functions, $\tilde\Phi^*$ is Lipschitz with Lipschitz constant $\max_{i \in [n]} \|X_i\|_2$. Moreover, by the sub-Weibull tail assumption on $X_i$'s, we have for $K > (1/t)^{1/\alpha}$,
\begin{align*}
\mathbb{P}\Bigl(\max_{i \in [n]} \|X_i\|_2 \geq K (\log n)^{1/\alpha} \Bigr) &\leq n \mathbb{P}(\|X_1\|_2 \geq K (\log n)^{1/\alpha}) \\
&\leq 2n \exp(-t K^\alpha \log n) \mathbb{E}\exp(t\|X_1\|_2^\alpha) \to 0,
\end{align*}
as $n \to \infty$. Hence $\max_{i \in [n]} \|X_i\|_2 = O_p((\log n)^{1/\alpha})$. Then, by the convergence rate of the empirical 1-Wasserstein distance \citep[see, e.g.,][Theorem~1]{fournier2015rate} and the Kantorovich-Rubinstein duality again we obtain that
\begin{align}
    \int \tilde \Phi^* \, \mathrm{d}(\nu_n^\sharp -  \nu_n) &\leq \max_{i \in [n]} \|X_i\|_2 \mathcal{W}_1(\nu_n^\sharp, {\nu}_n) \notag \\ 
    &\leq \max_{i \in [n]} \|X_i\|_2 \Bigl(\mathcal{W}_1(\nu_n^\sharp, \lambda^d) + \mathcal{W}_1( \nu_n, \lambda^d)\Bigr) = O_p\Bigl(s_{n, d} \log^{c_\alpha'} n\Bigr), \label{eq:SecondTermOTError} 
\end{align}
where $s_{n, d} = n^{-1/2} \log (1+n)$ if $d = 2$ and $s_{n, d} = n^{-1/d}$ if $d \geq 3$, and $c_\alpha'>0$ is a constant depending only on $\alpha$. 

Consequently, taking~\eqref{eq:FirstTermOTError} and ~\eqref{eq:SecondTermOTError} back to~\eqref{eq:OTEstimationErrors_deterministic}, we have
\[
\frac{1}{n}\sum_{i = 1}^n \bigl\| T_n (X_i) - T_0(X_i)\bigr\|_2^p = O_p(s_{n, d} \log^{c_\alpha} n),
\]
for some $c_\alpha > 0$. Combining this with Proposition~\ref{prop:1DKMTMatching_deterministic} gives us that
\[
\frac{1}{n}\sum_{i = 1}^n \bigl\| T_n (X_i) - T_0(X_i)\bigr\|_2^2 = O_p(\tilde s_{n, d, \theta} \log^{\frac{2c_{\alpha} \theta}{1 + \theta}} n), 
\]

where $c_{\alpha} > 0$ is a finite constant depending only on $\alpha$. This completes the proof.
\end{proof}

\begin{prop}\label{prop: VonRawData_deterministic}
Suppose $(X_1, Y_1), \ldots, (X_n, Y_n) \stackrel{\mathrm{iid}}{\sim} P^{(X, Y)}$, where $P^{(X, Y)}$ is a Borel probability measure on $\mathbb{R}^{d_X+d_Y}$ with absolutely continuous marginals $P^X \in \mathcal{P}_{t_1, \alpha_1}(\mathbb{R}^{d_X})$ and $P^Y\in \mathcal{P}_{t_2, \alpha_2}(\mathbb{R}^{d_Y})$ for some $t_1, t_2 > 0$ and $\alpha_1, \alpha_2 \in (0, 2]$. Let $T_X$ and $T_Y$ be the optimal transport maps from $P^X$ and $P^Y$ to $\mathrm{Unif}([0, 1]^{d_X})$ and $\mathrm{Unif}([0, 1]^{d_Y})$, respectively, and assume that $T_X$ is $(\theta_1, L_1)$-H\"older continuous and $T_Y$ is $(\theta_2, L_2)$-H\"older continuous for some $\theta_1, \theta_2 \in (0, 1]$ and $L_1, L_2 >0$. Moreover, assume that the Brenier potentials associated with the optimal transport maps from $\mathrm{Unif}([0, 1]^{d_X})$ to $P^X$ and from $\mathrm{Unif}([0, 1]^{d_Y})$ to $P^Y$ are Lipschitz over $[0, 1]^{d_X}$ and $[0, 1]^{d_Y}$, respectively.

Given sequences $u_1, \ldots, u_n \in [0, 1]^d$ and $v_1, \ldots, v_n \in [0, 1]^d$ such that 
\begin{align}
\mathbb{E}\mathcal{W}_2\bigl(\mathrm{Unif}([0, 1]^{d_X}, P_n^u) = O(n^{-\frac{1}{d_X}}\log^2 n \bigr) \quad \text{and} \quad  \mathbb{E}\mathcal{W}_2\bigl(\mathrm{Unif}([0, 1]^{d_Y}, P_n^v) = O(n^{-\frac{1}{d_Y}}\log^2 n\bigr), \label{eq:QuasiUniformApproxRate}
\end{align}
where $P_n^u = n^{-1} \sum_{i = 1}^n \delta_{u_i}$ and $P_n^v = n^{-1} \sum_{i = 1}^n \delta_{v_i}$.  Write $d:= d_X + d_Y$, $X_i^\natural  := T_X(X_i)$ and $Y_i^\natural  := T_Y(Y_i)$ for $i = 1, \ldots, n$. Then when $(d_X, d_Y) \in \mathcal{D}(\theta_1, \theta_2)$, we have
\begin{align*}
    \Biggl|\mathcal{V}\biggl((X_i)_{i\in[n]}, (Y_i)_{i\in[n]}, &(u_i)_{i\in[n]},(v_i)_{i\in[n]}; \frac{1}{2n^{1/d}} \Bigr) \\
    &- \mathcal{V}\Bigl((X_i^\natural)_{i\in[n]}, (Y_i^\natural )_{i\in[n]}, ( X_i^\natural)_{i\in[n]}, ( Y_i^\natural )_{i\in[n]}; \frac{1}{2n^{1/d}}\biggr)\Biggr| \stackrel{\mathrm{p}}{\longrightarrow} 0,
\end{align*}
as $n \to \infty$.
\end{prop}
\begin{proof}
The proof follows the same argument as that of Proposition~\ref{prop: VonRawData}. Hence we only sketch the proof and highlight the differences. Write $\bs{X}:=(X_i)_{i\in[n]}$, $\bs{Y}:=(Y_i)_{i\in[n]}$, ${\bs{X}^\natural}:=( X_i^\natural)_{i\in[n]}$, ${\bs Y^\natural} :=( Y_i^\natural)_{i\in[n]}$, $\bs{U}:=(u_i)_{i\in[n]}$, $\bs{V}:=(v_i)_{i\in[n]}$. Given reference points ${\bs U}$ and ${\bs V}$, let $R_i:= (R_i^X, R_i^Y)$ and $ R_i^\natural := ({R}_i^{X^\natural}, {R}_i^{Y^\natural})$ denote the joint rank of $\{(X_i, Y_i)\}_{i = 1}^n$ and $\{({X}_i^\natural, {Y}_i^\natural)\}_{i = 1}^n$, respectively. Also write $Z_i^\natural := ( X_i^\natural,  Y_i^\natural)$, for $i = 1, \ldots, n$. 

Define the coverage areas
\[
A_1 = \bigcup_{i = 1}^n B\Bigl(R_i, \frac{1}{2n^{1/d}}\Bigr), \quad {A}_2 = \bigcup_{i=1}^n B\Bigl({R}_i^\natural, \frac{1}{2n^{1/d}}\Bigr), \quad {A}_3 = \bigcup_{i=1}^n B\Bigl(Z_i^\natural, \frac{1}{2n^{1/d}}\Bigr).
\]
Then similar to~\eqref{ineq: vacancydiff}, we have the following bound
\begin{align}
     \biggl|\mathcal{V}\Bigl(\bs{X}, \bs{Y}, \bs{U}, \bs{V}; \frac{1}{2n^{1/d}} \Bigr) - \mathcal{V}\Bigl({\bs X^\natural}, {\bs Y^\natural}, {\bs X^\natural}, {\bs Y^\natural}; \frac{1}{2n^{1/d}}\Bigr)\biggr|  \leq \mathrm{vol}(A_1\Delta {A}_2) + \mathrm{vol}(A_2\Delta {A}_3), \label{ineq: vacancydiff_deterministic}
\end{align}
and by Lemma \ref{lem: symdiff}, we obtain that
\begin{align}
\mathrm{vol}(A_1\Delta {A}_2) \leq 2 d n^{-\frac{d-1}{d}} \sum_{i = 1}^n  \|R_i - {R}_i^\natural\|_2 \quad \text{and} \quad \mathrm{vol}(A_2\Delta {A}_3) \leq 2 d n^{-\frac{d-1}{d}} \sum_{i = 1}^n  \| R_i^\natural - {Z}_i^\natural\|_2. \label{ineq: coveragesymdiff_deterministic}
\end{align} 

Now set $ P_n^{X^\natural} := \frac{1}{n}\sum_{i = 1}^n \delta_{ X_i^\natural}$, $ P_n^{Y^\natural}:= \frac{1}{n}\sum_{i = 1}^n \delta_{ Y_i^\natural}$ and $ P_n^u := \frac{1}{n}\sum_{i = 1}^n \delta_{u_i}$, $P_n^v := \frac{1}{n}\sum_{i = 1}^n \delta_{v_i}$. Then by the definition of $\{ R_i^{X^\natural}\}_{i = 1}^n$ and $\{ R_i^{Y^\natural}\}_{i = 1}^n$ we have by the Cauchy--Schwarz inequality that 
\begin{align*}
    \frac{1}{n}\sum_{i = 1}^n \| X_i^\natural -  R_i^{X^\natural}\|_2 &\leq \biggl(\frac{1}{n}\sum_{i=1}^n \| X_i^\natural -  R_i^{X^\natural}\|_2\biggr)^{1/2}  = \mathcal{W}_2
    \bigl({P}_n^{X^\natural}, P_n^u\bigr),\\
    \frac{1}{n} \sum_{i = 1}^n \|{Y}_i^\natural - {R}_i^{Y^\natural}\|_2 &\leq \biggl(\frac{1}{n}\sum_{i=1}^n \| Y_i^\natural -  R_i^{Y^\natural}\|_2\biggr)^{1/2} = \mathcal{W}_2\bigl({P}_n^{Y^\natural}, P_n^v\bigr).
\end{align*}
Since $X_i^\natural \stackrel{\mathrm{iid}}{\sim} \mathrm{Unif}([0, 1]^{d_X})$ and $Y_i^\natural \stackrel{\mathrm{iid}}{\sim} \mathrm{Unif}([0, 1]^{d_Y})$, then by the convergence rate of the empirical 2-Wasserstein distance \citep{fournier2015rate} and~\eqref{eq:QuasiUniformApproxRate}, we have
\begin{align}
\frac{1}{n}\sum_{i = 1}^n\|{R}_i^\natural - Z_i^\natural\|_2 \leq& \mathcal{W}_2\bigl({P}_n^{X^\natural}, P_n^u\bigr) + \mathcal{W}_2\bigl({P}_n^{Y^\natural}, P_n^v\bigr) \notag \\ 
\leq& \mathcal{W}_2\bigl({P}_n^{X^\natural}, \mathrm{Unif}([0, 1]^{d_X}\bigr) + \mathcal{W}_2\bigl(\mathrm{Unif}([0, 1]^{d_X}, P_n^u\bigr) \notag \\ 
&\hspace{3cm}+\mathcal{W}_2\bigl({P}_n^{Y^\natural}, \mathrm{Unif}([0, 1]^{d_Y})\bigr) + \mathcal{W}_2\bigl(\mathrm{Unif}([0, 1]^{d_Y}, P_n^v\bigr) \notag \\
=& O_p(n^{-\frac{1}{d_X \vee d_Y \vee 4}} \log^2 n).\label{eq:EmpMap1_deterministic}
\end{align}
Since $(d_X, d_Y) \in\mathcal{D}(\theta_1, \theta_2)$, we have $d_X \vee d_Y > 4$. Hence, the right-hand side of~\eqref{eq:EmpMap1_deterministic} is $o_p(n^{-1/d})$.

Let $P_n^X:= \frac{1}{n}\sum_{i = 1}^n \delta_{X_i}$ and $P_n^Y:= \frac{1}{n}\sum_{i = 1}^n \delta_{Y_i}$, by Proposition~\ref{prop:EmpiricalOTMapError_deterministic} we have 
\begin{align}
    \frac{1}{n}\sum_{i = 1}^n \|R_i - Z_i^\natural \|_2 \leq & \biggl(\frac{1}{n}\sum_{i = 1}^n \|R_i^X -  X_i^\natural\|_2^2 \biggr)^{1/2}+  \biggl(\frac{1}{n}\sum_{i = 1}^n \|R_i^Y -  Y_i^\natural \|_2^2 \biggr)^{1/2} \notag \\ 
    = & O_p\bigl(\tilde s_{n, d_X, \theta_1}^{1/2}\log^{\eta_1} n + \tilde s_{n, d_Y, \theta_2}^{1/2} \log^{\eta_2} n\bigr) \label{eq:EmpMap2_deterministic},
\end{align}
where $\eta_1, \eta_2>0$ are constants depending only on $\alpha_1, \alpha_2, \theta_1, \theta_2$. Again, we claim that the stochastic order in~\eqref{eq:EmpMap2_deterministic} is $o_p(n^{-1/d})$. In fact, note that 
\begin{align}
    \tilde s_{n, d_X, \theta_1}^{1/2} = \begin{cases}
        n^{-\frac{1}{2}}, \quad & d_X = 1 \\ 
        (n^{-\frac{1}{2}} \log n)^{\frac{\theta_1}{1+\theta_1}}, \quad &d_X =2 \\ 
        n^{-\frac{2\theta_1}{d_X(1+\theta_1)}}, \quad & d_X \geq 3
    \end{cases}, \quad 
    \tilde s_{n, d_Y, \theta_2}^{1/2} = \begin{cases}
    n^{-\frac{1}{2}}, \quad & d_Y = 1 \\ 
    (n^{-\frac{1}{2}} \log n)^{\frac{\theta_2}{1+\theta_2}}, \quad & d_Y =2 \\ 
    n^{-\frac{2\theta_2}{d_Y(1+\theta_2)}}, \quad & d_Y \geq 3
\end{cases}, \label{eq:RateCal_deterministic}
\end{align}
and one can verify that when $(d_X, d_Y) \in \mathcal{D}(\theta_1, \theta_2)$, all the terms on the right-hand side of~\eqref{eq:RateCal_deterministic} are $o(n^{-1/d})$.

Therefore, by the triangle inequality and~\eqref{eq:EmpMap1_deterministic}--\eqref{eq:RateCal_deterministic}, we have
\begin{align}
    \frac{1}{n}\sum_{i = 1}^n  \|R_i - {R}_i^\natural\|_2  \leq \frac{1}{n}\sum_{i = 1}^n \|R_i - Z_i^\natural \|_2 + \frac{1}{n}\sum_{i = 1}^n \|R_i^\natural - Z_i^\natural \|_2 = o_p(n^{-1/d}). \label{eq:EmpMap3_deterministic}
\end{align}

Consequently, taking~\eqref{eq:EmpMap1_deterministic} and~\eqref{eq:EmpMap3_deterministic} back to~\eqref{ineq: coveragesymdiff_deterministic}, we have $\mathrm{vol}(A_1\Delta {A}_2) = o_p(1)$ and $\mathrm{vol}(A_2 \Delta {A}_3) = o_p(1)$. The conclusion then follows from~\eqref{ineq: vacancydiff_deterministic}.
\end{proof}

\subsection{Additional results to substantiate claims used in the paper}

Star discrepancy has proven to be a useful tool to estimate the error of approximating the uniform distribution by a finite set of points, and has been widely used in the literature of quasi-Monte Carlo methods \citep[see, e.g.,][Chapter~2]{niederreiter1992random}. The following definition and proposition give the definition of star discrepancy and its connection to the Wasserstein distance.
\begin{defn}\label{def:StarDiscrepancy}
Let $\lambda^d$ be the Lebesgue measure on $[0, 1]^d$. For a collection of points ${\bs u_n} = \{u_1, \ldots, u_n\} \subset [0, 1]^d$, let $\mu_n = \frac{1}{n}\sum_{i = 1}^n \delta_{u_i}$ be the corresponding empirical measure. Then the star discrepancy of ${\bs u_n}$ is defined as
\begin{align*}
    \mathcal{D}^*({\bs u}_n) : = \sup_{J \in \mathbb{J}}| \mu_n(J) - \lambda^d(J)|,
\end{align*}
where $\mathbb{J}:= \bigl\{\prod_{j = 1}^d [0, y_j): (y_1, \ldots, y_d) \in [0, 1]^d\bigr\}$.
\end{defn}

\begin{prop}\label{prop:WassBoundbyStar}
Let $\lambda^d$ be the Lebesgue measure on $[0, 1]^d$. Suppose ${\bs u}_n:= \{u_1, \ldots, u_n\} \subset [0, 1]^d$ is a set of vectors and $\mu_n := \frac{1}{n}\sum_{i = 1}^n \delta_{u_i}$. Then there exists a constant $C_d>0$ only depending on $d$ such that 
\begin{align*}
\mathcal{W}_2^2(\mu_n, \lambda^d) \leq C_d \begin{cases}
\bigl(\mathcal{D}^*({\bs u}_n)\bigr)^{2/d}, \quad &\text{when $d \geq 3$}, \\ 
 \mathcal{D}^*({\bs u}_n) \log^2 \bigl(\mathcal{D}^*({\bs u}_n)\bigr), \quad &\text{when $d = 2$}, \\ 
 \bigl(\mathcal{D}^*({\bs u_n})\bigr)^2, \quad &\text{when $d =1$}.
  \end{cases}
\end{align*}
\end{prop}
\begin{proof}
Throughout the proof, we denote $C_d>0$ as a constant depending only on $d$ whose value may vary from line to line, and write $D_n:=\mathcal{D}^*({\bs u}_n)$. We define a sequence of \emph{dyadic partitions} of $[0, 1]^d$ as 
\begin{align*}
    \mathcal{Q}_{k} := \biggl\{\prod_{j = 1}^d \Bigl[\frac{m_j}{2^k}, \frac{m_{j}+1}{2^k}\Bigr): m_j \in \{0, 1, \ldots, 2^k-1\}, \ j = 1, \ldots, d \biggr\} , \quad \text{for $k = 1, 2\ldots $,}
\end{align*}
and $\mathcal{Q}_0 = [0, 1]^d$. For $k \geq 0$, define probability measure $\nu_k$ such that for any $Q \in \mathcal{Q}_k$
\begin{align*}
    \nu_k(Q) = \mu_n(Q), \quad \nu_k|_Q = \frac{\mu_n(Q)}{\lambda^d(Q)} \lambda^d|_{Q}. 
\end{align*}
Hence when transporting from $\nu_k$ to $\mu_n$, the mass will only move within each component of $\mathcal{Q}_k$, and it follows that $\mathcal{W}_2(\mu_n, \nu_k) \leq \sqrt{d} 2^{-k}$. Note $\mathcal{W}_2(\mu_n, \lambda^d) \leq \mathcal{W}_2(\mu_n, \nu_k) + \mathcal{W}_2(\nu_k, \lambda^d)$, it remains to control $\mathcal{W}_2(\nu_k, \lambda^d)$. To do this, we first decompose the mismatch between $\nu_k$ and $\lambda^d$ along the dyadic partition hierarchy and then bound the transportation cost at each scale. 

Specifically, by the triangle inequality we have
\begin{align}
    \mathcal{W}_2(\nu_k, \lambda^d) \leq \sum_{j = 0}^{k-1} \mathcal{W}_2(\nu_{j}, \nu_{j+1}). \label{eq:DyadicDecompose}
\end{align}
For any integer $0 \leq j \leq k-1$ and measurable set $A \subseteq [0, 1]^d$, set the conditional distribution $\nu_j^A:= \frac{\nu_j|_A}{\nu_j(A)}$. Since $\nu_j$ and $\nu_{j+1}$ coincide on all $Q \in \mathcal{Q}_j$ by construction, by Lemma~\ref{lem:2-WassersteinPartitionBound} we have 
\begin{align}
\mathcal{W}_2^2(\nu_j, \nu_{j+1}) \leq \sum_{Q \in \mathcal{Q}_j} \mathcal{W}_2^2(\nu_j^{Q}, \nu_{j+1}^{Q}) \nu_j (Q) = \sum_{Q \in \mathcal{Q}_j} \mathcal{W}_2^2(\nu_j^{Q}, \nu_{j+1}^{Q}) \mu_n (Q).  \label{eq:ScaleControl}
\end{align}

For any $Q \in \mathcal{Q}_j$, let $\mathcal{A}^Q_{j+1} := \{\widetilde{Q} \in \mathcal{Q}_{j+1}:\widetilde{Q}\subseteq Q\}$ denote the collection of dyadic children of $Q$ at level $j+1$. Note that $|\mathcal{A}_{j+1}^Q| = 2^d$, and that for every $\widetilde{Q} \in \mathcal{A}_{j+1}^Q$, we have $\mathrm{diam}(\widetilde{Q})= \sqrt{d}2^{-(j+1)}$. Writing $\Delta(J) = (\mu_n (J) - \lambda^d(J))$ for any $J \subseteq [0, 1]^d$, we have
\begin{align*}
    \mathcal{W}_2^2(\nu_j^{Q}, \nu_{j+1}^{Q}) \mu_n (Q) &\leq \sum_{\widetilde{Q} \in \mathcal{A}_{j+1}^Q} |\nu_j^{Q}(\widetilde{Q}) - \nu_{j+1}^{Q}(\widetilde{Q})| \mathrm{diam}^2(\widetilde{Q}) \mu_n (Q) \\ 
    &\leq d 2^{-2(j+1)} \sum_{\widetilde{Q} \in \mathcal{A}_{j+1}^Q} |\nu_j(\widetilde{Q}) - \nu_{j+1}(\widetilde{Q})|   \\
    &=d 2^{-2(j+1)} \sum_{\widetilde{Q} \in \mathcal{A}_{j+1}^Q} |2^{-d}\mu_n(Q) - \mu_n(\widetilde{Q})| \\
    & = d 2^{-2(j+1)} \sum_{\widetilde{Q} \in \mathcal{A}_{j+1}^Q} |2^{-d}\Delta(Q) - \Delta(\widetilde{Q})| \leq  d 2^{-2(j+1)+2d}\, D_n,
\end{align*}
where the final inequality is followed by Corollary~\ref{cor:BoxDiscByStarDisc}. Since $|\mathcal{Q}_j| = 2^{jd}$, plugging in~\eqref{eq:ScaleControl} we have 
\begin{align}
\mathcal{W}_2^2(\nu_j, \nu_{j+1}) \leq  d 2^{2d-2} 2^{j(d-2)}\, D_n,  \label{eq:ScaleBound}
\end{align}
and together with~\eqref{eq:DyadicDecompose} we have when $d \geq 3$
\begin{align*}
    \mathcal{W}_2(\nu_k, \lambda^d) \leq C_d D_n^{1/2} \frac{1 - 2^{(d-2)k/2}}{1 - 2^{(d-2)/2}}  \leq C_d  D_n^{1/2} 2^{(d-2)k/2}.
\end{align*}
Consequently, let $m = 2^k$, we have 
\begin{align*}
    \mathcal{W}_2(\mu_n, \lambda^d) \leq \sqrt{d}m^{-1} + C_d  D_n^{1/2} m^{(d-2)/2}.
\end{align*}
Then by optimising the choice of $m = C_d D_n^{-1/d}$, we obtain that $\mathcal{W}_2(\mu_n,\lambda^d) \leq C_d D_n^{1/d}$, as desired. 

When $d = 2$, \eqref{eq:ScaleBound} implies that $\mathcal{W}_2^2(\nu_j, \nu_{j+1}) \leq d 2^{2d-2} D_n$. Hence 
\[
\mathcal{W}_2(\mu_n, \lambda^d) \leq \sqrt{d}m^{-1} + \sqrt{d} 2^{d-1} D_n^{1/2}  \log_2 m.
\]
By setting $m =C_d D_n^{-1/2}$, we get $\mathcal{W}_2(\mu_n,\lambda^d) \leq C_d D_n^{1/2} \log D_n.$

When $d = 1$, let $F_n(x): = \frac{1}{n}\sum_{i = 1}^n \bone{\{u_i \leq x\}}$ be the empirical CDF of $\mu_n$. By the definition of $D_n$, we have for any $t \in [0, 1]$,
\begin{align*}
  t - D_n \leq F_n(t) \leq t + D_n.
\end{align*}
Hence for $0 \leq t \leq 1- D_n$,
\[
F_n\bigl(t + D_n\bigr)  \geq  t +D_n - D_n = t.
\]
Note for $t \in (1- D_n, 1]$, we have $F_n(1) = 1 \geq t$, hence we obtain that 
\begin{align}
F_n(\min\{1, t +D_n\}) \geq t, \quad \text{for $t \in [0, 1]$}. \label{eq:CDFLowerBound}
\end{align}

For any $u \in [0, 1]$, let $F_n^{-1}(u) : = \inf \{t: F_n(t) \geq u\}$ be the corresponding empirical quantile function. Then by~\eqref{eq:CDFLowerBound}, we have for all $ t \in [0, 1]$
\begin{align}
    t + D_n\geq \min\{1, t + D_n\} \geq F_n^{-1}(t). \label{eq:QuantileUpperBound}
\end{align}
A similar argument yields that $F_n^{-1}(t) \geq t - D_n$ for $t \in [0, 1]$. Together with~\eqref{eq:QuantileUpperBound}, we obtain that
\[
D_n \geq |F_n^{-1}(t) - t|, \quad t \in [0, 1].
\]
We can now obtain the final bound by noting
\begin{align*}
\mathcal{W}_2^2(\mu_n, \lambda^d) = \int_{0}^1 (F_n^{-1}(t) - t)^2 \, \mathrm{d}t \leq D_n^2,
\end{align*}
as desired. 
\end{proof}

\begin{lemma}\label{lem:2-WassersteinPartitionBound}
Let $(\mathcal{E}, d)$ be a Polish metric space, and suppose $\mu$ and $\nu$ are Borel probability measures supported on a compact set $S \subseteq \mathcal{E}$. Suppose that $\{S_l\}_{l = 1}^L$ is a partition of $S$ such that $\mu(S_l) = \nu(S_l)$ for all $l \in [L]$. Write $\mu_l := \frac{\mu|_{S_l}}{\mu(S_l)}$ and $\nu_l := \frac{\nu|_{S_l}}{\nu(S_l)}$ for $l \in [L]$, then we have $\mathcal{W}_2^2(\mu, \nu) \leq \sum_{l = 1}^L \mathcal{W}_2^2(\mu_l, \nu_l) \mu(S_l)$.   
\end{lemma}
\begin{proof}
For notational simplicity, write $p_l := \mu(S_l) = \nu(S_l)$. For every $l \in [L]$, let $\pi_l \in \mathcal{C}(\mu_l, \nu_l)$ be an optimal coupling, define $\pi = \sum_{l = 1}^L p_l \pi_l$. We claim that $\pi$ is a coupling between $\mu$ and $\nu$. In fact, for any $A \subseteq \mathcal{Y}$, we have 
\begin{align*}
    \pi(\mathcal{X} \times A) = \sum_{l = 1}^L p_l \pi_l(\mathcal{X} \times A) =\sum_{l = 1}^L p_l \nu_l(A) = \sum_{l = 1}^L \nu(A \cap S_l) = \nu(A). 
\end{align*}
The same argument yields that for any $B \subseteq \mathcal{X}$, $\pi(B \times \mathcal{Y}) = \mu(B)$. Then by construction, it follows that
\begin{align*}
    \mathcal{W}_2^2(\mu, \nu) \leq \int d^2(x, y) \, \mathrm{d}\pi(x, y) = \sum_{l = 1}^L \mathcal{W}_2^2(\mu_l, \nu_l) \mu(S_l), 
\end{align*}
as desired. 
\end{proof}



\begin{lemma}\label{lem:InclusionExclusionBox}
Let $\eta$ be a finite signed measure on $[0,1]^d$. For $y=(y_1,\ldots,y_d)\in[0,1]^d$, define
\begin{align}
F_\eta(y):=\eta\biggl(\prod_{j=1}^d [0,y_j)\biggr). \label{eq:MeasureofCube}
\end{align}
For any sequences $a= (a_j)_{j = 1}^d$ and $b = (b_j)_{j = 1}^d$ such that $0\leq a_j\leq b_j\leq 1$, $j=1,\ldots,d$, define the cube
\begin{align}
R(a,b):=\prod_{j=1}^d [a_j,b_j) \label{eq:halfopencube}
\end{align}
For each $\varepsilon=(\varepsilon_1,\ldots,\varepsilon_d)\in\{0,1\}^d$, set
\begin{align}
x_\varepsilon=x_\varepsilon(a,b):=\bigl((1-\varepsilon_1)b_1+\varepsilon_1 a_1,\ldots,(1-\varepsilon_d)b_d+\varepsilon_d a_d\bigr), \label{eq:vertex}
\end{align}
and write $|\varepsilon|:=\varepsilon_1+\cdots+\varepsilon_d$. Then
\[
\eta(R(a,b))=\sum_{\varepsilon\in\{0,1\}^d}(-1)^{|\varepsilon|} F_\eta(x_\varepsilon).
\]
\end{lemma}

\begin{proof}
For each $j=1,\ldots,d$, we have the elementary identity
\[
\bone_{[a_j,b_j)}(t)=\bone_{[0,b_j)}(t)-\bone_{[0,a_j)}(t), \quad t\in[0,1].
\]
Taking the product over \(j=1,\ldots,d\), we obtain
\[
\bone_{R(a,b)}(x)=\prod_{j=1}^d \bigl(\mathbf{1}_{[0,b_j)}(x_j)-\mathbf{1}_{[0,a_j)}(x_j)\bigr), \quad x=(x_1,\ldots,x_d)\in[0,1]^d.
\]
Expanding the product yields
\[
\bone_{R(a,b)}(x) = \sum_{\varepsilon\in\{0,1\}^d} (-1)^{|\varepsilon|}
\bone_{\prod_{j=1}^d [0,(1-\varepsilon_j)b_j+\varepsilon_j a_j)}(x). 
\]
Integrating both sides with respect to $\eta$ gives
\[
\eta(R(a,b))=\sum_{\varepsilon\in\{0,1\}^d}(-1)^{|\varepsilon|} F_\eta(x_\varepsilon),
\]
as desired.
\end{proof}

\begin{cor}
\label{cor:BoxDiscByStarDisc}
Suppose ${\bs u_n} = \{u_1, \ldots, u_n\} \subset [0, 1]^d$ be a set of vectors and $\mu_n = \frac{1}{n}\sum_{i = 1}^n \delta_{u_i}$. For every half-open cube $R(a, b)$ defined as in~\eqref{eq:halfopencube} one has
\[
|\mu_n(R(a,b))-\lambda^d(R(a,b))|\leq 2^d\, \mathcal{D}^*({\bs u_n}). 
\]
\end{cor}

\begin{proof}
Apply Lemma~\ref{lem:InclusionExclusionBox} to the finite signed measure $\eta:=\mu_n-\lambda^d$. Then
\[
\eta(R(a,b)) =\sum_{\varepsilon\in\{0,1\}^d}(-1)^{|\varepsilon|}F_\eta(x_\varepsilon),
\]
where $x_\varepsilon$ and $F_\eta$ are defined as in~\eqref{eq:vertex} and~\eqref{eq:MeasureofCube} respectively. Hence
\[
|\eta(R(a,b))| \leq \sum_{\varepsilon\in\{0,1\}^d} |F_\eta(x_\varepsilon)| \leq \sum_{\varepsilon\in\{0,1\}^d} \mathcal{D}^*({\bs u_n})=2^d \mathcal{D}^*({\bs u_n}),
\]
as desired. 
\end{proof}

In the following, we show that condition~\ref{ass:subweibull} of Theorem~\ref{thm:ConsistencyGeneralDim} is implied by a curvature condition on the population Brenier potentials commonly used in the literature.  Let $\varphi_X:\Omega_X\to\mathbb{R}$, $\varphi_Y:\Omega_Y\to\mathbb{R}$ be Brenier potentials such that $\nabla\varphi_X{\#} P^X=\lambda^{d_X}$, $\nabla\varphi_Y{\#}P^Y=\lambda^{d_Y}$. If $\varphi_X$ and $\varphi_Y$ are \textit{$(\beta_i,K_i)$-strongly convex}, in the sense of Definition~\ref{def:stronglyconvex} below, for some $K_i>0$ and $\beta_i\in(1,2]$, $i=1,2$, then the argument of~\citet[Theorem~3]{balakrishnan2025stability} can be adapted to yield a convergence rate of the empirical optimal transport map with similar form as Proposition~\ref{prop:HolderOTConvergence}. Proposition~\ref{prop:stronglymonotone} below shows that this condition is, in fact, more restrictive than Assumption~\ref{ass:subweibull}. Indeed, by Proposition~\ref{prop:stronglymonotone}, $(\beta_i,K_i)$-strong convexity implies that, for all $x,x'\in\Omega_X$ and $y,y'\in\Omega_Y$,
\[
\|\nabla\varphi_X(x)-\nabla\varphi_X(x')\|_2
\geq
\frac{K_1}{2}\|x-x'\|_2^{\beta_1-1},
\qquad
\|\nabla\varphi_Y(y) - \nabla\varphi_Y(y')\|_2
\geq
\frac{K_2}{2}\|y-y'\|_2^{\beta_2-1}.
\]
Since the gradient of the Brenier maps take values in $[0,1]^{d_X}$ and $[0,1]^{d_Y}$, respectively, the left-hand sides are uniformly bounded. Consequently, the above inequalities imply that, in contrast to the tail condition~\ref{ass:subweibull}, the curvature condition effectively restricts the marginals to have bounded supports.

\begin{defn}\label{def:stronglyconvex}
Let $C$ be a nonempty convex set in $\mathbb{R}^n$. For $K >0$ and $\beta \in (1, 2]$, a function $f: C \to \mathbb{R}$ is said to be $(\beta, K)$-strongly convex if for any $x, y \in C$ and $\alpha \in (0, 1)$, it holds that 
\[
f(\alpha x + (1-\alpha)y) \leq \alpha f(x) + (1-\alpha) f(y) - \frac{K\alpha(1-\alpha)}{2}\|x-y\|_2^\beta.
\]
\end{defn}

\begin{prop}\label{prop:strongly-conv-equiv}
Let $f$ be a function differentiable on an open set $\Omega \subseteq \mathbb{R}^n$, and let $C$ be a convex subset of $\Omega$. Suppose $\beta \in (1, 2]$ and $K > 0$ are constants, then the following are equivalent: 
\begin{enumerate}[label = (\roman*), ref = \theprop(\roman*)]
    \item $f$ is $(\beta, K)$-strongly convex on $C$,
    \item\label{prop:stronglyconvex} for any $x, y \in C$
    \[
    f(x) \geq f(y) + \langle \nabla f(y), x- y \rangle + \frac{K}{2}\|x-y\|_2^\beta
    \]
    \item\label{prop:stronglymonotone} for any $x, y \in C$
    \[
    \langle \nabla f (x) - \nabla f(y), x- y\rangle \geq K \|x-y\|_2^{\beta}  
    \]
\end{enumerate}
\end{prop}
\begin{proof}
    $(i) \Rightarrow (ii)$: Let $h = x - y \in C$, then we have for any $\alpha \in (0, 1)$
    \[
    f(y + \alpha h) \leq \alpha f(x) + (1 - \alpha)f(y) - \frac{K\alpha (1-\alpha)}{2}\|x - y\|_2^\beta. 
    \]
    Rearranging gives 
    \[
    f(x) \geq f(y) + \frac{f(y + \alpha h)-f(y)}{\alpha} + \frac{K (1-\alpha)}{2}\|x - y\|_2^\beta.
    \]
    Since $f$ is differentiable, when $\alpha \to 0$, we have $\frac{f(y + \alpha h)-f(y)}{\alpha}  \to \langle \nabla f(y), x - y \rangle$, which implies (ii). 

    $(ii) \Rightarrow (i)$: For any $x, y \in C$ and $\alpha \in (0, 1)$, let $z = \alpha x + (1-\alpha)y$. Then we have 
    \begin{align}
        f(x) \geq f(z) + \langle \nabla f(z), x - z \rangle + \frac{K}{2}\|x-z\|_2^\beta \label{ineq:strongcon1}
    \end{align}
    and 
    \begin{align}
        f(y) \geq f(z) + \langle \nabla f(z), y - z \rangle + \frac{K}{2}\|y-z\|_2^\beta. \label{ineq:strongcon2}
    \end{align}
    Multiplying~\eqref{ineq:strongcon1} by $\alpha$ and~\eqref{ineq:strongcon2} by $(1 - \alpha)$ and adding them up, we obtain that 
    \[
    \alpha f(x)+(1-\alpha )f(y)\geq f(z)+  \frac{K}{2} (\alpha \|x-z\|_2^\beta+(1-\alpha)\|y-z\|_2^\beta).
    \]
    Note that $x - z = (1-\alpha)(x-y)$ and $y - z = \alpha (y - x)$, we have 
    \[
    \alpha f(x)+(1-\alpha )f(y)\geq f(z)+  \frac{K}{2} \alpha (1-\alpha)(\alpha^{\beta-1} +(1-\alpha)^{\beta-1})\|x-y\|_2^\beta.
    \]
    And the result follows by noting that $\alpha^{\beta-1} +(1-\alpha)^{\beta-1} \geq \alpha + 1-\alpha = 1$. 

    $(ii) \Rightarrow (iii)$: By symmetry, we have 
    \[
    f(x) \geq f(y) + \langle \nabla f(y), x- y \rangle + \frac{K}{2}\|x-y\|_2^\beta,
    \]
    and 
    \[
    f(y) \geq f(x) + \langle \nabla f(x), y - x \rangle + \frac{K}{2}\|x-y\|_2^\beta.
    \]
    Therefore, adding two inequalities gives 
    \[
    \langle \nabla f(x) - \nabla f(y), x - y \rangle \geq K \|x- y\|_2^\beta, 
    \]
    which is (iii). 

    $(iii) \Rightarrow (ii)$: for any $x, y \in C$, let $h = x-y \in C$. Write $\psi (\alpha) = f(y + \alpha h)$, then $\psi'(\alpha) = \langle \nabla f(y+ \alpha h), h \rangle$. By (iii), we have 
    \[
    \langle \nabla f(y + \alpha h) - \nabla f(y),  h\rangle \geq K \alpha^{\beta-1} \|h\|_2^\beta.
    \]
    Hence $\psi'(\alpha) - \psi'(0) \geq K \alpha^{\beta-1}\|h\|_2^\beta$. Then function $\gamma(\alpha):= \psi(\alpha) - \psi(0) - \alpha \psi'(0) - \frac{K}{\beta}\|h\|_2^{\beta}\alpha^\beta$ is increasing on $[0, 1]$. Hence $\gamma(1) \geq \gamma(0) = 0$ implies that 
    \[
    f(x)- f(y) - \langle \nabla f(y),x-y\rangle \geq \frac{K}{\beta}\|x-y\|_2^\beta \geq \frac{K}{2}\|x-y\|_2^\beta,
    \]
    as desired. 
\end{proof}

\section{Ancillary lemmas}
The following lemma shows that the $f$-divergence is between the joint distribution and product of marginal distributions is preserved under the Monge--Kantorovich rank transform.

\begin{lemma}
    \label{Lemma:DataProcessing}
    Let $(X,Y)$ be a pair of jointly distributed random variables on $\mathbb{R}^{d_X}\times\mathbb{R}^{d_Y}$. Let $U$ and $V$ be continuous random variables with distribution $P^U$ on $\mathbb{R}^{d_X}$ and $P^V$ on $\mathbb{R}^{d_Y}$ respectively, chosen such that $U\indep V\mid(X,Y)$ and that 
    \begin{equation}
    \label{Eq:DataProcessingCoupling}
    U\in\argmin_{\tilde U\sim P^U} \mathbb{E}\|X-\tilde U\|_2^2\quad \text{and}\quad V\in\argmin_{\tilde V\sim P^V}\mathbb{E}\|Y - \tilde V\|_2^2.
    \end{equation}
    Let $P^{(X,Y)}$ be the joint distribution of $(X,Y)$ with marginals $P^X$ and $P^Y$, and similarly $P^{(U,V)}$, $P^U$, $P^V$ the joint and marginal distributions of $(U,V)$. Then for any convex function $f:\mathbb{R}\to \mathbb{R}\cup\{\infty\}$ such that $f(1)=0$ (cf. Definition~\ref{def:f-divergence}, we have
    \[ 
    D_f(P^{(X,Y)}\,\|\,P^X\otimes P^Y) = D_f(P^{(U,V)}\,\|\,P^U\otimes P^V).
    \]
\end{lemma}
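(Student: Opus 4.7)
The plan is to sandwich the equality by applying the data processing inequality for $f$-divergences in both directions, using complementary information channels between the pairs $(X,Y)$ and $(U,V)$. I would interpret the joint distribution of $(X,Y,U,V)$ in the natural way: $U$ is sampled from the optimal-transport conditional $P^{U\mid X}$ depending only on $X$, $V$ is sampled from $P^{V\mid Y}$ depending only on $Y$, and these samplings are independent given $(X,Y)$. This realisation is consistent with the stated condition $U\indep V\mid(X,Y)$ and with the fact that the minimisations in~\eqref{Eq:DataProcessingCoupling} only constrain the bivariate couplings $(X,U)$ and $(Y,V)$; in particular, under this realisation $U\indep Y\mid X$ and $V\indep X\mid Y$ hold as well.

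For the forward inequality, I would define the Markov kernel $K$ from $\mathbb{R}^{d_X}\times\mathbb{R}^{d_Y}$ to itself by $K((x,y),\cdot):=P^{U\mid X=x}\otimes P^{V\mid Y=y}$. The three conditional independences above imply that $K$ pushes $P^{(X,Y)}$ forward to $P^{(U,V)}$; by Fubini, the same kernel pushes the product measure $P^X\otimes P^Y$ to $P^U\otimes P^V$. The data processing inequality for $f$-divergence \citep[Theorem~7.4]{polyanskiy2024information} then yields
\[
D_f(P^{(U,V)}\,\|\,P^U\otimes P^V) \leq D_f(P^{(X,Y)}\,\|\,P^X\otimes P^Y).
\]

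For the reverse inequality, I would invoke Brenier's theorem. Since $P^U$ is absolutely continuous on $\mathbb{R}^{d_X}$, there exists a convex function $\phi:\mathbb{R}^{d_X}\to\mathbb{R}$ with $\nabla\phi_\# P^U = P^X$ such that the unique optimal quadratic-cost coupling between $P^U$ and $P^X$ is concentrated on the graph of $\nabla\phi$; in particular $X=\nabla\phi(U)$ almost surely under the chosen joint distribution. Analogously, $Y=\nabla\psi(V)$ almost surely for a convex $\psi$. The measurable map $T(u,v):=(\nabla\phi(u),\nabla\psi(v))$ then satisfies $T_\# P^{(U,V)} = P^{(X,Y)}$ and $T_\#(P^U\otimes P^V) = P^X\otimes P^Y$ (the former from the almost sure identities, the latter because $P^U\otimes P^V$ factorises and each component map realises the required marginal pushforward). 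Applying data processing once more to the deterministic kernel induced by $T$ gives the matching reverse inequality, and the two bounds combine to yield equality.

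The main obstacle is the reverse step: Brenier's theorem is the decisive ingredient, and one needs to verify that its hypotheses are in force (absolute continuity of $P^U$ is given by assumption, while a finite-moment condition is implicit in the argmin being attained). A secondary subtlety is the coupling interpretation, since $U\indep V\mid(X,Y)$ alone does not formally force $U\indep Y\mid X$ and $V\indep X\mid Y$; these further independences are what the natural reading of the argmin conditions in~\eqref{Eq:DataProcessingCoupling} supplies, and without them the Markov kernel in the forward step would fail to produce the correct pushforward.
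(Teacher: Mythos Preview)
Your proposal is correct and follows essentially the same route as the paper: apply the data processing inequality through the product kernel $K((x,y),\cdot)=P^{U\mid X=x}\otimes P^{V\mid Y=y}$ for one direction, then invoke Brenier's theorem to obtain $X=\nabla\phi(U)$ and $Y=\nabla\psi(V)$ and run the symmetric argument for the reverse inequality. You are, if anything, more explicit than the paper about the additional conditional independences $U\indep Y\mid X$ and $V\indep X\mid Y$ needed for the forward kernel to act correctly; the paper silently builds these into its reading of~\eqref{Eq:DataProcessingCoupling}.
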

\begin{proof}
    Let $\pi$ be the joint coupling (joint distribution) of $(X,U)$ and $\gamma$ the coupling of $(Y,V)$ defined through the solution of the optimal transport problem in~\eqref{Eq:DataProcessingCoupling}. Let $P^{U\mid X=x}$ and $P^{V\mid Y=y}$ be the corresponding conditional distributions
of $U$ given $X=x$ and $V$ given $Y=y$ respectively. Note that these conditional distributions
are well-defined up to a $P^X$-measure 0 set of $x$-values and $P^Y$-measure 0 set of $y$-values. The fact that $U\indep V\mid (X,Y)$ means that $P^{U\mid X=x}\otimes P^{V\mid Y=y}$ is the conditional distribution of $(U,V)$ given $(X,Y)=(x,y)$. Therefore, we have
\begin{align*}
P^{(U,V)} &= \int_{\mathbb{R}^{d_X}\times \mathbb{R}^{d_Y}} P^{U\mid X=x}\otimes P^{V\mid Y=y}\, \mathrm{d}P^{(X,Y)}(x,y)\\
P^U\otimes P^V &= \int_{\mathbb{R}^{d_X}\times \mathbb{R}^{d_Y}} P^{U\mid X=x}\otimes P^{V\mid Y=y}\, \mathrm{d}(P^X\otimes P^Y)(x,y).
\end{align*}
By the data processing inequality \citep[Theorem~7.4]{polyanskiy2024information}, we thus have
\[  
D_f(P^{(X,Y)}\,\|\,P^X\otimes P^Y) \geq D_f(P^{(U,V)}\,\|\,P^U\otimes P^V).
\]

On the other hand, since $U$ is absolutely continuous with respect to the Lebesgue measure, by Brenier's Theorem \citep[see, e.g.][Theorem~2.12]{villani2021topics}, there exists a convex function $\phi: \mathbb{R}^{d_X}\to \mathbb{R}$ such that $d\pi(x,u) = dP^U(u)\delta_{\{x=\nabla \phi(u)\}}$. In other words, the optimal transport from $U$ to $X$ is the function $\nabla \phi$ (which is $P^U$-almost everywhere uniquely defined), and so $X = \nabla\phi(U)$. Similarly, we have $Y=\nabla \psi(V)$ for some
convex function $\psi:\mathbb{R}^{d_Y}\to\mathbb{R}$. Consequently, we have that conditional on $(U,V)$, $X$ and $Y$ are deterministic, so in particular, conditionally independent. This allows us to run a symmetric argument with the conditional distribution of $(X,Y)$ given $(U,V)$ to obtain a data processing inequality in the reverse direction, thus establishing the desired equality.
\end{proof}

The next lemma shows the stability of the $f$-divergence $D_f(P\,\|\,Q)$ with respect to total-variation perturbation of $P$ and $Q$ when the generator function $f$ is bounded Lipschitz in $[0,\infty)$. 
\begin{lemma}
\label{Lemma:DivergenceStability}
    Suppose $f:[0,\infty)\to [0,M]$ is convex and $L$-Lipschitz. Then for any probability measures $P, Q, P', Q'$ such that $Q'$ is absolutely continuous with respect to $Q$, we have 
    \[  
    |D_f(P\,\|\, Q) - D_f(P'\,\|\, Q')| \leq 2L\,d_{\mathrm{TV}}(P, P') + (4M+L) \sqrt{d_{\mathrm{TV}}(Q, Q')}.
    \]
\end{lemma}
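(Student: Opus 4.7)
The plan is to split via the triangle inequality
\[
|D_f(P\|Q) - D_f(P'\|Q')| \leq |D_f(P\|Q) - D_f(P'\|Q)| + |D_f(P'\|Q) - D_f(P'\|Q')|
\]
and handle the two differences separately. I would work throughout with a common dominating measure $\lambda := P + P' + Q + Q'$ and write $p, p', q, q'$ for the respective densities. Since $f$ is bounded by $M$, the asymptotic slope $f'(\infty)=\lim_{t\to\infty} f(t)/t$ equals $0$, so Definition~\ref{def:f-divergence} reduces to $D_f(\mu\|\nu) = \int_{\{n>0\}} f(m/n)\,n\,d\lambda$ whenever $\mu,\nu$ have densities $m,n$ with respect to $\lambda$. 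The first difference is then immediate: with a common denominator $q$, the Lipschitz property gives $|f(p/q) - f(p'/q)|\,q \leq L|p-p'|$ pointwise, so integrating yields $|D_f(P\|Q) - D_f(P'\|Q)| \leq L\int|p-p'|\,d\lambda = 2L\,d_{\mathrm{TV}}(P,P')$.

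The main obstacle is the second difference, since the densities $p'/q$ and $p'/q'$ may be unbounded and naive Lipschitz control in the denominator diverges. The key observation I would exploit is that $f:[0,\infty)\to[0,M]$ convex with $f(1) = 0$ and bounded must satisfy $f\equiv 0$ on $[1,\infty)$: otherwise $f(x_0) > 0$ at some $x_0 > 1$ would, by convexity, force the chord slope between $(1,0)$ and $(x_0,f(x_0))$ to bound $f$ from below on $[x_0,\infty)$, contradicting $f\leq M$. Consequently the perspective function $g_h(r):=rf(h/r)$ (extended by $g_h(0):=0$) vanishes on $[0,h]$, while on $(h,\infty)$ its derivative $g_h'(r) = f(h/r) - (h/r)f'(h/r)$ satisfies $|g_h'(r)|\leq M+L$, since $h/r<1$ together with $|f|\leq M$ and $|f'|\leq L$. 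Thus $g_h$ is $(M+L)$-Lipschitz on $[0,\infty)$, yielding the uniform bound $|f(h) - rf(h/r)| \leq (M+L)|r-1|$ for every $h\geq 0$.

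Using this estimate together with the hypothesis $Q'\ll Q$, I would set $r:=q'/q$ and $h:=p'/q$ on $\{q,q'>0\}$ and decompose
\[
D_f(P'\|Q) - D_f(P'\|Q') = \int_{\{q,q'>0\}} q\bigl[f(h) - rf(h/r)\bigr]\,d\lambda + \int_{\{q>0,\,q'=0\}} f(p'/q)\,q\,d\lambda.
\]
The first integral is bounded by $(M+L)\int|q-q'|\,d\lambda = 2(M+L)\,d_{\mathrm{TV}}(Q,Q')$ since $q|r-1|=|q-q'|$, and the boundary term is at most $M\cdot Q(\{q'=0\}) \leq M\,d_{\mathrm{TV}}(Q,Q')$ by the definition of total variation (together with $Q'(\{q'=0\})=0$). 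This gives the linear bound $|D_f(P'\|Q) - D_f(P'\|Q')| \leq (3M+2L)\,d_{\mathrm{TV}}(Q,Q')$. Combining with the trivial bound $|D_f(P'\|Q) - D_f(P'\|Q')|\leq 2M$ (since $D_f\in[0,M]$ when $f\in[0,M]$) and the elementary inequality $\min\{(3M+2L)t,\,2M\} \leq (4M+L)\sqrt{t}$ for $t\in[0,1]$ (a short case split around $t^*=2M/(3M+2L)$, ultimately reducing to $10M^2+4ML+L^2\geq 0$), the desired $\sqrt{d_{\mathrm{TV}}(Q,Q')}$-type estimate follows, completing the proof.
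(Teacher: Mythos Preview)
Your proof is correct and follows a genuinely different route from the paper's. The paper handles the $Q$-perturbation by a good-set/bad-set decomposition: on $\mathcal{A}=\{|q'-q|\le \epsilon_Q^{1/2}\}$ it applies the Lipschitz bound $|f(p/q)-f(p/q')|\,q'\le L|q-q'|p$, on $\mathcal{A}^{\mathrm{c}}$ it uses only $|f|\le M$, and then controls $Q(\mathcal{A}^{\mathrm{c}})$ by Markov's inequality. This naturally produces the $\sqrt{\epsilon_Q}$ dependence. You instead exploit the structural fact (valid because $D_f$ requires $f(1)=0$) that a convex $f:[0,\infty)\to[0,M]$ with $f(1)=0$ must vanish identically on $[1,\infty)$, which makes the perspective map $r\mapsto rf(h/r)$ globally $(M+L)$-Lipschitz in $r$. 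This yields the sharper linear estimate $|D_f(P'\|Q)-D_f(P'\|Q')|\le (3M+2L)\,d_{\mathrm{TV}}(Q,Q')$, which you then weaken via the $\min$ with $2M$ to recover the stated $(4M+L)\sqrt{d_{\mathrm{TV}}(Q,Q')}$. Your argument is more direct and in fact proves more; the paper's approach has the minor advantage of not invoking $f(1)=0$, but since that condition is built into the definition of $D_f$ this costs nothing here.
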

\begin{proof}
    For notational convenience, we write $\epsilon_P := d_{\mathrm{TV}}(P,P')$ and $\epsilon_Q := d_{\mathrm{TV}}(Q,Q')$. Since $f$ is convex and bounded, it must be decreasing on $[0, \infty)$ with $\lim_{t\to\infty} t^{-1}f(t) \to 0$, so singular components of $P$ and $P'$ have no contribution in the $f$ divergence. Let $P_{\mathrm{ac}}$ and $P'_{\mathrm{ac}}$ be the absolutely continuous part of $P$ and $P'$ with respect to $Q$, and let $p, p', q, q'$ be densities of $P_{\mathrm{ac}}, P'_{\mathrm{ac}}, Q, Q'$ with respect to $Q$ (note $q\equiv 1$). Then we have
    \[
    D_f(P\,\|\, Q) = \int f\biggl(\frac{p(x)}{q(x)}\biggr)\,q(x)\, \mathrm{d}Q(x) \quad \text{and}\quad  D_f(P'\,\|\, Q') = \int f\biggl(\frac{p'(x)}{q'(x)}\biggr)\,q'(x)\, \mathrm{d}Q(x),
    \]
    where $f(\infty)$ is interpreted as $\lim_{t\to\infty}f(t)$, which exists since $f$ is decreasing and bounded from below. Thus, we have 
    \begin{align*}
        D_f(P\,\|\, Q) - D_f(P'\,\|\, Q') &= \int f\biggl(\frac{p(x)}{q(x)}\biggr)\,(q(x) - q'(x))\, \mathrm{d}Q(x)\\
        &\quad + \int \biggl\{f\biggl(\frac{p(x)}{q(x)}\biggr) - f\biggl(\frac{p(x)}{q'(x)}\biggr)\biggr\}\,q'(x)\, \mathrm{d}Q(x)\\
       & \quad  + \int \biggl\{f\biggl(\frac{p(x)}{q'(x)}\biggr) - f\biggl(\frac{p'(x)}{q'(x)}\biggr)\biggr\}\,q'(x)\, \mathrm{d}Q(x) =: I_1+ I_2+ I_3.
    \end{align*}
    For the first terms on the right-hand side, we have
    \[
        |I_1| \leq M\int |q(x)-q'(x)|\, \mathrm{d}Q(x) = 2M\epsilon_Q.
    \]
    For the last term, using the fact that $f$ is Lipschitz, we have
    \[  
     |I_3| \leq L\int \biggl|\frac{p(x)}{q'(x)}-\frac{p'(x)}{q'(x)}\biggr| \, q'(x) \, \mathrm{d}Q(x) \leq 2L\epsilon_P.
    \]
    For the second term, writing $\mathcal{A}:=\{x:|q'(x)-q(x)|\leq \epsilon_Q^{1/2}\}$ and using the fact that $q(x) \equiv 1$, we have 
    \begin{align*}
    |I_2| &\leq L \int_{\mathcal{A}} \biggl|\frac{p(x)}{q(x)} - \frac{p(x)}{q'(x)}\biggr|\,q'(x)\, \mathrm{d}Q(x) + M \int_{\mathcal{A}^{\mathrm{c}}} q'(x)\, \mathrm{d}Q(x)\\
    &\leq L\int_{\mathcal{A}} |q(x)-q'(x)|p(x)\, \mathrm{d}Q(x) + M\int |q'(x)-q(x)|\, \mathrm{d}Q(x) + M\int_{\mathcal{A}^{\mathrm{c}}} q(x) \, \mathrm{d}Q(x)\\
    &\leq L\epsilon_Q^{1/2} + M\epsilon_Q + M\epsilon_Q^{1/2},
    \end{align*}
    where we used Markov's inequality in the final step. The desired result is obtained by combining the bounds for $I_1$, $I_2$ and $I_3$.
\end{proof}
The following lemma studies locations of order statistics of a sample drawn from a distribution close in total variation to $\mathrm{Unif}[0,1]$.
\begin{lemma}
    \label{Lemma:DKW}
    Suppose $P$ is a distribution on $[0,1]$ such that $d_{\mathrm{TV}}(P, \mathrm{Unif}[0,1]) \leq \epsilon$. For observations $X_1,\ldots,X_n\iid P$, let $X_{(1)}\leq \cdots\leq X_{(n)}$ denote their order statistics. Then for any $t>0$, we have
\[  
\mathbb{P}\biggl(\max_{i\in[n]} \biggl|X_{(i)} - \frac{i}{n}\biggr| \geq \epsilon + t\biggr) \leq 2e^{-2nt^2}.
\]
\end{lemma}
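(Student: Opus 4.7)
The plan is to reduce the problem to the classical Dvoretzky--Kiefer--Wolfowitz (DKW) inequality via the empirical distribution function and a triangle inequality that absorbs the $\epsilon$ term.

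Let $F$ denote the CDF of $P$ and $F_n(x) := n^{-1}\sum_{i=1}^n \mathbbm{1}\{X_i \leq x\}$ the empirical CDF of $X_1,\ldots,X_n$. The first step is to observe that $\sup_{x\in[0,1]}|F(x) - x| \leq \epsilon$. This is because $|F(x) - x| = |P((-\infty,x]) - \mathrm{Unif}[0,1]((-\infty,x])| \leq d_{\mathrm{TV}}(P,\mathrm{Unif}[0,1]) \leq \epsilon$, where we use the fact that total variation distance is the supremum of $|P(A) - Q(A)|$ over measurable sets $A$.

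The second step applies the DKW inequality \citep{dvoretzky1956asymptotic,massart1990tight} in its sharp (Massart) form to obtain
\[
\mathbb{P}\bigl(\sup_{x}|F_n(x) - F(x)| \geq t\bigr) \leq 2e^{-2nt^2},
\]
valid for any distribution $F$. Combining with the previous step via the triangle inequality,
\[
\sup_{x}|F_n(x) - x| \leq \sup_{x}|F_n(x) - F(x)| + \sup_{x}|F(x) - x|,
\]
yields $\mathbb{P}\bigl(\sup_{x}|F_n(x) - x| \geq \epsilon + t\bigr) \leq 2e^{-2nt^2}$.

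The final step converts the Kolmogorov--Smirnov deviation into a bound on order statistics. Assume momentarily there are no ties (a $P$-almost sure event after handling atoms by perturbation, or one can simply work directly). For any $i \in [n]$, evaluating at $x = X_{(i)}$ gives $F_n(X_{(i)}) = i/n$, so on the good event, $|i/n - X_{(i)}| = |F_n(X_{(i)}) - X_{(i)}| \leq \epsilon + t$. Taking the maximum over $i$ yields the desired bound $\max_{i\in[n]}|X_{(i)} - i/n| \leq \epsilon + t$. No step presents a genuine obstacle here; the only minor care needed is the potential presence of atoms in $P$, which can be handled by a standard left/right continuity argument on $F_n$ (e.g.\ controlling $F_n(X_{(i)}^-)$ and $F_n(X_{(i)})$ separately), but the bound $|X_{(i)} - i/n| \leq \sup_x|F_n(x) - x|$ still holds.
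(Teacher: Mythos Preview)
Your proposal is correct and follows essentially the same route as the paper: bound $\sup_x|F(x)-x|$ by the total variation distance, apply the DKW--Massart inequality to control $\sup_x|F_n(x)-F(x)|$, combine via the triangle inequality, and then read off the order-statistic bound using $F_n(X_{(i)})=i/n$. The paper organises the triangle inequality slightly differently (splitting $|X_{(i)}-i/n|$ directly rather than first bounding $\sup_x|F_n(x)-x|$), but the ingredients and logic are the same.
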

\begin{proof}
    Let $F$ be the distribution function of $P$ and $F_n$ the empirical c.d.f.\ of $X_1,\ldots,X_n$. We then have
    \begin{align*}
    \max_{i\in[n]} \biggl| X_{(i)} - \frac{i}{n}\biggr| &\leq \max_{i\in[n]} \biggl| X_{(i)} - F(X_{(i)})\biggr| + \max_{i\in[n]} \biggl| F(X_{(i)}) - F_n(X_{(i)})\biggr| \\
    &\leq \sup_{x\in[0,1]}|x-F(x)| + \sup_{x\in[0,1]} |F(x)-F_n(x)| \\
    &\leq d_{\mathrm{TV}}(P, \mathrm{Unif}[0,1]) + \sup_{x\in[0,1]} |F(x)-F_n(x)|.
    \end{align*}
    Therefore, we have 
    \[
    \mathbb{P}\biggl(\max_{i\in[n]} \biggl|X_{(i)} - \frac{i}{n}\biggr| \geq \epsilon + t\biggr) \leq \mathbb{P}\biggl(\sup_{x\in[0,1]} \biggl|F(x) - F_n(x)\biggr| \geq t\biggr)  \leq 2e^{-2nt^2},
    \]
    where the final bound uses the Dvoretzky--Kiefer--Wolfowitz--Massart--Reeve inequality \citep{dvoretzky1956asymptotic,massart1990tight,reeve2024short}.
\end{proof}
The following lemma shows that we can construct disjoint (randomised) intervals around each order statistic of a uniform sample $U_1,\ldots, U_n$ on $[0,1]$, such that after deleting a subset of these intervals, the remaining $U_i$'s are uniformly distributed in the carved out set.
\begin{lemma}
\label{Lemma:CarveOut}
    Suppose $U_1,\ldots,U_n\iid \mathrm{Unif}([0,1])$ and $P_1,\ldots,P_{n+1}\iid \mathrm{Beta}(1/2,1/2)$ are independent. Let $U_{(1)},\ldots,U_{(n)}$ be order statistics of $(U_i)_{i\in[n]}$ with the convention that $U_{(0)}:=0$ and $U_{(n+1)}:=1$ and set $G_i := U_{(i)} - (U_{(i)} - U_{(i-1)})P_i$. Given $\mathcal{I}\subseteq[n]$, define a mapping 
    \[
    g:[0,1]\setminus \bigcup_{i\in \mathcal{I}} [G_i, G_{i+1}) \to[0,1] 
    \]
    by 
    \[
    g(x) = \frac{x - \sum_{i\in\mathcal{I}}(G_{i+1}-G_i)\mathbbm{1}\{x\geq G_{i+1}\}}{1-\sum_{i\in\mathcal{I}}(G_{i+1}-G_i)}.
    \]
    Then, $\{g(U_{(i)}):i\notin \mathcal{I}\}$ are order statistics of an independent and identically distributed sample from $\mathrm{Unif}[0,1]$.
\end{lemma}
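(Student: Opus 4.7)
The plan is to use the classical Beta--Gamma identity to express the interleaved sequence $G_1 \le U_{(1)} \le G_2 \le \cdots \le U_{(n)} \le G_{n+1}$ as normalised partial sums of $2(n+1)$ iid $\mathrm{Gamma}(1/2)$ variables. Carving out the intervals indexed by $\mathcal{I}$ then corresponds to deleting certain of these variables, after which the residual sums (suitably renormalised) automatically yield uniform order statistics of the correct sample size. The map $g$ is just the affine renormalisation that implements this, so the conclusion follows from careful index bookkeeping.

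Concretely, I would first realise the uniform order statistics through the standard Dirichlet construction: take $E_1,\ldots,E_{n+1}\iid\mathrm{Exp}(1)$ independent of $(P_i)$, let $S_i=E_1+\cdots+E_i$, and couple so that $U_{(i)}=S_i/S_{n+1}$. The Beta--Gamma identity (if $E\sim\mathrm{Gamma}(\alpha+\beta)$ and $P\sim\mathrm{Beta}(\alpha,\beta)$ are independent, then $EP\sim\mathrm{Gamma}(\alpha)$ and $E(1-P)\sim\mathrm{Gamma}(\beta)$ are independent) applied to each $(E_i,P_i)$ with $\alpha=\beta=1/2$ yields
\[
F_{2i-1}:=E_i(1-P_i),\qquad F_{2i}:=E_i P_i,\qquad i=1,\ldots,n+1,
\]
which are $2(n+1)$ iid $\mathrm{Gamma}(1/2)$ variables. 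Writing $T_j:=F_1+\cdots+F_j$, one checks that $T_{2i}/T_{2(n+1)}=U_{(i)}$ and $T_{2i-1}/T_{2(n+1)}=G_i$, so the full interleaved sequence is given by the first $2n+1$ normalised partial sums of the $F_j$'s. Moreover, for each $k\in[n]$, the interval $[G_k,G_{k+1})$ is precisely the portion of $[0,1]$ accounted for by $F_{2k}$ and $F_{2k+1}$.

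Consequently, removing $[G_k,G_{k+1})$ for all $k\in\mathcal{I}$ and rescaling via $g$ amounts to deleting the indices $\bigcup_{k\in\mathcal{I}}\{2k,2k+1\}$ from the $F$-sequence and normalising the remaining $M:=2(n+1-|\mathcal{I}|)$ partial sums $T'_j$ by $T'_M$. Since the retained $F$-variables form an independent subcollection of an iid family, they are themselves iid $\mathrm{Gamma}(1/2)$, and by the same Dirichlet representation (now applied in reverse, pairing them consecutively into $n+1-|\mathcal{I}|$ iid $\mathrm{Exp}(1)$ increments), the even-indexed normalised partial sums $T'_{2r}/T'_M$, $r=1,\ldots,n-|\mathcal{I}|$, are order statistics of $n-|\mathcal{I}|$ iid $\mathrm{Unif}[0,1]$ variables. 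It remains to verify that for each $i\in[n]\setminus\mathcal{I}$, the image $g(U_{(i)})$ is exactly $T'_{2r_i}/T'_M$ where $r_i$ is the rank of $i$ in $[n]\setminus\mathcal{I}$: a direct count of surviving indices $\le 2i$ gives $2(i-|\mathcal{I}\cap[i-1]|)$, which equals $2r_i$ since $i\notin\mathcal{I}$. The main technical care is in this index tracking and in justifying that the retained $F_j$'s, although selected according to a pattern determined by $\mathcal{I}$, really are iid $\mathrm{Gamma}(1/2)$; this holds simply because $\mathcal{I}$ is deterministic and subfamilies of iid families are iid. Everything else is routine.
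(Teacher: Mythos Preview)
Your proposal is correct and follows essentially the same approach as the paper: both arguments scale up by an independent $\mathrm{Gamma}(n+1,1)$ variable (equivalently, represent the uniform order statistics via normalised exponential partial sums), invoke the Beta--Gamma identity to split each spacing into two independent $\mathrm{Gamma}(1/2,1)$ pieces, observe that deleting the intervals indexed by $\mathcal{I}$ removes a deterministic subcollection of these pieces, and then re-pair the survivors into fresh $\mathrm{Exp}(1)$ increments to recover the Dirichlet/uniform representation. Your partial-sum indexing and the explicit verification that $g(U_{(i)})=T'_{2r_i}/T'_M$ are slightly more detailed than the paper's version, but the underlying mechanism is identical.
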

\begin{proof}
    Let $M\sim \mathrm{Gamma}(n+1,1)$ be independent from other randomness in the problem. Write $S_i := U_{(i)} M$ and $T_i:=G_i M$, then 
    \[  
    T_1 - S_0, S_1 - T_1, T_2 - S_1, S_2 - T_2, \ldots, T_{n+1}-S_n, S_{n+1}-T_{n+1}
    \]
    are $2n+2$ independent $\mathrm{Gamma}(1/2,1)$ random variables. Moreover,
    \[  
    D := [S_{(0)}, S_{(n+1)}] \setminus \bigcup_{i\in\mathcal{I}} [T_i, T_{i+1}),
    \]
    is exactly the domain of $g$ scaled by $M$. For every $i\in[n+1]$, define its predecessor $\mathrm{pred}(i):=\max\{i'\leq i: i'\notin I\}$. Let $L_i$ denote the Lebesgue measure of $[T_{(\mathrm{pred}(i)+1)}, T_{(i+1)}]\cap D$. We note that each $L_i = (S_i - T_i) + (T_{\mathrm{pred}(i)+1} - S_{\mathrm{pred}(i)})$ is the sum of two independent $\mathrm{Gamma}(1/2,1)$ increments and distinct increments are used to compute $L_i$ for different $i$. Hence,  $L_i: i\notin \mathcal{I}$ are independent $\mathrm{Exp}(1)$ random variables, which is equivalent to the desired result after rescaling.
\end{proof}

The next two lemmas show how the covered area changes under simple operations. First, we show that the covered area decreases by a small amount when we delete narrow horizontal and vertical strips in $[0,1]^2$.

\begin{lemma}
    \label{Lemma:VolumeCompress}
    For $(x_1,y_1),\ldots,(x_n,y_n) \in [0,1]^2$, let $[a,b), [c,d)\subseteq[0,1]$ be intervals such that $[a,b)\cap \{x_1,\ldots,x_n\} = \{x_n\}$ and $[c,d)\cap \{y_1,\ldots,y_n\} = \{y_n\}$. Define $g:[0,1]\setminus [a,b) \to [0, 1-(b-a)]$ by $g(x):= x- (b-a)\mathbbm{1}\{x\geq b\}$ and $h:[0,1]\setminus [c,d) \to [0,1-(d-c)]$ by $h(y):=y-(d-c)\mathbbm{1}\{y\geq d\}$. Then
    \[
    0\leq \mathrm{vol}\biggl(\bigcup_{i=1}^n B((x_i,y_i), r)\biggr) - \mathrm{vol}\biggl(\bigcup_{i=1}^{n-1} B((g(x_i),h(y_i)), r)\biggr)\leq (b-a)+(d-c) + 4r^2.
    \]
\end{lemma}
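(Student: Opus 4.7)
Let $B_i := B_\infty((x_i,y_i),r)$, $B_i' := B_\infty((g(x_i),h(y_i)),r)$, $\bar A := \bigcup_{i\leq n-1} B_i$, $A := \bar A \cup B_n$, $A' := \bigcup_{i\leq n-1} B_i'$, and $S := ([a,b)\times[0,1]) \cup ([0,1]\times[c,d))$. The plan is to compare $\vol(A)$ and $\vol(A')$ via the measure-preserving bijection $\tilde T(x,y) := (g(x),h(y))$ from $([0,1]\setminus[a,b)) \times ([0,1]\setminus[c,d))$ onto $[0,1-(b-a)] \times [0,1-(d-c)]$, which acts by a common translation on each of the four corner rectangles.

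For the upper bound, a short case analysis shows $\tilde T(B_i \setminus S) \subseteq B_i'$ for each $i \leq n-1$: when $p$ and $x_i$ lie on the same side of $[a,b)$, $g$ is locally a translation and preserves the distance; when they lie on opposite sides, the separation forces $|p-x_i| \geq b-a$, and $g$ subtracts exactly this gap so that $|g(p)-g(x_i)| = |p-x_i|-(b-a) \leq r$ (and the same applies to $h$). Taking unions and using measure preservation gives $\vol(\bar A \setminus S) \leq \vol(A')$. Combining with $\vol(A) \leq \vol(\bar A) + \vol(B_n) \leq \vol(\bar A) + 4r^2$ and $\vol(\bar A \cap S) \leq \vol(S) \leq (b-a)+(d-c)$ yields the upper bound.

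For the lower bound $\vol(A) \geq \vol(A')$, since $A \supseteq \bar A$ it suffices to show $\vol(\bar A) \geq \vol(A')$. The plan is to factor the compression into an $x$-step (shift balls with $x_i \geq b$ leftward by $b-a$) followed by a $y$-step, and show each step is volume non-increasing by slicing. For the $x$-step, at each horizontal slice $y=y_0$ the original and $x$-compressed slices take the form $U_L \cup U_R$ and $U_L \cup (U_R - (b-a))$ respectively, where $U_L$ and $U_R$ aggregate the $x$-intervals from slice-intersecting balls with $x_i < a$ and $x_i \geq b$---so $U_L \subseteq (-\infty, a+r)$ and $U_R \subseteq [b-r,\infty)$. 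The required one-dimensional inequality
\[
\vol(U_L \cup (U_R - s)) \leq \vol(U_L \cup U_R), \qquad s \in [0, b-a],
\]
reduces by two-set inclusion-exclusion to showing $U_L \cap U_R \subseteq (U_L + s) \cap U_R$. Given $x \in U_L \cap U_R$, pick a constituent interval $I_i = [c_i-r, c_i+r] \ni x$ of $U_L$ (so $c_i < a$); the chain $c_i + s - r < b - r \leq x \leq c_i + r \leq c_i + s + r$, which uses $c_i < a$, $s \leq b-a$, and $x \geq b-r$, shows $x \in I_i + s \subseteq U_L + s$. Integrating over $y_0$ gives the $x$-step, and an analogous argument on vertical slices handles the $y$-step.

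The main subtle point is the 1D inequality in the regime $b-a < 2r$ relevant to the paper (where $b-a \sim 1/n$ is dominated by $2r = n^{-1/2}$), so that $U_L$ and $U_R$ can overlap substantially and a naive disjoint-supports-plus-subadditivity argument fails. The set-inclusion proof above sidesteps this difficulty by exploiting that the shift $s \leq b - a$ is tuned precisely to the gap width.
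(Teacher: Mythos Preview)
Your proof is correct and shares the paper's high-level strategy: remove $B_n$ (contributing the $4r^2$), factor the compression into separate $x$- and $y$-steps, and analyse each via one-dimensional slices. The paper sets up the chain $\mathcal{A}_1\supseteq\mathcal{A}_2\to\mathcal{A}_3\to\mathcal{A}_4$ and proves $\vol(\mathcal{A}_2)-\vol(\mathcal{A}_3)\in[0,b-a]$ by slicing, extracting \emph{both} the upper and lower bounds from the same slice calculation. Your execution differs in two respects. First, for the upper bound you bypass the step-by-step slicing and use the measure-preserving injection $\tilde T=(g,h)$ to show $\vol(\bar A\setminus S)\leq\vol(A')$ in one shot, which is cleaner. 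Second, for the lower bound your 1D inequality $\vol(U_L\cup(U_R-s))\leq\vol(U_L\cup U_R)$ via the set inclusion $U_L\cap U_R\subseteq(U_L+s)\cap U_R$ is more transparent than the paper's case split on whether $S_1(y)\cap S_2(y)=\emptyset$; in the nonempty case the paper picks extremal interval indices $i_1,i_2$ and argues that exactly length $b-a$ is lost, whereas your inclusion handles both cases uniformly and makes the role of $s\leq b-a$ explicit.
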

\begin{proof}
    We define four sets of unions $\mathcal{A}_1 := \cup_{i=1}^n B((x_i,y_i), r)$, $\mathcal{A}_2 := \cup_{i=1}^{n-1} B((x_i,y_i), r)$, $\mathcal{A}_3 := \cup_{i=1}^{n-1} B((g(x_i),y_i), r)$ and $\mathcal{A}_4 := \cup_{i=1}^{n-1} B((g(x_i),h(y_i)), r)$. It is easy to see that $\vol(\mathcal{A}_1)-\vol(\mathcal{A}_2)\in [0, 4r^2]$, hence it suffices to show that $\vol(\mathcal{A}_2)-\vol(\mathcal{A}_3)\in [0, b-a]$ and $\vol(\mathcal{A}_3)-\vol(\mathcal{A}_4)\in [0, d-c]$. We will prove the former, and the latter follows by an essentially identical argument. 

    Let $S(y):= \{x: (x,y)\in \mathcal{A}_2\}$ and $\tilde S(y):=\{x:(x,y)\in\mathcal{A}_3\}$ be horizontal `slices' of $\mathcal{A}_2$ and $\mathcal{A}_3$ respectively. By Fubini's theorem, it suffices to check that $\lambda(S(y))-\lambda(\tilde{S}(y)) \in[0,b-a]$ for all $y$, where $\lambda$ denotes the Lebesgue measure on the real line. Observe that 
    \[  
    S(y) = \bigcup_{i\in[n-1]: |y-y_i|\leq r, x_i < a} [x_i-r,x_i+r] \cup \bigcup_{i\in[n-1]: |y-y_i|\leq r, x_i \geq b} [x_i-r,x_i+r] =: S_1(y) \cup S_2(y).
    \]
    and 
    \begin{align*}
    \tilde{S}(y) &= \bigcup_{i\in[n-1]: |y-y_i|\leq r, x_i < a} [x_i-r,x_i+r] \cup \bigcup_{i\in[n-1]: |y-y_i|\leq r, x_i \geq b} [x_i-(b-a)-r,x_i-(b-a)+r]\\
    &=S_1(y)\cup \{S_2(y) + (a-b)\},
    \end{align*}
    where $S_2(y)+(a-b)$ denotes translation of the set $S_2(y)$ by $a-b$.
    If $S_1(y)\cap S_2(y) = \varnothing$, then $\lambda(S_1(y)\cap \{S_2(y)+(a-b)\}) \leq b-a$, so $\lambda(S(y)) - \lambda(\tilde S(y)) \in [0, b-a]$. On the other hand, if $S_1(y)\cap S_2(y)\neq \varnothing$, let $i_1 = \argmax_{i\in[n-1],|y-y_i|\leq r, x_i < a} x_i$ and $i_2 = \argmin_{i\in[n-1],|y-y_i|\leq r, x_i \geq b} x_i$. We must have $x_{i_1} + r \geq x_{i_2}-r$. Observe that $x_{i_2} - (b-a) - r \geq x_{i_1} - r$ and $x_{i_1}+r \leq x_{i_2}-(b-a) + r$, together with $[x_{i_1}-r,x_{i_1}+r]\subseteq S_1(y)$ and $[x_{i_2}-r,x_{i_2}+r]\subseteq S_2(y)$, we deduce that $S_2(y) + (a-b) \cap [0, a] \subseteq S_1(y)$ and $S_1(y) \cap [a,1-(b-a)] \subseteq S_2(y)+(a-b)$. In particular, we have $\lambda(\tilde S(y)) = \lambda(S(y)) - (b-a)$. This establishes the desired result.
\end{proof}

The following lemma controls the extent of change in the coverage area when we scale both the domain and the radius of each $\ell_\infty$ ball. 
\begin{lemma}
    \label{Lemma:VolumeScale} 
    Fix $a,b\in(0,1)$ and let $f:[0,a]\times [0,b]\to[0,1]^2$ be defined such that $f(x,y) = (x/a,y/b)$.  
    For $(x_1,y_1),\ldots,(x_n,y_n)\in [0,a]\times [0,b]$ and $r,r'\in(0,1/2)$, we have 
    \begin{align*}
    -\biggl(\frac{1}{ab}-1\biggr)nr^2&- 4n\biggl\{(r')^2 - \min\biggl(\frac{r}{a}, r'\biggr)\min\biggl(\frac{r}{b}, r'\biggr) \biggr\}\\
    &\leq \vol\biggl(\bigcup_{i\in[n]}B((x_i,y_i), r)\biggr) - \vol\biggl(\bigcup_{i\in[n]}B(f(x_i,y_i), r')\biggr)\\
    &\leq 4n\biggl\{\max\biggl(\frac{r}{a}, r'\biggr)\max\biggl(\frac{r}{b}, r'\biggr) - (r')^2\biggr\}.
    \end{align*}
\end{lemma}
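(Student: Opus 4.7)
The plan is to use the linear map $f(x,y)=(x/a,y/b)$, whose Jacobian determinant is $1/(ab)$, to transform the first union into a union of rectangles in $[0,1]^2$, and then compare it with the second union. Setting $D_i:=f(B_\infty((x_i,y_i),r))$, which is an axis-aligned rectangle centred at $f(x_i,y_i)$ with half-widths $(r/a,r/b)$, and $E_i:=B_\infty(f(x_i,y_i),r')$, a square of half-width $r'$ with the same centre, the change-of-variables formula gives $\vol(\cup_i B_\infty((x_i,y_i),r))=ab\cdot\vol(\cup_i D_i)$, and hence the algebraic identity
\[
\vol\Bigl(\bigcup_i B_\infty\bigl((x_i,y_i),r\bigr)\Bigr)-\vol\Bigl(\bigcup_i E_i\Bigr)=\bigl[\vol(\cup_i D_i)-\vol(\cup_i E_i)\bigr]-(1-ab)\vol(\cup_i D_i).
\]

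The heart of the argument is a pairwise set-difference bound. Using the elementary inclusion $\cup_i D_i\setminus\cup_j E_j\subseteq\cup_i(D_i\setminus E_i)$ (if $z$ lies in some $D_j$ and in no $E_i$, then in particular $z\in D_j\setminus E_j$), together with the shared-centre computations $\vol(D_i\setminus E_i)=4r^2/(ab)-4\min(r/a,r')\min(r/b,r')$ and $\vol(E_i\setminus D_i)=4(r')^2-4\min(r/a,r')\min(r/b,r')$, I obtain
\[
\vol(\cup_i D_i)-\vol(\cup_i E_i)\leq 4n\Bigl[\tfrac{r^2}{ab}-\min\bigl(\tfrac{r}{a},r'\bigr)\min\bigl(\tfrac{r}{b},r'\bigr)\Bigr],
\]
together with the symmetric reverse inequality. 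For the upper bound I drop the non-negative term $(1-ab)\vol(\cup_i D_i)\geq 0$ and reduce the claim to the pointwise inequality $pq-\min(p,r')\min(q,r')\leq\max(p,r')\max(q,r')-(r')^2$ with $p=r/a$, $q=r/b$; a four-case analysis (both $p,q\geq r'$; both $\leq r'$; or one of each) verifies this with equality in the two monotone cases and a slack of $(q-r')(p-r')\leq 0$ in the two mixed cases. For the lower bound I combine the reverse set-difference estimate above with the naive $\vol(\cup_i D_i)\leq\sum_i\vol(D_i)=4nr^2/(ab)$ to control $(1-ab)\vol(\cup_i D_i)$.

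The main obstacle is the case analysis underpinning the upper bound and the bookkeeping for the $(1-ab)$ scaling term in the lower bound: the crude volume estimate $\vol(\cup_i D_i)\leq 4nr^2/(ab)$ yields an error of order $4nr^2(1/(ab)-1)$, which controls the stated $(1/(ab)-1)nr^2$ term up to an absolute constant that is harmless in downstream applications, but any attempt to sharpen this would require exploiting overlap structure among the $D_i$'s.
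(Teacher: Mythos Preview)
Your approach is essentially the paper's: introduce the intermediate union $\cup_i D_i$ of rectangles with half-widths $(r/a,r/b)$ (the paper calls this $\mathcal{A}_2=f(\mathcal{A}_1)$), split the target difference into a scaling piece and a shape-comparison piece, and bound the latter via the pairwise set-differences $D_i\setminus E_i$ and $E_i\setminus D_i$. The only tactical difference is in the upper bound: you compute $\vol(D_i\setminus E_i)=4r^2/(ab)-4\min(r/a,r')\min(r/b,r')$ exactly and then invoke the four-case inequality $pq-\min(p,r')\min(q,r')\leq\max(p,r')\max(q,r')-(r')^2$, whereas the paper bypasses that algebra by the cruder containment $D_i\setminus E_i\subseteq F_i\setminus E_i$ with $F_i$ the common enclosing box of half-widths $(\max(r/a,r'),\max(r/b,r'))$, giving the right-hand side directly.

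Your observation about the factor of $4$ in the lower bound is well taken: the paper's own argument bounds $(\tfrac{1}{ab}-1)\vol(\mathcal{A}_1)$ by $(\tfrac{1}{ab}-1)nr^2$, but $\vol(\mathcal{A}_1)\leq n(2r)^2=4nr^2$, so the constant in the stated lemma is already off by the same factor you flagged. As you note, this is immaterial for the downstream use in the point-deletion lemma.
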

\begin{proof}
    Let $\mathcal{A}_1 := \cup_{i\in[n]} B((x_i,y_i),r)$, $\mathcal{A}_2 := \cup_{i\in[n]} [x_i/a - r/a, x_i/a+r/a]\times [y_i/b-r/b, y_i/b+r/b]$ and $\mathcal{A}_3:=\cup_{i\in[n]} B(f(x_i,y_i), r')$. We will control $\vol(\mathcal{A}_1)-\vol(\mathcal{A}_3)$ by controlling separately $\vol(\mathcal{A}_1)-\vol(\mathcal{A}_2)$ and $\vol(\mathcal{A}_2)-\vol(\mathcal{A}_3)$. For the former, we observe that $\mathcal{A}_2$ is simply $f(\mathcal{A}_1)$, so 
    \[
    0\leq \vol(\mathcal{A}_2)-\vol(\mathcal{A}_1)\leq \biggl(\frac{1}{ab}-1\biggr)\vol(\mathcal{A}_1) \leq \biggl(\frac{1}{ab}-1\biggr)nr^2.
    \]
    For the latter, we have 
    \[
    \vol(\mathcal{A}_2) - \vol(\mathcal{A}_3) \leq \vol(\mathcal{A}_2 \setminus \mathcal{A}_3)\leq 4n\bigl\{\max(r/a, r')\max(r/b, r') - (r')^2\bigr\}
    \]
    and 
    \[  
    \vol(\mathcal{A}_3) - \vol(\mathcal{A}_2) \leq \vol(\mathcal{A}_3 \setminus \mathcal{A}_2) \leq 4n\bigl\{(r')^2 - \min(r/a, r')\min(r/b, r') \bigr\}
    \]
    The desired result follows by combining the two bounds.
\end{proof}

The following lemma provides an upper bound on the change in vacancy area when we delete a few points from a sample of size $n$. We recall the definition of the vacancy volume in~\eqref{eq:VnGen}.
\begin{lemma}
    \label{Lemma:DeletePoints}
Given $n,m\in\mathbb{N}$ with $m\leq n/2$, let $\bs{X}:=(X_i)_{i\in[n]}$ and $\bs{Y}:=(Y_i)_{i\in[n]}$ be fixed and suppose $(U_i, V_i)_{i\in[n]} \iid \mathrm{Unif}[0,1]^2$ and $(\tilde U_i, \tilde V_i)_{i\in[n-m]} \iid \mathrm{Unif}[0,1]^2$. Write $\bs{U}:=(U_i)_{i\in[n]}$, $\bs{V}:=(V_i)_{i\in[n]}$, $\tilde{\bs{U}}:=(\tilde U_i)_{i\in[n-m]}$, $\tilde{\bs{V}}:=(\tilde V_i)_{i\in[n-m]}$, $\tilde{\bs{X}}:=(X_i)_{i\in[n-m]}$ and $\tilde{\bs{Y}}:=(Y_i)_{i\in[n-m]}$. There exists a coupling between $(\bs{U},\bs{V})$ and $(\tilde{\bs{U}},\tilde{\bs{V}})$ such that for every $t\in[0, \frac{n}{2m}-1]$ the following holds with probability at least $1-2e^{-t^2m/(2+4t/3)}$:
    \[
    \frac{(5+3t)m}{n}\leq \mathcal{V}\Bigl(\bs{X},\bs{Y},\bs{U},\bs{V}; \frac{1}{2\sqrt{n}}\Bigr)  -  \mathcal{V}\Bigl(\tilde{\bs{X}},\tilde{\bs{Y}},\tilde{\bs{U}},\tilde{\bs{V}};\frac{1}{2\sqrt{n-m}}\Bigr) \leq \frac{(5+5t)m}{n},
    \]
    where, by convention, $U_{(0)} = 0$ and $U_{(n+1)} = 1$.
\end{lemma}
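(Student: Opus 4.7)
The plan is to build an explicit coupling between $(\bs U,\bs V)$ and $(\tilde{\bs U},\tilde{\bs V})$ using the Beta$(1/2,1/2)$-randomised carve-out construction of Lemma~\ref{Lemma:CarveOut}, then decompose the vacancy difference into a compression step governed by Lemma~\ref{Lemma:VolumeCompress} and a rescaling step governed by Lemma~\ref{Lemma:VolumeScale}, and finally absorb the random strip widths via a Bernstein-type concentration inequality.

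Concretely, I generate independent $P_1,\ldots,P_{n+1},Q_1,\ldots,Q_{n+1}\iid\mathrm{Beta}(1/2,1/2)$ (independent of everything else) and set $G_i:=U_{(i)}-(U_{(i)}-U_{(i-1)})P_i$ and $H_i:=V_{(i)}-(V_{(i)}-V_{(i-1)})Q_i$ for $i\in[n+1]$, as in the proof of Theorem~\ref{thm: concentration}. Since $\bs X,\bs Y$ are fixed, the ranks $r_j^x:=\#\{i\in[n]:X_i\le X_j\}$ and $r_j^y:=\#\{i\in[n]:Y_i\le Y_j\}$ of the deleted points $j\in\{n-m+1,\ldots,n\}$ are deterministic; I form $\mathcal I_X:=\{r_j^x:n-m<j\le n\}$ and $\mathcal I_Y:=\{r_j^y:n-m<j\le n\}$ and write $S_X:=\sum_{i\in\mathcal I_X}(G_{i+1}-G_i)$ and $S_Y:=\sum_{i\in\mathcal I_Y}(H_{i+1}-H_i)$. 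Applying Lemma~\ref{Lemma:CarveOut} separately with $(\bs U,\mathcal I_X)$ and $(\bs V,\mathcal I_Y)$ defines the coupled $(\tilde{\bs U},\tilde{\bs V})$ with the correct iid $\mathrm{Unif}[0,1]$ marginals, and simultaneously identifies the reduced-sample rank of each surviving point as the image of its full-sample rank under the piecewise-affine contractions $g,h$ from that lemma.

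Under this coupling I bound the vacancy difference deterministically in two steps. First, I apply Lemma~\ref{Lemma:VolumeCompress} iteratively $m$ times, peeling off one deleted rank-transformed point together with the $x$-strip $[G_{r_j^x},G_{r_j^x+1})$ and $y$-strip $[H_{r_j^y},H_{r_j^y+1})$ containing its coordinates; by construction each such strip contains no other not-yet-peeled point, so the lemma's hypothesis is satisfied at every step. Accumulating the per-step one-sided bound yields a change in covered volume lying in $[0,\,S_X+S_Y+4mr^2]$ with $r:=1/(2\sqrt n)$, and the resulting union of $n-m$ balls of radius $r$ lives in $[0,1-S_X]\times[0,1-S_Y]$. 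Second, I apply Lemma~\ref{Lemma:VolumeScale} with $a=1-S_X$, $b=1-S_Y$ and $r':=1/(2\sqrt{n-m})$ to rescale the rectangle to $[0,1]^2$ and simultaneously update the ball radii from $r$ to $r'$; the output is exactly the union defining $\mathcal V(\tilde{\bs X},\tilde{\bs Y},\tilde{\bs U},\tilde{\bs V};r')$ and the further change is sandwiched between the explicit expressions of Lemma~\ref{Lemma:VolumeScale}, which after substituting the values of $r,r'$ and Taylor-expanding in $m/n$ simplify to affine functions of $S_X+S_Y$ and $m/n$. Combining the two steps expresses $\mathcal V(\bs X,\bs Y,\bs U,\bs V;r)-\mathcal V(\tilde{\bs X},\tilde{\bs Y},\tilde{\bs U},\tilde{\bs V};r')$ as an affine function of $S_X+S_Y$ plus lower-order terms of size $m/n$, bracketed between matching deterministic upper and lower bounds.

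The only remaining randomness is in $S_X$ and $S_Y$. Each gap $G_{i+1}-G_i=(U_{(i+1)}-U_{(i)})(1-P_{i+1})+(U_{(i)}-U_{(i-1)})P_i$ is a convex combination of two consecutive uniform spacings with marginal mean $1/(n+1)$, and after Poissonising via Lemma~\ref{Lemma:Poissonisation} the full vector $(G_{i+1}-G_i)_{i=0}^n$ becomes the normalised gaps of $2(n+1)$ independent $\mathrm{Gamma}(1,1)$ increments. Bernstein's inequality applied to the un-normalised sum, combined with the concentration of the total increment, yields $\mathbb P(|S_X-m/n|\ge tm/n)\le e^{-t^2m/(2+4t/3)}$ for $t\in[0,\frac{n}{2m}-1]$, with an identical bound for $S_Y$. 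Substituting these tail bounds into the deterministic expression from the previous step and taking a union bound delivers the claimed two-sided inequality with probability at least $1-2e^{-t^2m/(2+4t/3)}$. The main obstacle is matching the sharp constant $5$ on both sides: this requires careful bookkeeping of the opposite-sign contributions from the compression and rescaling steps, since the upper bound of Lemma~\ref{Lemma:VolumeCompress} has no matching lower bound and Lemma~\ref{Lemma:VolumeScale} has asymmetric two-sided bounds. A secondary technical issue is producing the Bernstein tail with the precise denominator $2+4t/3$ for what are a priori correlated order-statistic spacings, which is cleanest via the Poissonised representation where the summands become independent $\mathrm{Gamma}(1,1)$ variables.
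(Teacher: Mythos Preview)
Your proposal follows essentially the same approach as the paper: the coupling via the Beta$(1/2,1/2)$ carve-out of Lemma~\ref{Lemma:CarveOut}, the two-step decomposition into iterated compression (Lemma~\ref{Lemma:VolumeCompress}) and rescaling (Lemma~\ref{Lemma:VolumeScale}), and then concentration of the total deleted strip widths $S_X,S_Y$ are exactly what the paper does.

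The one correction: for the concentration step you invoke Lemma~\ref{Lemma:Poissonisation}, which is the wrong tool (that lemma couples vacancies under a Poisson number of points and says nothing about strip widths). The paper instead observes directly that $S_X=1-g_m(1)$ and $S_Y=1-h_m(1)$ are each exactly $\mathrm{Beta}(m,n+1-m)$ distributed---this falls out of the Gamma representation inside the proof of Lemma~\ref{Lemma:CarveOut}---and then applies the packaged Beta tail bound of Lemma~\ref{Lemma:Beta} to get precisely $\mathbb P(S_X\ge (1+t)m/n)\le e^{-t^2m/(2+4t/3)}$. This replaces your Poissonise-then-Bernstein detour with a one-line citation. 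Also, the final constant-matching in the paper is done by explicit algebraic inequalities on $1/(1-\delta)^2$ and $1/(1-\delta')$ using $\delta:=\max\{S_X,S_Y\}\le(1+t)m/n\le 1/2$ and $\delta':=m/n$, rather than Taylor expansion; this is what delivers the exact $5+5t$ and $5+3t$ (note the lower bound in the statement carries the wrong sign, and the paper's argument actually yields a two-sided bound of the form $|\cdot|\le(5+5t)m/n$).
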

\begin{proof}
Let $P_1,\ldots,P_{n+1},Q_1,\ldots,Q_{n+1}\iid \mathrm{Beta}(1/2,1/2)$ be independent from other randomness in the problem and define $G_i:=U_{(i)}-(U_{(i)}-U_{(i-1)})P_i$ and $H_i:=V_{(i)}-(V_{(i)}-V_{(i-1)})Q_i$. Let $r^X_i:=\sum_{i'=1}^n\mathbbm{1}\{X_{i'}\leq X_i\}$ and $r^Y_i:=\sum_{i'=1}^n\mathbbm{1}\{Y_{i'}\leq Y_i\}$ for $i\in[n]$. For $r\in[m]$, define
\begin{align*}
g_r: {}&[0,1]\setminus \bigcup_{i=n-r+1}^{n} [G_{r^X_i}, G_{r^X_i+1}) \to \biggl[0,1 - \sum_{i=n-r+1}^{n} (G_{r^X_i+1}-G_{r^X_i})\biggr]\\
h_r:{}&[0,1]\setminus \bigcup_{i=n-r+1}^{n} [H_{r^Y_i}, H_{r^Y_i+1}) \to \biggl[0,1 - \sum_{i=n-r+1}^{n} (H_{r^Y_i+1}-H_{r^Y_i})\biggr]
\end{align*}
by
\begin{align*}
g_r(x) &:= x-\sum_{i=n-r+1}^{n} (G_{r^X_i+1}-G_{r^X_i})\mathbbm{1}\{x\geq G_{r^X_i+1}\},\\
h_r(y) &:= y - \sum_{i=n-r+1}^{n} (H_{r^Y_i+1}-H_{r^Y_i})\mathbbm{1}\{y\geq H_{r^Y_i+1}\}.
\end{align*}
Now, define 
\begin{align*}
g:[0,1]\setminus\bigcup_{i=n-m+1}^{n} [G_{r^X_i}, G_{r^X_i+1}) \to [0,1]\quad\text{and}\quad h:[0,1]\setminus\bigcup_{i=n-m+1}^{n} [H_{r^Y_i}, H_{r^Y_i+1}) \to [0,1]
\end{align*}
via
\begin{align*}
    g(x):=g_m(x) / g_m(1) \quad \text{and}\quad 
    h(y):=h_m(y) / h_m(1).
\end{align*}
Intuitively, $g_r$ can be seen as the bijection that compresses the carved out interval $[0,1]\setminus \cup_{i=n-r+1}^{n} [G_{r^X_i}, G_{r^X_i+1})$ to a contiguous interval and $g$ further rescales the compressed interval to $[0,1]$ after deleting intervals associated with $X_{n-m+1},\ldots, X_{n}$. Similarly, $h_r$ and $h$ represent compression of the carved out $y$-interval and its rescaled version. 

By Lemma~\ref{Lemma:CarveOut}, $\tilde U_i := g(U_i)$ and $\tilde V_i:=h(V_i)$ for $i\in[n-m]$ satisfies $(\tilde U_i,\tilde V_i)_{i\in[n-m]} \iid\mathrm{Unif}[0,1]^2$. We will establish the desired bound under this coupling.

Let $R_i \equiv (R^X_i, R^Y_i) := (U_{r^X_i}, V_{r^Y_i})$ for $i\in[n]$, then 
\[  
1- \mathcal{V}\Bigl(\bs{X},\bs{Y},\bs{U},\bs{V};\frac{1}{2\sqrt{n}}\Bigr) = \mathrm{vol} \biggl(\bigcup_{i\in[n]}B\Bigl(R_i, \frac{1}{2\sqrt{n}}\Bigr)\biggr).
\]
Under the present coupling, we also have that 
\[  
1 - \mathcal{V}\Bigl(\tilde{\bs{X}}, \tilde{\bs{Y}}, \tilde{\bs{U}}, \tilde{\bs{V}}; \frac{1}{2\sqrt{n-m}}\Bigr)  = \mathrm{vol}\biggl(\bigcup_{i\in[n-m]} B\Bigl((g(R^X_i), h(R^Y_i)), \frac{1}{2\sqrt{n-m}}\Bigr)\biggr).
\]
To establish the desired result, we will construct a few sets whose volume interpolates the two volumes on the right-hand side of the two previous displays. Specifically, define
\begin{align*}
\mathcal{A}_0&:=\bigcup_{i\in[n]}B\Bigl(R_i, \frac{1}{2\sqrt{n}}\Bigr),\\
\mathcal{A}_r&:=\bigcup_{i\in[n-r]}B\Bigl((g_r(R_i^X), h_r(R_i^Y)), \frac{1}{2\sqrt{n}}\Bigr)\\
\mathcal{B} &:= \bigcup_{i\in[n-m]} B\Bigl((g(R^X_i), h(R^Y_i)), \frac{1}{2\sqrt{n-m}}\Bigr).
\end{align*}
By Lemma~\ref{Lemma:VolumeCompress}, 
\[
0\leq \vol(\mathcal{A}_0)-\vol(\mathcal{A}_m) = \sum_{r=1}^m \bigl\{\mathrm{vol}(\mathcal{A}_{r-1}) - \mathrm{vol}(\mathcal{A}_r)\bigr\}\leq (1-g_m(1)) + (1-h_m(1)) +\frac{m}{n}.
\]
By Lemma~\ref{Lemma:VolumeScale},
\begin{align*}
\min\biggl\{\frac{1}{g_m(1)}&, \sqrt\frac{n}{n-m}\biggr\}\min\biggl\{ \frac{1}{h_m(1)},\sqrt\frac{n}{n-m}\biggr\}-\frac{n}{n-m}-\biggl(\frac{1}{g_m(1)h_m(1)}-1\biggr)\nonumber\\
&\leq \mathrm{vol}(\mathcal{A}_{m}) - \mathrm{vol}(\mathcal{B}) \leq \max\biggl\{\frac{1}{g_m(1)}, \sqrt\frac{n}{n-m}\biggr\}\max\biggl\{ \frac{1}{h_m(1)},\sqrt\frac{n}{n-m}\biggr\}-\frac{n}{n-m}.
\end{align*}
Since $1-g_m(1), 1-h_m(1)\iid \mathrm{Beta}(m,n+1-m)$,
by Lemma~\ref{Lemma:Beta}, there is an event $\Omega$ with probability at least $1 - 2e^{-t^2m/(2+4t/3)}$ on which 
\[
\max\{1-g_m(1), 1-h_m(1)\} \leq \frac{(1+t)m}{n} =: \delta.
\]
Writing $\delta' := m/n$, so that $\delta'\leq \delta \leq 1/2$ under the assumption. We have the event $\Omega$ that
\[  
\vol(\mathcal{A}_0)-\vol(\mathcal{B}) = \vol(\mathcal{A}_0) - \vol(\mathcal{A}_m) + \vol(\mathcal{A}_m) - \vol(\mathcal{B}) \leq 2\delta + \delta' + \frac{1}{(1-\delta)^2} - \frac{1}{1-\delta'}\leq 5\delta.
\]
Similarly, on $\Omega$, we also have
\[  
\vol(\mathcal{B}) - \vol(\mathcal{A}_0) =  \vol(\mathcal{A}_m) - \vol(\mathcal{A}_0) +  \vol(\mathcal{B}) -  \vol(\mathcal{A}_m) \leq \frac{1}{(1-\delta)^2}-1+\frac{1}{1-\delta'}-1 \leq 3\delta+2\delta'.
\]
The desired result follows from combining the two bounds above.
\end{proof}

We will often employ the `Poissonisation trick' to analyse the vacancy volume, i.e.\ studying the vacancy associated with $N\sim \mathrm{Poi}(n)$ instead of $n$ data points. The following lemma studies the difference in vacancy due to Poissonisation. 

\begin{lemma}
    \label{Lemma:Poissonisation}
    Let $P^{(X,Y)}$ be a probability measure on $\mathbb{R}^2$. Let $(X_i,Y_i)_{i\in\mathbb{N}} \iid P^{(X,Y)}$ and $N\sim \mathrm{Poi}(n)$.  Suppose we have $(U_i,V_i)_{i\in\mathbb{N}}\stackrel{\mathrm{iid}}\sim \mathrm{Unif}[0,1]^2$ and $(\tilde U_i, \tilde V_i)_{i\in\mathbb{N}}\stackrel{\mathrm{iid}}\sim \mathrm{Unif}[0,1]^2$.
    There is a coupling between $(U_i,V_i)_{i\in [n]}$ and $(N, (\tilde U_i, \tilde V_i)_{i\in[N]})$ such that 
    \begin{align*}
     \mathcal{V}\Bigl((X_i)_{i\in[n]}, (Y_i)_{i\in[n]}, &(U_i)_{i\in[n]}, (V_i)_{i\in[n]}; \frac{1}{2\sqrt{n}}\Bigr) \\
     & - \mathcal{V}\Bigl((X_i)_{i\in[N]}, (Y_i)_{i\in[N]}, (\tilde U_i)_{i\in[N]}, (\tilde V_i)_{i\in[N]}; \frac{1}{2\sqrt{N}}\Bigr) = O_p(n^{-1/2})
    \end{align*}
\end{lemma}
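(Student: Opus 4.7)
The strategy is to reduce the comparison to one between samples of equal size and then invoke Lemma~\ref{Lemma:DeletePoints} with $m := |N-n|$ playing the role of the deletion count. Since $N \sim \mathrm{Poi}(n)$ has $|N-n| = O_p(\sqrt{n})$ and each deletion contributes $O(1/n)$ to the vacancy difference under the coupling provided by the lemma, the total difference will be $O_p(\sqrt{n}/n) = O_p(n^{-1/2})$.

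I would construct the coupling as follows. Fix the common data stream $(X_i, Y_i)_{i\in\mathbb{N}}$ and draw $N \sim \mathrm{Poi}(n)$ independently. Conditionally on $N$, if $N \leq n$, set $m = n-N$, sample $(U_i,V_i)_{i\in[n]} \iid \mathrm{Unif}[0,1]^2$ together with the auxiliary $\mathrm{Beta}(1/2,1/2)$ variables used in the proof of Lemma~\ref{Lemma:DeletePoints}, and define $(\tilde U_i, \tilde V_i)_{i\in[N]}$ via that lemma's carving-out construction; by Lemma~\ref{Lemma:CarveOut}, its marginal distribution is iid $\mathrm{Unif}[0,1]^2$, as required. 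If $N > n$, apply the construction in the opposite direction: sample $(\tilde U_i,\tilde V_i)_{i\in[N]} \iid \mathrm{Unif}[0,1]^2$ with the Beta auxiliaries and read off $(U_i,V_i)_{i\in[n]}$; this yields the same joint law because both marginals of the Lemma~\ref{Lemma:DeletePoints} coupling are iid uniform. In the degenerate case $m = 0$, simply set $\tilde U_i = U_i$ and $\tilde V_i = V_i$, so that the vacancy difference vanishes.

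Now invoke Lemma~\ref{Lemma:DeletePoints} with $t=1$, valid on the event $\{m \leq n/4\}$. Writing $\Delta$ for the signed vacancy difference in the statement of the lemma to be proved, the conclusion of Lemma~\ref{Lemma:DeletePoints} (applied in the appropriate direction) gives $|\Delta| \leq 10 m / \max(n,N) \leq 10 m/n$ with probability at least $1 - 2 e^{-3m/10}$ conditional on the data and on $m \geq 1$. Combining with the trivial bound $|\Delta| \leq 1$ on the complementary event, and on $\{m > n/4\}$ (which has probability $O(1/n)$ by Chebyshev), yields
\[
\mathbb{E}[|\Delta| \mid N] \leq \frac{10 m}{n} + 2 e^{-3m/10} + \mathbbm{1}\{m > n/4\}.
\]
Taking expectations, the first term contributes $10\,\mathbb{E}|N-n|/n \leq 10/\sqrt{n}$ since $\mathrm{Var}(N) = n$; the third term contributes $O(1/n)$; and for the middle term, an elementary computation with the Poisson mass function (using Stirling to get $\mathbb{P}(N=n) = \Theta(n^{-1/2})$ and the geometric factor $e^{-3|s|/10}$ to make the summation convergent) gives $\mathbb{E}[e^{-3|N-n|/10}] = O(n^{-1/2})$. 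Hence $\mathbb{E}|\Delta| = O(n^{-1/2})$, and Markov's inequality gives $\Delta = O_p(n^{-1/2})$.

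The main technical care is needed in (a) verifying that the reverse-direction coupling for $N > n$ produces the correct joint distribution, which is an immediate consequence of the lemma's construction being a measurable deterministic function of the larger sample and independent Beta auxiliaries, and (b) bounding $\mathbb{E}[e^{-c|N-n|}] = O(n^{-1/2})$, which, although elementary, requires invoking the local central limit or Stirling approximation for the Poisson to pin down the order $n^{-1/2}$ rather than the trivial $O(1)$ bound.
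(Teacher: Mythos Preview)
Your argument is correct and uses the same coupling as the paper: conditionally on $N$, build one reference sample from the other via Lemma~\ref{Lemma:DeletePoints}, deleting the last $|N-n|$ data points. The only difference is in the final bookkeeping: the paper fixes $\epsilon$, restricts to $\{|N-n|\le C\sqrt n\}$, and applies Lemma~\ref{Lemma:DeletePoints} with a $t$ depending on $m$ so that the vacancy difference is at most $10/\sqrt n$ with conditional probability $1-O(e^{-c\sqrt n})$, whereas you fix $t=1$, bound $\mathbb{E}|\Delta|$ (using the mode estimate $\sup_k\mathbb P(N=k)=O(n^{-1/2})$ to handle the $e^{-3m/10}$ term) and finish with Markov---a route that is equally valid and in fact yields the slightly stronger $L^1$ statement.
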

\begin{proof}
We define the coupling conditionally on $N$ as follows. If $N \geq n$, we can view $(X_i,Y_i)_{i\in[n]}$ as obtained from $(X_i,Y_i)_{i\in N}$ by deleting the last $N-n$ points, and $(U_i, V_i)_{i\in[n]}$ can be defined conditionally on $(\tilde U_i,\tilde V_i)_{i\in[N]}$ by Lemma~\ref{Lemma:DeletePoints}. If $N < n$, we view $(X_i,Y_i)_{i\in N}$ as obtained from $(X_i,Y_i)_{i\in[n]}$ by deleting the last $n-N$ points and again obtain a conditional joint distribution between $(\tilde U_i,\tilde V_i)_{i\in[N]}$ and $(U_i,V_i)_{i\in[n]}$ by Lemma~\ref{Lemma:DeletePoints}.

Now, given any $\epsilon > 0$, we can find $C>0$ such that $\mathbb{P}(|N-n|\geq Cn^{1/2}) \leq \epsilon/2$. Conditional on $N=n'$ for $n'\in [n-Cn^{1/2}, n+Cn^{1/2}]$, by Lemma~\ref{Lemma:DeletePoints} under the current coupling, we have with (conditional) probability at least $1-2e^{-3Cn^{1/2}/10}$ that 
\begin{align*}
    \biggl|\mathcal{V}\Bigl((X_i)_{i\in[n]}, (Y_i)_{i\in[n]}, &(U_i)_{i\in[n]}, (V_i)_{i\in[n]}; \frac{1}{2\sqrt{n}}\Bigr) \\
    & - \mathcal{V}\Bigl((X_i)_{i\in[N]}, (Y_i)_{i\in[N]}, (\tilde U_i)_{i\in[N]}, (\tilde V_i)_{i\in[N]}; \frac{1}{2\sqrt{N}}\Bigr)\biggr| \leq \frac{10}{n^{1/2}}.
\end{align*}
Choosing $n$ sufficiently large so that $2e^{-3Cn^{1/2}/10}<\epsilon/2$, and integrating over $\mathbb{N}$, we have with probability at least $1-\epsilon$ that the desired vacancy volumes have difference bounded by $10n^{-1/2}$.
\end{proof}

For any finite measure $\lambda$ on $\mathbb{R}^2$, we use $\mathrm{PP}(\lambda)$ to denote the Poisson point process with intensity $\lambda$. Recall that for finite measure, the Poisson point process can be identified as $N\sim \mathrm{Poi}(\lambda(\mathbb{R}^2))$ points drawn independently (conditionally on $N$) from the probability measure $\lambda / \lambda(\mathbb{R}^2)$. The following lemma studies the effect on vacancy due to thinning the Poisson point process.

\begin{lemma}
    \label{Lemma:Thinning} Given two finite measures $\mu$ and $\nu$ on $\mathbb{R}^2$ such that $\mu\leq \nu$, $\nu(\mathbb{R}^2)=n$ and $\mu(\mathbb{R}^2)=(1-\epsilon)n$. There
    exists a coupling between $((X_i,Y_i)_{i\in[M]}, (U_i,V_i)_{i\in\mathbb{N}}) \sim \mathrm{PP}(\mu)\otimes \mathrm{Unif}[0,1]^{\otimes\mathbb{N}}$ and $((\tilde X_i,\tilde Y_i)_{i\in[N]}, (\tilde U_i,\tilde V_i)_{i\in\mathbb{N}}) \sim \mathrm{PP}(\nu)\otimes \mathrm{Unif}[0,1]^{\otimes\mathbb{N}}$ such that 
    \begin{align*}
    \biggl|\mathcal{V}\Bigl((X_i)_{i\in[M]}, &(Y_i)_{i\in[M]}, (U_i)_{i\in[M]}, (V_i)_{i\in[M]}; \frac{1}{2\sqrt{M}}\Bigr) \\
    & - \mathcal{V}\Bigl((\tilde X_i)_{i\in[N]}, (\tilde Y_i)_{i\in[N]}, (\tilde U_i)_{i\in[N]}, (\tilde V_i)_{i\in[N]}; \frac{1}{2\sqrt{N}}\Bigr)\biggr| \leq 10\epsilon + O_p(n^{-1/2}).
    \end{align*}
\end{lemma}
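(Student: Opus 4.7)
The plan is to combine Poisson thinning with the previously established Lemma~\ref{Lemma:DeletePoints}, which handles exactly the situation where we need to couple reference points after removing a subset of data points.

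\textbf{Coupling construction.} First I would construct the natural coupling between the two Poisson point processes using the superposition/thinning property. Since $\mu \leq \nu$, sampling $(\tilde X_i,\tilde Y_i)_{i\in[N]}\sim \mathrm{PP}(\nu)$ and independently marking each point as ``kept'' with probability $(d\mu/d\nu)(\tilde X_i,\tilde Y_i)$ yields a subprocess distributed as $\mathrm{PP}(\mu)$. Relabeling so that the $K$ unmarked points appear as the last $K$ indices, the $M = N-K$ kept points become precisely $(X_i,Y_i)_{i\in[M]}$ and are a subset of the $\nu$-process points. Under this coupling, $N\sim\mathrm{Poi}(n)$, $K\sim\mathrm{Poi}(\epsilon n)$ are independent, and $M=N-K$.

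\textbf{Applying Lemma~\ref{Lemma:DeletePoints}.} I would then work conditionally on the data $(\tilde X_i,\tilde Y_i)_{i\in[N]}$ and on the marks (so that $N$, $M$, $K$ and the identity of the deleted points are fixed). In this conditional world the data of the smaller sample $(X_i,Y_i)_{i\in[M]}$ is obtained by deleting the ``last'' $K$ points of the full $N$-sample, which is exactly the setting of Lemma~\ref{Lemma:DeletePoints} with its $n$ replaced by $N$ and its $m$ replaced by $K$. That lemma then provides a coupling of the reference points $(\tilde U_i,\tilde V_i)_{i\in[N]}$ with $(U_i,V_i)_{i\in[M]}$ such that, for any admissible $t$,
\[
\Bigl|\mathcal V\Bigl((\tilde X_i)_{i\in[N]}, (\tilde Y_i)_{i\in[N]}, (\tilde U_i)_{i\in[N]}, (\tilde V_i)_{i\in[N]};\tfrac{1}{2\sqrt N}\Bigr)-\mathcal V\Bigl((X_i)_{i\in[M]}, (Y_i)_{i\in[M]}, (U_i)_{i\in[M]}, (V_i)_{i\in[M]};\tfrac{1}{2\sqrt M}\Bigr)\Bigr|\leq \tfrac{(5+5t)K}{N}
\]
with conditional probability at least $1-2e^{-t^2 K/(2+4t/3)}$. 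Unconditioning returns the same statement with the stated probabilistic qualifier.

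\textbf{Handling the random ratio $K/N$.} By standard Chernoff bounds for the Poisson distribution, $N = n + O_p(\sqrt n)$ and $K = \epsilon n + O_p(\sqrt{\epsilon n})$, so on a high-probability event we have $K/N \leq \epsilon(1 + O_p(n^{-1/2}))$ and, for any fixed small $\epsilon > 0$ (the interesting regime), $K \leq N/2$ so that a constant $t\in(0,1]$ lies in the admissible range $[0, N/(2K)-1]$. Choosing such a constant $t$ makes the failure probability $2e^{-t^2K/(2+4t/3)}$ exponentially small in $n$. Combining the deterministic bound $(5+5t)K/N$ with the Poisson fluctuation in $K/N$, the vacancy difference is at most $(5+5t)\epsilon + O_p(n^{-1/2})$, and choosing $t=1$ yields the stated bound $10\epsilon + O_p(n^{-1/2})$.

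\textbf{Main obstacle.} The only nontrivial point is confirming that the Poisson thinning coupling is compatible with the set-up of Lemma~\ref{Lemma:DeletePoints}, which treats the data as fixed and only constructs a coupling of the uniform reference points. This is clean because conditional on $N$ and the marks, the $N$ data points are a deterministic configuration and the $K$ ``deleted'' points can be placed at the last $K$ indices by a simple relabeling that does not affect the vacancy (the vacancy is a symmetric function of its point cloud). Beyond that, one must track that the regime $K\ll N$ occurs with high probability, which is immediate from Poisson concentration.
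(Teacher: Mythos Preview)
Your proposal is correct and follows essentially the same route as the paper: construct the $\mathrm{PP}(\mu)\subset\mathrm{PP}(\nu)$ coupling by Poisson thinning, then conditionally invoke Lemma~\ref{Lemma:DeletePoints} with $t=1$ and use Poisson concentration on $N$ and $K=N-M$ to convert $(5+5t)K/N$ into $10\epsilon+O_p(n^{-1/2})$. One cosmetic slip: under thinning it is $M$ and $K$ that are independent (with $N=M+K$), not $N$ and $K$; this does not affect your argument since you only use the marginal laws $N\sim\mathrm{Poi}(n)$ and $K\sim\mathrm{Poi}(\epsilon n)$.
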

\begin{proof}
    Since $\mu\leq \nu$, we have that $\mu$ is absolutely continuous with respect to $\nu$ and the Radon--Nikodym derivative satisfies $0\leq d\mu/d\nu\leq 1$ $\nu$-almost everywhere. We can define a coupling between $\mathrm{PP}(\mu)$ and $\mathrm{PP}(\nu)$ via Poisson thinning, i.e.\ we can define $(X_1,Y_1),\ldots,(X_M,Y_M)$ to be a subset of $(\tilde X_1,\tilde Y_1),\ldots,(\tilde X_N,\tilde Y_N)$ such that conditional on $(\tilde X_i,\tilde Y_i)_{i\in[N]}$, each point $(\tilde X_i, \tilde Y_i)$ is selected into the subset with probability $d\mu/d\nu(\tilde X_i, \tilde Y_i)$. Note in particular that under such a coupling, $M\leq N$ and $N-M\sim \mathrm{Poi}(\epsilon n)$. 
    We can then define a conditional joint distribution (conditional on $(X_i,Y_i)_{i\in[M]}$ and $(\tilde X_i,\tilde Y_i)_{i\in[M]}$) between $(U_i,V_i)_{i\in\mathbb{N}}$ and $(\tilde U_i,\tilde V_i)_{i\in\mathbb{N}}$ by first constructing a conditional coupling between $(U_i, V_i)_{i\in [M]}$ and $(\tilde U_i,\tilde V_i)_{i\in [N]}$ as in Lemma~\ref{Lemma:DeletePoints} and then set $(U_i,V_i)_{i > M}$ and $(\tilde U_i, \tilde V_i)_{i > N}$ to be independent.

    Under this coupling, we have $N\sim \mathrm{Poi}(n)$ and $N-M\sim\mathrm{Poi}(n\epsilon)$. So given any $\delta > 0$, there exists $C > 0$ (depending on $\delta$ and $\epsilon$) and an event with probability at least $1-\delta/2$ on which 
    \[
    N \in [n-C\sqrt{n}, n+C\sqrt{n}] \quad \text{and}\quad  N-M \in [\epsilon n-C\sqrt{n}, \epsilon n + C\sqrt{n}].
    \]
    Conditional on any realisation of $(X_i,Y_i)_{i\in[M]}$ and $(\tilde X_i,\tilde Y_i)_{i\in [N]}$ on this event, we can apply Lemma~\ref{Lemma:DeletePoints} to obtain that under the current coupling, we have with (conditional) probability at least $1-2e^{-3Cn^{1/2}/10}$ that 
    \begin{align*}
    \biggl|\mathcal{V}\Bigl((X_i)_{i\in[M]}, (Y_i)_{i\in[M]}, &(U_i)_{i\in[M]}, (V_i)_{i\in[M]}; \frac{1}{2\sqrt{M}}\Bigr) \\
    & \hspace{-2cm}- \mathcal{V}\Bigl((\tilde X_i)_{i\in[N]}, (\tilde Y_i)_{i\in[N]}, (\tilde U_i)_{i\in[N]}, (\tilde V_i)_{i\in[N]}; \frac{1}{2\sqrt{N}}\Bigr)\biggr| \leq \frac{10(\epsilon n + C\sqrt{n})}{n-C\sqrt{n}},
\end{align*}
which can be further upper bounded by $10\epsilon + 2C n^{-1/2}$ for sufficiently large $n$. Also, for large $n$, we have $2e^{-3Cn^{1/2}/10}\leq \delta/2$, the desired result follows after integrating over realisation of $(X_i,Y_i)_{i\in[M]}$ and $(\tilde X_i,\tilde Y_i)_{i\in [N]}$.
\end{proof}

The following lemma bounds the symmetric difference between two cubes in terms of the $\ell_\infty$ distance between their centers. This will be useful in Proposition~\ref{prop: VonRawData} and Proposition~\ref{prop: VonRawData_deterministic} for controlling the change in vacancy volume when we translate the centers of the cubes by a small amount.

\begin{lemma}\label{lem: symdiff}
Let $B(0, \gamma)$ and $B(h, \gamma)$ be two $d$-dimensional cubes where $h \in \mathbb{R}^d$ and $\gamma >0 $, we have
\begin{align}
    \mathrm{vol}\bigl(B(0, \gamma) \Delta B(h, \gamma)\bigr) \leq 2(2\gamma)^{d-1} \|h\|_{1}. \label{ineq: symdiffbound}
\end{align}
\end{lemma}
\begin{proof}
We may assume without loss of generality that $h = (h_1,\ldots,h_d)^\top \in [0,\infty)^d$. Define $S_k = \{(x_1,\ldots,x_d)^\top\in \mathbb{R}^d: x_k \in [\gamma, \gamma + h_k],  x_j \in [-\gamma + h_j, \gamma + h_j],\; \forall\, j\neq k\}$. Then we have $B(h,\gamma)\setminus B(0,\gamma) \subseteq \cup_{k\in[d]} S_k$. Hence 
\[
\mathrm{vol}(B(h,\gamma) \setminus B(0,\gamma)) \leq \sum_{k\in[d]} (2\gamma)^{d-1}h_k = \|h\|_1 (2\gamma)^{d-1}.
\]
The same bounds hold for $\mathrm{vol}(B(0,\gamma)\setminus B(h,\gamma))$ by symmetry. Combining the two gives the desired inequality. 
\end{proof}

The following lemma provides a characterisation of the expectation and the variance of the vacancy volume when constructed from random samples from the uniform distribution on unit cube.

\begin{lemma}
\label{le: VacancyLimitRandomRef}
Suppose $R_1,\ldots,R_n \stackrel{\mathrm{iid}}{\sim}\mathrm{Unif}([0,1]^d)$. For $\delta \in [0, 1/2)$, let $B$ be a compact set symmetric around $0$ with $\vol(B) = \delta$ and define
    \[
    \mathcal{V}:=\vol\Bigl([0,1]^d\setminus \bigcup_{i=1}^n (R_i + B)\Bigr),
    \]
    where $R_i + B$ denotes the Minkowski sum of the sets. 
    If $n\delta\to q$, then as $n\to\infty$, we have 
    \[
    \mathbb{E}(\mathcal{V})  = (1-\delta)^n \to e^{-q} \quad\text{and}\quad \var(\mathcal{V}) \to 0.
    \]
    If in addition, $B = B(0, \gamma)$ (i.e.\ $\delta = (2\gamma)^d$) for some $\gamma < 1/4$, then we have 
    \[  
    \var(\mathcal{V}) = \sum_{r=2}^n \binom{n}{r}(1-2\delta)^{n-r}\biggl\{\frac{2^d\delta^{r+1}}{(r+1)^d} - \delta^{2r}\biggr\} = (1+o(1))\frac{qe^{-2q}}{n}\sum_{r=2}^\infty \frac{q^r}{r!}\biggl(\frac{2}{r+1}\biggr)^d.
    \]
\end{lemma}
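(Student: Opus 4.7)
The plan is to use Fubini's theorem to convert moments of $\mathcal{V}$ into integrals of covering probabilities, then expand binomially and analyse term-by-term.

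First, for the mean, I would write $\mathcal{V} = \int_{[0,1]^d} \mathbbm{1}\{x \notin \bigcup_i (R_i+B)\}\,dx$ and take expectation inside by Fubini. Because $B$ is symmetric around $0$ and the unit cube is equipped with periodic boundaries (so $\vol(x-B) = \delta$ for every $x$), one gets $\mathbb{P}(x \notin R_i + B) = 1-\delta$ for every fixed $x$, and by independence $\mathbb{E}(\mathcal{V}) = (1-\delta)^n$. With $n\delta \to q$, this tends to $e^{-q}$.

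For the second moment, a second application of Fubini yields
\[
\mathbb{E}(\mathcal{V}^2) = \int_{[0,1]^{2d}} \bigl(1 - 2\delta + C(x,y)\bigr)^n\,dx\,dy,
\]
where $C(x,y) := \vol\bigl((x-B)\cap(y-B)\bigr)$. Writing $(1-\delta)^{2n} = (1-2\delta+\delta^2)^n$ and expanding both $n$th powers by the binomial theorem gives
\[
\var(\mathcal{V}) = \sum_{r=0}^{n}\binom{n}{r}(1-2\delta)^{n-r}\int_{[0,1]^{2d}} \bigl(C(x,y)^r - \delta^{2r}\bigr)\,dx\,dy.
\]
The $r=0$ term vanishes trivially. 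For $r=1$, another Fubini swap together with the symmetry of $B$ gives $\int\!\!\int C(x,y)\,dx\,dy = \int dz\,\bigl(\int \mathbbm{1}\{z\in x-B\}\,dx\bigr)^2 = \delta^2$, so this term also cancels, leaving a sum starting at $r\ge 2$.

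For the specialised case $B = B(0,\gamma)$ with $\gamma<1/4$, one has $C(x,y) = \prod_{j=1}^d\max(0,2\gamma - |x_j-y_j|)$ (the threshold $\gamma<1/4$ guarantees that the periodic boundary never creates overlaps, so the expression is genuinely a product). Substituting $u = y-x$ and separating coordinates yields
\[
\int_{[0,1]^{2d}} C(x,y)^r\,dx\,dy = \prod_{j=1}^d\Bigl(2\int_0^{2\gamma}(2\gamma-u)^r\,du\Bigr) = \frac{2^d(2\gamma)^{d(r+1)}}{(r+1)^d} = \frac{2^d\delta^{r+1}}{(r+1)^d},
\]
which gives the exact formula for $\var(\mathcal{V})$ stated in the lemma.

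For the asymptotics, in the regime $n\delta\to q$ I would note $\binom{n}{r}\delta^{r+1} = \delta\binom{n}{r}\delta^r\to \frac{q^r}{r!}\cdot\frac{q}{n}$, $(1-2\delta)^{n-r}\to e^{-2q}$, and $\binom{n}{r}\delta^{2r}(1-2\delta)^{n-r}=o(1/n)$ for every fixed $r\ge 2$ (since $n\delta^2\to 0$). To interchange limit and sum, I would dominate: for any $B$, $C(x,y)\le \delta$, hence $\int\!\!\int C^r\le \delta^{r+1}$, giving
\[
\var(\mathcal{V}) \le \sum_{r=2}^n\binom{n}{r}(1-2\delta)^{n-r}\delta^{r+1} \le \delta\,(1-\delta)^n \le \delta,
\]
which already proves $\var(\mathcal{V})\to 0$ for the general symmetric $B$. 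In the $\ell_\infty$-ball case, the same domination makes the tail of the sum negligible and the term-by-term limit justified, yielding $\var(\mathcal{V}) = (1+o(1))\frac{qe^{-2q}}{n}\sum_{r\ge 2}\frac{q^r}{r!}\bigl(\frac{2}{r+1}\bigr)^d$.

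The main obstacle is bookkeeping rather than depth: one must carefully exploit the symmetry of $B$ to get $\int\!\!\int C = \delta^2$ (so the $r=1$ term disappears and the variance becomes $O(1/n)$ rather than $O(1)$), and correctly handle the periodic boundary in the explicit computation of $\int\!\!\int C^r$. Beyond those points, everything reduces to standard binomial asymptotics.
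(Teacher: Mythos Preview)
Your proposal is correct and follows essentially the same route as the paper: both introduce auxiliary uniform points (your Fubini integral over $x,y$ is the paper's $W_1,W_2\sim\mathrm{Unif}([0,1]^d)$), compute $\mathbb{E}(\mathcal{V}^2)$ as $\mathbb{E}[(1-2\delta+H)^n]$ for $H=C(W_1,W_2)$, expand binomially, and evaluate $\mathbb{E}(H^r)=2^d\delta^{r+1}/(r+1)^d$ by factorising over coordinates. The only cosmetic difference is that for the general-$B$ claim $\var(\mathcal{V})\to 0$ the paper invokes dominated convergence on $(1-2\delta+H)^n$, whereas you use the explicit bound $\int\!\!\int C^r\le \delta^{r-1}\!\int\!\!\int C=\delta^{r+1}$ to get $\var(\mathcal{V})\le\delta(1-\delta)^n$; your argument is slightly more quantitative but otherwise equivalent.
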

\begin{proof}
    Draw $W_1, W_2 \iid \mathrm{Unif}[0,1]^d$ independent of all other randomness. We have
    \begin{align*}
        \E (\mathcal{V}) =& \Prob\Bigl(W_1 \not\in \bigcup_{i = 1}^n (R_i + B)\Bigr) =\E \biggl\{\Prob\Bigl(\bigcap_{i = 1}^n\{R_i \not\in (W_1 + B) \} \Bigm| W_1\Bigr)\biggr\} = (1 - \delta)^n \to e^{-q}.
    \end{align*}
    To control the variance, we start with
    \begin{align}
        \E (\mathcal{V}^2) =& \Prob\Bigl(W_j \not\in \bigcup_{i = 1}^n (R_i+B) \; \forall \, j\in\{1,2\}\Bigr) = \E \biggl[\Prob\Bigl(\bigcap_{i = 1}^n\{R_i \not\in (W_1+B) \cup (W_2+B)\} \Bigm| W_1, W_2\Bigr)\biggr] \nonumber\\ 
        =& \E \Bigl[\bigl\{1 -2\delta+ \vol\bigl((W_1+B)\cap (W_2+B)\bigr)\bigr\}^n\Bigr].\label{Eq:VarianceIntermediate}
    \end{align}
    Since $\vol\bigl((W_1+B)\cap (W_2+B)\bigr) \xrightarrow{\mathrm{a.s.}} 0$, by the Dominated Convergence Theorem, we have  
    \[
    \lim_{n\to\infty} \E(\mathcal{V}^2) = \lim_{n\to\infty}\{1 - 2\delta\}^n = e^{-2q}.
    \]
    Consequently $\var(\mathcal{V}) = \E(\mathcal{V}^2) - (\E \mathcal{V})^2\to 0$. 

    When $B = B(0,\gamma)$ is a cube, we can obtain an explicit expression of the variance by examining moments of $H:=\vol\bigl((W_1+B)\cap (W_2+B)\bigr)$ in detail. Specifically, noting that $H$ measures the volume of a hypercube with independent side lengths each having distribution $4\gamma\mathrm{Unif}[0,2\gamma] + (1-4\gamma)\delta_0$, we get for any $r\in\mathbb{N}$ that 
    \[  
    \mathbb{E}(H^r) =  \biggl\{2\int_0^{2\gamma} t^r\,dt\biggr\}^d = \biggl\{\frac{2(2\gamma)^{r+1}}{r+1}\biggr\}^d = \frac{2^d \delta^{r+1}}{(r+1)^d}.
    \]
    Consequently, from~\eqref{Eq:VarianceIntermediate}, we have 
    \begin{equation}
    \mathbb{E}(\mathcal{V}^2) = \sum_{r=0}^n \binom{n}{r}(1-2\delta)^{n-r}\mathbb{E}(H^r) = (1-2\delta)^n + \sum_{r=1}^n\binom{n}{r}\frac{(1-2\delta)^{n-r}2^d \delta^{r+1}}{(r+1)^d}.\label{Eq:VacancySecondMoment}
    \end{equation}
    On the other hand, we have 
    \begin{equation}
    \label{Eq:VacancySqrFirstMoment}
    \{\mathbb{E}(\mathcal{V})\}^2 = (1-\delta)^{2n} = \sum_{r=0}^n\binom{n}{r}(1-2\delta)^{n-r}\delta^{2r}.
    \end{equation}
    Combining~\eqref{Eq:VacancySecondMoment} and~\eqref{Eq:VacancySqrFirstMoment}, we have 
    \[  
    \var(\mathcal{V}) = \sum_{r=2}^n \binom{n}{r}(1-2\delta)^{n-r}\biggl\{\frac{2^d\delta^{r+1}}{(r+1)^d} - \delta^{2r}\biggr\}.
    \]
We next compute the asymptotic variance. Since $\sum_{r=2}^n\binom{n}{r}(1-2\delta)^{n-r}\delta^{2r} \leq \sum_{r=2}^\infty (n\delta^2)^r = O(n^{-2})$, we have 
\begin{align*}
    \delta^{-1}\var(\mathcal{V}) &= \sum_{r=2}^n \binom{n}{r}(1-2\delta)^{n-r}\biggl(\frac{2}{r+1}\biggr)^d \delta^{r} + O(n^{-1})\\
    &=(1-2\delta)^n\sum_{r=2}^n\binom{n}{r} \frac{\delta^r}{(1-2\delta)^r}\biggl(\frac{2}{r+1}\biggr)^d + O(n^{-1})\\
    &=(1+o(1))e^{-2q} \sum_{r=2}^n \frac{q^r}{r!}\biggl(\frac{2}{r+1}\biggr)^d = (1+o(1))e^{-2q} \sum_{r=2}^\infty \frac{q^r}{r!}\biggl(\frac{2}{r+1}\biggr)^d
\end{align*}
which implies the desired limit since $n\delta\to q$.
\end{proof}

\begin{lemma}\label{lmm:ConditionalDistributionCLT}
Let $(M_n)_n$ and $(L)_n$ be sequences of random variables such that $M_n \dto \mathcal{N}(\mu, \alpha^2)$ and $L_n \pto \beta^2$. Let $\mathcal{F}_n$ be the sigma-algebra generated by $(M_i)_{i \leq n}$ and $(L_i)_{i \leq n}$. If $(X_n)_n$ is a sequence of random variables such that 
\begin{align}
    \E\sup_{-\infty < x < \infty} \biggl|\Prob\biggl(\frac{X_n - M_n}{\sqrt{L_n}} \leq x \biggm| \mathcal{F}_n\biggr) - \Phi(x)\biggr| \to 0, \quad n \to +\infty. \label{limit: ConditionalDistributionCLT} 
\end{align}
Then we have $X_n \dto \mathcal{N}(\mu, \alpha^2 + \beta^2)$.
\end{lemma}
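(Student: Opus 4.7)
The plan is to pass from the conditional CDF convergence in~\eqref{limit: ConditionalDistributionCLT} to an unconditional CDF convergence via iterated expectation. Write, for each $x \in \mathbb{R}$,
\[
\mathbb{P}(X_n \leq x) \;=\; \mathbb{E}\bigl[\mathbb{P}(X_n \leq x \mid \mathcal{F}_n)\bigr].
\]
The assumption supplies a uniform (in $x$) $L^1$ control on the discrepancy between $\mathbb{P}(X_n \leq x \mid \mathcal{F}_n)$ and $\Phi\bigl((x - M_n)/\sqrt{L_n}\bigr)$, so the first step is to replace the conditional probability inside the expectation by the latter Gaussian CDF, at the cost of a vanishing additive error. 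It then suffices to show
\[
\mathbb{E}\,\Phi\!\left(\frac{x - M_n}{\sqrt{L_n}}\right) \;\longrightarrow\; \Phi_{\mu,\,\alpha^2 + \beta^2}(x),
\]
where $\Phi_{\mu, \sigma^2}$ denotes the $\mathcal{N}(\mu, \sigma^2)$ CDF, for every $x \in \mathbb{R}$; since the target CDF is continuous, this yields $X_n \dto \mathcal{N}(\mu, \alpha^2 + \beta^2)$.

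In the non-degenerate case $\beta > 0$, I would execute the limit in two substeps. First, since $L_n \pto \beta^2$ and $\Phi$ is bounded and Lipschitz, $\Phi\bigl((x - M_n)/\sqrt{L_n}\bigr) - \Phi\bigl((x - M_n)/\beta\bigr) \pto 0$; the difference is bounded by $1$ in absolute value, so the bounded convergence theorem gives $\mathbb{E}\bigl[\Phi\bigl((x - M_n)/\sqrt{L_n}\bigr) - \Phi\bigl((x - M_n)/\beta\bigr)\bigr] \to 0$. Second, the map $m \mapsto \Phi\bigl((x - m)/\beta\bigr)$ is bounded and continuous, so by the Portmanteau theorem applied to $M_n \dto Z \sim \mathcal{N}(\mu, \alpha^2)$,
\[
\mathbb{E}\,\Phi\!\left(\frac{x - M_n}{\beta}\right) \;\longrightarrow\; \mathbb{E}\,\Phi\!\left(\frac{x - Z}{\beta}\right).
\]
To identify the limit, introduce $W \sim \mathcal{N}(0, 1)$ independent of $Z$; then $\Phi\bigl((x - Z)/\beta\bigr) = \mathbb{P}(\beta W \leq x - Z \mid Z)$, so taking expectations and using $Z + \beta W \sim \mathcal{N}(\mu, \alpha^2 + \beta^2)$ gives $\mathbb{E}\,\Phi\bigl((x - Z)/\beta\bigr) = \Phi_{\mu,\,\alpha^2 + \beta^2}(x)$, as required.

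The degenerate case $\beta = 0$ must be handled separately, since division by $\sqrt{L_n}$ becomes delicate. Here $L_n \pto 0$, and the assumption implies that for any $\epsilon > 0$, $\mathbb{P}(|X_n - M_n| > \epsilon \mid \mathcal{F}_n)$ differs from $2\Phi\bigl(-\epsilon/\sqrt{L_n}\bigr)$ by a term that vanishes in $L^1$; since $2\Phi(-\epsilon/\sqrt{L_n}) \pto 0$ and is bounded, bounded convergence yields $X_n - M_n \pto 0$, and Slutsky's theorem combined with $M_n \dto \mathcal{N}(\mu, \alpha^2)$ completes the argument. The only real subtlety in the whole proof is ensuring that the approximation of the conditional CDF by the Gaussian CDF survives integration against the distribution of $(M_n, L_n)$, and the assumed $\mathbb{E}\sup_x|\cdot|$ control (rather than merely pointwise or in-probability convergence) makes this immediate.
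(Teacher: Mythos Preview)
Your proof is correct and follows essentially the same route as the paper: both use iterated expectation to replace the conditional CDF by $\Phi\bigl((x-M_n)/\sqrt{L_n}\bigr)$ via the assumption, then identify the limit through the Gaussian convolution identity $\mathbb{E}\,\Phi\bigl((x-Z)/\beta\bigr)=\Phi_{\mu,\alpha^2+\beta^2}(x)$. The only differences are cosmetic---you split the limit into two substeps (replace $L_n$ by $\beta$, then apply Portmanteau) where the paper applies Slutsky in one go, and you explicitly treat the degenerate case $\beta=0$, which the paper's proof tacitly excludes.
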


\begin{proof}
    For any $x \in \mathbb{R}$, we have 
    \begin{align}
        \sup_{x \in \mathbb{R}}\biggl|\Prob\Bigl(X_n & \leq \mu + x \sqrt{\alpha^2 + \beta^2}\Bigr) - \Phi(x)\biggr| \notag \\ 
        =& \sup_{x \in \mathbb{R}}\biggl|\E\biggl\{ \Prob\biggl(\frac{X_n - M_n}{\sqrt{L_n}}\leq \frac{\mu - M_n + x \sqrt{\alpha^2 + \beta^2}}{\sqrt{L_n}} \biggm| \mathcal{F}_n\biggr)\biggr\} - \Phi(x)\biggr| \notag \\
        = & \sup_{x \in \mathbb{R}}\biggl|\E\biggl\{\Phi\biggl(\frac{\mu -M_n + x \sqrt{\alpha^2 + \beta^2}}{\sqrt{L_n}}\biggr)\biggr\}-\Phi(x)\biggr| + o(1) \label{ineq: MeanVarAsymptotic}, 
    \end{align}
    where we use condition~\eqref{limit: ConditionalDistributionCLT} in the  final step. By Slutsky's theorem, we have for each $x\in\mathbb{R}$ 
    \[
    \frac{\mu - M_n + x \sqrt{\alpha^2 + \beta^2} }{\sqrt{ L_n}} \dto \mathcal{N}\biggl(\frac{x \sqrt{\alpha^2 + \beta^2}}{\beta}, \frac{\alpha^2}{\beta^2}\biggr).
    \]
    Consequently, we have for $Z\sim \mathcal{N}(0,1)$ independent of all other randomness in the lemma that
    \begin{align}
        \E\biggl\{\Phi\biggl(\frac{\mu - M_n + x \sqrt{\alpha^2 + \beta^2} }{\sqrt{ L_n}}\biggr)\biggr\} \notag &= \Prob\biggl(Z - \frac{\mu- M_n + x \sqrt{\alpha^2 + \beta^2} }{\sqrt{L_n}} \leq 0\biggr) = \Phi(x) + o(1)\notag.
    \end{align}
    The conclusion holds by combining the above with~\eqref{ineq: MeanVarAsymptotic}, and using \citet[Lemma 3, pp.265]{chow1988probability}.
\end{proof}

\begin{lemma}
\label{lmm:AlphaBetaLambda}
    For $\alpha^2_\lambda$ and $\beta^2_\lambda$ defined in~\eqref{eq:alpha} and \eqref{eq:beta}, as $\lambda\to\infty$, we have
    \begin{align*}
        \alpha_\lambda^2 \to 0, \qquad \beta_\lambda^2 \to \beta^2.
    \end{align*}
    where $\beta^2$ is a positive constant. 
\end{lemma}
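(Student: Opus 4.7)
The plan is to carry out careful asymptotic expansions of both quantities as $\lambda \to \infty$, exploiting the fact that $(\lambda/2+1)^{-d} = 2^d/(\lambda+2)^d \to 0$, and then track cancellations.

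\textbf{Step 1 (analysing $\alpha_\lambda^2$).} I will apply the Taylor expansion $e^{t} = 1 + t + \tfrac{1}{2}t^2 + O(t^3)$ at $t = (\lambda/2+1)^{-d} = 2^d/(\lambda+2)^d$, obtaining
\[
e^{(\lambda/2+1)^{-d}} - 1 = \frac{2^d}{(\lambda+2)^d} + O(\lambda^{-2d}).
\]
Substituting this into the first summand of~\eqref{eq:alpha} yields
\[
\frac{\lambda^{2d}}{e^2\,2^d(\lambda+2)^d}\bigl(e^{(\lambda/2+1)^{-d}} - 1\bigr) = \frac{\lambda^{2d}}{e^2(\lambda+2)^{2d}} + O(\lambda^{-d}),
\]
which cancels exactly, to leading order, with the second summand of~\eqref{eq:alpha}. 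Hence $\alpha_\lambda^2 = O(\lambda^{-d}) \to 0$.

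\textbf{Step 2 (analysing $\beta_\lambda^2$).} The first summand of~\eqref{eq:beta} satisfies $\frac{\lambda^d}{(\lambda+2)^d e^2}C_d \to C_d/e^2$. For the second summand, the same Taylor expansion gives
\[
1 - e^{(\lambda/2+1)^{-d}} = -\frac{2^d}{(\lambda+2)^d} + O(\lambda^{-2d}),
\]
so that
\[
\frac{\lambda^{2d}}{2^d(\lambda+2)^d e^2}\bigl(1 - e^{(\lambda/2+1)^{-d}}\bigr) = -\frac{\lambda^{2d}}{e^2(\lambda+2)^{2d}} + O(\lambda^{-d}) \to -\frac{1}{e^2}.
\]
Combining these limits with the $O(\lambda^{-1})$ remainder already present in~\eqref{eq:beta}, I obtain
\[
\beta_\lambda^2 \to \beta^2 := \frac{C_d - 1}{e^2}.
\]

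\textbf{Step 3 (positivity).} Finally I need to check that $\beta^2 > 0$. The $k=1$ term of $C_d = \sum_{k \geq 1} 2^d/\{k!(k+1)^d\}$ equals $2^d/(1! \cdot 2^d) = 1$ exactly, and every remaining term is strictly positive, so $C_d > 1$ and $\beta^2 > 0$, as required.

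The computation is essentially routine term tracking; the only minor subtlety is that the proof writes the two limiting quantities in forms where the diverging parts $\lambda^{2d}/(\lambda+2)^{2d} \to 1$ must be kept together with compensating pieces before taking limits. Provided one always expresses $(\lambda/2+1)^{-d}$ as $2^d/(\lambda+2)^d$ and keeps two orders of the Taylor expansion so the first-order cancellations become visible, no further difficulty arises.
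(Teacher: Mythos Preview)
Your proof is correct and follows essentially the same approach as the paper's: both arguments Taylor-expand $e^{(\lambda/2+1)^{-d}}-1$ to see that the two pieces of $\alpha_\lambda^2$ each tend to $e^{-2}$ and hence cancel, and both identify the limit $\beta^2 = e^{-2}(C_d-1)$. Your Step~3 is a welcome addition, since the paper asserts $\beta^2>0$ without spelling out that the $k=1$ term of $C_d$ already equals $1$.
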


\begin{proof}
    Note that $\alpha_\lambda^2 = \alpha^2_{1, \lambda} - \alpha^2_{2, \lambda}$ where 
    \begin{align*}
        \alpha^2_{1, \lambda} = \frac{\lambda^{2d}}{e^2 2^d(\lambda+2)^d}\Bigl(e^{(\lambda/2+1)^{-d}} - 1\Bigr),\qquad \alpha^2_{2, \lambda} = \frac{\lambda^{2d}}{e^2(\lambda+2)^{2d}}.
    \end{align*}
    We show that as $\lambda\to\infty$, $\alpha_\lambda^2$ converges to 0. First note that $\alpha^2_{1, \lambda} = e^{-2} + O(\lambda^{-d})$ and $\alpha^2_{2, \lambda} = e^{-2} + O(\lambda^{-1})$. Therefore $\alpha_\lambda^2\to 0$ as $\lambda\to\infty$.

    For $\beta_\lambda^2$, note that as $\lambda\to\infty$ we have
    \[
        \beta_\lambda^2\rightarrow e^{-2}(C_d - 1) = \beta^2 > 0
    \]
    as desired.
\end{proof}

\begin{lemma}
\label{le: MultinCovariance}
    For $n, L \in\mathbb{N}$, let $p:=1/L$ and suppose $(N_1,\ldots,N_L)\sim \mathrm{Multin}(n; (p,\ldots,p))$. Consider the asymptotic regime where $n\to\infty$ and $L$ is fixed. Suppose $a,b\geq 1$ satisfies $p(a-1) = O(1/n)$ and $p(b-1) = O(1/n)$, then for any and $\ell,k\in[L]$, we have
    \[
    \mathrm{Cov}(a^{N_\ell}, b^{N_k}) = O(n^{-2}).
    \]
\end{lemma}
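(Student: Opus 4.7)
The plan is to compute the covariance in closed form using the multinomial moment generating function and then extract the asymptotic rate through a Taylor expansion.

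I would split into two cases. For $\ell = k$, since $N_\ell \sim \mathrm{Bin}(n, p)$ we have $\E(c^{N_\ell}) = (1 + p(c-1))^n$ for any $c > 0$, giving
\[
\cov(a^{N_\ell}, b^{N_\ell}) = \bigl(1 + p(ab - 1)\bigr)^n - \bigl(1 + p(a - 1)\bigr)^n \bigl(1 + p(b - 1)\bigr)^n.
\]
For $\ell \neq k$, the triple $(N_\ell, N_k, n - N_\ell - N_k) \sim \mathrm{Multin}(n; p, p, 1 - 2p)$ gives $\E(a^{N_\ell} b^{N_k}) = (1 + p(a-1) + p(b-1))^n$, whence
\[
\cov(a^{N_\ell}, b^{N_k}) = \bigl(1 + p(a-1) + p(b-1)\bigr)^n - \bigl(1 + p(a - 1)\bigr)^n \bigl(1 + p(b - 1)\bigr)^n.
\]

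Writing $\alpha := p(a-1)$ and $\beta := p(b-1)$, both $O(n^{-1})$ by hypothesis, the algebraic identity $(1+\alpha)(1+\beta) = 1 + \alpha + \beta + \alpha\beta$ recasts the $\ell \neq k$ case as $(1 + \alpha + \beta)^n - (1 + \alpha + \beta + \alpha\beta)^n$. I would factor out $(1 + \alpha + \beta)^n$ and expand the inner ratio as
\[
\Bigl(1 + \frac{\alpha\beta}{1+\alpha+\beta}\Bigr)^{\!n} = \exp\!\Bigl\{\frac{n\alpha\beta}{1+\alpha+\beta} + O\bigl(n \alpha^2\beta^2\bigr)\Bigr\},
\]
where the remainder uses that $L$ fixed forces $\alpha, \beta = O(n^{-1})$ so the logarithm's higher-order terms are controlled. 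Expanding this exponential together with $(1+\alpha+\beta)^n = O(1)$ and the nonnegativity $\alpha, \beta \geq 0$ coming from $a, b \geq 1$ then yields the claimed bound. The $\ell = k$ case is handled analogously after noting that $p(ab - 1) = \alpha + \beta + \alpha \beta (1-p)/p$, which differs from $\alpha + \beta + \alpha\beta$ by only a lower-order perturbation that can be absorbed into the same expansion.

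The main obstacle is that the naive mean-value estimate $|(1+x)^n - (1+y)^n| \leq n|x-y| \cdot (1 + \max(x,y))^{n-1}$ applied with $|x - y| = \alpha\beta = O(n^{-2})$ only yields an $O(n^{-1})$ bound. Reaching the sharper $O(n^{-2})$ rate requires tracking the subleading terms in the Taylor expansion of the logarithm and exploiting the precise cancellation between the $(1+\alpha+\beta)^n$ and $(1 + \alpha + \beta + \alpha\beta)^n$ factors through second order, which is the technically delicate step.
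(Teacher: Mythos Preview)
Your closed-form computation via the multinomial generating function coincides with the paper's: writing $\alpha=p(a-1)$, $\beta=p(b-1)$, both arrive at $\cov(a^{N_\ell},b^{N_k})=(1+\alpha+\beta)^n-(1+\alpha+\beta+\alpha\beta)^n$ for $\ell\neq k$ (and the analogous expression with $\alpha\beta$ replaced by $\alpha\beta/p$ for $\ell=k$). You are also right that the mean-value bound gives only $O(n^{-1})$, and this is precisely where both arguments run into trouble. The paper asserts the identity
\[
\bigl|(1+\alpha+\beta)^n-(1+\alpha+\beta+\alpha\beta)^n\bigr|=\alpha\beta\sum_{i=0}^{n-1}(1+\alpha+\beta)^{n-1-i}(\alpha\beta)^i
\]
and bounds the right side by $\alpha\beta\cdot O(1)=O(n^{-2})$. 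But this identity is wrong: the genuine factorisation $x^n-y^n=(x-y)\sum_{i=0}^{n-1} x^{n-1-i}y^i$ with $y=1+\alpha+\beta$ and $x=y+\alpha\beta$ produces $n$ summands each of size $\Theta(1)$, not a geometric series in $\alpha\beta$, so the sum is $\Theta(n)$ and the difference is $\Theta(n\alpha\beta)=\Theta(n^{-1})$.

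Your proposed route through the logarithm confirms this rather than repairing it: expanding $n\log(1+\alpha+\beta)-n\log(1+\alpha)-n\log(1+\beta)=-n\alpha\beta+O(n^{-2})$ shows the covariance equals $-(1+\alpha)^n(1+\beta)^n\,n\alpha\beta+O(n^{-2})$, which is of exact order $n^{-1}$ whenever $\alpha,\beta=\Theta(n^{-1})$. There is no second-order cancellation to exploit, so the ``technically delicate step'' you anticipate cannot exist---the stated $O(n^{-2})$ rate is simply false, and the same analysis applies to the $\ell=k$ case. The correct conclusion is $\cov(a^{N_\ell},b^{N_k})=O(n^{-1})$. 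Fortunately this weaker bound is all that the lemma's sole application in the paper (controlling $\var(nS_n)$ in Proposition~\ref{Prop:AsymptoticMeanVariance}) actually needs: replacing $O(n^{-2})$ by $O(n^{-1})$ there makes each covariance term $O(n^{-5})$ after multiplication by $\vol(\mathcal{Q}_1)^4=O(n^{-4})$, and summing over $L^2=O(n^2)$ index pairs still gives $\var(nS_n)=O(n^{-1})\to0$.
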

\begin{proof}
We first assume that $\ell\neq k$. We write $\alpha = p(a-1)$ and $\beta = p(b-1)$ for simplicity.
Using the moment generating function of the Multinomial distribution, we observe that,
\begin{align*}
\mathbb{E}(a^{N_\ell}) &= (1 + \alpha)^n\\
\mathbb{E}(a^{N_\ell}b^{N_k}) &= (1 + \alpha + \beta)^n.
\end{align*}
Using the above identities and the Taylor expansion 
\begin{align*}
    |\mathrm{Cov}(a^{N_\ell}, b^{N_k})| &= |(1 + \alpha + \beta)^n - (1 + \alpha+\beta + \alpha\beta)^n|\\
    &= \alpha\beta\sum_{i=0}^{n-1}(1+\alpha+\beta)^{n-1-i}(\alpha\beta)^i \\
    &\leq \alpha\beta (1+\alpha+\beta)^n \sum_{i=0}^{n-1}(\alpha\beta)^i = O(\alpha\beta) = O(n^{-2}).
\end{align*}
It remains to check the case where $\ell = k$. For this, define $\eta = p(ab-1)$,
\begin{align*}
    |\mathrm{Cov}(a^{N_\ell}, b^{N_k})| &= \{1 + p(ab-1)\}^n -\{1 + p(a-1)\}^n \{1 + p(b-1)\}^n\\
    &=(1+\alpha\beta/p + \alpha + \beta)^n - (1+\alpha+\beta+\alpha\beta)^n\\
    &=\alpha\beta(1/p-1)\sum_{i=0}^{n-1}(1+\alpha+\beta+\alpha\beta)^{n-1-i}(\alpha\beta/p-\alpha\beta)^i\\
    &\leq \alpha\beta(1/p-1)(1+\alpha)^n(1+\beta)^n\sum_{i=0}^{n-1}(\alpha\beta/p-\alpha\beta)^i=O(\alpha\beta) = O(n^{-2}),
\end{align*}
as desired.
\end{proof}

The following lemma provides a multiplicative Chernoff bound for a Beta distribution. 
\begin{lemma}
    \label{Lemma:Beta}
    For $0 < m \leq n/2$ and  $X\sim \mathrm{Beta}(m, n-m)$, we have for any $t \geq 0$ that 
    \[
    \mathbb{P}\biggl(X \geq \frac{(1+t)m}{n} \biggr) \leq \exp\biggl\{ -\frac{t^2m}{2+4t/3}\biggr\}.
    \]
\end{lemma}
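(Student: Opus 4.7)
The plan is to represent $X$ as a ratio of two independent Gamma variables and apply a Chernoff-type moment generating function bound. Realise $X = Y/(Y+Z)$ where $Y \sim \mathrm{Gamma}(m,1)$ and $Z \sim \mathrm{Gamma}(n-m,1)$ are independent, and set $c := (1+t)m/n$. If $c \geq 1$ the bound is trivial since $X \in [0,1]$, so we may assume $c < 1$. The event $\{X \geq c\}$ is equivalent to $\{(1-c)Y - cZ \geq 0\}$, so for any $\lambda > 0$ with $\lambda(1-c) < 1$, Markov's inequality combined with the Gamma moment generating functions gives
\[
\mathbb{P}(X \geq c) \leq \mathbb{E}\bigl[e^{\lambda(1-c)Y}\bigr]\cdot\mathbb{E}\bigl[e^{-\lambda c Z}\bigr] = \bigl(1 - \lambda(1-c)\bigr)^{-m}\bigl(1 + \lambda c\bigr)^{-(n-m)}.
\]

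Next, choose $\lambda := t/\bigl((1+t)(1-c)\bigr)$, which satisfies $\lambda(1-c) = t/(1+t) < 1$ and is the stationary point of the right-hand side above. Direct algebra using $c = (1+t)m/n$ gives $1 - \lambda(1-c) = 1/(1+t)$ and $1 + \lambda c = (n-m)/(n-m-tm)$, so taking logarithms yields
\[
\log\mathbb{P}(X \geq c) \leq m\log(1+t) + (n-m)\log\Bigl(1 - \tfrac{tm}{n-m}\Bigr).
\]

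Finally, applying the elementary inequality $\log(1-x) \leq -x$ (valid for $x \in [0,1)$) to the second term gives $(n-m)\log(1 - tm/(n-m)) \leq -tm$, while for the first term we use the Bennett-type inequality $\log(1+t) \leq t - t^2/(2+4t/3)$, valid for all $t \geq 0$. Summing these bounds cancels the linear-in-$t$ terms and yields $\log\mathbb{P}(X \geq c) \leq -mt^2/(2+4t/3)$, which is the required estimate.

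The only non-routine step is establishing the Bennett-type inequality $\log(1+t) \leq t - t^2/(2+4t/3)$ for $t \geq 0$. Setting $\phi(t) := t - \log(1+t) - t^2/(2+4t/3)$, we have $\phi(0) = 0$, and a direct differentiation followed by simplification of the numerator $(1+2t/3)^2 - (1+t/3)(1+t) = t^2/9$ gives
\[
\phi'(t) = \frac{t}{1+t} - \frac{t(1+t/3)}{(1+2t/3)^2} = \frac{t^3}{9(1+t)(1+2t/3)^2} \geq 0,
\]
so $\phi(t) \geq 0$ on $[0,\infty)$, completing the argument.
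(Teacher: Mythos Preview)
Your proof is correct and self-contained, whereas the paper's proof simply invokes a Bernstein-type tail bound for the Beta distribution due to Skorski (2023) with parameters $v=m(n-m)/(n^2(n+1))$ and $c=2(n-2m)/(n(n+2))$, and then simplifies the resulting exponent. Your route is genuinely different: you realise $X$ as a Gamma ratio, optimise the Chernoff bound exactly (your $\lambda$ is indeed the stationary point), and finish with the elementary Bennett-type inequality $\log(1+t)\leq t - t^2/(2+4t/3)$, whose proof you supply. The advantage of your argument is that it requires no external reference and makes the mechanism transparent; the paper's version is shorter but opaque, amounting to a citation plus a one-line simplification. Both lead to the identical exponent, and both implicitly handle the case $t\geq (n-m)/m$ (equivalently $c\geq 1$) as trivial, which you note explicitly.
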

\begin{proof}
Set 
\[
v=\frac{m(n-m)}{n^2(n+1)}\quad c=\frac{2(n-2m)}{n(n+2)}.
\]
By \citet[Theorem~1]{skorski2023bernstein}, we have
\begin{align*}
\mathbb{P}\biggl(X \geq \frac{(1+t)m}{n}\biggr) &\leq \exp\biggl\{ -\frac{t^2m^2/n^2}{2v+2ctm/(3n)}\biggr\} \leq \exp\biggl\{ -\frac{t^2m}{2+4t/3}\biggr\}
\end{align*}
as desired.
\end{proof}

The following two lemmas compute the variance of the vacancy in the grid reference setting. 

\begin{lemma}
\label{Lemma:Combinatorics}
For $m\in\mathbb{N}$ and $r\in\{0,\ldots,\lfloor \log m\rfloor \}$, we have 
\[  
\frac{\binom{m}{r}\binom{m^2-2m}{m-r}\binom{m^2-2m+r}{m}}{\binom{m^2}{m}\binom{m^2-m}{m}} = \frac{1}{e^3 r!}\biggl\{1 - \frac{5-5r+r^2}{m} + \frac{27-112r+105r^2-32r^3+3r^4}{6m^2}+O(r^4m^{-3}) \biggr\},
\]
where the implied constant in $O(r^4m^{-3})$ does not depend on $r$ or $m$.
\end{lemma}
\begin{proof}
We write $A_r$ for the left-hand side of the desired result. 
    For any $u, v \in \mathbb{N}$ and $u \geq v$, let $(u)_v := u!/(u-v)!$ denote the falling factorial, then we have
\[
A_r = \frac{[(m)_r]^2 (m^2 -2m)_{m-r}(m^2 - 2m +r)_m}{(m^2)_m (m^2 - m)_m (r!)}.
\]
By Taylor's Theorem, there exists a sequence of $\xi_k \in (0, k/m)$, where $k = 1, \ldots, r-1$, such that
\begin{align}
2\log \bigl((m)_r\bigr) &= \log(m^{2r}) + 2 \sum_{k = 0}^{r-1}\bigl(-\frac{k}{m} - \frac{k^2}{2m^2} - \frac{k^3}{3m^3 (1 - \xi_k)^3}\bigr) \notag \\ 
& =  \log(m^{2r}) - \frac{-r + r^2}{m} - \frac{r(r-1)(2r-1)}{6m^2} - R_1{(m, r)},\label{eq:m_rLagrangian}
\end{align}
where $R_1{(m, r)} =  \frac{2}{3m^3}\sum_{k = 0}^{r-1}\frac{k^3}{(1-\xi_k)^3 }$. In addition, for $r \leq \log m$, we have
\begin{align}
R_1{(m, r)} \leq \frac{r^2(r-1)^2}{6m^3(1 - \frac{r-1}{m})^3} \leq C_1 \frac{r^4}{m^3},\label{eq:boundR_1}
\end{align}
for some universal constant $C_1>0$.

Similarly, there exists $\eta_k \in (0, (2m - r + k)/m^2)$ and $\zeta_k \in (0, (2m + k) / m^2)$, $k = 0, \ldots, m - r - 1$, such that 
\begin{align}
    \log \Bigl(\frac{(m^2 - 2m)_{m-r}}{m^{2m}}\Bigr) &= -\log \bigl(m^{2r}\bigr) + \sum_{k = 0}^{m - r - 1} \Bigl(-\frac{2m + k}{m^2}- \frac{(2m + k)^2}{2m^4} -\frac{(2m + k)^3}{3m^6} - \frac{(2m + k)^4}{4m^8(1 - \zeta_k)^4}\Bigr) \notag \\ 
    &= - \log(m^{2r}) - \frac{5}{2} + \frac{-8 + 9r}{3m} + \frac{-25 + 24 r - 3r^2}{6m^2} - R_2(m, r), \label{eq:R2Lagrangian}\\ 
    \log \Bigl(\frac{(m^2 - 2m + r)_m}{m^{2m}}\Bigr) &= \sum_{k = 0}^{m -1}\Bigl(-\frac{2m - r + k}{m^2} - \frac{(2m - r+k)^2}{2m^4} - \frac{(2m - r + k)^3}{3m^6} - \frac{(2m - r + k)^4}{4m^8(1 - \eta_k)^4}\Bigr) \notag \\ 
    &= -\frac{5}{2} + \frac{-8 + 3r}{3m} + \frac{-25 + 15 r}{6m^2} - R_3(m, r),\label{eq:R3Lagrangian}
\end{align}
where
\begin{align*}
    R_2(m, r) = \frac{1}{4m^8}\sum_{k = 0}^{m - r - 1}\frac{(2m + k)^4}{(1 -\zeta_k)^4} &+ \frac{37 + 90 r - 18 r^2}{12m^3} + \frac{-5 - 53 r - 51r^2 + 2r^3}{12 m^4} \\ 
    &+\frac{r+3r^2 + r^3}{2m^5} + \frac{-r^2 -2r^3-r^4}{12 m^6} 
\end{align*}
and 
\[
    R_3(m, r) = \frac{1}{4m^8}\sum_{k = 0}^{m - 1}\frac{(2m - r + k)^4}{(1 - \eta_k)^4} + \frac{37 + 70r- 6r^2}{12m^3} + \frac{-5-30 r-30 r^2}{12m^4} + \frac{r + 3r^2 + 2r^3}{6m^5}.
\]
Moreover, for $r \leq \log m$, there exists universal constants $C_2', C_2, C_3', C_3> 0$ such that
\begin{align}
    R_2(m, r) &\leq \frac{1}{4m^8(1 - \frac{3m -r -1}{m^2})^4}\sum_{k = 0}^{m - r -1}(2m + k)^4 + C_2'\frac{r^4}{m^3} \leq C_2 \frac{r^4}{m^3},  \label{eq:boundR_2} \\ 
        R_3(m, r) &\leq \frac{1}{4m^8(1 - \frac{3m - r - 1}{m^2})^4} \sum_{k = 0}^{m - 1} (2m -r + k)^4 + C_3'\frac{r^3}{m^3} \leq C_3\frac{r^4}{m^3}. \label{eq:boundR_3}
\end{align}

Finally, by noting 
\begin{align*}
\log\Bigl(\frac{(m^2)_m}{m^{2m}}\Bigr) & = -\frac{1}{2} + \frac{1}{3m} + \frac{1}{6m^2} + O(\frac{1}{m^3})\\ 
\log \Bigl(\frac{(m^2 -m)_{m}}{m^{2m}}\Bigr) &= -\frac{3}{2} - \frac{2}{3m} - \frac{1}{2m^2} + O(\frac{1}{m^3}),
\end{align*}
and combining~\eqref{eq:m_rLagrangian}--\eqref{eq:boundR_3}, we deduce that 
\begin{align*}
    \log(A_r) =& \log\Bigl(\frac{[(m)_r]^2}{r!}\Bigr)+\log\Bigl(\frac{(m^2 - 2m)_{m-r}}{m^{2m}}\Bigr) + \log \Bigl(\frac{(m^2 - 2m + r)_{m}}{m^{2m}}\Bigr)  \\
    &- \log\Bigl(\frac{(m^2)_m}{m^{2m}}\Bigr)-\log\Bigl(\frac{(m^2 -m)_{m}}{m^{2m}}\Bigr) \\
    =& -3 + \frac{-5 +5r - r^2}{m} + \frac{-48 +38r  -2r^3}{6m^2} - \log(r!) + O(\frac{r^4}{m^3}).
\end{align*}
Finally exponentiating both sides, we have 
\[ 
A_r = \frac{1}{e^3 r!}\bigl(1 + \frac{-5+5r-r^2}{m} + \frac{27-112r+105r^2-32r^3+3r^4}{6m^2}+O(\frac{r^4}{m^3}) \bigr),
\]
as desired.
\end{proof}

\begin{lemma}\label{lem:varVacancyReg}
Suppose $X, Y \in \mathbb{R}$ are independent random variables and $(X_1, Y_1), \ldots, (X_n, Y_n)$ are independent copies of $(X, Y)$. Let $\mathcal{V}_n^{\mathrm{grid}}$ be the vacancy defined as~\eqref{eq:RegularGridVacancy}. Then we have
\begin{align*}
    \lim_{n \to \infty} n \mathrm{Var}(\mathcal{V}_n^{\mathrm{grid}}) =  \frac{4 \mathrm{Ei}(1) - 4\gamma_0 - 5}{e^2}.
\end{align*}
\end{lemma}

\begin{proof}
For notation simplicity, we use $V$ to denote $\mathcal{V}_n^{\mathrm{grid}}$ throughout the proof. Also, for simplicity of presentation, we prove the result when $n=m^2$ is a square. If $n$ is not a square, one may write $q=\lfloor\sqrt n\rfloor$ and carry the same argument with one-dimensional window counts equal to either $q$ or $q+1$; these $O(1)$ rounding changes do not affect the limiting value of $n\mathrm{Var}(\mathcal{V}_n^{\mathrm{grid}})$.  Let $W = (W_1, W_2)$ and $\widetilde W = (\widetilde W_1, \widetilde W_2)$ be two independent random vectors from $\mathrm{Unif}[0,1]^2$. Let $N_{xy}:= \#\{i\in [n]: R_i \in B(W, \gamma_n)\}$. Given $W$, $N_{xy}$ follows a Hypergeometric distribution $\mathrm{Hyper}(m^2, m, m)$\footnote{We denote $\mathrm{Hyper}(N,k,n)$ for the hypergeometric distribution giving the number of successes in $n$ draws without replacement from a population of size $N$ containing $k$ successful elements.}. Using Stirling's formula
\[  
x! \sim \sqrt{2\pi x}\biggl(\frac{x}{e}\biggr)^x\biggl(1+\frac{1}{12x}+\frac{1}{288x^2}+ O(x^{-3})\biggr),
\]
we have 
\begin{align}
    \E V = \E \Bigl[\Prob(N_{xy} = 0 \mid W)\Bigr] = \frac{{m^2 - m \choose m}}{{m^2 \choose m}} =e^{-1}\biggl\{1-\frac{1}{m}-\frac{1}{6m^2}+O(m^{-3})\biggr\}. \label{eq:EV}
\end{align}

In the rest of the proof, we aim to control $\mathbb{E}(V^2)$. Let $M_{xy}$ be the number of $R_i$ covered by $B(W, \gamma_n)\cup B(\widetilde W, \gamma_n)$, i.e. $M_{xy}:= \#\{i \in [n]: R_i \in B(W, \gamma_n) \cup B(\widetilde W, \gamma_n)\}$, we have 
\[
  \mathbb{E}(V^2) = \mathbb{P}(M_{xy} = 0)
\]

Let $\widetilde S / m$ and $\widetilde T / m$ be the length of intersection of the $x$- and $y$- projections of $B(W,\gamma_n)$ and $B(\widetilde W,\gamma_n)$ and let $Sm$ and $Tm$ denote the number of lattice points $\{1/m, 2/m, \ldots,1\}$ that fall inside the intersections. Then 
\[
  \widetilde T = \biggl(1-\frac{2}{m}\biggr)\delta_0 + \frac{2}{m}\mathrm{Unif}[0, 1],
\]
and 
\[
  Tm\mid \widetilde{T} \sim (1-\{\widetilde T m\}) \delta_{\lfloor \widetilde T m\rfloor } + \{\widetilde T m\}\delta_{\lceil \widetilde T m\rceil},
\]
where $\{a\} := a-\lfloor a\rfloor$ denotes the fractional part of a real number. Consequently, we have 
\[
  T \sim \biggl(1-\frac{2}{m}\biggr)\delta_0 + \frac{2}{m}\nu,
\]
where $\nu = \frac{1}{2m}\delta_0 + \frac{1}{m}\delta_{1/m} + \cdots + \frac{1}{m}\delta_{1-1/m}+\frac{1}{2m}\delta_{1}$. 

Since $S$ is independent and identically distributed with $T$, we have
\begin{align}
  \mathbb{E}(V^2) &= \mathbb{E}[\mathbb{P}(M_{xy} = 0\mid W, \widetilde W)]=\mathbb{E}[\mathbb{P}(M_{xy} = 0\mid S, T)]\nonumber\\
  &=\biggl(1-\frac{2}{m}\biggr)^2 \mathbb{P}(M_{xy} = 0 \mid S=T=0) \nonumber\\
  &\hspace{3cm} + \frac{4}{m}\biggl(1-\frac{2}{m}\biggr)\int_{t=0}^1 \mathbb{P}(M_{xy} = 0 \mid S=0, T=t)\,d\nu(t) \nonumber\\
&\hspace{3cm} +  \biggl(\frac{2}{m}\biggr)^2\int_{t=0}^1\int_{s=0}^1\mathbb{P}(M_{xy} = 0\mid S=s, T=t)\,d\nu(s) d\nu(t) .\label{Eq:Combined}
\end{align}

We now analyse each term of~\eqref{Eq:Combined}. Suppose we have $m^2$ balls randomly ordered, $m$ of which are red and $m$ are blue and the rest are colourless. Let $\mathcal{A}$ be the event that the first $m$ balls are not red and the second $m$ balls are not blue, and let $\mathcal{A}_r$ be the event that $r$ of the first $m$ are blue. Then 
\begin{align}
  \mathbb{P}(M_{xy} = 0\mid S = T= 0) &= \mathbb{P}(\mathcal{A}) = \sum_{r = 0}^m \Prob(\mathcal{A} \cap \mathcal{A}_r) \notag \\
&=\sum_{r=0}^m\frac{\binom{m}{r}\binom{m^2-2m}{m-r}\binom{m^2-2m+r}{m}}{\binom{m^2}{m}\binom{m^2-m}{m}} =: \sum_{r=0}^m A_r. \label{eq:well-seperate}
\end{align}
When $r \leq \lfloor \log m\rfloor$, we have by Lemma~\ref{Lemma:Combinatorics} that 
\begin{align*}
A_r = \frac{1}{e^3 r!}\biggl\{1 + \frac{-5+5r-r^2}{m} + \frac{27-112r+105r^2-32r^3+3r^4}{6m^2}+O(\frac{r^4}{m^3}) \biggr\}, 
\end{align*}
which implies that 
\begin{align}
    \sum_{r = 0}^{\lfloor\log m\rfloor} A_r  =  e^{-2}\Bigl\{1-\frac{2}{m}+\frac{5}{3m^2}+O(m^{-3})\Bigr\}.\label{eq:S1}
\end{align}

When $r > \lfloor \log m\rfloor$, there exists a universal constant $C > 0$ such that
\begin{align*}
    A_r \leq \frac{m^{2r} m^{2m - 2r} m^{2m}}{(m^2-2m)^{2m}  \lceil \log m\rceil !} \leq \frac{C}{\lceil \log m\rceil !},
\end{align*}
thus 
\begin{align*}
    \sum_{r = \lfloor \log m \rfloor +1}^{m} A_r = O\biggl(\frac{m}{\lceil \log m\rceil !}\biggr) = O(m^{-3}).
\end{align*}
Combining this with~\eqref{eq:S1} and~\eqref{eq:well-seperate}, we have
\begin{equation}
    \mathbb{P}(M_{xy} = 0\mid S = T= 0) = e^{-2}\Bigl\{1-\frac{2}{m}+\frac{5}{3m^2}+O(m^{-3})\Bigr\}.
    \label{Eq:FirstTerm}
\end{equation}

In a similar way, for $t\in[0,1]$ such that $tm$ is an integer, $\mathbb{P}(M_{xy}=0\mid S=0, T=t)$ computes the probability that in the same $m^2$ balls as described above, the first $tm$ balls are colourless, the next $(1-t)m$ contain no red and the next $(1-t)m$ contain no blue. Hence, by a very similar argument as above, we have
\begin{align*}
\mathbb{P}(M_{xy} = 0 \mid S=0, T=t) &=\frac{\sum_{r=0}^{(1-t)m}\binom{(1-t)m}{r}\binom{m^2-(2-t)m}{m-r}\binom{m^2-2m+r}{m}}{\binom{m^2}{m}\binom{m^2-m}{m}} \\
&=e^{-2}\biggl\{1-\frac{2+t}{m}+O(m^{-2})\biggr\},
\end{align*}
where the implied constant does not depend on $t$. Consequently, we have 
\begin{equation}
  \int_{t=0}^1\mathbb{P}(M_{xy}=0 \mid S=0, T=t)\,d\nu(t) = e^{-2}\biggl\{1-\frac{5}{2m}+O(m^{-2})\biggr\}.\label{Eq:SecondTerm}
\end{equation}

For the final term on the right-hand side of~\eqref{Eq:Combined}, fixed $s,t\in[0,1]$ such that $sm$ and $tm$ are integers. Suppose there are $m^2=n$ balls randomly permuted, $sm$ of which are red-blue striped, $(1-s)m$ are red, $(1-s)m$ are blue and the rest colourless. We are after the probability that the first $tm$ balls are colourless, next $(1-t)m$ contain no red tint, next $(1-t)m$ contain no blue tint. We again partition the $m^2$ slots into four blocks of $tm$, $(1-t)m$, $(1-t)m$ and $m^2-(2-t)m$ slots each. Red-blue striped balls can only go to last block, red balls can only go to blocks 3 and 4, blue balls can only go to blocks 2 and 4. Suppose there are $r$ red balls in block 3. Then there are $\binom{m^2-(2-t)m}{sm}$ ways to place the red-blue balls, $\binom{(1-t)m}{r}\binom{m^2-(2-t)m-sm}{(1-s)m-r}$ ways to place the red balls, $\binom{m^2-m-sm-(1-s)m+r}{(1-s)m}$ ways to place the blue balls. Thus, we have
\begin{align*}
  \mathbb{P}(M_{xy} = 0 \mid S=s,T=t) &= \frac{\sum_{r=0}^{(1-t)m} \binom{m^2-(2-t)m}{sm}\binom{(1-t)m}{r}\binom{m^2-(2-t+s)m}{(1-s)m-r}\binom{m^2-2m+r}{(1-s)m}}{\binom{m^2}{sm}\binom{m^2-sm}{(1-s)m}\binom{m^2-m}{(1-s)m}}\\
  &=e^{-2+st}+O(m^{-1}),
\end{align*}
where the final step again follows from a similar calculation to Lemma~\ref{Lemma:Combinatorics}. Hence,
\begin{equation}
  \int_{t=0}^1\int_{s=0}^1 \mathbb{P}(M_{xy} = 0 \mid S=s,T=t)\,d\nu(s)d\nu(t) = e^{-2}(\mathrm{Ei}(1)-\gamma_0) + O(m^{-1}).\label{Eq:ThirdTerm}
\end{equation} 
Combining~\eqref{Eq:FirstTerm},~\eqref{Eq:SecondTerm},~\eqref{Eq:ThirdTerm} with~\eqref{Eq:Combined}, we have
\[
  \mathbb{E}(V^2) = e^{-2}\biggl\{1-\frac{2}{m}+\frac{12\mathrm{Ei}(1)-12\gamma_0 -13}{3m^2}+O(m^{-3})\biggr\}.
\]
Together with the square-grid first-moment expansion in~\eqref{eq:EV}, we have
\[
  \mathrm{Var}(V) = \frac{4\mathrm{Ei}(1)-4\gamma_0-5}{e^2 m^2}+O(m^{-3}),
\]
which establishes the claim since $4\mathrm{Ei}(1)-4\gamma_0-5 \approx 0.036 > 0$.
\end{proof}

\begin{lemma}
\label{le: VacancyLimit}
Suppose $\sigma\in\mathcal{S}_n$ is a random permutation on $[n]$. Write $R_i = (i/n, \sigma(i)/n)$ for $i\in[n]$. For $\delta\in[0, 1/2)$, let $B$ be a compact set symmetric around 0 with $\vol(B) = \delta$ and define
\[
    \mathcal{V}:=\vol\Bigl([0,1]^2\setminus \bigcup_{i=1}^n (R_i + B)\Bigr),
\]
where $R_i + B$ denotes the Minkowski sum of the sets. If $n\delta\to q$, then as $n\to\infty$, we have 
\[
    \mathbb{E}(\mathcal{V}) \to e^{-q}.
\]
\end{lemma}
\begin{proof}
    For any $x\in[0, 1]^2$, let $N_x := \abs{\{i: (i/n, \sigma(i)/n)\in x+B\}}$. Note that $N_x$ follows a Hypergeometric distribution $\mathrm{Hyper}(n^2, K, n)$ where $K = \abs{\{(i,j):(i/n, j/n)\in x + B\}}$. Note that $K = n^2\delta + o(n)$. Therefore $\lambda_n := \E[N_x] = n\delta + o(1)$. Then note that we have 
    \[
    d_{TV}(\mathrm{Hyper}(n^2, K, n), \mathrm{Pois}(\lambda_n)) = O(\frac{1}{n})
    \]
    Then note that 
    \begin{align*}
        \mathcal{V} = \int_{[0,1]^2}\bone\{N_x = 0\}dx,
    \end{align*}
    therefore
    \begin{align*}
        \E[\mathcal{V}] = \int_{[0,1]^2}\mathbb{P}(N_x = 0)dx = e^{-n\delta} + O(\frac{1}{n}) \to e^{-q}.
    \end{align*}
    as desired.
\end{proof}

Finally, we introduce some combinatorial lemmas that will be used in the proofs of Propositions~\ref{prop:Vm-fixed-degrees}--\ref{prop:tail-covariance}. For a finite set of cells $E\subset\{1,\dots,n\}^2$, define
\[
\mathrm{row}(E):=\{i:\exists j,\ (i,j)\in E\},
\qquad
\mathrm{col}(E):=\{j:\exists i,\ (i,j)\in E\}.
\]

\begin{lemma}\label{lem:matchingcount}
There exists a constant $K<\infty$ such that for all $n\ge1$, all $z\in[0,1]^2$,
and all $m\ge1$,
\[
|\cM_{n,m}(z)|\le \frac{K^m n^m}{m!}.
\]
Moreover:
\begin{enumerate}[label=\rm(\alph*), ref = \thelemma(\alph*), itemsep=0.2em]
\item \label{lem:matchingcount_a} for any fixed matching $S\subset I_n(z)\times J_n(z)$ of size $q\le m$,
\[
\bigl|\{M\in\cM_{n,m}(z): S\subset M\}\bigr|
\le \frac{K^m n^{m-q}}{(m-q)!};
\]

\item \label{lem:matchingcount_b} for any fixed row $i_0\in I_n(z)$,
\[
\bigl|\{M\in\cM_{n,m}(z): \exists j,\ (i_0,j)\in M\}\bigr|
\le \frac{K^m n^{m-1/2}}{(m-1)!};
\]
\item \label{lem:matchingcount_c} for any fixed column $j_0\in J_n(z)$,
\[
\bigl|\{M\in\cM_{n,m}(z): \exists i,\ (i,j_0)\in M\}\bigr|
\le \frac{K^m n^{m-1/2}}{(m-1)!}.
\]
\end{enumerate}
\end{lemma}

\begin{proof}
Let $a:=|I_n(z)|$ and $b:=|J_n(z)|$. By \eqref{eq:IJsize}, we have
$a,b\le C_0\sqrt n$.

The number of matchings of size $m$ in $I_n(z)\times J_n(z)$ is
\[
\binom{a}{m}\binom{b}{m}m!
\le \frac{a^m b^m}{m!}
\le \frac{(C_0^2)^m n^m}{m!}.
\]
This proves the first estimate.


Now let $S\subset I_n(z)\times J_n(z)$ be a fixed matching of size $q\le m$. Note that $|\mathrm{row}(S)|=|\mathrm{col}(S)|=q$ because $S$ is a matching.

Any $M\in\cM_{n,m}(z)$ with $S\subset M$ is obtained by choosing the remaining $m-q$ rows from $I_n(z)\setminus \mathrm{row}(S)$, choosing the remaining $m-q$ columns from $J_n(z)\setminus \mathrm{col}(S)$, and choosing a bijection between these two $(m-q)$-element sets.

Hence
\[
\bigl|\{M\in\cM_{n,m}(z): S\subset M\}\bigr|
=
\binom{a-q}{m-q}\binom{b-q}{m-q}(m-q)!.
\]
Therefore
\[
\bigl|\{M\in\cM_{n,m}(z): S\subset M\}\bigr|
\le
\frac{a^{m-q}b^{m-q}}{(m-q)!}
\le
\frac{(C_0^2)^{\,m-q} n^{m-q}}{(m-q)!}.
\]
After enlarging the constant, this gives
\[
\bigl|\{M\in\cM_{n,m}(z): S\subset M\}\bigr|
\le \frac{K^m n^{m-q}}{(m-q)!}.
\]

If a fixed row $i_0$ is prescribed, then one first chooses the column paired with
$i_0$, then the remaining $m-1$ rows and columns and a bijection:
\[
b\,\binom{a-1}{m-1}\binom{b-1}{m-1}(m-1)!
\le \frac{a^{m-1}b^m}{(m-1)!}
\le \frac{(C_0^2)^m n^{m-1/2}}{(m-1)!}.
\]
The column estimate is symmetric.
\end{proof}

\begin{lemma}[One-step summation bound]\label{lem:onestep}
Fix integers $r\ge2$ and $m_1,\dots,m_r\ge1$.
There exists a constant $C_{r,\mathbf m}<\infty$, depending only on
$\mathbf m:=(m_1,\dots,m_r)$ and $r$, such that the following holds.

Let $\ell\in\{1,\dots,r-1\}$ and let $a_1,\dots,a_{\ell+1}$ be distinct indices in
$\{1,\dots,r\}$. For $t=1,\dots,\ell$, let
\[
M_{a_t}\in \cM_{n,m_{a_t}}(z_{a_t}).
\]
Set
\[
E_*:=M_{a_1}\cup\cdots\cup M_{a_\ell},
\qquad
R_*:=\mathrm{row}(E_*),
\qquad
C_*:=\mathrm{col}(E_*).
\]
Then
\begin{align}
&\sum_{M\in \cM_{n,m_{a_{\ell+1}}}(z_{a_{\ell+1}})}
W_n(M;\,M_{a_1},\dots,M_{a_\ell})\,n^{-\abs{M\setminus E_*}} \le
C_{r,\mathbf m}\,\frac{1}{(m_{a_{\ell+1}}-1)!}
\sum_{t=1}^{\ell}\psi_n(z_{a_{\ell+1}},z_{a_t}).
\label{eq:onestep}
\end{align}
\end{lemma}

\begin{proof}
We split the sum into four classes.

\smallskip
\noindent

\emph{Class 1: $M\cap E_*\neq\varnothing$.} Then $W_n(M;M_{a_1},\dots,M_{a_\ell})=1$. Using Lemma~\ref{lem:matchingcount_a},
\begin{align*}
\sum_{\substack{M\in \cM_{n,m_{a_{\ell+1}}}(z_{a_{\ell+1}})\\ M\cap E_*\neq\varnothing}}
n^{-\abs{M\setminus E_*}} &= \sum_{q = 1}^{ |E_*|}\sum_{\substack{M\in \cM_{n,m_{a_{\ell+1}}}(z_{a_{\ell+1}})\\ |M\cap E_*|=q}}
n^{-(m_{a_{\ell+1}} - q)} \\ 
& \leq \sum_{q = 1}^{|E_*|} {|E_*| \choose q}\,\frac{K^{m_{a_{\ell+1}}}n^{m_{a_{\ell+1}}-q}}{(m_{a_{\ell+1}}-q)!} n^{-(m_{a_{\ell+1}} - q)}.
\end{align*}
Since $|E_*|\le m_{a_1}+\cdots+m_{a_\ell}$, this is at most 
\[
C_{r,\mathbf m}\,\frac{1}{(m_{a_{\ell+1}}-1)!}.
\]


Moreover this class is empty unless $I_n(z_{a_{\ell+1}})$ meets
$I_n(z_{a_t})$ and $J_n(z_{a_{\ell+1}})$ meets $J_n(z_{a_t})$ for some $t$,
which implies
\[
d_{\mathbb{T}}(x_{a_{\ell + 1}}, x_{a_t}) \leq \frac{1}{\sqrt{n}}, \qquad d_{\mathbb{T}}(y_{a_{\ell + 1}}, y_{a_t}) \leq \frac{1}{\sqrt{n}}
\]
Therefore the contribution of Class 1 is bounded by
\[
C_{r,\mathbf m}\,\frac{1}{(m_{a_{\ell+1}}-1)!}
\sum_{t=1}^{\ell}
\bone{\Bigl\{d_{\mathbb{T}}(x_{a_{\ell + 1}}, x_{a_t}) \leq \frac{1}{\sqrt{n}}, \, d_{\mathbb{T}}(y_{a_{\ell + 1}}, y_{a_t}) \leq \frac{1}{\sqrt{n}}\Bigr\}}.
\]

\smallskip
\noindent
\emph{Class 2: $M\cap E_*=\varnothing$ but $\mathrm{row}(M)\cap R_*\neq\varnothing$.}
Then $W_n(M;M_{a_1},\dots,M_{a_\ell})=1$ and $\abs{M\setminus E_*} = m_{a_{\ell+1}}$.
Using Lemma~\ref{lem:matchingcount_b},
\[
\sum_{\substack{M\in \cM_{n,m_{a_{\ell+1}}}(z_{a_{\ell+1}})\\
M\cap E_*=\varnothing,\ \mathrm{row}(M)\cap R_*\neq\varnothing}}
n^{-m_{a_{\ell+1}}}
\le
|R_*|\,\frac{K^{m_{a_{\ell+1}}}n^{m_{a_{\ell+1}}-1/2}}{(m_{a_{\ell+1}}-1)!}\,
n^{-m_{a_{\ell+1}}}.
\]
Since $|R_*|\le m_{a_1}+\cdots+m_{a_\ell}$, this is at most
\[
C_{r,\mathbf m}\,\frac{n^{-1/2}}{(m_{a_{\ell+1}}-1)!}.
\]
This class is empty unless $I_n(z_{a_{\ell+1}})$ meets $I_n(z_{a_t})$ for some $t$,
hence its contribution is bounded by
\[
C_{r,\mathbf m}\,\frac{1}{(m_{a_{\ell+1}}-1)!}
\sum_{t=1}^{\ell} n^{-1/2}\bone{\Bigl\{d_{\mathbb{T}}(x_{a_{\ell + 1}}, x_{a_t}) \leq \frac{1}{\sqrt{n}}\Bigr\}}.
\]

\smallskip
\noindent
\emph{Class 3: $M\cap E_*=\varnothing$, $\mathrm{row}(M)\cap R_*=\varnothing$, but
$\mathrm{col}(M)\cap C_*\neq\varnothing$.}
The same argument using Lemma~\ref{lem:matchingcount_c} gives the bound
\[
C_{r,\mathbf m}\,\frac{1}{(m_{a_{\ell+1}}-1)!}
\sum_{t=1}^{\ell} n^{-1/2}\bone{\Bigl\{d_{\mathbb{T}}(y_{a_{\ell + 1}}, y_{a_t}) \leq \frac{1}{\sqrt{n}}\Bigr\}}.
\]

\smallskip
\noindent
\emph{Class 4: $\mathrm{row}(M)\cap R_*=\varnothing$ and
$\mathrm{col}(M)\cap C_*=\varnothing$.}
Then $W_n(M;M_{a_1},\dots,M_{a_\ell})=n^{-1}$ and
$\abs{M\setminus E_*} = m_{a_{\ell+1}}$.
Using Lemma~\ref{lem:matchingcount},
\[
\sum_{\substack{M\in \cM_{n,m_{a_{\ell+1}}}(z_{a_{\ell+1}})\\
\mathrm{row}(M)\cap R_*=\varnothing,\ \mathrm{col}(M)\cap C_*=\varnothing}}
n^{-1}n^{-m_{a_{\ell+1}}}
\le
n^{-1}\frac{K^{m_{a_{\ell+1}}}n^{m_{a_{\ell+1}}}}{m_{a_{\ell+1}}!}n^{-m_{a_{\ell+1}}}
\le
C_{r,\mathbf m}\,\frac{n^{-1}}{(m_{a_{\ell+1}}-1)!}.
\]

Adding the four classes yields \eqref{eq:onestep}.
\end{proof}

\begin{lemma}\label{lem:VacancyDifference}
Let $\{(x_i,y_i)\}_{i=1}^n$ be $n$ points in $\mathbb{R}^2$ and let $x = (x_1, \ldots, x_n)$ and $y = (y_1, \ldots, y_n)$. Assume that $x_i\neq x_j$ and $y_i\neq y_j$ for all $i\neq j$. For any $z\in\mathbb{R}^n$ let $r^z_i = n^{-1}\sum_{j = 1}^n \bone\{z_j \leq z_i\}$. Define $r_i = (r^x_i, r^y_i)$ for $i \in [n]$. Let $B = [-\frac{1}{2\sqrt{n}}, \frac{1}{2\sqrt{n}}]^2$ and define
\[
    \mathcal{V}:=\vol\Bigl([0,1]^2\setminus \bigcup_{i=1}^n (r_i + B)\Bigr),
\]
where $r_i + B$ denotes the Minkowski sum of the sets. Let $\{(x_i', y_i')\}_{i=1}^n$ be another set of points in $\mathbb{R}^2$ and define $x^k, y^k\in\mathbb{R}^n$ where $x^k_i = x_i$ and $y^k_i = y_i$ for all $i\in[n]\setminus\{k\}$ and $x^k_k = x_k^\prime$ and $y^k_k = y_k^\prime$. 
\[
    \mathcal{V}^k := \vol\Bigl([0,1]^2\setminus \bigcup_{i=1}^n (r_i^k + B)\Bigr),
\]
where $r_i^k = (r_i^{x^k}, r_i^{y^k})$. Then
\[
\abs{\mathcal{V} - \mathcal{V}^k} \leq 10/n.
\]
\end{lemma}
\begin{proof}
First note that $|\mathcal{V} - \mathcal{V}^k| = 0$ if $x_k = x_k'$ and $y_k = y_k'$, thus the result holds automatically. Assume $x_k\neq x_k'$ or $y_k \neq y_k'$. Define
\begin{align*}
    \mathcal{I}_0 &:= \{i \in [n]\setminus\{k\}: \text{$(x_i - x_k)(x_i - x'_k) \geq 0$ and $(y_i - y_k)(y_i - y'_k)\geq 0$}\}, \\  
    \mathcal{I}_1 &:= \{i \in [n]\setminus\{k\}: \text{$(x_i - x_k)(x_i - x'_k)  < 0$ and $(y_i - y_k)(y_i - y'_k) < 0$}\},\\  
    \mathcal{I}_2 &:= \{i \in [n]\setminus\{k\}: \text{$(x_i - x_k)(x_i - x'_k)  < 0$ and $(y_i - y_k)(y_i - y'_k) \geq 0$}\},\\
    \mathcal{I}_3 &:= \{i \in [n]\setminus\{k\}: \text{$(x_i - x_k)(x_i - x'_k)  \geq 0$ and $(y_i - y_k)(y_i - y'_k) < 0$}\}.
\end{align*}

Note that when replacing $x_k$ by $x_k'$ and $y_k$ by $y_k'$, both $\{x_i: i \in \mathcal{I}_0\}$ and $\{y_i: i \in \mathcal{I}_0\}$ maintain their original ranks, and both $\{x_i: i \in \mathcal{I}_1\}$ and $\{y_i: i \in \mathcal{I}_1\}$ have a shift of $1/n$ in either direction. The ranks of $\{x_i: i \in \mathcal{I}_2\}$ have a similar shift of $1/n$ while the ranks of $\{y_i: i \in \mathcal{I}_2\}$ remain the same. Conversely, the ranks of $\{x_i: i \in \mathcal{I}_3\}$ remains the same while ranks of $\{y_i: i \in \mathcal{I}_3\}$ have a shift of $1/n$ in either direction. Specifically, we have 
\begin{align}
    r_i^{k} = \begin{cases}
        r_i,  &\text{if $i \in \mathcal{I}_0$} \\ 
        r_i + (\pm \frac{1}{n}, \pm \frac{1}{n}),  &\text{if $i \in \mathcal{I}_1$} \\ 
        r_i + (\pm \frac{1}{n}, 0),  &\text{if $ i \in \mathcal{I}_2$} \\
        r_i + (0, \pm \frac{1}{n}),  &\text{if $ i \in \mathcal{I}_3$}.
    \end{cases} \label{eq:rankchange}
\end{align}

Let $\mathcal{U}_j = \bigcup_{i \in \mathcal{I}_j}(r_i + B)$ and $\mathcal{U}_j^k = \bigcup_{i \in \mathcal{I}_j}(r_i^k + B)$ for $j = 0, 1, 2, 3$, we have the following decompositions
\begin{align*}
\mathcal{C}&:=\bigcup_{i \in [n]} (r_i + B) = \Bigl(\bigcup_{j = 0}^3 \mathcal{U}_j\Bigr) \cup (r_k + B), \\ 
\mathcal{C}^k&:=\bigcup_{i \in [n]} (r_i^k + B) = \Bigl(\bigcup_{j = 0}^3 \mathcal{U}_j^{k}\Bigr) \cup (r_k^k + B).
\end{align*}
By \eqref{eq:rankchange} we have $\vol\bigl(\mathcal{U}_0 \Delta \mathcal{U}_0^k\bigr) = 0$. For $i\in\mathcal{I}_2$, $\mathcal{U}_2^k$ is a shift of $\mathcal{U}_2$ by $1/n$ to the right (or left). Therefore, for almost every $x\in \mathcal{U}_2\setminus \mathcal{U}_2^k$ there is $y\in \mathcal{U}_2^k\setminus \mathcal{U}_2$ except for those $x\in[0, 1]^2$ with $x\in\big([0, 1/n]\cup[1-1/n, 1]\big)\times[0, 1]$. A similar argument holds for $i\in\mathcal{I}_3$, with the difference that the shift is up/down. Hence, for $j = 2, 3$
\[
\abs{\vol\big(\mathcal{U}_j\big) - \vol\big(\mathcal{U}_j^k\big)} \leq 2/n.
\]
For $i\in\mathcal{I}_1$, boxes shift both right/left and up/down and hence we have
\[
\abs{\vol\big(\mathcal{U}_1\big) - \vol\big(\mathcal{U}_1^k\big)} \leq 4/n.
\]
Therefore, we have
\begin{align*}
    \abs{\mathcal{V} - \mathcal{V}^k} &\leq \sum_{j = 0}^3 \abs{\vol\bigl(\mathcal{U}_j\bigr) - \vol\bigl(\mathcal{U}_j^k\bigr)} + \vol\Bigl((r_k + B)\Delta (r_k^{k} + B)\Bigr) \leq 10/n.
\end{align*}
as claimed.
\end{proof}


\section{Algorithm and simulation settings}

\subsection{Algorithmic implementation details}
\label{Sec:Algorithm}
The computation of the coverage correlation coefficient involves evaluating the volume of the union of $n$ axis-aligned hypercubes, each of volume $1/n$, in the unit cube $[0,1]^d$, with edge wrapping. This is a special case of Klee's measure problem \citep{klee1977can}, which concerns computing the volume of the union of arbitrary axis-aligned hyperrectangles. When $d=2$, Bentley's algorithm solves this in $O(n\log n)$ time by sweeping along one axis and maintaining the union of intervals along the other using a segment tree \citep{ben1983lower}. In higher dimensions, the time complexity of Bentley's algorithm becomes $O(n^{d-1} \log n)$, while the best known theoretical bound is Chan's $O(n^{d/2})$ algorithm \citep{chan2013klee}. However, for moderate dimensions (e.g., $d \leq 10$), Bentley's approach remains more practical due to its smaller constant factors, better memory behaviour, and simpler implementation.  Moreover, in our setting, we are able to exploit the uniform size of the input small hypercubes to make computational gains using Bentley's algorithm alone. Specifically, we partition $[0,1]^d$ into $m^d$ grid blocks with $m \approx n^{1/d}$, and compute the union volume by summing contributions from individual blocks.
In cases where the small hypercubes are spread out uniformly at random, we expect $O_p(\log n)$ small hypercubes intersecting each block, thus making the entire algorithm run in an average-case complexity of $O(n\log^{d-1} n)$. We outline the recursive union volume computation using Bentley's algorithm in Algorithm~\ref{Algo:Bentley} and the full coverage correlation computation algorithm in Algorithm~\ref{Algo:CoverCorr}.

\begin{algorithm}[htbp]
\caption{\label{Algo:Bentley}\texttt{UnionVolume}($\mathcal{R}, d$): Bentley's algorithm for union volume for $d \geq 2$}
\SetAlgoLined
\KwIn{List $\mathcal{R}$ of axis-aligned rectangles in $\mathbb{R}^d$, $d\geq 2$}
\KwOut{Volume of the union of rectangles}

\uIf{$d = 2$}{
  Initialize an empty event list $E$\;
  \ForEach{rectangle $(x_{\min}, x_{\max}, y_{\min}, y_{\max}) \in \mathcal{R}$}{
    Add events $(x_{\min}, +1, [y_{\min}, y_{\max}])$  and $(x_{\max}, -1, [y_{\min}, y_{\max}])$ to $E$\;
  }
  Sort $E$ by x-coordinate\;

  Initialize active multiset $A \leftarrow \varnothing$, $\texttt{total\_area} \leftarrow 0$, $x_{\text{prev}} \leftarrow \text{undefined}$\;

  \ForEach{event $(x, \text{type}, [y_{\min}, y_{\max}]) \in E$}{
    \If{$x_{\text{prev}}$ is defined}{
      Let $\texttt{height} \leftarrow \text{total length of union of intervals in } A$\;
      $\texttt{total\_area} \leftarrow \texttt{total\_area} + (x - x_{\text{prev}}) \cdot \texttt{height}$\;
    }

    \uIf{$\text{type} = +1$}{
      Insert interval $[y_{\min}, y_{\max}]$ into $A$\;
    }
    \Else{
      Remove interval $[y_{\min}, y_{\max}]$ from $A$\;
    }

    $x_{\text{prev}} \leftarrow x$\;
  }

  \Return $\texttt{total\_area}$\;
}
\uElse{
  Extract all unique coordinates along the first axis and sort them as $x_1 < \dots < x_k$\;
  Initialize $\texttt{total\_volume} \leftarrow 0$\;
  \ForEach{interval $[x_i, x_{i+1}]$}{
    Let $\mathcal{S} \leftarrow \{ \text{rectangles in } \mathcal{R} \text{ that span } [x_i, x_{i+1}] \text{ in axis 1} \}$\;
    Project each rectangle in $\mathcal{S}$ to the remaining $d-1$ dimensions to obtain $\mathcal{S}'$\;
    $\texttt{total\_volume} \leftarrow \texttt{total\_volume} + (x_{i+1} - x_i) \cdot \texttt{UnionVolume}(\mathcal{S}', d-1)$\;
  }
  \Return $\texttt{total\_volume}$\;
}
\end{algorithm}

\begin{algorithm}[htbp]
\caption{\label{Algo:CoverCorr}Pseudocode for computing the coverage correlation coefficient}
\SetAlgoLined
\KwIn{Two samples $X_1,\ldots,X_n \in \mathbb{R}^{d_X}$ and $Y_1,\ldots,Y_n\in \mathbb{R}^{d_Y}$}
\KwOut{coverage correlation $\kappa_n^{X,Y}$ and the corresponding p-value $p_{\kappa}$}

Draw $U_1,\ldots,U_n \iid \mathrm{Unif}([0,1]^{d_X})$ and $V_1,\ldots,V_n\iid \mathrm{Unif}[0,1]^{d_Y}$\;
Compute Monge--Kantorovich ranks $R_1,\ldots,R_n \in [0,1]^d$ for $d= d_X+d_Y$ as in~\eqref{eq:R_iGen}\;

Initialize empty list $\mathcal{R}$\;

\For{$i$ in $1,\ldots,n$}{
  Split $B(R_i, \frac{1}{2n^{1/d}})$ along wrapped axes to get up to $2^d$ axis-aligned hyperrectangles within $[0,1]^d$\;
  Add all resulting non-wrapping rectangles to $\mathcal{R}$\;
}

Partition $[0,1]^d$ into $m^d$ grid blocks $G_1,\ldots,G_{m^d}$ for $m := \lfloor n^{1/d}\rfloor$. \;
\uIf{$d=2$}{
$\mathcal{V}_n\leftarrow 1 - \mathrm{CoveredVolume}(\mathcal{R})$
}
\Else{
Initialise $\mathcal{V}_n \leftarrow 1$\;
\ForEach{grid block $G_k$, $k\in[m^d]$}{
Define $\mathcal{R}'_k := \{A\cap G_k: A\in\mathcal{R}\}$
$\mathcal{V}_n\leftarrow \mathcal{V}_n - \mathrm{CoveredVolume}(\mathcal{R}'_k)$
}
}

Compute 
\[
\sigma^2 := \sum_{k=2}^n\binom{n}{k}\Bigl(1-\frac{2}{n}\Bigr)^{n-k}\biggl\{\Bigl(\frac{2}{k+1}\Bigr)^dn^{-k-1}-n^{-2k}\biggl\}.
\]

\Return $\kappa_n^{X,Y}:= (\mathcal{V}_n-e^{-1})/(1-e^{-1})$ and $p_\kappa := 1 - \Phi(\sqrt{n}(\mathcal{V}_n - e^{-1})/\sigma)$.
\end{algorithm}

\subsection{Simulation supplement}
\label{Subsec:SimulationSetting}
We show in Figure~\ref{Fig:simsetting} scatter plots of the six simulation settings used in Section~\ref{Sec:Numerics} at different noise levels. Also, Table~\ref{Tab:Timing} shows the running time of the six algorithms under comparison for $n\in\{125,250,500,\ldots,8000\}$ and $d_X=d_Y\in\{1,2\}$. Algorithm timing was performed on an 8-core 3.2 GHz laptop CPU, averaged over 10 repetitions. We see that when $d_X=d_Y=1$, both the coverage correlation and Chatterjee's correlation scale approximately linearly and the other algorithms scale quadratically in $n$. When $d_X=d_Y=2$, the coverage correlation has a quadratic scaling in $n$, mostly driven by the Monge--Kantorovich rank computation. 
\begin{figure}[]
\begin{tabular}{ccccc} 
  & $\gamma = 0$ & $\gamma = 0.2$ & $\gamma = 0.4$ & $\gamma = 1.0$\\
\raisebox{5.5\height}{sinusoidal} & \includegraphics[width=0.18\textwidth]{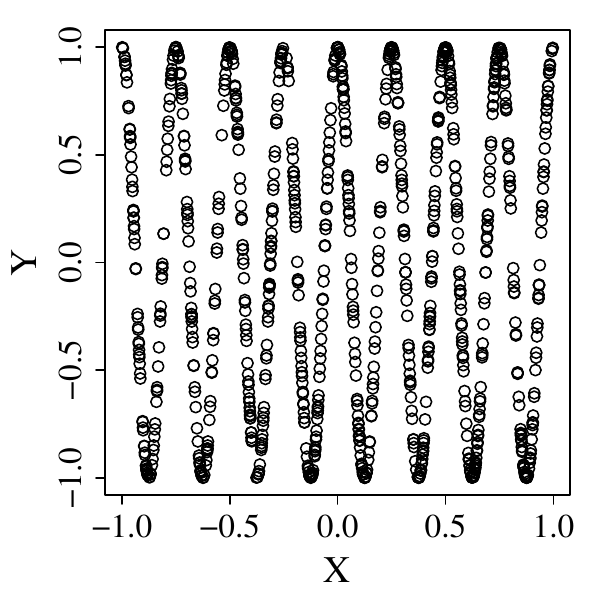} & \includegraphics[width=0.18\textwidth]{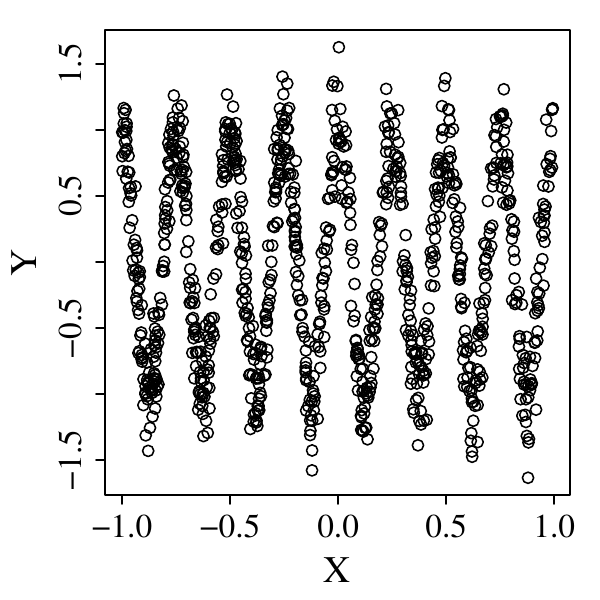} & \includegraphics[width=0.18\textwidth]{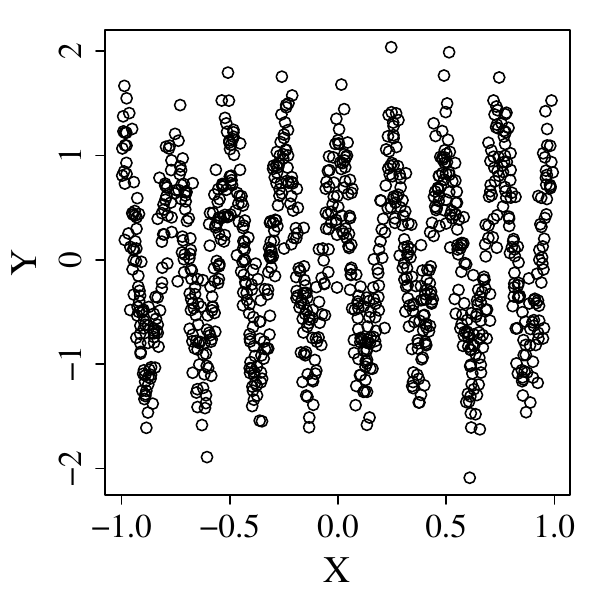} & \includegraphics[width=0.18\textwidth]{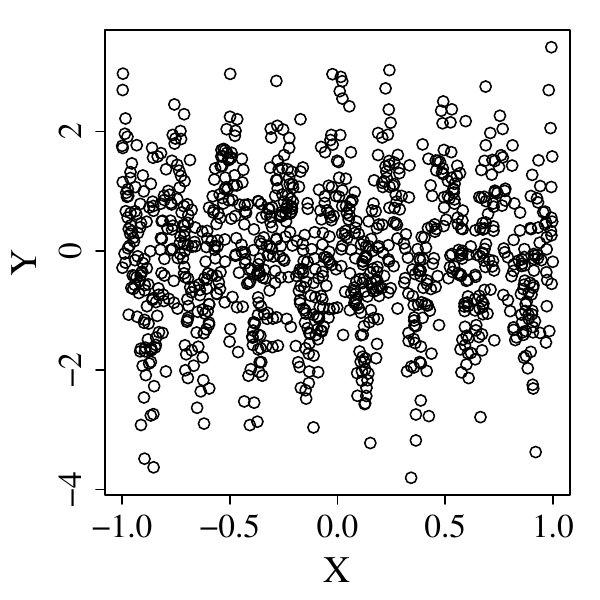}\\
\raisebox{5.5\height}{zigzag} & \includegraphics[width=0.18\textwidth]{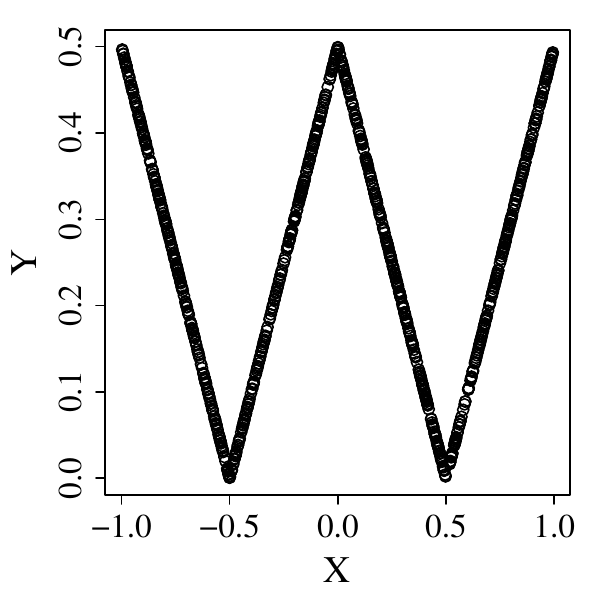} & \includegraphics[width=0.18\textwidth]{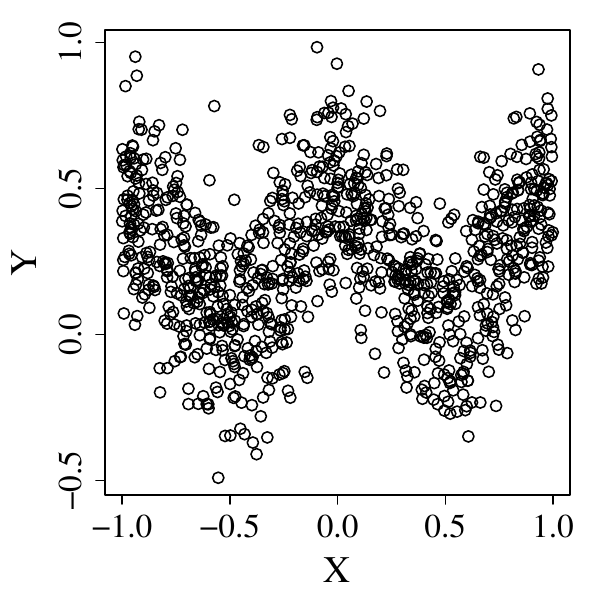} & \includegraphics[width=0.18\textwidth]{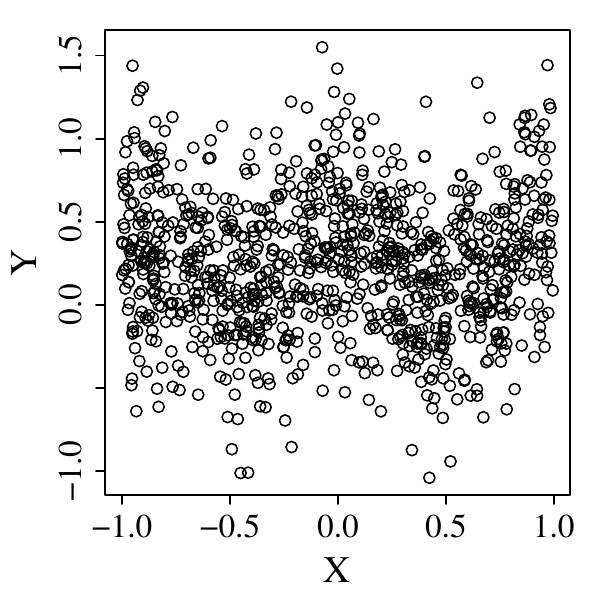} & \includegraphics[width=0.18\textwidth]{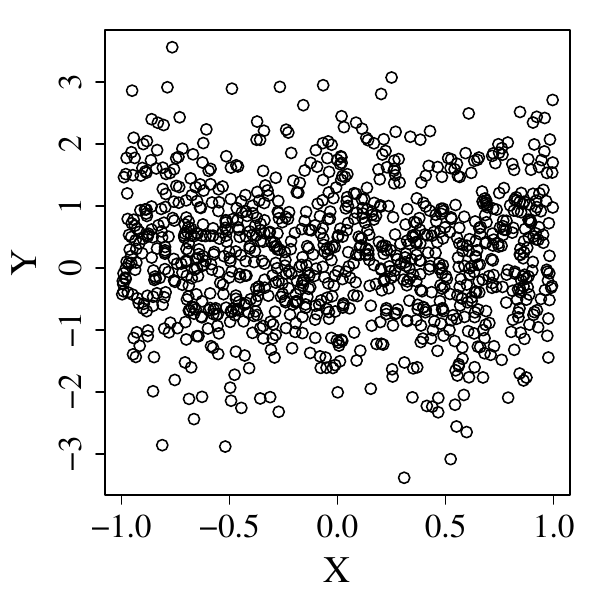}\\
\raisebox{5.5\height}{circle} & \includegraphics[width=0.18\textwidth]{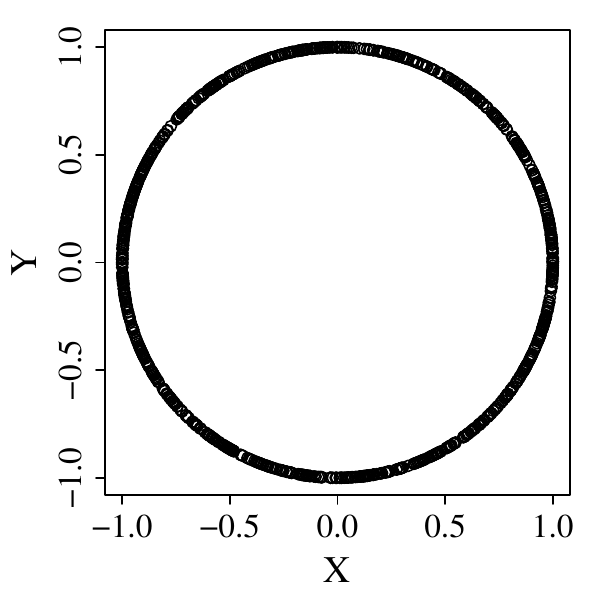} & \includegraphics[width=0.18\textwidth]{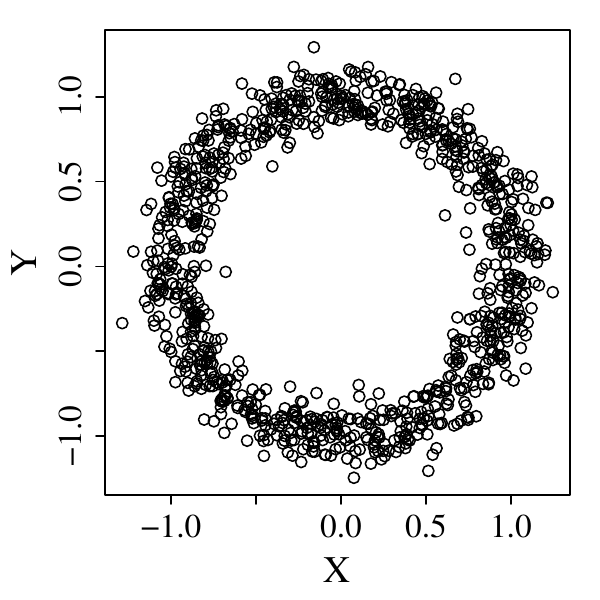} & \includegraphics[width=0.18\textwidth]{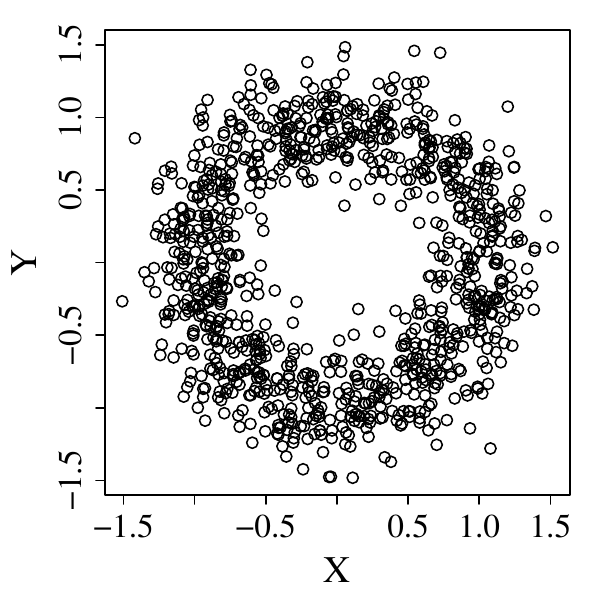} & \includegraphics[width=0.18\textwidth]{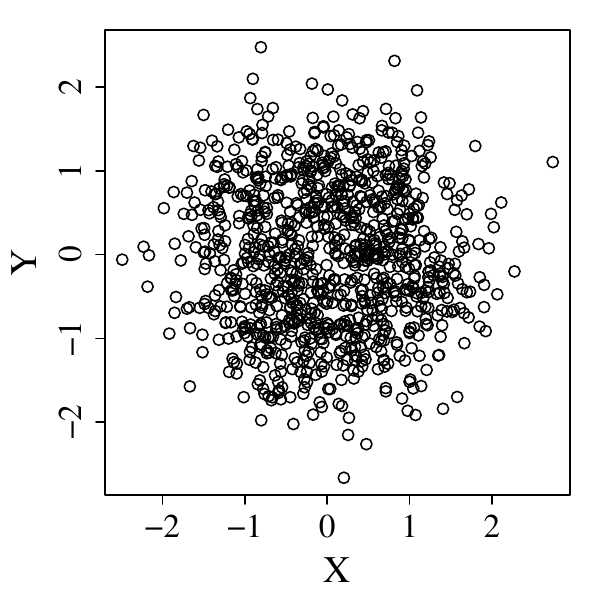}\\
\raisebox{5.5\height}{spiral} & \includegraphics[width=0.18\textwidth]{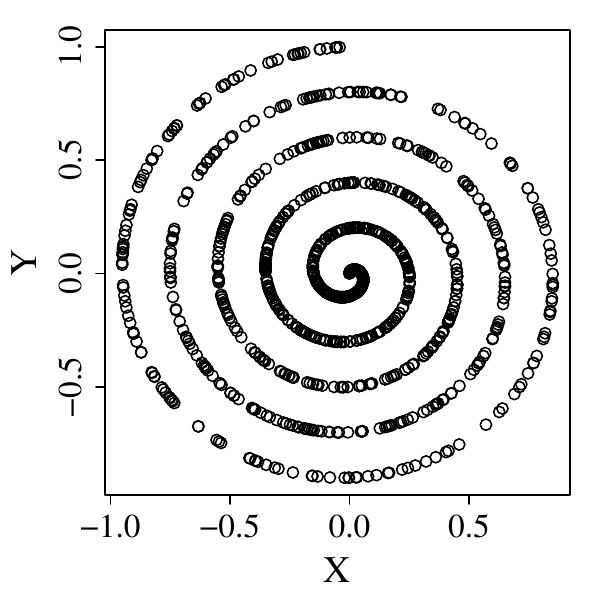} & \includegraphics[width=0.18\textwidth]{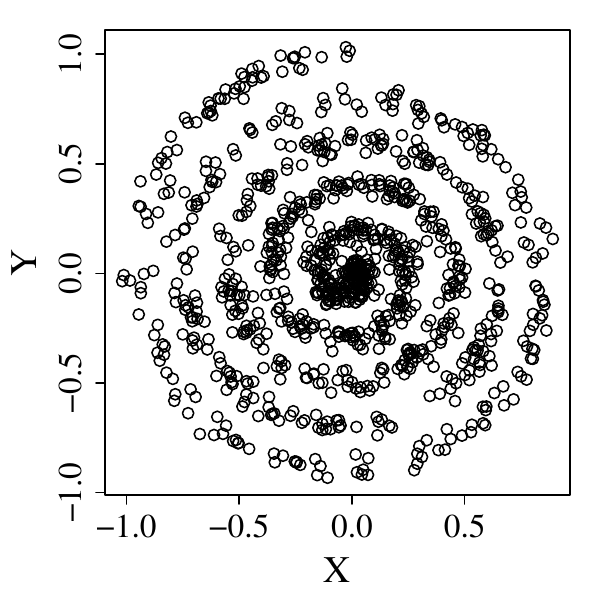} & \includegraphics[width=0.18\textwidth]{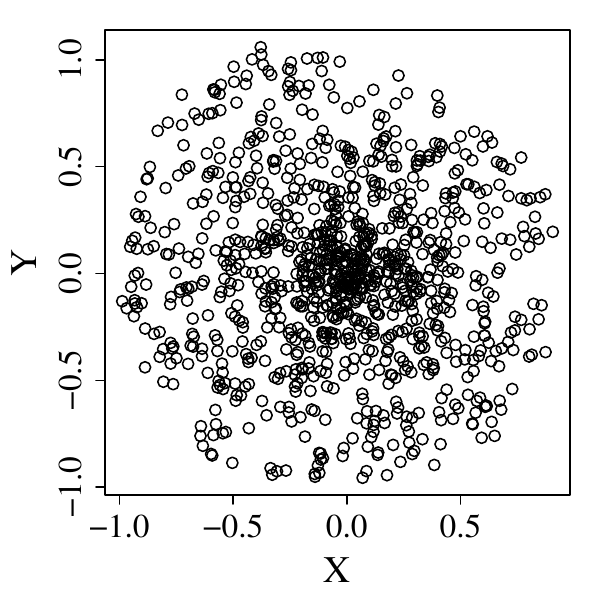} & \includegraphics[width=0.18\textwidth]{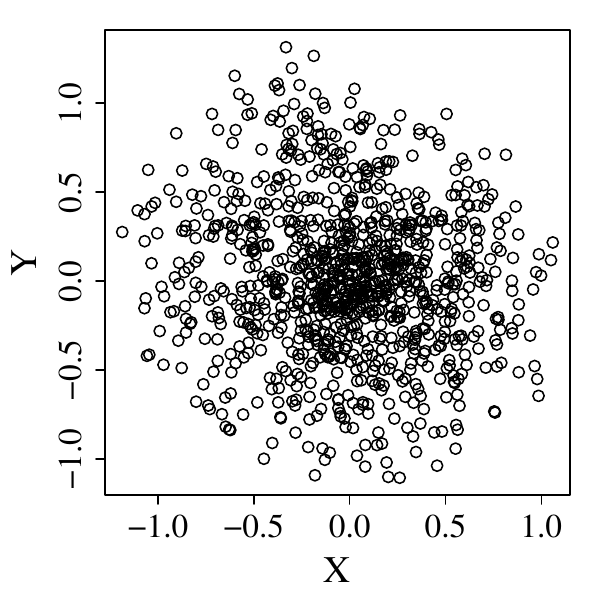}\\
\raisebox{5.5\height}{Lissajous} & \includegraphics[width=0.18\textwidth]{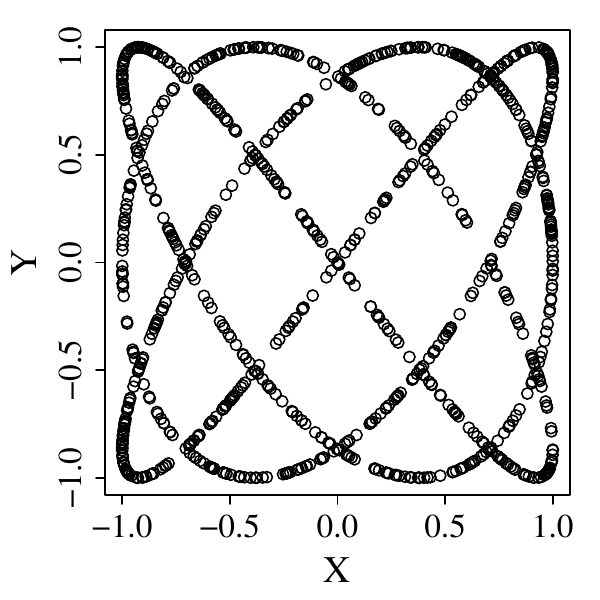} & \includegraphics[width=0.18\textwidth]{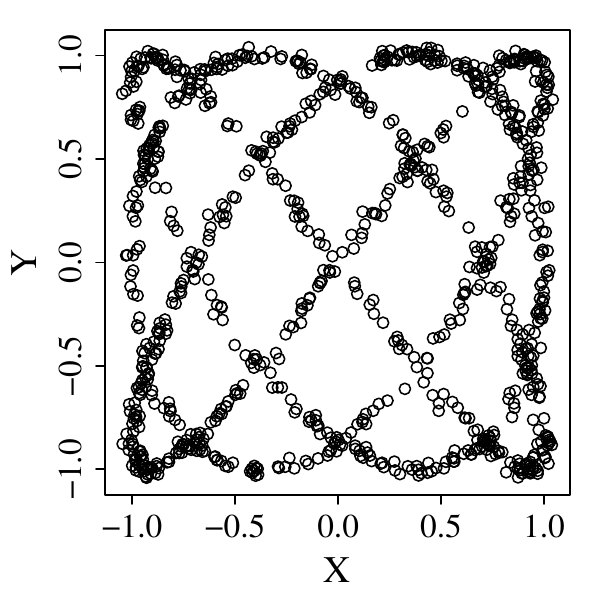} & \includegraphics[width=0.18\textwidth]{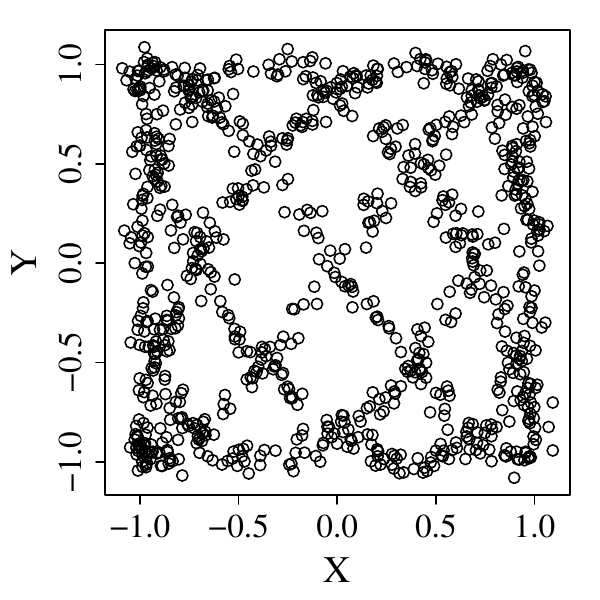} & \includegraphics[width=0.18\textwidth]{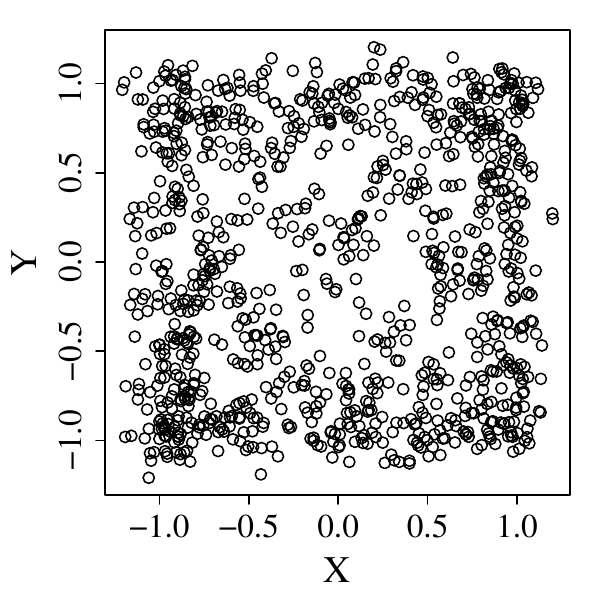}\\
\raisebox{5.5\height}{local} & \includegraphics[width=0.18\textwidth]{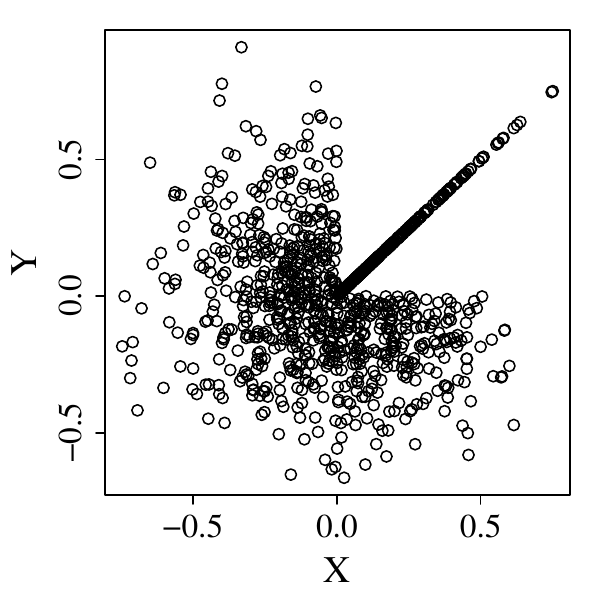} & \includegraphics[width=0.18\textwidth]{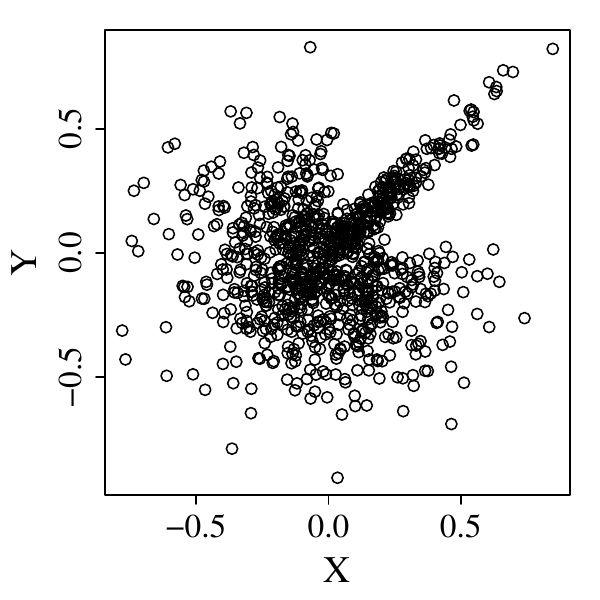} & \includegraphics[width=0.18\textwidth]{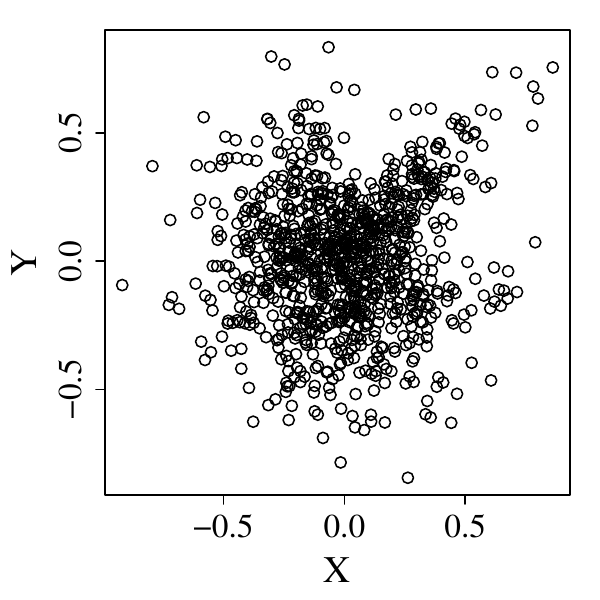} & \includegraphics[width=0.18\textwidth]{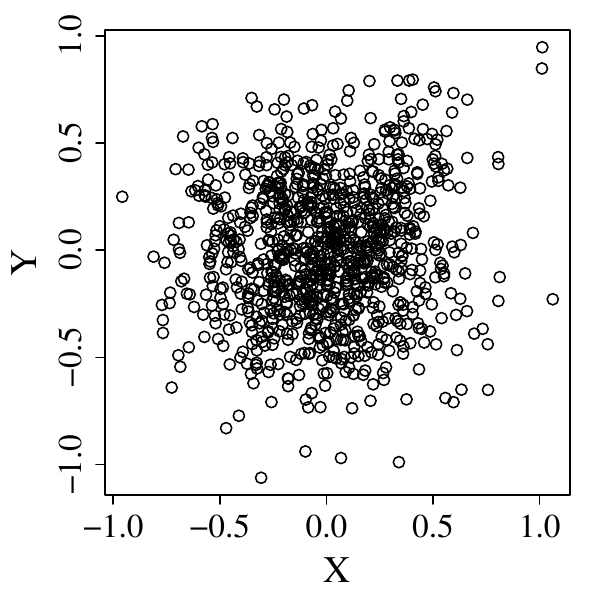}
\end{tabular}
\caption{\label{Fig:simsetting}Scatter plots of data from different simulation settings at different noise levels.}
\end{figure}

\begin{table}
\begin{center}
    \begin{tabular}{cccccccc}
    $n$ & $d_X$ & $\kappa_n^{X,Y}$ & $\xi_n^{X,Y}$ & dCor & HSIC & KMAc & USP\\
    \hline
    125 & 1 & 0.001 & 0.001 & 0.008 & 0.014 & 0.553 & 0.010\\
    250 & 1 & 0.001 & 0.001 & 0.010 & 0.047 & 1.06 & 0.043\\
    500 & 1 & 0.002 & 0.001 & 0.037 & 0.192 & 2.50 & 0.182\\
    1000 & 1 & 0.003 & 0.001 & 0.130 & 1.01 & 7.35 & 0.781 \\
    2000 & 1 & 0.005 & 0.001 & 0.498 & 4.23 & 26.3 & 2.98 \\
    4000 & 1 & 0.010 & 0.002 & 2.01 & 21.6 & - & 10.8 \\
    8000 & 1 & 0.019 & 0.003 & 7.95 & - & - & - \\
    125 & 2 & 0.034 & - & 0.004 & 0.011 & 0.514 & 0.042\\
    250 & 2 & 0.076 & - & 0.014 & 0.042 & 1.05 & 0.164 \\
    500 & 2 & 0.177 & - & 0.052 & 0.186 & 2.52 & 0.720\\
    1000 & 2 & 0.567 & - & 0.176 & 0.975 & 7.50 & 3.17\\
    2000 & 2 & 1.93 & - & 0.694 & 4.45 & 27.5 & 10.8 \\
    4000 & 2 & 6.16 & - & 2.77 & 21.5 & - & 43.9\\
    8000 & 2 & 24.4 & - & 11.4 & - & - & -
    \end{tabular}
    \caption{\label{Tab:Timing}Average running time of coverage correlation, Chatterjee's correlation, distance correlation, HSIC, KMAc and USP for various $n$ and $d_X=d_Y$ values under the null. Chatterjee's correlation is only computed for $d_X=d_Y=1$. Time is shown in seconds, and time larger than 60 seconds is not displayed.}
    \end{center}
\end{table}

\subsection{On the periodic boundary condition}\label{sec:no-wrapping}

In the definition of the coverage correlation, we employ a periodic boundary condition, which effectively wraps the edges of cubes centred at the boundary of the unit cube $[0, 1]^d$. We remark that this approach helps to avoid the boundary inhomogeneity that can lead to biased estimates. 

In fact, when $X$ and $Y$ are independent, the vacancy estimates can be overestimated without the boundary wrapping, leading to a positive excess vacancy bias, i.e. $\mathcal{V}_n - e^{-1}>0$. The excess vacancy bias translates to anti-conservative testing behaviour. In Table~\ref{Tab:no-wrapping}, we repeated the size experiment from Table~\ref{Tab:Size}, but computed the covered volume without wrapping around the boundary. The results show that the no-wrapping version is not well calibrated. This supports the use of the flat torus distance, which removes the boundary inhomogeneity and gives the well-calibrated sizes reported in Table~\ref{Tab:Size}.

\begin{table}
\centering
\begin{tabular}{cccccc}
$n$ & $d_X$ &  $\alpha = 1\%$ & $\alpha = 2.5\%$ &  $\alpha = 5\%$ &  $\alpha = 10\%$\\
\hline
$10$ & $1$ & $8.42_{(0.11)}$ & $14.94_{(0.14)}$ & $22.70_{(0.16)}$ & $34.31_{(0.18)}$\\
$100$ & $1$ & $10.11_{(0.12)}$ & $18.12_{(0.15)}$ & $27.28_{(0.17)}$ & $40.31_{(0.19)}$\\
$1000$ & $1$ & $11.22_{(0.12)}$ & $19.73_{(0.15)}$ & $29.56_{(0.18)}$ & $43.02_{(0.19)}$\\
$10$ & $2$ & $100.0_{(0.00)}$ & $100.00_{(0.00)}$ & $100.00_{(0.00)}$ & $100.00_{(0.00)}$\\
$100$ & $2$ & $100.00_{(0.00)}$ & $100.00_{(0.00)}$ & $100.00_{(0.00)}$ & $100.00_{(0.00)}$\\
$1000$ & $2$ & $100.00_{(0.00)}$ & $100.00_{(0.00)}$ & $100.00_{(0.00)}$ & $100.00_{(0.00)}$\\
\end{tabular}

\caption{\label{Tab:no-wrapping}Empirical sizes (in percentage) of independence test based on coverage correlation coefficient without periodic condition, estimated over 100000 Monte Carlo repetitions (with Monte Carlo standard errors in parentheses)}
\end{table}

\bibliographystyle{custom2author}
\bibliography{refs}
\end{document}